\documentclass{article}

\usepackage{dilab_arxiv}

\DeclareFontShape{T1}{cmr}{m}{n}{<->ssub * lmr/m/n}{}
\DeclareFontShape{T1}{cmr}{m}{it}{<->ssub * lmr/m/it}{}
\DeclareFontShape{T1}{cmr}{m}{sl}{<->ssub * lmr/m/sl}{}
\DeclareFontShape{T1}{cmr}{m}{sc}{<->ssub * lmr/m/sc}{}
\DeclareFontShape{T1}{cmr}{b}{n}{<->ssub * lmr/b/n}{}
\DeclareFontShape{T1}{cmr}{b}{it}{<->ssub * lmr/b/it}{}
\DeclareFontShape{T1}{cmr}{b}{sl}{<->ssub * lmr/b/sl}{}
\DeclareFontShape{T1}{cmr}{bx}{n}{<->ssub * lmr/bx/n}{}
\DeclareFontShape{T1}{cmr}{bx}{it}{<->ssub * lmr/bx/it}{}
\DeclareFontShape{T1}{cmr}{bx}{sl}{<->ssub * lmr/bx/sl}{}
\DeclareFontFamily{T1}{cmss}{}
\DeclareFontShape{T1}{cmss}{m}{n}{<->ssub * lmss/m/n}{}
\DeclareFontShape{T1}{cmss}{m}{it}{<->ssub * lmss/m/it}{}
\DeclareFontShape{T1}{cmss}{m}{sl}{<->ssub * lmss/m/sl}{}
\DeclareFontShape{T1}{cmss}{b}{n}{<->ssub * lmss/b/n}{}
\DeclareFontShape{T1}{cmss}{bx}{n}{<->ssub * lmss/bx/n}{}
\DeclareFontShape{T1}{cmss}{bx}{it}{<->ssub * lmss/bx/it}{}
\DeclareFontShape{T1}{cmss}{bx}{sl}{<->ssub * lmss/bx/sl}{}

\usepackage{enumitem}      
\usepackage{algorithm}     
\usepackage{algpseudocode}

\definecolor{ncBlue}{HTML}{0072B2}   
\definecolor{ncAmber}{HTML}{E69F00}  
\definecolor{ncGreen}{HTML}{009E73}  
\definecolor{ncGray}{HTML}{444444}

\newtheorem{theorem}{Theorem}
\newtheorem{proposition}{Proposition}
\newtheorem{lemma}{Lemma}
\newtheorem{corollary}{Corollary}
\newtheorem{assumption}{Assumption}

\theoremstyle{definition}
\newtheorem{definition}{Definition}

\theoremstyle{remark}
\newtheorem{remark}{Remark}

\newcommand{\R}{\mathbb{R}}

\newcommand{\eps}{\varepsilon}
\newcommand{\inner}[2]{\langle #1, #2 \rangle}
\newcommand{\norm}[1]{\lVert #1 \rVert}
\newcommand{\pos}[1]{\left[#1\right]_{+}}          
\newcommand{\rt}{\tilde{r}}                        
\newcommand{\xt}{\tilde{x}}                        
\newcommand{\yt}{\tilde{y}}
\newcommand{\Nt}{N}                                

\newcommand{\etainf}{\eta_{\infty}}
\newcommand{\gapf}{\mathrm{gap}}                   
\newcommand{\dist}{\mathrm{dist}}
\newcommand{\supp}{\mathrm{supp}}
\newcommand{\diag}{\mathrm{diag}}
\newcommand{\spec}{\mathrm{spec}}
\newcommand{\Fix}{\mathrm{Fix}}
\newcommand{\Atil}{\tilde{A}}                      
\newcommand{\Bmat}{B}                              
\newcommand{\Ndot}{\dot{N}}                        
\newcommand{\Ndottot}{\dot{N}_{\mathrm{tot}}}
\newcommand{\Ms}{\mathcal{M}_S}                    
\newcommand{\sigmax}{\sigma_{\max}}
\newcommand{\sigmin}{\sigma_{\min}}
\newcommand{\simplex}[1]{\Delta_{#1}}
\newcommand{\ones}{\mathbf{1}}
\newcommand{\iregprm}{\textsf{IREG-PRM}$^+$}
\newcommand{\algname}{\iregprm{}}

\title{Why Last-Iterate Scale-Invariant Regret Matching Converges Linearly?}
\runningtitle{Last-Iterate Convergence in Scale-Invariant Regret Matching}
\date{\today}

\paperlogo{\includegraphics[height=1.5cm]{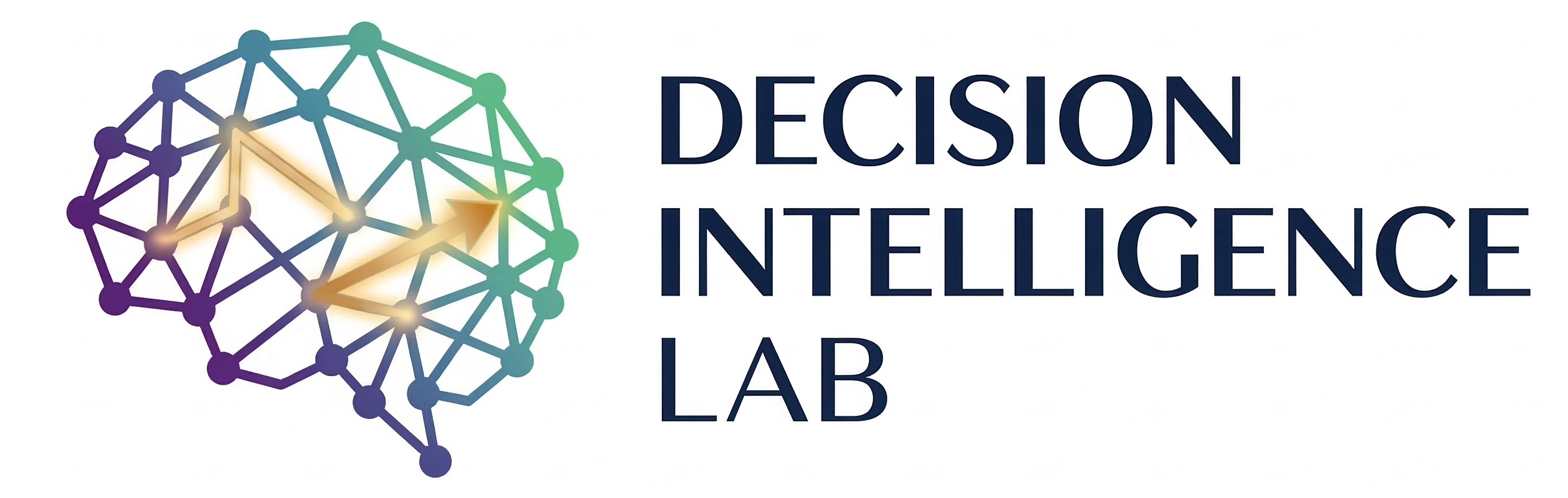}}

\author{
  Boning Li$^{1}$ and Longbo Huang$^{1\,\text{\faEnvelope}}$
  \\[0.3em]\normalfont
  $^1$Institute for Interdisciplinary Information Sciences, Tsinghua University
  \\
  \text{\faEnvelope}\ Correspondence: longbohuang@tsinghua.edu.cn
}

\begin{document}

\maketitle
\thispagestyle{fancy}

\begin{abstract}
Regret matching is the update at the core of counterfactual regret
minimization; its guarantees concern the average of all strategies played, and
for regret matching the last iterate may not converge at all. Scale-invariant
extragradient regret matching (\algname{}) normalizes the cumulative regret
vector by its own norm and attains optimal regret without knowledge of the
payoff scale. Run unmodified
on zero-sum matrix games, it converges linearly in the last iterate, and no
analysis explains why. The obstacle is that the algorithm has no fixed step
size to analyze: the step size is a state variable, the inverse of a regret norm
that the trajectory itself moves. Every proved linear rate for regret-matching
dynamics comes from restarting or modifying the update. We identify the
mechanism as \emph{norm saturation}: the regret norm rises to a finite limit and
freezes the step size. We prove that it always does, with an explicit bound, and
that saturation forces the last-iterate Nash gap to vanish on every matrix game;
pointwise convergence follows whenever the equilibrium is unique. Near a unique
strictly complementary equilibrium the active support freezes in one step, and
the one-round Jacobian on that support has a closed form. The last-iterate then
converges linearly at a closed-form rate, provided one scale-invariant quantity
stays below one: the saturated step size times the largest singular value of the
value-centered payoff submatrix on the support. On the $216$-instance testbed,
the $184$ instances with a resolvable limit all satisfy it. The same analysis
gives a ratio certificate: observable norm-increment ratios bound the
unobservable Nash-gap ratio up to a constant that enters once and does not
accumulate with the iteration count. Its slope-two law holds on $96.1\%$ of the
instances where the slope is measurable, and the same increment monitors
progress in extensive-form games, where best-response passes can be scheduled
sparsely. The code is available at \url{https://github.com/lbn187/NormCert}.

\end{abstract}

\section{Introduction}\label{sec:intro}

Regret matching is the update inside counterfactual regret minimization, the
method that solved heads-up limit hold'em, which powers the strongest poker
agents \citep{tammelin2014solving,bowling2015heads,brown2019solving}.  It is
parameter-free, and its theory rests on no-regret learning: it connects
to follow-the-regularized-leader and Blackwell approachability
\citep{zinkevich2003online,mcmahan2011follow,shalev2011online,abernethy2011blackwell},
its averaged iterates reach the game value at a predictable rate
\citep{hart2000simple,zinkevich2007regret,greenwald2006bounds,marden2007regret},
and its regret guarantee holds under full-information feedback.

Every one of those guarantees is about the \emph{average} of the strategies
played so far.  In a large game the average is a second copy of the strategy
that has to be stored and updated.  The last-iterate is a separate problem
\citep{lee2021last}: on a $3\times3$ matrix game, regret matching$^+$ and its
predictive variant do not converge \citep{cai2025lastiterate}.  The linear
last-iterate rates proved for regret-matching-type dynamics come from
restarting the algorithm or modifying its update
\citep{cai2025lastiterate,meng2025lastiterate,pmlr-v139-perolat21a,liu2023the}.
A lower bound shows that an algorithm which never forgets its past regrets
admits no game-independent last-iterate rate \citep{cai2024fast}, so an
explanation has to use game-dependent structure.

\algname{} normalizes the cumulative regret vector by its own norm and
attains optimal regret without knowledge of the payoff scale
\citep{zhang2025scale}.  Run as stated in the extra-gradient setup on zero-sum
matrix games, it converges linearly in the last iterate, an observation its
authors recorded and left unexplained.  The algorithm keeps its full
regret memory and has no restart, no smoothing and no step-size parameter.
Three features of the update block the
standard arguments.  The step size is a state variable: the inverse of a regret
norm that the trajectory itself moves, while contraction arguments need a step
size fixed in advance.  The update clips at zero, so the one-round map is not
smooth.  The fixed points form a cone, one ray per equilibrium, because the regret
norms are free to take any positive value; the map therefore contracts toward a
set, never toward a point.

We find the mechanism in the normalization.  Scale invariance turns the
cumulative regret norm into a control variable: the norm sets the step size of
the round, and the step size determines how far the state can travel.  If the
norm rises to a finite limit, the step size freezes and the dynamics settle.  We
call this event \emph{norm saturation}, and we prove that it always occurs.
Three steps turn saturation into rates.  Saturation is equivalent to a finite
second-order path length, which gives the norm limit an explicit bound.  Near a strictly
complementary equilibrium the clipping pattern locks after one round, so the
non-smooth map becomes real-analytic on the support face, and on that face the
differentials of the value-centering shift cancel exactly, which leaves a
Jacobian in closed form.  The squared-norm increment that the algorithm
computes in every round is two-sidedly proportional to the squared Nash gap,
which turns an internal scalar into a progress certificate.

We make five contributions.
\begin{enumerate}[leftmargin=2em, itemsep=1pt, topsep=2pt]
\item \textbf{We identify norm saturation as the mechanism behind last-iterate
  convergence.}  We prove that the cumulative regret norm saturates on every
  zero-sum matrix game with an explicit bound, and that saturation alone forces
  the last-iterate Nash gap to zero (Theorems~\ref{thm:saturation}
  and~\ref{thm:global}).  Under a unique equilibrium the iterates converge
  pointwise, with no spectral or strict-complementarity hypothesis
  (Theorem~\ref{thm:point}).
\item \textbf{We derive a closed-form linear rate for a regret-matching
  dynamic with no restart, no smoothing and no step-size parameter.}  Near a
  unique strictly complementary equilibrium the support freezes in one step
  (Lemma~\ref{lem:absorb}), and the last iterate contracts at a rate written
  in closed form from the payoff matrix restricted to the frozen support
  (Theorem~\ref{thm:rate}).  One scale-invariant quantity decides it: the
  saturated step size times the largest singular value of that submatrix.  A
  limit that violates the condition is linearly unstable
  (Proposition~\ref{prop:selfstab}).
\item \textbf{We turn a scalar the algorithm already computes into a progress
  certificate.}  The squared-norm increment is two-sidedly proportional to the
  squared Nash gap, so gap ratios are certified from increment ratios up to a
  single condition-number factor that enters once and never accumulates with
  the iteration count (Theorem~\ref{thm:certificate}).
\item \textbf{We confirm every prediction on $216$ random matrix games.}  The
  closed-form spectral radius matches finite-difference Jacobians to
  $2.4\times10^{-10}$, the predicted rate matches measured slopes at median
  ratio $1.000$, the certificate exponent measures $1.986$ against a predicted
  $2$, and the scale-invariant quantity never exceeds one on all $184$
  instances with a resolvable limit (Table~\ref{tab:main}).
\item \textbf{We carry the certificate to extensive-form poker.}  On Kuhn and
  Leduc the same increment tracks the NashConv gap with log--log correlation
  $0.998$ and $0.997$, using $126\times$ and $64\times$ fewer best-response
  passes than per-iteration monitoring (Section~\ref{sec:experiments}).
\end{enumerate}

\section{Related Work}\label{sec:related}

\textbf{Regret matching and scale-invariant regret matching.}
Regret matching (RM, \citealp{hart2000simple}) and its tree-form extension,
counterfactual regret minimization (CFR, \citealp{zinkevich2007regret}), drive
equilibrium computation, through RM$^+$/CFR$^+$
\citep{tammelin2014solving,bowling2015heads,burch2019revisiting} and optimistic
variants \citep{farina2019stable,farina2019optimistic}.  Abstraction and pruning
refinements scale these methods to large extensive-form games
\citep{li2024rl,li2025efficient,li2026effective,li2026real}.  The \emph{last-iterate}
was a standing gap: CFR diverges on rock--paper--scissors
\citep{lee2021last}, and RM$^+$ and its predictive variant PRM$^+$ can oscillate forever
\citep{farina2023regret}.  \citet{zhang2025scale} observed \iregprm{}'s linear
last-iterate convergence, which they left open.  Their bounded second-order path length already bounds the regret norm and
supplies the saturation equivalence of Lemma~\ref{lem:sat-equiv};
Theorem~\ref{thm:saturation} restates it with an explicit constant, and the
local theory below builds on that limit.
Concurrent parameter-free regret matching \citep{meng2026a} also uses a
nondecreasing regret norm to reach $O(1/T)$ average-iterate rates; in our
analysis the norm's \emph{limit} selects the local step size.
Analyses of regret-matching dynamics now reach potential games and
constrained optimization \citep{anagnostides2025convergence}.

\textbf{Linear last-iterate convergence.}
Extragradient descent \citep{korpelevich1976extragradient} and its optimistic
variants \citep{daskalakis2018limit,pmlr-v108-mokhtari20a} achieve linear
last-iterate convergence on polytopes under saddle-point metric-subregularity
\citep{wei2021linear}, and in extensive-form games \citep{lee2021last}.  For
regret-matching-type dynamics the linear last-iterate theorems of
\citet{cai2025lastiterate,meng2025lastiterate,pmlr-v139-perolat21a,liu2023the}
restart or modify the update, and the unmodified dynamics were open;
Theorem~\ref{thm:rate} closes that case with a rate read off the
frozen-support spectrum, game-dependent as the lower bound of
\citet{cai2024fast} requires.

The closest computable metric is the normalized duality gap of
\citet{applegate2023faster}: a Lagrangian gap, computable in linear time, that
upper-bounds KKT residuals and drives restarts.  It evaluates the current point
through a dedicated gap computation.  Our monitor uses the squared-norm
growth the algorithm already computes, which Theorem~\ref{thm:certificate}
converts into a two-sided gap ratio on the last-iterate.  Error-bound, active-set, and
computable-metric comparisons are in Section~\ref{app:discussion}.

\section{Preliminaries}\label{sec:prelim}
\paragraph{Notation.} For $v \in \R^n$, $\pos{v}$ is the entrywise positive part, $\norm{\cdot}$ and $\norm{\cdot}_1$ the Euclidean and $\ell_1$ norms, $\simplex{n}$ the probability simplex, $\ones$ the all-ones vector, $\supp(x)$ the support of a nonnegative vector, and $\Fix(\Phi)$ the fixed-point set of a map $\Phi$. For a matrix $M$, $\sigmax(M)$ is its largest singular value and $\sigmin^{+}(M)$ its smallest nonzero singular value.
\paragraph{Zero-sum matrix games.}
Consider a zero-sum game with two players and payoff matrix $A \in \R^{n_1 \times n_2}$: player~1 chooses $x \in \simplex{n_1}$ to maximize $x^\top A y$, player~2 $y \in \simplex{n_2}$ to minimize it. $(x^*, y^*)$ is a \emph{Nash equilibrium} (NE) when $x^\top A y^* \le (x^*)^\top A y^* \le (x^*)^\top A y$ for all $x, y$; the equilibrium payoff $v^*$, the \emph{value}, is common to all NE (minimax theorem \citealp{von2007theory}). The (duality) gap of a strategy pair, $\gapf(x, y) := \max_{i \in [n_1]} (A y)_i - \min_{j \in [n_2]} (x^\top A)_j \ge 0$, vanishes iff $(x, y)$ is an NE. For an NE $(x^*, y^*)$, $S_1 := \supp(x^*)$ and $S_2 := \supp(y^*)$ are the supports, $\Atil := A[S_1, S_2]$ the support submatrix, and $\Bmat := \Atil - v^* \ones\ones^\top$ the \emph{value-centered support matrix}, which governs the local spectrum (Section~\ref{sec:local}); the \emph{support face} is the subspace $\Ms := \{(\rt_1, \rt_2) : \rt_1|_{S_1^c} = 0,\ \rt_2|_{S_2^c} = 0\}$. The NE is \emph{strictly complementary} if off-support actions are strictly suboptimal:
\begin{equation}\label{eq:strict-comp} \delta \;:=\; \min\Big\{ \min_{i \notin S_1} \big( v^* - (A y^*)_i \big),\;\min_{j \notin S_2} \big( ((x^*)^\top A)_j - v^* \big) \Big\} \;>\; 0, \end{equation}
where the minimum over an empty set is $+\infty$, as for a fully mixed player.
\paragraph{The \iregprm{} algorithm.}
\iregprm{} is IR-PRM$^+$~\citep[Algorithm~2]{zhang2025scale} run in the \emph{extra-gradient} setup, one instance per player (pseudocode in Algorithm~\ref{alg:iregprm}), in the strict form: the shifted vector $r^{(t)} = \rt^{(t)} + m^{(t)} - \gamma^{(t)} \ones$ is unclipped and the update is $\rt^{(t+1)} = \pos{r^{(t)} + g^{(t)}}$. The \emph{regret} of an action is the payoff it would have earned against the opponent's strategy minus the player's actual payoff; in the \emph{extra-gradient} setup each round first plays against a prediction of the opponent's payoff vector, then corrects with the observed one. Player~1 maintains a nonnegative cumulative regret vector $\rt_1^{(t)}$ (initialized at $\mathbf{0}$, as scale invariance requires) and the \emph{pre-iterate} strategy $\xt^{(t)} = \rt_1^{(t)} / \norm{\rt_1^{(t)}}_1$; player~2 symmetrically maintains $\rt_2^{(t)}$ and $\yt^{(t)}$, with payoff operator $-A^\top$. We write $\Nt_i(t):=\norm{\rt_i^{(t)}}_2$ for the regret norm. The key ingredient is the norm-preserving $\gamma$-shift:
\begin{definition}[$\gamma$-shift]\label{def:gammashift}
Given $v \in \R^n$ and target $\tau > 0$, the map $\gamma \mapsto \norm{\pos{v - \gamma \ones}}_2$ decreases strictly to zero, so
\begin{equation}\label{eq:gamma-shift} \big\lVert \pos{v - \gamma \ones} \big\rVert_2 \;=\; \tau \end{equation}
has a unique solution $\gamma(v, \tau)$ \citep[Appendix~B]{zhang2025scale}; invoked with $v = \rt^{(t)} + m^{(t)}$ and $\tau = \norm{\rt^{(t)}}_2$, it preserves the $\ell_2$ norm of the prediction step exactly.
\end{definition}
\paragraph{Dynamics, limits, and non-degeneracy.}
One round of \iregprm{} defines a memoryless map $\Phi$ on the positive-norm
states.  Its restriction to the feasible part $\Ms\cap\{N_1,N_2>0\}$ of the support face, written $\Phi_S:=\Phi|_{\Ms}$, is a self-map near the fixed-point cone (Lemma~\ref{lem:absorb}).  We write
$z=(\rt_1,\rt_2)$ for the joint state and $N_i(z):=\norm{\rt_i}_2$, so that along
a trajectory $N_i(z^{(t)})=\Nt_i(t)$; $D\Phi_S$ is the Jacobian of the face map
and $\spec(\cdot)$ its spectrum read as a multiset.  On a bounded trajectory
$\Omega$ is the $\omega$-limit set, the set of subsequential limits, and
$\dist(\cdot,\cdot)$ the Euclidean distance to a set.  A strictly complementary
equilibrium is \emph{non-degenerate} if $\ker\Bmat=\mathrm{span}\{y^*|_{S_2}\}$
and $\ker\Bmat^\top=\mathrm{span}\{x^*|_{S_1}\}$, which forces $|S_1|=|S_2|$.  For a strictly complementary
equilibrium, non-degeneracy is equivalent to uniqueness (Lemma~\ref{lem:kernel});
Assumption~\ref{asm:nondeg-strict} states it as a hypothesis.  We use the two
forms interchangeably.
\paragraph{Saturation.} The dynamics are governed by the per-player regret norms; their convergence is the structural event of the theory:
\begin{definition}[Norm saturation; per-player inverse norm]\label{def:saturation}
Player $i$'s trajectory \emph{saturates} if $\Nt_i(t) \uparrow \Nt_{i,\infty} < \infty$. The \emph{per-player inverse norm} at round $t$ is $\eta_i(t) := 1 / \Nt_i(t)$ (defined once $\Nt_i(t) > 0$); under saturation with $\Nt_{i,\infty}>0$ it converges to $\eta_{i,\infty} := 1 / \Nt_{i,\infty} > 0$. The symbol $\etainf$ is reserved for the joint $\ell_1$-scale step size of the local theory (Section~\ref{sec:local}).
\end{definition}
Let $t_i^* := \min\{t : \Nt_i(t) > 0\}$ be the first nonzero round ($t_i^*=\infty$ iff $\Nt_i\equiv0$, a constant payoff); before it the algorithm does not move, and after it $\Nt_i(t) > 0$ (Lemma~\ref{lem:normmono}). For $t\ge t_i^*$ the structure supplies three elementary facts (player~1): $\inner{g^{(t)}}{\pos{r^{(t)}}} = 0$ (F1); $\norm{g^{(t)}} \le (1 + \sqrt{n_1})\, \norm{A}_{\mathrm{op}} \norm{y^{(t)} - \yt^{(t)}}$ (F2; only the linear dependence is used); and by (F1), $\Nt(t{+}1)^2 = \Nt(t)^2 + \norm{g^{(t)}}^2 - D(t)$ with $0 \le D(t) \le \norm{g^{(t)}}^2$ (F3). The machinery of \citet{zhang2025scale} gives the following (proof in Appendix~\ref{app:local:satequiv}, where Remark~\ref{rem:sat-constant} records the tighter constant).
\begin{lemma}[Saturation $\Leftrightarrow$ finite second-order path length]\label{lem:sat-equiv}
If both players run \iregprm{} against each other (self-play) on $A$ from $\rt_i^{(1)} = \mathbf{0}$, then both saturate, $\Nt_i(t) \uparrow \Nt_{i,\infty} < \infty$ for $i \in \{1, 2\}$, if and only if the second-order path length is finite:
\[ \sum_{t=1}^{\infty} \Big( \norm{x^{(t)} - \xt^{(t)}}^2 + \norm{y^{(t)} - \yt^{(t)}}^2 \Big) \;<\; \infty . \]
Moreover, under saturation, for each player $i$ with $t_i^*<\infty$ (if $t_i^*=\infty$ the left side below is $0$),
\begin{equation}\label{eq:sat-explicit} \sum_{t=1}^{\infty} \norm{x_i^{(t)} - \xt_i^{(t)}}^2 \;\le\; \frac{8 n_i \, \big( \Nt_{1,\infty} + \Nt_{2,\infty} \big)} {\Nt_i(t_i^*)} \;<\; \infty , \end{equation}
and the internally available scalars $\beta_t^{i} := \inner{m_i^{(t)}}{x_i^{(t)}} - \gamma_i^{(t)}$ ($\beta_t^{i} := 0$ for $t < t_i^*$) satisfy
\begin{equation}\label{eq:beta-free} \beta_t^{i} \;\ge\; \frac{1}{8 n_i} \, \Nt_i(t)\, \norm{x_i^{(t)} - \xt_i^{(t)}}^2 \;\ge\; 0, \qquad \sum_{t=1}^{\infty} \big( \beta_t^{1} + \beta_t^{2} \big) \;\le\; \Nt_{1,\infty} + \Nt_{2,\infty} . \end{equation}
\end{lemma}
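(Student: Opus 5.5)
The plan is to build everything on a per-round \emph{energy identity} linking the norm increment to the scalar $\beta_t^i$, and then read off both directions of the equivalence. I work with player~1 for $t\ge t_1^*$ (before $t_1^*$ nothing moves and all quantities vanish) and drop the index.

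\textbf{Step 1 (energy identity).} The prediction step is norm preserving: with $v=\rt^{(t)}+m^{(t)}$ and $\gamma=\gamma^{(t)}$, Definition~\ref{def:gammashift} gives $\norm{\pos{r^{(t)}}}=\Nt(t)$, and $x^{(t)}=\pos{r^{(t)}}/\norm{\pos{r^{(t)}}}_1$. The correction $g^{(t)}$ is the observed instantaneous regret, taken relative to the played $x^{(t)}$, so (F1) holds. By (F3), $\Nt(t{+}1)^2-\Nt(t)^2=\norm{g^{(t)}}^2-D(t)\le\norm{g^{(t)}}^2$. For the lower side, I expand $\Nt(t)^2=\norm{\pos{\rt+m-\gamma\ones}}^2$ against $\norm{\rt}^2=\Nt(t)^2$. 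The difference is $0$, and it equals $2\inner{\pos{r}}{m-\gamma\ones}$ minus a nonnegative term of order $\norm{\pos{r}-\rt}^2$. Dividing by $\norm{\pos{r}}_1$ gives $\beta_t=\inner{m}{x}-\gamma$. Strong convexity of $\tfrac12\norm{\cdot}^2$ on the sphere of radius $\Nt(t)$ yields
\[\beta_t\ge \frac{\norm{\pos{r}-\rt}^2}{2\norm{\pos{r}}_1}\ge\frac{\Nt(t)\norm{x-\xt}^2}{8n}.\]
The last inequality uses $\norm{\pos r}_1\le\sqrt n\,\Nt(t)$ and the Lipschitz bound $\norm{u/\norm u_1-w/\norm w_1}\le 2\sqrt n\norm{u-w}/\norm{w}_1$, combined with $\norm{w}_1\ge\norm w$. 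This is the first inequality in \eqref{eq:beta-free}, and it is where the constant $8n_i$ arises. I expect it to be the main obstacle. First I would try reducing to the two-dimensional plane spanned by $\rt$ and $\pos r$, where the chord-versus-angle estimate is explicit.

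\textbf{Step 2 (telescoping).} Next I need the companion fact that the correction raises the squared norm by at least $2\Nt(t)\beta_t$ up to terms the opponent controls. Because the regret is computed relative to $x^{(t)}$ and $\inner{\pos r}{\ones}$ normalizes, we get $\inner{\rt^{(t+1)}}{\rt^{(t+1)}}\ge\norm{\pos r}^2+2\inner{\pos r}{g}+\dots$. Following \citet{zhang2025scale}, the monotone quantity is $\Nt(t)$ itself (Lemma~\ref{lem:normmono}), and $\Nt(t{+}1)-\Nt(t)\ge\beta_t^{1}-c\norm{y^{(t)}-\yt^{(t)}}$-type cross terms. In self-play, these cross terms cancel against the opponent's $\beta^2$ because the game is zero-sum: $\inner{x}{Ay}$ appears with opposite signs in the two players' regrets. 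Summing over both players gives $\sum_t(\beta^1_t+\beta^2_t)\le\sum_t\big(\Nt_1(t{+}1)-\Nt_1(t)+\Nt_2(t{+}1)-\Nt_2(t)\big)\le\Nt_{1,\infty}+\Nt_{2,\infty}$. This is the second part of \eqref{eq:beta-free}.

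\textbf{Step 3 (both directions).} ($\Rightarrow$) Under saturation, Step~1 with $\Nt_i(t)\ge\Nt_i(t_i^*)$ (monotonicity) gives $\sum_t\norm{x_i-\xt_i}^2\le 8n_i\sum_t\beta^i_t/\Nt_i(t_i^*)$. Step~2 then yields \eqref{eq:sat-explicit}, and summing over $i$ gives a finite path length. ($\Leftarrow$) By (F3) and (F2), $\Nt_1(t{+}1)^2\le\Nt_1(t)^2+C\norm{y^{(t)}-\yt^{(t)}}^2$ with $C=(1+\sqrt{n_1})^2\norm A_{\mathrm{op}}^2$. Telescoping, $\Nt_{1}(t)^2\le C\sum_s\norm{y^{(s)}-\yt^{(s)}}^2<\infty$, and symmetrically for player~2. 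Combined with monotonicity (Lemma~\ref{lem:normmono}), each $\Nt_i$ increases to a finite limit, i.e., both players saturate.
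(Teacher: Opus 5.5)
Your Step~1 is correct and elementary; it is not the obstacle. With $p=\pos{r^{(t)}}$, the relations $\inner{p}{r^{(t)}}=\norm{p}^2=\norm{\rt^{(t)}}^2$ give the exact identity $\beta_t=\norm{p-\rt^{(t)}}^2/(2\norm{p}_1)$. The constant $8n$ then needs only $\norm{p}_1\ge\Nt(t)$, with the normalization estimate taken with denominator $\norm{p}_1$. The upper bound $\norm{p}_1\le\sqrt n\,\Nt(t)$ is not needed and would only cost a factor $\sqrt n$.

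\textbf{The gap is Step~2}, which carries the second half of \eqref{eq:beta-free}. You try to reach $\sum_t(\beta_t^1+\beta_t^2)\le\sum_t(\Delta\Nt_1(t)+\Delta\Nt_2(t))$, with $\Delta\Nt_i(t):=\Nt_i(t{+}1)-\Nt_i(t)$, through a per-round inequality $\Delta\Nt_i(t)\ge\beta_t^i-(\text{cross terms})$ whose cross terms cancel between the players. Cross terms of size $c\norm{y^{(t)}-\yt^{(t)}}$ are not known to be summable, since only the second-order path length is controlled. More fundamentally, the termwise inequality $\beta_t^1+\beta_t^2\le\Delta\Nt_1(t)+\Delta\Nt_2(t)$ is false.

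Take a strictly complementary, non-degenerate ray and write $\rt_1=a\xi+w_1$, $\rt_2=b\psi+w_2$ as in Appendix~\ref{app:certificate}. The Step~1 identity and Lemmas~\ref{lem:ndot-identity} and~\ref{lem:g-expansion} give, to leading order, $\beta_t^1\approx\norm{\Bmat w_2}^2/(2ab^2)$ and $\beta_t^2\approx\norm{\Bmat^\top w_1}^2/(2a^2b)$. They also give $\Delta\Nt_1\approx\eta^4\norm{\Bmat\Bmat^\top w_1}^2/(2a\norm{\xi}_2)$ and $\Delta\Nt_2\approx\eta^4\norm{\Bmat^\top\Bmat w_2}^2/(2b\norm{\psi}_2)$. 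The pairing is crossed: a player's $\beta$ is driven by the opponent's displacement, its norm increment by its own. Up to higher-order terms, $\Delta\Nt_1+\Delta\Nt_2\le\eta^2\sigmax^2\,(\beta_t^1+\beta_t^2)/\min\{\norm{\xi}_2,\norm{\psi}_2\}$.

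So whenever $\eta^2\sigmax(\Bmat)^2<\min\{\norm{\xi}_2,\norm{\psi}_2\}$, your inequality fails at every state with small $w\neq0$. Along a trajectory converging to such a ray, even its tail sums fail. The bound holds only for sums started at $t=1$ from the zero state, so no round-by-round argument can produce it.

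\textbf{The missing idea} is to telescope the regret \emph{vector} coordinatewise rather than its norm. With $u^{(t)}=Ay^{(t)}$ one has $m_1^{(t)}+g_1^{(t)}-\gamma_1^{(t)}\ones=(u^{(t)}-\inner{u^{(t)}}{x^{(t)}}\ones)+\beta_t^1\ones$. Together with $\pos{v}\ge v$, this gives $\rt_1^{(t+1)}\ge\rt_1^{(t)}+(u^{(t)}-\inner{u^{(t)}}{x^{(t)}}\ones)+\beta_t^1\ones$ entrywise, also on zero-branch rounds. Telescope from $\rt_1^{(1)}=\mathbf 0$ and read off the coordinate $k$ of the best fixed action in hindsight. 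This gives $\Nt_1(T{+}1)\ge\rt_1^{(T+1)}[k]\ge\mathrm{Reg}_1^{(T)}+\sum_{t\le T}\beta_t^1$.

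Adding the two players, the terms $x^{(t)\top}Ay^{(t)}$ cancel; this is the correct form of your opposite-signs remark. What remains is $\mathrm{Reg}_1^{(T)}+\mathrm{Reg}_2^{(T)}=\max_{x'}\sum_{t\le T}x'^\top Ay^{(t)}-\min_{y'}\sum_{t\le T}x^{(t)\top}Ay'$. It is nonnegative: evaluate both optimizations at an equilibrium. This is a cumulative statement against fixed comparators, not a per-round cancellation. That is why the bound is anchored at $t=1$ and why $\rt_i^{(1)}=\mathbf 0$ is a hypothesis.

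A smaller slip in your ($\Leftarrow$) direction: telescope from $t_1^*$ and keep the term $\Nt_1(t_1^*)^2$. On the zero-branch round the prediction is $\mathbf 0$, so (F2) does not control the first jump.
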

\begin{theorem}[Saturation is unconditional]\label{thm:saturation}
For every $A$ and both players, $\Nt_i(t)\uparrow\Nt_{i,\infty}<\infty$.  If $\tau:=\max(t_1^*,t_2^*)<\infty$ then, with $K_i:=(1+\sqrt{n_i})\norm{A}_{\mathrm{op}}$, $\alpha_i:=\Nt_i(\tau)/(8n_i)$, $M:=\max\{K_1^2/\alpha_2,K_2^2/\alpha_1\}$, $Q_\tau:=\Nt_1(\tau)^2+\Nt_2(\tau)^2$,
\begin{equation}\label{eq:sat-bound}
\Nt_1(t)+\Nt_2(t)\;\le\;M+\sqrt{M^2+2Q_\tau}\qquad\text{for all } t ;
\end{equation}
if $t_i^*=\infty$ then $\Nt_i\equiv0$ and the opponent's norm is eventually constant.
\end{theorem}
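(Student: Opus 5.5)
Let $Q_t:=\Nt_1(t)^2+\Nt_2(t)^2$ and $S_t:=\Nt_1(t)+\Nt_2(t)$. The plan is to play the growth of $Q_t$ given by (F3) against the dissipation quantities $\beta_t^i$ of Lemma~\ref{lem:sat-equiv}. The growth is controlled by the opponent's prediction error, and the sum of the $\beta$'s is controlled by the norm gain, so the inequalities close into a quadratic inequality in $S_t$. The degenerate case is handled first. If $t_i^*=\infty$, then $\rt_i\equiv\mathbf{0}$, so $\Nt_i\equiv0$ and player $i$ plays uniform at every round. The opponent therefore faces a fixed payoff vector, and its prediction and observation coincide. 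This forces $y=\yt$ eventually, hence $g=0$ by (F2), and (F3) shows the opponent's norm is constant after its first nonzero round. From now on assume $\tau<\infty$. By Lemma~\ref{lem:normmono} both norms are nondecreasing from $\tau$ on, which gives $\Nt_i(t)\ge\Nt_i(\tau)$.

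\textbf{Growth step.} By (F3) and (F2),
\[
Q_{t+1}-Q_t\le K_1^2\norm{y^{(t)}-\yt^{(t)}}^2+K_2^2\norm{x^{(t)}-\xt^{(t)}}^2 .
\]
(One should check that (F2) for player~2 uses $\norm{A^\top}_{\mathrm{op}}=\norm{A}_{\mathrm{op}}$ with the $\sqrt{n_2}$ factor, so the indices match the $K_i$ as defined.) By the first inequality of \eqref{eq:beta-free}, which is a per-round statement, together with $\Nt_i(t)\ge\Nt_i(\tau)$, we get $\norm{x_i^{(t)}-\xt_i^{(t)}}^2\le\beta_t^i/\alpha_i$. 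Hence
\[
Q_{t+1}-Q_t\le M(\beta_t^1+\beta_t^2).
\]

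\textbf{Dissipation step.} We need $\sum_{s=\tau}^{t-1}(\beta_s^1+\beta_s^2)\le S_t-S_\tau\le S_t$. This is the finite-horizon form of the second inequality in \eqref{eq:beta-free}. In \citet{zhang2025scale} it comes from the telescoping identity that $\beta_t^i$ is, up to sign, the per-round increment of $\Nt_i$ measured through $\inner{\rt_i}{x_i}$ after the $\gamma$-shift. I will reprove it round by round in the appendix, using only (F1) and Definition~\ref{def:gammashift}, and not only in the limit, so that saturation is not assumed. This step is the main obstacle: the lemma is stated under saturation, and the argument must be checked to be a pure per-step inequality $\beta_t^i\le\Nt_i(t+1)-\Nt_i(t)$. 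The approach I would try first is to expand $\Nt_i(t+1)\ge\inner{\rt_i^{(t+1)}}{x_i^{(t)}}$ and substitute the update.

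\textbf{Closing the inequality.} Summing the growth step and applying the dissipation step gives $Q_t\le Q_\tau+MS_t$. Since $S_t^2\le 2Q_t$, this yields $S_t^2-2MS_t-2Q_\tau\le0$, and therefore $S_t\le M+\sqrt{M^2+2Q_\tau}$, which is \eqref{eq:sat-bound}. For $t<\tau$ the bound holds by monotonicity. The norms are monotone and bounded, so they converge, which is saturation.
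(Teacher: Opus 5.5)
\textbf{The dissipation step does not go through as planned.} Your growth step, the quadratic closing via $S_t^2\le 2Q_t$, and the monotone-and-bounded conclusion are exactly the paper's argument. The trouble is the dissipation step. You rightly flag it as the crux, but the route you propose cannot work. The per-round, per-player inequality $\beta_t^i\le\Nt_i(t+1)-\Nt_i(t)$ is false at the level of states, and that is the level you intend to prove it at ("using only (F1) and the $\gamma$-shift").

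\textbf{Why the per-player inequality fails.} Suppose $y=\yt$, which happens for instance when $\yt$ is a best response to $\xt$. Then $\mathrm{res}_1=A(y-\yt)=0$, so $g_1=0$ and (F3) gives $\Nt_1(t+1)=\Nt_1(t)$. At the same time $\beta_t^1=\norm{p_1-\rt_1}^2/(2\norm{p_1}_1)>0$ as soon as $\xt$ is not a best response to $\yt$. Concretely, take matching pennies $A=\bigl(\begin{smallmatrix}1&-1\\-1&1\end{smallmatrix}\bigr)$ with $\rt_1=(1,0)$ and $\rt_2=(0,1)$.
\begin{itemize}
\item For player~2 one gets $\gamma_2=1$ and $p_2=\rt_2$, so $y=\yt$.
\item For player~1 one gets $\gamma_1=0$ and $x=(0,1)$, so $\beta_t^1=1$ while $\Nt_1$ stays at $1$.
\end{itemize}
The expansion you suggest shows where it breaks. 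Entrywise $\rt^{(t+1)}\ge r+g$, and $\inner{g}{x}=0$. Together these give $\Nt_i(t+1)\ge\inner{\rt_i^{(t+1)}}{x_i^{(t)}}\ge\inner{\rt_i^{(t)}}{x_i^{(t)}}+\beta_t^i$. But $\inner{\rt_i^{(t)}}{x_i^{(t)}}\le\Nt_i(t)$, which is the wrong direction; in the example it equals $0$.

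\textbf{The missing idea.} The dissipation bound is neither per-round nor per-player. It is a regret statement that needs the zero-sum coupling.
\begin{itemize}
\item Rewrite the entrywise inequality as $\rt_1^{(t+1)}\ge\rt_1^{(t)}+\bigl(u^{(t)}-\inner{u^{(t)}}{x^{(t)}}\ones\bigr)+\beta_t^1\ones$ with $u^{(t)}=Ay^{(t)}$. It also holds on the trivial rounds, where $\beta_t^1=0$.
\item Telescope from $\rt_1^{(1)}=\mathbf 0$, read off the best-in-hindsight coordinate, and bound that coordinate by the $\ell_2$ norm. This gives $\Nt_1(T+1)\ge\mathrm{Reg}_1^{(T)}+\sum_{t\le T}\beta_t^1$.
\item A single player's external regret can be negative, so this yields no per-player bound.
\item Add the two players and use $\mathrm{Reg}_1^{(T)}+\mathrm{Reg}_2^{(T)}=\max_{x'}\sum_{t\le T}x'^\top Ay^{(t)}-\min_{y'}\sum_{t\le T}x^{(t)\top}Ay'\ge Tv^*-Tv^*=0$, obtained by testing against an equilibrium.
\end{itemize}
The result is $\sum_{t\le T}(\beta_t^1+\beta_t^2)\le S_{T+1}$ for every finite $T$, with no saturation hypothesis. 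This is all your closing step uses. The sharper $S_t-S_\tau$ you wrote down is not what this argument delivers, and you do not need it.

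\textbf{A smaller slip in the degenerate case.} Take $t_1^*=\infty$. What vanishes is player~2's residual $\mathrm{res}_2=-A^\top(x-\xt)=0$, because $x=\xt$ at every round. Hence $g_2=0$ directly for $t\ge t_2^*$, and (F3) gives constancy from $t_2^*$ on. The claim that this ``forces $y=\yt$ eventually, hence $g=0$ by (F2)'' has the players' roles reversed and is not needed.
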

The proof (Appendix~\ref{app:local:satequiv}) is four lines from (F2), (F3) and the finite-horizon form of \eqref{eq:beta-free}; it restates the bounded second-order path length of \citet[Theorem~5.5]{zhang2025scale} with an explicit constant.  Boundedness alone does not give convergence, because square-summable residuals admit slow drift; Sections~\ref{sec:global}--\ref{sec:local} supply the $\omega$-limit argument and the rate.
\begin{proposition}[Fixed points $\Leftrightarrow$ Nash equilibria]\label{prop:fixedpoint}
Let $z = (\rt_1, \rt_2)$ with $\rt_1 \ne \mathbf{0} \ne \rt_2$, and let $\Phi$ be the one-round map of Algorithm~\ref{alg:iregprm}. Then $\Phi(z) = z$ if and only if $(x, y) := \big( \rt_1 / \norm{\rt_1}_1, \; \rt_2 / \norm{\rt_2}_1 \big)$ is a Nash equilibrium of $A$. In that case the shifts equal the game value: $\gamma_1 = x^\top A y = v^*$ and $\gamma_2 = -v^*$. Consequently,
\[ \Fix(\Phi) \cap \{ \rt_1 \ne \mathbf{0} \ne \rt_2 \} \;=\; \big\{ (c_1 x^*, \, c_2 y^*) \;:\; c_1, c_2 > 0, \; (x^*, y^*) \text{ is an NE of } A \big\}, \]
the positive cone over the NE directions: regret directions are pinned to equilibrium strategies while the scales $c_1, c_2$ that set the norms stay free. The proof is in Appendix~\ref{app:local:p1}.
\end{proposition}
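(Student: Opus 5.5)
The plan is to prove the two directions of the equivalence separately, working one player at a time. Throughout I use the round structure of Algorithm~\ref{alg:iregprm}: the prediction $m^{(t)}$ is the payoff vector against the opponent's pre-iterate ($A\yt$ for player~1), and $r = \rt + m - \gamma\ones$ with $\gamma = \gamma(\rt+m,\norm{\rt}_2)$ from Definition~\ref{def:gammashift}. The played strategy is $x = \pos{r}/\norm{\pos{r}}_1$, and the update is $\rt^{+} = \pos{r+g}$, where the correction $g$ is controlled by (F2). Once the fixed-point characterization is established, the cone description of $\Fix(\Phi)$ is a restatement, because $\rt_1/\norm{\rt_1}_1$ and $\rt_2/\norm{\rt_2}_1$ range over NE directions while the positive scales $c_1, c_2$ do not enter the conditions.

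\emph{($\Leftarrow$) NE $\Rightarrow$ fixed point.} Let $\rt_1 = c_1 x^*$ and $\rt_2 = c_2 y^*$. Then the pre-iterates are $\xt = x^*$ and $\yt = y^*$, and player~1's prediction is $m_1 = Ay^*$. The NE inequalities give $(Ay^*)_i = v^*$ on $S_1$ and $(Ay^*)_i \le v^*$ off $S_1$. Hence $\pos{c_1x^* + Ay^* - v^*\ones} = c_1 x^*$. This vector has norm $\norm{\rt_1}_2$, so uniqueness of the $\gamma$-shift (Definition~\ref{def:gammashift}) forces $\gamma_1 = v^*$. Consequently $\pos{r_1} = c_1x^*$ and $x = x^* = \xt$; symmetrically $\gamma_2 = -v^*$ and $y = y^* = \yt$. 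Since $y - \yt = 0$, (F2) gives $g_1 = 0$, and then $\pos{r_1 + g_1} = \pos{r_1} = c_1x^*$. The same holds for player~2, so $\Phi(z) = z$.

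\emph{($\Rightarrow$) fixed point $\Rightarrow$ NE.} The crux is to show first that a fixed point has $x = \xt$ and $y = \yt$; after that everything is read off the clipping pattern. I would use the per-round inequality behind \eqref{eq:beta-free}. In the appendix it should appear as a one-step bound of the form $\beta_t^i \le$ (a norm increment of player~$i$ over the round plus an opponent term driven by $\norm{g}$), which is what telescopes to $\sum_t \beta^i_t \le \Nt_{1,\infty} + \Nt_{2,\infty}$. At a fixed point every norm increment vanishes. I would then combine this with (F3), which at a fixed point reads $\norm{g}^2 = D$, and with the lower bound $\beta^i \ge \frac{N_i}{8n_i}\norm{x_i - \xt_i}^2$. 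The goal is to conclude $\norm{x - \xt}^2 + \norm{y - \yt}^2 = 0$. A cleaner alternative to try first: view the constant trajectory as a trajectory started at a positive state, and check that the finite-horizon telescoped form of \eqref{eq:beta-free} does not use $\rt^{(1)} = \mathbf{0}$ beyond $t \ge t_i^*$. Then $T\cdot\beta^i \le \Nt_{1}+\Nt_{2}$ for all $T$ forces $\beta^i = 0$, hence $x_i = \xt_i$. I expect this step to be the main obstacle, since it is where the precise form of $g$ and of the one-step inequality matters.

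Given $x = \xt$ and $y = \yt$, (F2) yields $g_1 = g_2 = 0$. The fixed-point equation becomes $\rt_1 = \pos{\rt_1 + Ay - \gamma_1\ones}$ with $\rt_1 = \norm{\rt_1}_1\, x$. Comparing coordinates gives $(Ay)_i = \gamma_1$ on $\supp(x)$ and $(Ay)_i \le \gamma_1$ off it. Hence $\gamma_1 = \max_i (Ay)_i = x^\top A y$, and $x$ is a best response to $y$. The symmetric argument with $-A^\top$ shows that $y$ is a best response to $x$ with $\gamma_2 = -x^\top Ay$. Therefore $(x,y)$ is an NE and $x^\top Ay = v^*$ by the minimax theorem.
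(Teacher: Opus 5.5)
Your proof is correct. The direction NE $\Rightarrow$ fixed point and the final componentwise reading match the paper's, but you get $x=\xt$, $y=\yt$ at a fixed point by a different mechanism.

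Your ``cleaner alternative'' goes through. The entrywise inequality behind \eqref{eq:beta-free}, $\rt_1'=\pos{r_1+g_1}\ge\rt_1+\bigl(u-\inner{u}{x}\ones\bigr)+\beta^1\ones$ with $u=Ay$, is an exact one-round statement. Telescoping it from a nonnegative start only adds $\rt^{(1)}_{1,a}\ge0$ on the right, which can be dropped. So the constant orbit satisfies $T(\beta^1+\beta^2)\le\Nt_1+\Nt_2$ for all $T$. This forces $\beta^i=0$, and then $x=\xt$, $y=\yt$ follow from the lower bound in \eqref{eq:beta-free}.

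You do not even need $T\to\infty$. Putting $\rt_1'=\rt_1$ into that inequality gives $\max_a(Ay)_a-x^\top Ay+\beta^1\le0$. Hence $\beta^1=0$ and the main iterate $x$ is already a best response to $y$, for each player separately.

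The paper instead proves the equality case of norm monotonicity (Lemma~\ref{lem:clip-rigidity}). At a fixed point, $\norm{\pos{r_1+g_1}}_2=\norm{\pos{r_1}}_2$ with $\inner{g_1}{\pos{r_1}}=0$, i.e.\ $D=\norm{g_1}^2$ in (F3). That is precisely the observation you wrote down without exploiting it. A coordinatewise analysis of $0\le d_i\le g_i^2$ then forces $\pos{r_1+g_1}=\pos{r_1}$, hence $\rt_1=\pos{r_1}$ and $x=\xt$ by normalization.

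Each route has its advantages. Yours needs no new lemma and yields the best-response property directly, which makes the later componentwise step partly redundant. The paper's costs one extra lemma, but that step uses neither the payoffs nor any regret bookkeeping, and it delivers the stationarity $\rt_i=\pos{r_i}$ of the prediction step outright.

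The first route you sketched, a one-step bound on $\beta$ through norm increments, is not needed.
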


\subsection{Pseudocode of \iregprm{}}\label{app:algorithm}
Algorithm~\ref{alg:iregprm} lists one round of the strict extra-gradient form described above; player~2 runs the same steps with payoff operator $-A^\top$.

\begin{algorithm}[h]
\caption{\iregprm{} (strict Algorithm~2 of \citealp{zhang2025scale},
extra-gradient setup): round $t$, player~1's updates.}
\label{alg:iregprm}
\begin{algorithmic}[1]
\Require state $\rt^{(t)} \in \R^{n_1}_{\ge 0}$, default $\xt^{(1)} \gets \ones/n_1$;
  if $\rt^{(t)} = \mathbf{0}$: keep $\xt^{(t)} \gets \xt^{(t-1)}$ for $t\ge2$ (the default at $t=1$), set
  $m^{(t)} \gets \mathbf{0}$, $r^{(t)} \gets \mathbf{0}$, play
  $x^{(t)} \gets \xt^{(t)}$, and go to line~6 (no shift)
\State $\xt^{(t)} \gets \rt^{(t)} / \norm{\rt^{(t)}}_1$
  \Comment{pre-iterate; $\yt^{(t)}$ symmetric}
\State $m^{(t)} \gets A \yt^{(t)}$
  \Comment{prediction: payoff at the opponent's pre-iterate}
\State $v^{(t)} \gets \rt^{(t)} + m^{(t)}$;\quad
  $\gamma^{(t)} \gets$ solution of
  $\norm{\pos{v^{(t)} - \gamma \ones}}_2 = \norm{\rt^{(t)}}_2$
\State $r^{(t)} \gets \rt^{(t)} + m^{(t)} - \gamma^{(t)} \ones$
  \Comment{unclipped; norm-preserving}
\State $x^{(t)} \gets \pos{r^{(t)}} / \norm{\pos{r^{(t)}}}_1$
  \Comment{main iterate}
\State observe $u^{(t)} \gets A y^{(t)}$;\quad
  $\mathrm{res}^{(t)} \gets u^{(t)} - m^{(t)}$
\State $g^{(t)} \gets \mathrm{res}^{(t)}
  - \inner{\mathrm{res}^{(t)}}{x^{(t)}} \ones$
  \Comment{centered instantaneous regret}
\State $\rt^{(t+1)} \gets \pos{r^{(t)} + g^{(t)}}$
\end{algorithmic}
\end{algorithm}

\section{Main Results}
\label{sec:main}

Saturation (Theorem~\ref{thm:saturation}) forces the last-iterate gap to vanish on every matrix game (Theorem~\ref{thm:global}), and to vanish pointwise under a unique equilibrium (Theorem~\ref{thm:point}).  The linear rate is local: near a unique strictly complementary equilibrium that meets the scale-invariant step-size condition of Assumption~\ref{asm:stepsize}, the rate has a closed form (Theorem~\ref{thm:rate}).  The same theory gives a ratio certificate (Theorem~\ref{thm:certificate}).  Proofs are in Appendices~\ref{app:local}--\ref{app:certificate}, sketches below.

\subsection{Global theory: saturation implies convergence}
\label{sec:global}

By Theorem~\ref{thm:saturation}, $\Nt_i(t)\uparrow\Nt_{i,\infty}<\infty$; we assume $t_+:=\max(t_1^*,t_2^*)<\infty$, so $\Nt_{i,\infty}>0$ (else one player never moves).  For $t\ge t_+$ one round of \algname{} acts on the positive-norm states through the memoryless map $\Phi$ of Section~\ref{sec:prelim}:
\begin{equation}\label{eq:statespace}
  \mathcal{Z} \;:=\; \bigl\{\, z : \rt_1\ge 0,\ \rt_2\ge 0,\ N_1(z)>0,\ N_2(z)>0 \,\bigr\}, \qquad\qquad \Phi:\mathcal{Z}\to\mathcal{Z}.
\end{equation}
\begin{lemma}[Norm-preserving pairing]
\label{lem:pairing}
Let $a,b\in\R^n_{\ge0}\setminus\{0\}$ with $\norm{a}_2=\norm{b}_2=N$, and let $x:=a/\norm{a}_1$, $\xt:=b/\norm{b}_1$.  Then
\begin{equation}
a-b \;=\; N\Bigl(\tfrac{x}{\norm{x}_2}-\tfrac{\xt}{\norm{\xt}_2}\Bigr), \qquad\text{hence}\qquad \norm{a-b}_2 \;\le\; 2\sqrt{n}\,N\,\norm{x-\xt}_2 .
\label{eq:pairing}
\end{equation}
Applied with $a=\pos{r_1(z)}$, $b=\rt_1$ (equal norms by the $\gamma$-shift), it bounds the prediction-step displacement by the pre-iterate movement, which is square-summable (Lemma~\ref{lem:sat-equiv}; Appendix~\ref{sec:global:proofs}, \eqref{eq:pairing-applied}).
\end{lemma}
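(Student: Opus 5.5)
The identity is pure algebra and is the first thing to establish. Since $x=a/\norm{a}_1$, we have $\norm{x}_2=\norm{a}_2/\norm{a}_1=N/\norm{a}_1$. Hence $x/\norm{x}_2=a/N$, that is, $a=N\,x/\norm{x}_2$. The same computation gives $b=N\,\xt/\norm{\xt}_2$. Subtracting these two expressions yields the displayed identity. The argument uses only $a,b\ne0$, nonnegativity (so that $\norm{a}_1$ is the sum of entries and $x\in\simplex{n}$), and $\norm{a}_2=\norm{b}_2=N$.

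For the inequality, we bound the Lipschitz constant of the radial projection $u\mapsto u/\norm{u}_2$ on the simplex. Insert and subtract $x/\norm{\xt}_2$:
\[
\Bigl\lVert \tfrac{x}{\norm{x}_2}-\tfrac{\xt}{\norm{\xt}_2}\Bigr\rVert_2
\le \frac{\norm{x-\xt}_2}{\norm{\xt}_2}+\norm{x}_2\Bigl|\tfrac{1}{\norm{x}_2}-\tfrac{1}{\norm{\xt}_2}\Bigr|
=\frac{\norm{x-\xt}_2+\bigl|\norm{\xt}_2-\norm{x}_2\bigr|}{\norm{\xt}_2}
\le \frac{2\norm{x-\xt}_2}{\norm{\xt}_2}.
\]
The last step is the reverse triangle inequality. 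On the simplex, Cauchy--Schwarz gives $1=\norm{\xt}_1\le\sqrt n\,\norm{\xt}_2$, so $1/\norm{\xt}_2\le\sqrt n$. This yields $\norm{a-b}_2\le 2\sqrt n\,N\norm{x-\xt}_2$. (A sharper constant is available, since the normalized map is a projection onto the sphere restricted to vectors of norm at least $1/\sqrt n$, but the stated $2\sqrt n$ is all that is needed.)

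For the application sentence, take $a=\pos{r_1(z)}$ and $b=\rt_1$. The $\gamma$-shift of Definition~\ref{def:gammashift} is chosen precisely so that $\norm{\pos{r_1}}_2=\norm{\rt_1}_2=N_1$. In Algorithm~\ref{alg:iregprm}, $x^{(t)}=\pos{r^{(t)}}/\norm{\pos{r^{(t)}}}_1$ and $\xt^{(t)}=\rt^{(t)}/\norm{\rt^{(t)}}_1$, and both vectors are nonzero once $t\ge t_+$, because their $\ell_2$ norms equal $N_1(t)>0$. Hence
\[
\norm{\pos{r^{(t)}}-\rt^{(t)}}_2\le 2\sqrt{n_1}\,\Nt_1(t)\,\norm{x^{(t)}-\xt^{(t)}}_2 .
\]
By Theorem~\ref{thm:saturation}, $\Nt_1(t)\le\Nt_{1,\infty}$. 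Squaring and summing, the right side is at most $4n_1\Nt_{1,\infty}^2\sum_t\norm{x^{(t)}-\xt^{(t)}}^2$, which is finite by \eqref{eq:sat-explicit} of Lemma~\ref{lem:sat-equiv}. So the prediction-step displacements are square-summable.

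There is no real obstacle here. The only point needing care is confirming that the $\gamma$-shift yields a nonzero $\pos{r_1}$ with exactly the pre-step norm, which is immediate since $\tau=N_1>0$.
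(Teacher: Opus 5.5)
Your proof is correct and follows the paper's argument: the same rescaling identity $a=N\,x/\norm{x}_2$, the same add-and-subtract bound for the radial normalization, and the same simplex bound $\norm{\xt}_2\ge 1/\sqrt n$. The only cosmetic difference is that you insert $x/\norm{\xt}_2$ and get denominator $\norm{\xt}_2$, where the paper gets $\max(\norm{x}_2,\norm{\xt}_2)$; your application via $N_1(t)\le N_{1,\infty}$ and \eqref{eq:sat-explicit} matches the paper's use in Lemma~\ref{lem:displacement}.
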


\begin{theorem}[Global asymptotics; no spectral condition]
\label{thm:global}
Let $t_+<\infty$, so that both norms saturate at positive limits $\Nt_{i,\infty}$ (Theorem~\ref{thm:saturation}).  Then:
\begin{enumerate}[label=(\roman*),itemsep=1pt,topsep=2pt]
\item The tail trajectory $\{z^{(t)}\}_{t\ge t_+}$ has nonempty, compact $\omega$-limit set $\Omega$ with $\Phi(\Omega)=\Omega$, $\dist(z^{(t)},\Omega)\to 0$, and
\[
\Omega\;\subseteq\;\Fix(\Phi)\,\cap\;\{\,z:\ \rt_i\ge0,\ \norm{\rt_i}_2=\Nt_{i,\infty}\,\}.
\]
Every point of $\Omega$ is a Nash ray $(c_1x^*,c_2y^*)$ (Proposition~\ref{prop:fixedpoint}), where $\gamma_1=v^*$.
\item Consequently the last-iterate Nash gap vanishes:
\[
\gapf\bigl(x^{(t)},y^{(t)}\bigr)\;=\;\max_i\,(Ay^{(t)})_i-\min_j\,({x^{(t)}}^{\!\top}\! A)_j\ \longrightarrow\ 0,
\]
and $\dist\bigl((x^{(t)},y^{(t)}),\mathrm{NE}\bigr)\to 0$, for the pre-iterates too.
\end{enumerate}
Only $t_+<\infty$ is assumed; no spectral, strict-complementarity or uniqueness condition enters.  If $t_+=\infty$ the gap is $0$ in finitely many rounds (Lemma~\ref{lem:neverpos}), so (ii) holds on every matrix game.
\end{theorem}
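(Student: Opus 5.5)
The plan is to show that every $\omega$-limit point of the tail trajectory is a fixed point of $\Phi$, using saturation and square-summability, and then read off the gap from continuity. First, boundedness and compactness. By Theorem~\ref{thm:saturation} we have $\norm{\rt_i^{(t)}}_2=\Nt_i(t)\le \Nt_{i,\infty}$. The tail therefore lies in a compact set, and $\Omega$ is nonempty and compact with $\dist(z^{(t)},\Omega)\to0$ by the standard argument. Every $\omega\in\Omega$ satisfies $\norm{\rt_i}_2=\Nt_{i,\infty}>0$, since the norms converge. So $\Omega\subset\mathcal Z$, and in fact $\Omega$ lies in the compact slice where $\Phi$ should be continuous. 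For that continuity I would verify that $\gamma(v,\tau)$ is continuous in $(v,\tau)$, using strict monotonicity together with the implicit characterization in Definition~\ref{def:gammashift}, and that the normalizations $\rt/\norm{\rt}_1$ and $\pos{r}/\norm{\pos r}_1$ are continuous away from zero. We have $\norm{\pos{r}}_2=\Nt_i>0$, so this holds. Invariance $\Phi(\Omega)=\Omega$ then follows in the usual way from continuity and compactness.

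Second, the displacement. The key step is to show $\norm{z^{(t+1)}-z^{(t)}}\to0$. By Lemma~\ref{lem:sat-equiv}, $\norm{x^{(t)}-\xt^{(t)}}\to0$ and $\norm{y^{(t)}-\yt^{(t)}}\to0$, since these are square-summable. Lemma~\ref{lem:pairing} gives $\norm{\pos{r_1^{(t)}}-\rt_1^{(t)}}\le 2\sqrt{n_1}\Nt_1(t)\norm{x^{(t)}-\xt^{(t)}}\to0$. By (F2), $\norm{g^{(t)}}\to0$. Since $\pos{\cdot}$ is $1$-Lipschitz, $\norm{\pos{r^{(t)}+g^{(t)}}-\pos{r^{(t)}}}\le\norm{g^{(t)}}$. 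Combining, $\norm{\rt^{(t+1)}-\rt^{(t)}}\le\norm{g^{(t)}}+\norm{\pos{r^{(t)}}-\rt^{(t)}}\to0$. Here I use that $\pos{\pos{r}}=\pos r$, and one must check that the update really clips $r+g$ rather than $\pos r+g$; I would split $r=\pos r-[-r]_+$ and use that on coordinates where $r<0$, adding $g$ can only decrease the positive-part distance.

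Third, fixed points. Take $\omega=\lim_k z^{(t_k)}$. Then $z^{(t_k+1)}=\Phi(z^{(t_k)})\to\Phi(\omega)$ by continuity, and also $z^{(t_k+1)}\to\omega$ by the displacement step. Hence $\Phi(\omega)=\omega$. Proposition~\ref{prop:fixedpoint} then identifies $\omega$ as $(c_1x^*,c_2y^*)$ with $c_i>0$, $(x^*,y^*)$ an NE and $\gamma_1=v^*$.

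For (ii): the pre-iterate map $z\mapsto(\xt,\yt)$ and the main-iterate map $z\mapsto(x,y)$ are both continuous on the slice. At points of $\Omega$ they both equal the NE $(x^*,y^*)$, because at a fixed point $\pos{r}=\rt$. Since $\gapf$ is continuous and vanishes on NE, a compactness argument (by contradiction along a subsequence with gap bounded below) gives $\gapf\to0$ and distance to the NE set $\to0$. The case $t_+=\infty$ is handled by citing Lemma~\ref{lem:neverpos}.

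The main obstacle I expect is the continuity of $\Phi$ on the relevant slice, specifically of the $\gamma$-shift and of the clipping/normalization when $\pos{r}$ changes support. I would handle it through the explicit formula for $\gamma$ on each active set, together with the fact that $\gamma$ is the unique root of a continuous, strictly decreasing function, which makes it continuous by a standard implicit-function (monotone root) argument.
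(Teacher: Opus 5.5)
Your proof is correct, but it takes a different route from the paper's at the one step that matters: why limit points are fixed points.

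\textbf{Your route.} You use the generic asymptotic-regularity principle. The displacement bound
$\norm{\rt_1^{(t+1)}-\rt_1^{(t)}}\le\norm{g_1^{(t)}}+\norm{\pos{r_1^{(t)}}-\rt_1^{(t)}}$
controls the first term by (F2) and the second by the inequality form of Lemma~\ref{lem:pairing}. This gives $z^{(t_k+1)}\to\omega$. Continuity of $\Phi$ on $\mathcal Z$ gives $z^{(t_k+1)}\to\Phi(\omega)$, so $\Phi(\omega)=\omega$.

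\textbf{The paper's route} (Proposition~\ref{prop:omega-fix}) evaluates the round directly at the limit point $\bar z$:
\begin{enumerate}
\item Continuity of the strategy maps together with $\norm{x^{(t)}-\xt^{(t)}}\to0$ gives $X(\bar z)=\tilde X(\bar z)$.
\item The \emph{identity} in Lemma~\ref{lem:pairing} upgrades this to $\pos{r_1(\bar z)}=\rt_1(\bar z)$. The identity says that two nonnegative vectors with equal $\ell_2$ norms and equal $\ell_1$-normalizations coincide.
\item $Y(\bar z)=\tilde Y(\bar z)$ makes the residual, and hence $g$, vanish exactly, so $\Phi(\bar z)=\bar z$.
\end{enumerate}

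\textbf{What each buys.} The paper's argument needs no bound on $g$ along the trajectory. It also exhibits the stationarity structure $x=\xt$, $\pos{r}=\rt$, $g=0$ at $\bar z$ explicitly. The paper does prove your displacement estimate as Lemma~\ref{lem:displacement}, but uses it only for the connectedness of $\Omega$, which the theorem does not claim. Your argument is shorter and treats $\Phi$ as a black box once continuity of the $\gamma$-shift is settled. It also makes $\Phi(\Omega)=\Omega$ immediate from $\Omega\subseteq\Fix(\Phi)$, so no separate invariance argument is needed. For part~(ii) you correctly recover $x=\xt$ at fixed points from Proposition~\ref{prop:fixedpoint}.

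\textbf{One small point.} Your aside about splitting $r$ into positive and negative parts is unnecessary. The update compares $\pos{r+g}$ with $\pos{r}$, and coordinatewise $1$-Lipschitzness of $\pos{\cdot}$ gives $\norm{\pos{r+g}-\pos{r}}\le\norm{g}$ directly.
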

\emph{Proof sketch} (Appendix~\ref{sec:global:proofs}).  The displacement is $O(\norm{x-\xt}+\norm{y-\yt})$, $\ell^2$-summable (Lemma~\ref{lem:sat-equiv}).  Limit points are fixed points: continuity of the $\gamma$-shift gives $\norm{X(\bar z)-\tilde X(\bar z)}=0$ on $\Omega$, and \eqref{eq:pairing} upgrades strategy equality to regret-vector equality \citep{la1976stability}; part~(ii) is a compactness bootstrap.

\begin{theorem}[Point convergence under NE uniqueness; no spectral condition]
\label{thm:point}
In addition to the hypotheses of Theorem~\ref{thm:global}, suppose the Nash equilibrium $(x^*,y^*)$ is unique.  Then the whole state trajectory converges to a single point,
\[
z^{(t)}\ \longrightarrow\ \bar z:=\Bigl(\tfrac{\Nt_{1,\infty}}{\norm{x^*}_2}\,x^*,\; \tfrac{\Nt_{2,\infty}}{\norm{y^*}_2}\,y^*\Bigr), \qquad x^{(t)},\xt^{(t)}\to x^*,\quad y^{(t)},\yt^{(t)}\to y^* .
\]
The statement needs uniqueness alone.
\end{theorem}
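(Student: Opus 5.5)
The plan is to combine Theorem~\ref{thm:global}(i) with uniqueness to show that the $\omega$-limit set $\Omega$ is a single point. By Theorem~\ref{thm:global}(i), every $\bar z\in\Omega$ is a fixed point of $\Phi$ with $\rt_i\ge 0$ and $\norm{\rt_i}_2=\Nt_{i,\infty}$. Proposition~\ref{prop:fixedpoint} then gives $\bar z=(c_1x^*,c_2y^*)$ for some NE $(x^*,y^*)$ and scales $c_1,c_2>0$. Uniqueness fixes the direction $(x^*,y^*)$. The norm constraint fixes the scales: $c_1\norm{x^*}_2=\Nt_{1,\infty}$ and $c_2\norm{y^*}_2=\Nt_{2,\infty}$, so $c_1$ and $c_2$ are determined. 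Hence $\Omega=\{\bar z\}$ with $\bar z$ exactly as displayed in the statement.

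Next I pass from a singleton limit set to convergence of the whole sequence. The tail trajectory is bounded, because $\norm{\rt_i^{(t)}}_2\le\Nt_{i,\infty}$ by monotonicity (Theorem~\ref{thm:saturation}). Theorem~\ref{thm:global}(i) also gives $\dist(z^{(t)},\Omega)\to0$. Since $\Omega=\{\bar z\}$, this distance is just $\norm{z^{(t)}-\bar z}$, so $z^{(t)}\to\bar z$. Alternatively, if some subsequence stayed at distance at least $\eps$ from $\bar z$, compactness would produce a second limit point, contradicting $\Omega=\{\bar z\}$.

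Finally I transfer convergence to the strategies. The pre-iterate map $\rt_1\mapsto\rt_1/\norm{\rt_1}_1$ is continuous on $\{\rt_1\ge0,\rt_1\ne0\}$, and $\bar z$ lies in that set. Therefore $\xt^{(t)}\to x^*$, and symmetrically $\yt^{(t)}\to y^*$. For the main iterates I will use continuity of the $\gamma$-shift, which is already used in the proof of Theorem~\ref{thm:global}. With it, the one-round quantities $m$, $\gamma$, $r$ and $x=\pos{r}/\norm{\pos{r}}_1$ depend continuously on $z$ near $\bar z$; note that $\norm{\pos{r}}_2=\norm{\rt}_2>0$, so the normalization is well defined. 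At the fixed point $\bar z$ these quantities give $x(\bar z)=\xt(\bar z)=x^*$, so $x^{(t)}\to x^*$ and $y^{(t)}\to y^*$. A shortcut is also available: the square-summability in Lemma~\ref{lem:sat-equiv} gives $\norm{x^{(t)}-\xt^{(t)}}\to0$ directly, and combined with $\xt^{(t)}\to x^*$ this yields the same conclusion.

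I do not expect a serious obstacle, since all the heavy lifting is in Theorem~\ref{thm:global}. The only point needing care is confirming that the equality $\norm{\rt_i}_2=\Nt_{i,\infty}$ on $\Omega$ pins down the scale, so that the cone of fixed points collapses to one ray point. This is exactly where saturation, and not merely boundedness, is used. It also explains why no spectral or strict-complementarity hypothesis is needed.
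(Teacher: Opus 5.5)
Your proposal is correct and follows the paper's proof essentially step for step. Theorem~\ref{thm:global}(i) places $\Omega$ inside $\Fix(\Phi)$ intersected with the level set $\norm{\rt_i}_2=\Nt_{i,\infty}$; uniqueness together with that level set collapses $\Omega$ to the single point $\bar z$; a bounded sequence with a unique limit point converges; and continuity of the round maps (Lemma~\ref{lem:phi-cont}) carries the convergence over to $x^{(t)},\xt^{(t)},y^{(t)},\yt^{(t)}$. Your square-summability shortcut for the main iterates is also valid, but it is not needed.
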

\emph{Proof sketch} (Appendix~\ref{sec:global:proofs}).  Uniqueness makes $\Fix$ the ray family $\{(c_1x^*,c_2y^*):c_i>0\}$; the level sets $\norm{\rt_i}_2=\Nt_{i,\infty}$ pin $c_i$ uniquely, so the bounded trajectory has a unique limit point.

\begin{corollary}[Global entry $+$ local capture $\Rightarrow$ eventual linear convergence]
\label{cor:capture}
Assume the hypotheses of Theorem~\ref{thm:global}, and suppose $\Omega$ contains \emph{at least one} point $z^\dagger$ to which Theorem~\ref{thm:rate} applies.  Write $z^\dagger=(a_1x^*,a_2y^*)$ for a strictly complementary equilibrium $(x^*,y^*)$ (Assumption~\ref{asm:unique-strict}), so that its effective step size $\eta(z^\dagger):=1/\sqrt{a_1a_2}$ satisfies $\eta(z^\dagger)\,\sigmax(\Bmat)<1$.  Then the entire trajectory converges to a single fixed point $z_\infty\in\Fix(\Phi)$, and there exist $T^*\ge t_+$, constants $C,C'<\infty$ and $\bar\rho\in(\rho^*,1)$, with $\rho^*$ the closed-form rate of Theorem~\ref{thm:rate}, such that for all $t\ge T^*$,
\[
\norm{z^{(t)}-z_\infty}\;\le\;C\,\bar\rho^{\,t-T^*}, \qquad \gapf\bigl(x^{(t)},y^{(t)}\bigr)\;\le\;C'\,\bar\rho^{\,t-T^*} .
\]
Under Assumptions~\ref{asm:unique-strict} and~\ref{asm:stepsize} with $m\ge2$ the hypothesis on $\Omega$ is automatic.  For the pure saddle $m=1$, Lemma~\ref{lem:absorb} gives capture in finitely many rounds, so the conclusion holds for any $\bar\rho$.
\end{corollary}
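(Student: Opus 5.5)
The argument combines the global $\omega$-limit structure of Theorem~\ref{thm:global} with the local contraction of Theorem~\ref{thm:rate}, using $z^\dagger\in\Omega$ as the bridge between them. By Theorem~\ref{thm:global}(i), $z^{(t)}$ comes arbitrarily close to $z^\dagger$ along a subsequence. Theorem~\ref{thm:rate} supplies two things: a neighborhood $U$ of $z^\dagger$ and a norm in which the face map $\Phi_S$ contracts transversally to the fixed-point cone at any rate $\bar\rho\in(\rho^*,1)$, and, via Lemma~\ref{lem:absorb}, the guarantee that every point of $U$ is mapped onto the support face $\Ms$ in one round. The plan is therefore to pick $T^*$ as a time at which $z^{(T^*)}$ lies in a smaller neighborhood $U'\subset U$, and then to show that the trajectory never leaves $U$ after $T^*$.

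\emph{Step 1: local geometry.} Near $z^\dagger$ the fixed-point set inside $\Ms$ is a two-dimensional manifold $\mathcal C=\{(c_1x^*,c_2y^*)\}$ (locally, by strict complementarity and continuity). By Theorem~\ref{thm:rate}, $D\Phi_S(z^\dagger)$ has eigenvalue $1$ with eigenspace tangent to $\mathcal C$, and all remaining eigenvalues have modulus at most $\rho^*$. The effective step size $\eta(z)$ and $\rho^*(z)$ depend continuously on the base point of $\mathcal C$, so the strict inequality $\eta\,\sigmax(\Bmat)<1$ persists on a neighborhood of $z^\dagger$ in $\mathcal C$. Standard normally-attracting invariant manifold reasoning then applies: with $\Phi_S$ real-analytic, hence $C^2$, on the face, choose an adapted norm. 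In that norm there is $\epsilon>0$ such that for $z\in\Ms$ with $\norm{z-z^\dagger}<\epsilon$, writing $\pi(z)$ for its projection onto $\mathcal C$ along the stable directions, we have $\dist(\Phi_S(z),\mathcal C)\le\bar\rho\,\dist(z,\mathcal C)$ and $\norm{\pi(\Phi_S(z))-\pi(z)}\le C_0\,\dist(z,\mathcal C)^2$. The drift estimate holds because the fixed points have $\pi$-displacement zero and the first-order term vanishes on the tangent direction.

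\emph{Step 2: trapping and summation.} Choose $\delta\ll\epsilon$ such that $z^{(T^*)}\in U'$, where $\dist\le\delta$ and $\norm{\pi(z)-z^\dagger}\le\delta$. Iterating Step 1 gives $\dist(z^{(t)},\mathcal C)\le\bar\rho^{\,t-T^*}\delta$, and the total drift of $\pi$ is at most $C_0\delta^2/(1-\bar\rho^2)$. Hence the trajectory stays within $\epsilon$ for small $\delta$, and the induction closes. The sequence $\pi(z^{(t)})$ is Cauchy with geometric tail, so it converges to some $z_\infty\in\mathcal C\subset\Fix(\Phi)$, and
\[
\norm{z^{(t)}-z_\infty}\le C\bar\rho^{\,t-T^*}.
\]
The gap bound follows because $\gapf$ is Lipschitz in $(x,y)$ and vanishes at $z_\infty$. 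In turn, $(x,y)$ is Lipschitz in $z$ near $z_\infty$: the $\gamma$-shift is locally Lipschitz, and on $\mathcal Z$ the norms are bounded away from zero, which gives $C'$.

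\emph{Step 3: the automatic cases.} Under Assumptions~\ref{asm:unique-strict} and~\ref{asm:stepsize}, Theorem~\ref{thm:point} gives $\Omega=\{\bar z\}$, and Assumption~\ref{asm:stepsize} is precisely the step-size condition at $\bar z$, so $z^\dagger=\bar z$ qualifies. For $m=1$ (pure saddle), Lemma~\ref{lem:absorb} places the state on the one-dimensional face after finitely many rounds. There the state is a fixed point up to norm, and saturation pins that norm, so $z^{(t)}=z_\infty$ eventually and any $\bar\rho$ works.

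The main obstacle is Step 1's second-order drift bound along the non-isolated cone. If Theorem~\ref{thm:rate} is stated only as a spectral bound at $z^\dagger$, I would first try to avoid the invariant-manifold machinery by exploiting saturation. The norm coordinates $N_i(z^{(t)})$ are monotone and converge, and they parametrize $\mathcal C$ directly, since $c_i=N_{i}/\norm{x^*}_2$. The tangential drift is therefore controlled by the bounded monotone norm limits, and the contraction is only needed in the transverse coordinates with the norms treated as a slowly converging parameter.
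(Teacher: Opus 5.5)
Your proposal is correct and follows the paper's proof. Because $z^\dagger\in\Omega$, the trajectory must visit the capture neighborhood at some $T^*\ge t_+$; the local theory then gives linear convergence to a fixed point, the gap bound follows from Lipschitz continuity, and the automatic case is Theorem~\ref{thm:point}. The one difference is that your Steps~1--2 re-derive the $E\oplus W$ bootstrap of Lemma~\ref{lem:bootstrap}, whereas the paper simply cites Theorem~\ref{thm:rate}(iii): its neighborhood $\mathcal U'=\Phi^{-1}(\mathcal V)\cap\mathcal U_0\cap\mathbb{B}(z^*,r_0)^{\circ}$ already handles the one-step passage onto $\Ms$ that your Step~2 leaves implicit. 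Also, for $m=1$ the face $\Ms=E$ is two-dimensional, not one-dimensional, and every feasible point of it near the ray is already a fixed point, so no appeal to saturation is needed.
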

\emph{Proof sketch} (Appendix~\ref{sec:global:proofs}).  Every neighborhood of $z^\dagger$, in particular the capture neighborhood $\mathcal{U}'$ of Theorem~\ref{thm:rate}, is visited infinitely often, so entry at finite $T^*$ captures the trajectory.

\paragraph{Self-stabilization.}
A ray with $\etainf\,\sigmax(\Bmat)>1$ is linearly unstable (Theorem~\ref{thm:rate}(ii)), so a nearby trajectory is generically pushed away while its norms keep growing (Lemma~\ref{lem:normmono}).  We expect the growth to continue until a ray satisfying the condition is reached, the hypothesis we test in Section~\ref{sec:experiments}.
\begin{proposition}[Instability of step-size-violating rays]
\label{prop:selfstab}
Let $z^\dagger=(c_1x^*,c_2y^*)\in\Fix(\Phi)$ correspond to a strictly complementary, nondegenerate equilibrium (Assumption~\ref{asm:nondeg-strict}), and suppose $\eta(z^\dagger)\,\sigmax(\Bmat)>1$.  Then:
\begin{enumerate}[label=(\alph*),itemsep=1pt,topsep=2pt]
\item (Full space) There is a neighborhood $B_\dagger\ni z^\dagger$ and a $C^1$ local center-stable manifold $W^{cs}_{loc}(z^\dagger)$ of dimension $d-d_u$, where $d:=n_1+n_2$ and $d_u=2\,\#\{k:\eta(z^\dagger)\sigma_k(\Bmat)>1\}\ge2$, such that every orbit that remains in $B_\dagger$ forever is contained in $W^{cs}_{loc}(z^\dagger)$.
\item ($\Ms$-local, measure zero) Within the support subspace $\Ms$, forward-invariant near $z^\dagger$ by Lemma~\ref{lem:absorb}, the local basin of attraction of $z^\dagger$ has Lebesgue measure zero in $\Ms$.
\end{enumerate}
\end{proposition}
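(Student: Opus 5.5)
The plan is to linearize the face map $\Phi_S$ at $z^\dagger$, read off an unstable subspace of dimension $d_u$ from the closed-form Jacobian, and then apply the standard center-stable manifold theorem for $C^1$ maps. Part (a) comes from that theorem directly. Part (b) comes from the same spectral information inside $\Ms$, via a measure-zero argument for stable sets.

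The first step is local smoothness. Under strict complementarity, Lemma~\ref{lem:absorb} shows that near $z^\dagger$ the clipping pattern is frozen after one round. The off-support coordinates are mapped to zero, with margin $\delta>0$ from \eqref{eq:strict-comp}, and the on-support coordinates stay strictly positive. On a neighborhood of $z^\dagger$, $\Phi$ therefore coincides with a composition that is real-analytic. The $\gamma$-shift is analytic there by the implicit function theorem, since $\partial_\gamma\norm{\pos{v-\gamma\ones}}_2^2\neq0$ on a fixed positive support. In the full space, the off-support directions lie in the kernel of $D\Phi(z^\dagger)$, since those coordinates are clipped, so $D\Phi(z^\dagger)$ has the block form
\[\begin{pmatrix} D\Phi_S & *\\ 0 & 0\end{pmatrix}.\]
Its spectrum is $\spec(D\Phi_S)\cup\{0\}$.

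The second step is the spectral computation, which I expect to be the main obstacle. I would use the closed-form Jacobian of Theorem~\ref{thm:rate}, in which the differentials of the value-centering shift cancel on the face. After projecting out the ray directions and the norm-preserving directions, I expect that Jacobian to act like an extragradient step with step size $\eta=\eta(z^\dagger)$ on the operator $\begin{pmatrix}0&\Bmat\\-\Bmat^\top&0\end{pmatrix}$. Using the SVD of $\Bmat$, each singular value $\sigma_k$ then contributes a conjugate pair of eigenvalues with modulus $\sqrt{1-\eta^2\sigma_k^2+\eta^4\sigma_k^4}$ or an analogous closed form. I would verify from the closed form of Theorem~\ref{thm:rate}(ii) that this modulus exceeds $1$ exactly when $\eta\sigma_k>1$. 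Each such $k$ contributes two eigenvalues of modulus greater than one. Non-degeneracy (Assumption~\ref{asm:nondeg-strict}) ensures that the kernel directions of $\Bmat$ are exactly the ray directions, which give eigenvalue $1$, so no further neutral modes appear. Because $\eta\sigmax(\Bmat)>1$, we get $d_u\ge2$. The remaining spectrum lies in the closed unit disk.

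The third step proves (a) from the spectral splitting. Split the spectrum into $\{|\lambda|>1\}$ and $\{|\lambda|\le1\}$. The center-stable manifold theorem for $C^1$ maps, in the form used by Shub and by Lee et al.\ for saddle avoidance, then yields a $C^1$ embedded disk $W^{cs}_{loc}(z^\dagger)$ of dimension $d-d_u$, tangent to the center-stable eigenspace. It also gives a ball $B_\dagger$ such that every orbit remaining in $B_\dagger$ lies in $W^{cs}_{loc}$. This theorem does not require $D\Phi$ to be invertible, which matters here because $D\Phi(z^\dagger)$ has a zero block.

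The fourth step proves (b). Restrict to $\Ms$, which is forward-invariant near $z^\dagger$ by Lemma~\ref{lem:absorb}, and apply the same theorem to $\Phi_S$. This gives a center-stable manifold of codimension $d_u\ge2$ in $\Ms$, which has Lebesgue measure zero. Any point of the local basin has an orbit that eventually stays in $B_\dagger$, so some iterate $\Phi_S^k$ of it lies in $W^{cs}_{loc}$. The basin is therefore contained in $\bigcup_k\Phi_S^{-k}(W^{cs}_{loc})$. To conclude, I need preimages of null sets under $\Phi_S$ to be null. This holds if $D\Phi_S$ is nonsingular almost everywhere, which I would deduce from analyticity: $\det D\Phi_S$ is a real-analytic function on the face that is not identically zero, because at $z^\dagger$ its eigenvalues are nonzero apart from possibly isolated directions. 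If the ray directions make $\det D\Phi_S$ vanish identically, I would instead quotient by the ray and norm coordinates, or argue on the level sets $N_i=\text{const}$ and use Fubini.
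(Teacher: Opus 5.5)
Your proposal follows the paper's proof. The ingredients match: an analytic ambient extension of $\Phi$ near the ray with the clipping pattern frozen (Lemma~\ref{lem:analytic}), the block-triangular Jacobian with $\spec(D\Phi(z^\dagger))=\spec(J_S)\cup\{0\}$ and modes $|\mu|^2=1-q+q^2$, the $C^1$ center-stable manifold theorem in its non-invertible (graph-transform) form for (a), and for (b) a cover of the local basin by preimages of the null patch $W^{cs}_{loc}\cap\Ms$.

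Step~4 simplifies. Your own Step~2 already shows $J_S$ is nonsingular: the ray directions carry eigenvalue $1$, not $0$, and $1-q+q^2\ge 3/4$. So $\Phi_S$ is a local diffeomorphism on a neighborhood of $z^\dagger$, and the almost-everywhere, quotient and Fubini arguments are unnecessary. Also, the off-support coordinates are the zero \emph{rows} of $D\Phi(z^\dagger)$, not its kernel, though your block form is right.
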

\emph{Proof sketch} (Appendix~\ref{sec:global:proofs}).  The proof combines real-analyticity of $\Phi$ near the ray, a center-stable manifold theorem for $C^1$ maps \citep{hirsch1977invariant} in the saddle-avoidance form \citep{lee2019first}, and nonsingularity of the face Jacobian $J=I+L+L^2$ (Lemma~\ref{lem:jacobian}).  The eigenvalues of $J$ satisfy $|\mu|^2=1-q+q^2\ge3/4$ for all $q=\etainf^2\sigma^2\ge0$, so preimages of null sets are null.

\begin{remark}[Self-stabilization]
\label{rem:mechanism}
Rays with $\etainf\sigmax(\Bmat)>1$ are unstable with Lebesgue-null basin in $\Ms$ (Proposition~\ref{prop:selfstab}); once the bound holds strictly, capture follows (Corollary~\ref{cor:capture}).  Between those endpoints sits the link that the norms keep growing while the bound is violated; we measure it on all $184$ instances with a resolvable limit, where $\etainf\sigmax(\Bmat)$ has median $0.790$, maximum $0.9985$, and zero violations, with several instances just inside the boundary (Section~\ref{sec:experiments}).  The center-stable set lives in the full state space, while the measurement follows the single zero-initialized orbit (Remark~\ref{rem:init}).
\end{remark}

\subsection{Local theory: support freezing and a closed-form linear rate}
\label{sec:local}

Saturation forces convergence; the observed rate is \emph{linear}.  Near a strictly complementary equilibrium one step of \algname{} lands exactly on the support face, where the face dynamics are real-analytic and the rate is governed by the local spectrum; two hypotheses are needed beyond saturation.
\begin{assumption}[Unique, strictly complementary equilibrium]
\label{asm:unique-strict}
The game $A\in\R^{n_1\times n_2}$ has a unique Nash equilibrium $(x^*,y^*)$, with value $v^*$, and the equilibrium is \emph{strictly complementary}: with $S_1:=\supp(x^*)$, $S_2:=\supp(y^*)$,
\begin{equation}
\label{eq:margin}
\delta \;:=\; \min\Bigl\{\, \min_{i\notin S_1}\bigl(v^*-(Ay^*)_i\bigr),\ \min_{j\notin S_2}\bigl((x^{*\top}A)_j-v^*\bigr) \Bigr\} \;>\;0 .
\end{equation}
\end{assumption}
The proofs use uniqueness only through a kernel condition on the support matrix: non-degeneracy, which for a strictly complementary equilibrium is equivalent to uniqueness (Section~\ref{sec:prelim}, Lemma~\ref{lem:kernel}).
Throughout, $\Atil:=A[S_1,S_2]$ is the support submatrix, $\Bmat:=\Atil-v^*\ones\ones^\top$ the \emph{value-centered support matrix}, $\xi:=x^*|_{S_1}$ and $\psi:=y^*|_{S_2}$ the equilibrium in face coordinates, and $E_+:=\{(a\xi,b\psi):a,b>0\}$ the fixed-point cone of Proposition~\ref{prop:fixedpoint}.  For $a,b>0$ the effective step size at a fixed point $(a x^*,b y^*)$ is $\eta(a,b):=1/\sqrt{ab}$; at the reference fixed point $z^*=(c_1x^*,c_2y^*)$ we write $\etainf:=\eta(c_1,c_2)$.
\begin{assumption}[Spectral step-size condition]
\label{asm:stepsize}
At the reference fixed point, $\etainf\,\sigmax(\Bmat) < 1$; for a saturated trajectory the reference point is its limit $\bar z$ of Theorem~\ref{thm:point}, $c_i=\Nt_{i,\infty}/\norm{x^*}_2$ (resp.\ $\norm{y^*}_2$).
\end{assumption}
On the support face $\Ms$ (for nonnegative states, $\supp(\rt_i)\subseteq S_i$), one step absorbs with exact support freezing (Lemma~\ref{lem:absorb}).  The face Jacobian at a fixed point is the closed form $J=I+L+L^2$, where $L$ is the one-step \emph{extra-gradient} operator on $\Bmat$ at effective step size $\eta(a,b)$, selected by the saturation level (Lemma~\ref{lem:jacobian}).  The centering identity of Lemma~\ref{lem:centering} makes the square of $L$ have real nonpositive eigenvalues, so the one-step spectrum is pinned to a single closed-form threshold.
Let $\Sigma^+(\Bmat)$ denote the multiset of nonzero singular values of $\Bmat$ (exactly $m-1$ of them, $m:=|S_1|=|S_2|$, under Assumption~\ref{asm:nondeg-strict}), and fix Assumption~\ref{asm:unique-strict}.  The pure saddle $m=1$ is set aside: there $\Bmat=0$, $\Ms$ is the fixed-point cone, and Lemma~\ref{lem:absorb} gives $\Phi(\mathcal U_0)\subseteq\Fix(\Phi)$ for a neighborhood $\mathcal U_0$ of the fixed point, so the iteration converges in one round; Theorem~\ref{thm:rate}(ii)--(iv) and Section~\ref{sec:certificate} assume $m\ge2$.
\begin{theorem}[Closed-form local linear rate]
\label{thm:rate}
Let Assumption~\ref{asm:unique-strict} hold with $m\ge2$ and let $z^*=(c_1x^*,c_2y^*)$, $c_1,c_2>0$, be a reference fixed point.
\begin{enumerate}[label=(\roman*),itemsep=1pt,topsep=2pt]
\item \textbf{(Spectrum.)}  For every fixed point $z_f=(a\xi,b\psi)$ with $\eta=\eta(a,b)$,
\begin{equation}
\label{eq:spec}
\spec\bigl(D\Phi_S(z_f)\bigr) \;=\; \{1,\,1\}\ \cup\ \bigl\{\, 1-\eta^2\sigma^2 \pm i\,\eta\sigma \;:\; \sigma\in\Sigma^+(\Bmat) \bigr\}
\end{equation}
as a multiset; the eigenvalue $1$ has algebraic multiplicity exactly two, is semisimple, with eigenspace the tangent space of the fixed-point cone.
\item \textbf{(Stability threshold.)}  The non-unit spectral radius is
\begin{equation}
\label{eq:rhostar}
\rho(a,b) \;=\; \max_{\sigma\in\Sigma^+(\Bmat)} \sqrt{\,1-\eta^2\sigma^2\bigl(1-\eta^2\sigma^2\bigr)\,},
\end{equation}
and $\rho(a,b)<1$ iff $\eta\,\sigmax(\Bmat)<1$, while $\eta\,\sigmax(\Bmat)>1$ makes some mode unstable.
\item \textbf{(Local linear last-iterate convergence.)}  Under Assumption~\ref{asm:stepsize}, let $\rho^*:=\rho(c_1,c_2)<1$.  For every $\eps\in(0,1-\rho^*)$ there is an open neighborhood $\mathcal{U}'\ni z^*$ and $C_\eps<\infty$ such that every trajectory with $z^{(t_0)}\in\mathcal{U}'$ converges to a fixed point $z_\infty=(a_\infty\xi,b_\infty\psi)\in E_+$ with
\begin{equation}
\label{eq:linrate}
\norm{z^{(t)}-z_\infty} \;\le\; C_\eps\,(\rho^*+\eps)^{\,t-t_0} \qquad\text{for all } t\ge t_0 .
\end{equation}
\item \textbf{(Gap at the same rate.)}  With the same $\mathcal{U}'$,
\begin{equation}
\label{eq:gaprate}
\gapf\bigl(x^{(t)},y^{(t)}\bigr) \;\le\; L_{\gapf}\,C_\eps\,(\rho^*+\eps)^{\,t-t_0} \qquad\text{for all } t\ge t_0,
\end{equation}
where $L_{\gapf}<\infty$ depends only on $\norm{A}$, $c_1$, $c_2$, and the dimensions.
\end{enumerate}
\end{theorem}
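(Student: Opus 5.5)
The proof proceeds in four steps: reduce to the support face, compute the face Jacobian in closed form, read off its spectrum, and then run a normally hyperbolic (cone-of-fixed-points) contraction argument.

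\textbf{Reduction to the face.} First invoke Lemma~\ref{lem:absorb}. Near $z^*$, strict complementarity ($\delta>0$) makes the off-support coordinates of $r^{(t)}+g^{(t)}$ strictly negative after one round, so $\Phi(\mathcal U_0)\subseteq\Ms$. On $\Ms$ the on-support coordinates stay strictly positive. Hence the clipping pattern is frozen, the $\gamma$-shift solves a smooth equation $\norm{v-\gamma\ones}_2=\tau$ with nondegenerate derivative, and $\Phi_S$ is real-analytic near $E_+$. It therefore suffices to analyze $\Phi_S$ on $\Ms\cong\R^{m}\times\R^{m}$, plus one extra round for entry.

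\textbf{Jacobian.} At $z_f=(a\xi,b\psi)$ I would differentiate each stage of Algorithm~\ref{alg:iregprm} along a perturbation $(h_1,h_2)$:
\begin{itemize}
\item the pre-iterate map $\rt\mapsto\rt/\norm{\rt}_1$;
\item the prediction $m=\Atil\yt$;
\item the shift $\gamma$;
\item the main iterate;
\item the correction $g$.
\end{itemize}
The key simplification is Lemma~\ref{lem:centering}. Because $\xi^\top\Bmat=0$ and $\Bmat\psi=0$, the differential of $\gamma$ and of the centering term $\inner{\mathrm{res}}{x}\ones$ cancel the $v^*\ones\ones^\top$ component. Only $\Bmat$ acting on the scaled perturbation survives. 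After the similarity $h_1\mapsto h_1/\sqrt a$, $h_2\mapsto h_2/\sqrt b$ the two players' steps combine into
\[
L=\eta\begin{pmatrix}0&\Bmat\\-\Bmat^\top&0\end{pmatrix}
\]
(modulo projections), with $\eta=1/\sqrt{ab}$. The prediction step contributes $L$ and the correction step contributes $L$ composed with the prediction, which gives $J=I+L+L^2$; this is Lemma~\ref{lem:jacobian}.

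\textbf{Spectrum.} Take the SVD of $\Bmat$. The operator $L$ is skew, with eigenvalues $\pm i\eta\sigma$ for $\sigma\in\Sigma^+(\Bmat)$ and $0$ with multiplicity two. Under non-degeneracy (Lemma~\ref{lem:kernel}) the kernel is spanned by $(\xi,0)$ and $(0,\psi)$, which is exactly the tangent space of $E_+$. The remaining eigenvalues of $J$ are $1\pm i\eta\sigma-\eta^2\sigma^2$, which proves (i). Since $L$ is normal, the eigenvalue $1$ is semisimple. For (ii),
\[
|\mu|^2=(1-q)^2+q=1-q(1-q),\qquad q=\eta^2\sigma^2,
\]
so $|\mu|<1$ iff $q<1$, and $|\mu|>1$ when $q>1$.

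\textbf{Local convergence (iii)--(iv).} This is the main obstacle, because $J$ has a neutral two-dimensional eigenspace tangent to a manifold of fixed points, and $\rho$ depends on the base point. I would use the standard argument for a normally attracting manifold of fixed points. By continuity of $\rho(a,b)$, choose a neighborhood of $z^*$ in $E_+$ on which $\rho\le\rho^*+\eps/2$. At each such $z_f$, decompose the perturbation along the analytic splitting into the cone tangent and the complementary invariant subspace, with an adapted norm in which $J$ contracts the complement by $\rho^*+\eps/2$. The Taylor remainder is $O(\norm{h}^2)$, uniformly on the neighborhood. Measuring distance to $E_+$, the normal component then contracts geometrically, $d_{t+1}\le(\rho^*+\eps)d_t$. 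The tangential drift per step is $O(d_t^2+d_t\cdot(\text{normal}))$, hence summable. This summability keeps the orbit in the neighborhood (shrink $\mathcal U'$ accordingly) and makes the base point converge to some $z_\infty$ with $\norm{z^{(t)}-z_\infty}\le\sum_{s\ge t}O(d_s)$, which is geometric. Alternatively I could cite the Aulbach / Hirsch--Pugh--Shub theorem for normally hyperbolic manifolds of fixed points. Finally, (iv) follows because $\gapf$ vanishes on $E_+$ and $z\mapsto(x,y)$ is Lipschitz near $z^*$ with constant depending on $c_i$ (via the $\ell_1$ normalization and $\norm{A}$). Then $\gapf\le L_{\gapf}\norm{z^{(t)}-z_\infty}$.
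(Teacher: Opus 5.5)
Your proposal is correct and follows essentially the paper's route: one-step absorption onto the support face, $J=I+L+L^2$ from the vanishing multiplier and centering differentials, the spectrum from the SVD of $\Bmat$ after the $\mathrm{diag}(\sqrt a,\sqrt b)$ similarity, and a bootstrap in which the normal component contracts in an adapted norm while the tangential drift is $O(\norm{w}^2)$, summable, and telescopes to a limit on $E_+$, followed by the Lipschitz bound on the gap. The one thing the paper uses that you leave general is that the splitting $E\oplus W$ and its $2\times2$ block structure are identical at every fixed point, so a single adapted norm $\norm{w_1}^2/c_1+\norm{w_2}^2/c_2$ with a Lipschitz correction in $(a,b)$ replaces your base-point-dependent splitting; keep in mind that the contraction $d_{t+1}\le(\rho^*+\eps)d_t$ holds only in this weighted norm, not in the Euclidean distance to $E_+$, because $J_k$ is not a Euclidean contraction when $a\neq b$.
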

\emph{Proof sketch} (Step~A, Section~\ref{sec:local:stepA}; Steps~B--C, Appendix~\ref{app:local:rate}).  Elementary: linear algebra, a uniform second-order Taylor expansion, and an induction; the spectral-radius route does not apply because the unit eigenvalue of multiplicity two never leaves the spectrum.  The $E\oplus W$ splitting block-diagonalizes the Jacobian, so the $E$-drift is $O(\norm{w}^2)$ and the limit exists by telescoping.

Only $\sigmax(\Bmat)$ decides whether the iteration contracts at all.  Corollary~\ref{cor:extreme} identifies which extreme nonzero singular value sets the rate: a small $\sigmin^{+}(\Bmat)$ slows the rate without threatening stability.  The rate is \emph{spectral}, in that $\delta$ enters the constants but not $\rho^*$.  Parts~(iii)--(iv) assert an \emph{envelope} $O((\rho^*+\eps)^t)$ for every $\eps>0$, and the realized rate matches it empirically (Section~\ref{sec:experiments}).\label{rem:rate-semantics}

\subsection{A ratio certificate}
\label{sec:certificate}

Saturation already implies $\gapf\to0$ (Lemma~\ref{lem:sat-equiv}); the content is \emph{quantitative}: a \emph{ratio} certificate in which the observable norm increments track the unobservable gap at the level of \emph{rates}, with the unobservable constants absorbed into one condition number.  The observable is
\[
\Ndot_i(t)\;:=\;\Nt_i(t{+}1)^2-\Nt_i(t)^2\;\ge\;0, \qquad \Ndottot(t)\;:=\;\Ndot_1(t)+\Ndot_2(t),
\]
nonnegative by Lemma~\ref{lem:normmono}, available online at the cost of two scalar subtractions.
Throughout we assume Assumptions~\ref{asm:unique-strict} and~\ref{asm:stepsize} in the local regime of Theorem~\ref{thm:rate}: once the support has frozen (Lemma~\ref{lem:absorb}) the state decomposes as $z^{(t)}=z_f^{(t)}+w^{(t)}$ with $z_f^{(t)}$ on the fixed-point cone, so $\dist(z^{(t)},\Fix)=\norm{w^{(t)}}$.  The three lemmas, and the measurement choices behind them (on-support energy, since differencing $\Nt^2$ loses representability; both players aggregated), are in Appendix~\ref{app:certificate} (Lemmas~\ref{lem:ndot-identity}, \ref{lem:gap-sharp}, \ref{lem:ndot-theta}; Remarks~\ref{rem:ndot-measure}, \ref{rem:single-player}).
\begin{theorem}[Ratio certificate]
\label{thm:certificate}
Assume Assumptions~\ref{asm:unique-strict} and~\ref{asm:stepsize}, and let the trajectory be in the frozen local regime of Theorem~\ref{thm:rate} with $\norm{w^{(t)}}\le\min(\delta_2,\delta_3)$ for all $t\ge t_{\mathrm{reg}}$, the regime-entry time, where $\delta_2$ and $\delta_3$ are the radii of Lemmas~\ref{lem:gap-sharp} and~\ref{lem:ndot-theta}.  Then for all $t_2\ge t_1\ge t_{\mathrm{reg}}$ with $\Ndottot(t_1)>0$,
\[
\kappa^{-1}\,\sqrt{\frac{\Ndottot(t_2)}{\Ndottot(t_1)}} \;\le\; \frac{\gapf(t_2)}{\gapf(t_1)} \;\le\; \kappa\,\sqrt{\frac{\Ndottot(t_2)}{\Ndottot(t_1)}}, \qquad \kappa:=\frac{L_{\gapf}}{c_-}\cdot\sqrt{\frac{n_+}{n_-}}\,,
\]
where $L_{\gapf},c_-,n_\pm$ are the constants of Lemmas~\ref{lem:gap-sharp} and~\ref{lem:ndot-theta}; $\kappa$ is the product of a gap-sharpness condition number $L_{\gapf}/c_-$ and a spectral condition number $\sqrt{n_+/n_-}$, independent of $t_1,t_2$ and of the trajectory amplitude; no factor $1/(1-\rho^*)$ enters.
\end{theorem}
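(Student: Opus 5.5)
The plan is to sandwich both the gap and the increment between multiples of the same squared quantity $\norm{w^{(t)}}^2$, the distance to the fixed-point cone. Then form the ratio, so that the unobservable $\norm{w}$ cancels.

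The two lemmas referred to in the statement supply the two sandwiches, and I take them as stated in Appendix~\ref{app:certificate}.
\begin{itemize}
\item \emph{Gap sharpness (Lemma~\ref{lem:gap-sharp}).} For $\norm{w}\le\delta_2$ on the frozen face,
\[
c_-\norm{w^{(t)}}\le\gapf(t)\le L_{\gapf}\norm{w^{(t)}}.
\]
The upper bound is the Lipschitz estimate of Theorem~\ref{thm:rate}(iv), since the gap vanishes on the cone $E_+$. The lower bound is metric subregularity on the face. Non-degeneracy (Lemma~\ref{lem:kernel}) makes $\Bmat$ injective modulo the kernel directions $\xi,\psi$, which span exactly $T E_+$, so a transverse displacement $w$ produces a payoff deviation of order $\sigmin^{+}(\Bmat)\norm{w}$. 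Strict complementarity ($\delta>0$) keeps off-support terms from reducing the gap.
\item \emph{Increment sandwich (Lemma~\ref{lem:ndot-theta}).} For $\norm{w}\le\delta_3$,
\[
n_-\norm{w^{(t)}}^2\le\Ndottot(t)\le n_+\norm{w^{(t)}}^2.
\]
I would derive this from (F3) together with the identity of Lemma~\ref{lem:ndot-identity}. On the frozen face the clipping loss $D(t)$ vanishes on-support, so $\Ndottot(t)$ equals the quadratic form $\norm{g^{(t)}}^2$ summed over both players, up to $O(\norm{w}^3)$. At the fixed point, $g$ is linear in $w$ through $L$ restricted to $W$, and that restriction is injective because $J=I+L+L^2$ has no unit eigenvalues on $W$ (Theorem~\ref{thm:rate}(i)). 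This gives $n_\pm$ as the extreme values of that quadratic form, shrunk slightly to absorb the cubic remainder for small $\delta_3$.
\end{itemize}

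With both in hand the theorem is algebra. For $t\ge t_{\mathrm{reg}}$ the hypothesis $\norm{w^{(t)}}\le\min(\delta_2,\delta_3)$ places every time in both lemmas' ranges. The assumption $\Ndottot(t_1)>0$ forces $\norm{w^{(t_1)}}>0$, and then $\gapf(t_1)\ge c_-\norm{w^{(t_1)}}>0$, so the ratio is defined. The bounds combine as
\[
\frac{\gapf(t_2)}{\gapf(t_1)}\le\frac{L_{\gapf}\norm{w^{(t_2)}}}{c_-\norm{w^{(t_1)}}}\le\frac{L_{\gapf}}{c_-}\sqrt{\frac{\Ndottot(t_2)/n_-}{\Ndottot(t_1)/n_+}}=\kappa\sqrt{\frac{\Ndottot(t_2)}{\Ndottot(t_1)}}.
\]
The lower bound follows symmetrically, with the roles of the constants exchanged. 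The resulting factor is $\kappa^{-1}$, because $c_-/L_{\gapf}\le1$ and the $n_\pm$ ratio inverts.

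Each constant is used once, at $t_1$ and at $t_2$ only, so no product over intermediate steps appears. That is why no factor $1/(1-\rho^*)$ enters, and why $\kappa$ does not depend on $t_1,t_2$ or on the amplitude: each sandwich is homogeneous of the same degree in $\norm{w}$.

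The main obstacle is the lower bound $n_->0$. I must check that the increment does not vanish along some transverse direction. Concretely, $g^{(t)}$, which is centered by the main-iterate payoff, must not be orthogonal to the on-support energy for any $w\in W\setminus\{0\}$. Such a cancellation is exactly what the centering identity of Lemma~\ref{lem:centering} could produce. I would first compute the quadratic form explicitly in the singular basis of $\Bmat$, where it block-diagonalizes into $2\times2$ blocks with parameter $\eta\sigma$. I would then verify positivity block by block for $\sigma\in\Sigma^+(\Bmat)$, using $\eta\sigmax(\Bmat)<1$ if needed.
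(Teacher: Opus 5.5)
Your proposal is the paper's proof. You take the two sandwiches $c_-\norm{w}\le\gapf\le L_{\gapf}\norm{w}$ and $n_-\norm{w}^2\le\Ndottot\le n_+\norm{w}^2$ from Lemmas~\ref{lem:gap-sharp} and~\ref{lem:ndot-theta}, use $\Ndottot(t_1)>0$ to get $w^{(t_1)}\neq0$ and $\gapf(t_1)>0$, and divide at $t_2$ and $t_1$ to cancel the amplitudes, exactly as the paper does.

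One small correction to your side sketch of Lemma~\ref{lem:ndot-theta}. $\Ndottot$ equals the \emph{on-support} energy $\norm{g_1|_{S_1}}^2+\norm{g_2|_{S_2}}^2$ exactly, not $\norm{g}^2$ up to $O(\norm{w}^3)$: the off-support part of $g$ is generically first order in $w$. The linearization of the on-support part is $L^2w=-\eta^2(\Bmat\Bmat^\top w_1,\Bmat^\top\Bmat w_2)$, which is diagonal in the singular basis, so $n_-=\tfrac12\eta_-^4(\sigmin^{+})^4>0$ is immediate and needs neither $2\times2$ blocks nor the step-size condition.
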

The hypothesis $\Ndottot(t_1)>0$ excludes exactly one degenerate case: by Lemma~\ref{lem:ndot-theta}, $\Ndottot(t_1)=0$ forces $w^{(t_1)}=0$, the state on the fixed-point cone with $\gapf(t_1)=0$ already.  Operationally, one calibration of the gap at a single time $t_1$ converts the calibrated estimate $\gapf(t_2)=\gapf(t_1)\sqrt{\Ndottot(t_2)/\Ndottot(t_1)}$ into a two-sided bracket at every later time.
The constant $\kappa$ is not observable online: it is built from equilibrium quantities (the support, the smallest on-support probabilities, the extreme nonzero singular values of $\Bmat$) and from regime thresholds unknown to the running algorithm.  The operational content of the certificate is the slope-$2$ law of Corollary~\ref{cor:theta}, which needs no equilibrium constant: only $\Ndottot$ and the gap on the fitted window.
\begin{corollary}[$\theta=2$ prediction]
\label{cor:theta}
Under the hypotheses of Theorem~\ref{thm:certificate}, for all $t\ge t_{\mathrm{reg}}$ with $\gapf(t)>0$ (equivalently $\Ndottot(t)>0$, Lemma~\ref{lem:ndot-theta}),
\[
\log\Ndottot(t)\;=\;2\,\log\gapf(t)\;+\;C(t), \qquad C(t)\in\Bigl[\log\tfrac{n_-}{L_{\gapf}^{2}},\;\log\tfrac{n_+}{c_-^{2}}\Bigr],
\]
an interval of fixed width $2\log\kappa$ independent of $t$.  Consequently, over any measurement window with $\mathrm{sd}(\log\gapf)>0$, the least-squares slope $\theta$ of $\log\Ndottot$ against $\log\gapf$ satisfies $|\theta-2|\le\log\kappa/\mathrm{sd}(\log\gapf)$, where $\mathrm{sd}(\log\gapf)$ is the standard deviation of the recorded gap logarithms.  As the window spans more decades of gap decay, the fitted slope is forced to $2$.
\end{corollary}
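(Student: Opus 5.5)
The corollary follows from the two-sided pointwise bounds that underlie Theorem~\ref{thm:certificate}, plus an elementary least-squares estimate. I will not use the ratio form directly. Instead I will go back to the per-time sandwiches. Lemma~\ref{lem:gap-sharp} gives
\[
c_-\norm{w^{(t)}}\le\gapf(t)\le L_{\gapf}\norm{w^{(t)}},
\]
and Lemma~\ref{lem:ndot-theta} gives
\[
n_-\norm{w^{(t)}}^2\le\Ndottot(t)\le n_+\norm{w^{(t)}}^2,
\]
both valid for $t\ge t_{\mathrm{reg}}$ since $\norm{w^{(t)}}\le\min(\delta_2,\delta_3)$. These are the ingredients from which $\kappa$ is assembled, so I will take them as given with exactly these constants. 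The equivalence $\gapf(t)>0\iff\Ndottot(t)>0$ then follows at once: both conditions are equivalent to $w^{(t)}\ne0$.

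Next I eliminate $\norm{w^{(t)}}$. From the gap sandwich, $\gapf(t)^2/L_{\gapf}^2\le\norm{w^{(t)}}^2\le\gapf(t)^2/c_-^2$. Substituting into the $\Ndottot$ sandwich gives
\[
\frac{n_-}{L_{\gapf}^2}\gapf(t)^2\le\Ndottot(t)\le\frac{n_+}{c_-^2}\gapf(t)^2 .
\]
Taking logarithms defines $C(t):=\log\Ndottot(t)-2\log\gapf(t)$ and places it in the stated interval. The width of that interval is $\log\bigl(n_+L_{\gapf}^2/(n_-c_-^2)\bigr)=2\log\kappa$, which does not depend on $t$.

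For the slope, write $u_k=\log\gapf(t_k)$ and $v_k=2u_k+C_k$ over the window. The least-squares slope is
\[
\theta=\frac{\mathrm{cov}(u,v)}{\mathrm{var}(u)}=2+\frac{\mathrm{cov}(u,C)}{\mathrm{var}(u)} .
\]
By Cauchy--Schwarz, $|\mathrm{cov}(u,C)|\le\mathrm{sd}(u)\,\mathrm{sd}(C)$. Because $C$ takes values in an interval of width $2\log\kappa$, Popoviciu's inequality gives $\mathrm{sd}(C)\le\log\kappa$. Hence $|\theta-2|\le\log\kappa/\mathrm{sd}(\log\gapf)$. The final sentence of the corollary follows because $\mathrm{sd}(\log\gapf)$ grows without bound as the window covers more decades of gap decay, while $\kappa$ stays fixed.

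The main obstacle is consistency of constants. The exponents and constants in the two lemmas must be the ones that make $\kappa=(L_{\gapf}/c_-)\sqrt{n_+/n_-}$, namely a linear gap sandwich and a quadratic $\Ndottot$ sandwich in $\norm{w}$. If Lemma~\ref{lem:ndot-theta} is instead stated in terms of a different seminorm, such as the $W$-component, I will first check that $\norm{w^{(t)}}=\dist(z^{(t)},\Fix)$ is the common quantity in both lemmas. The frozen-regime decomposition guarantees this. The remaining steps are routine.
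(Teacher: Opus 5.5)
Your argument is the paper's proof: you eliminate $\norm{w^{(t)}}$ between the linear gap sandwich and the quadratic increment sandwich to confine $C(t)$, then bound the slope error with Cauchy--Schwarz and Popoviciu. The one step you assert without proof is that $\mathrm{sd}(\log\gapf)\to\infty$ as the window spans more decades; a growing range does not force this if the records cluster, and the paper closes it with the geometric decay $\gapf(t+s)\le C'(\rho^*+\eps)^s\,\gapf(t)$ (Lemmas~\ref{lem:bootstrap} and~\ref{lem:gap-sharp}), which caps how many integer-time records can fall in any one decade.
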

Under the corollary's hypotheses, a fitted slope differing from $2$ by more than $\log\kappa/\mathrm{sd}(\log\gapf)$ would refute the local theory; on the $216$-instance suite the slope has median $1.986$ with $96.1\%$ of slope-measurable instances between $1.8$ and $2.2$ (Section~\ref{sec:experiments}).

\subsection{Theorem ledger: claims, hypotheses, and status}
\label{app:ledger}

\textbf{Hypotheses at a glance.}
Saturation is a theorem (Theorem~\ref{thm:saturation}), and the gap
conclusion of Theorem~\ref{thm:global} is unconditional: the case
$t_+=\infty$ is settled by Lemma~\ref{lem:neverpos}, while the $\omega$-limit
set and point convergence assume $t_+<\infty$.  The linear rate and the
certificate require strictly more: a unique strictly complementary
equilibrium with $m\ge2$ and the post-saturation step-size condition
(Assumptions~\ref{asm:unique-strict} and~\ref{asm:stepsize}).
Table~\ref{tab:ledger} records each claim with its hypotheses, epistemic
status, and verification.

\begin{table}[h]
  \centering
  \caption{Ledger of headline claims.  A1 is Assumption~\ref{asm:unique-strict}
  (unique strictly complementary equilibrium), A2 is
  Assumption~\ref{asm:nondeg-strict} (strictly complementary,
  non-degenerate equilibrium ray), A3 is Assumption~\ref{asm:stepsize}
  ($\etainf\sigmax(\Bmat)<1$).
  ``Ver.''\ points to the numbered verification of
  Appendix~\ref{app:exp-protocols}.}
  \label{tab:ledger}
  \small
  \begin{tabular}{@{}p{3.45cm}p{2.6cm}p{3.75cm}p{2.6cm}@{}}
    \toprule
    Claim & Hypotheses & Status & Ver. \\
    \midrule
    Norm saturation, explicit bound
      (Thm.~\ref{thm:saturation}) & none &
      proved & n/a (bound loose by $10^{3}$--$10^{6}$, Rem.~\ref{rem:sat-bound-loose}) \\
    Gap $\to 0$, $\omega$-limit in $\Fix$
      (Thm.~\ref{thm:global}) & $t_+$ (i); none (ii) &
      proved & n/a (216 inst.\ converge) \\
    Point convergence
      (Thm.~\ref{thm:point}) & unique NE, $t_+$ (else Lem.~\ref{lem:neverpos}) &
      proved & n/a \\
    One-step support freezing
      (Lem.~\ref{lem:absorb}) & strict complementarity, local &
      proved & 1 ($99.5\%$) \\
    Closed-form spectrum $J = I + L + L^2$
      (Lem.~\ref{lem:jacobian},~\ref{lem:centering}) & A2, frozen face &
      proved & 3 ($32$ inst.) \\
    Linear rate envelope $(\rho^* + \eps)^t$
      (Thm.~\ref{thm:rate}) & A1 + A3, local entry &
      proved (upper envelope; realized rate $= \rho^*$ is empirical,
      \S\ref{rem:rate-semantics}) & 2 (FD Jacobian, $155$ inst.); 3 (closed form, $32$) \\
    Instability of limits with $\etainf\sigmax(\Bmat)>1$
      (Prop.~\ref{prop:selfstab}) & A2 (strict compl., nondegenerate), $\etainf\sigmax(\Bmat)>1$ &
      proved (thin basins; ``observed limits satisfy A3'' is a
      genericity argument, Rem.~\ref{rem:mechanism}) & 5 ($184$, $0$
      violations) \\
    Ratio certificate $\kappa$-sandwich
      (Thm.~\ref{thm:certificate}) & A1 + A3, frozen
      regime & proved (pointwise) & 6 ($124/124$, empirical interval proxy) \\
    Slope-two law $\theta = 2$
      (Cor.~\ref{cor:theta}) & A1 + A3, frozen regime &
      proved (asymptotic in window width) & 4 ($96.1\%$) \\
    $\Ndottot =$ on-support energy
      (Lem.~\ref{lem:ndot-identity}) & frozen regime &
      proved (identity) & 4, 6 (proxy: pattern not logged) \\
    Self-stabilization narrative (norm grows until A3 holds) &
      n/a & heuristic (instability proved; growth and the zero-initialized
      orbit not; Rem.~\ref{rem:mechanism}) & 5 (consistent) \\
    Degenerate-instance sandwich & outside A1 &
      empirical only ($\kappa$ assembled for $3$ of $8$) & App.~\ref{app:experiments} \\
    EFG monitoring (Kuhn, Leduc) & outside matrix theory &
      empirical only & 8 \\
    \bottomrule
  \end{tabular}
\end{table}

\section{Experiments}\label{sec:experiments}

The theory makes five falsifiable statements: exact support freezing
(Lemma~\ref{lem:absorb}); a last-iterate rate bounded by the frozen-face
spectral radius, closed form in $\Bmat$ and the saturation levels
(Theorem~\ref{thm:rate}); the exponent $\theta=2$
(Corollary~\ref{cor:theta}); a ratio certificate with a non-accumulating
constant (Theorem~\ref{thm:certificate}); and $\etainf\,\sigmax(\Bmat)\le1$ at
realized limits, the heuristic of Remark~\ref{rem:mechanism}.  We test all five
on $216$ random matrix games and monitor extensive-form runs from the same
increment.  Table~\ref{tab:main} gives the headline readings;
per-protocol statistics are in Section~\ref{app:exp-summary}.

\begin{table}[t]
  \centering
  \caption{Headline outcomes on the $216$-instance testbed.  Ratios and
  percentage points are summarized across instances; p10--p90 is the 10th to 90th percentile range.  \textsuperscript{$\dagger$}Heuristic
  of Remark~\ref{rem:mechanism}, empirically consistent with
  Proposition~\ref{prop:selfstab}.}
  \label{tab:main}
  \small
  \begin{tabular}{@{}llrr@{}}
    \toprule
    Prediction & Source & Instances & Reading \\
    \midrule
    Exact support freezing before $T/2$ & Lem.~\ref{lem:absorb} & 216 & $99.5\%$ \\
    Rate equals finite-difference radius & Thm.~\ref{thm:rate} & 155 & median $1.000$ (p10--p90 $0.984$--$1.016$) \\
    Closed form equals finite difference & Lem.~\ref{lem:jacobian} & 32 & $2.4\times10^{-10}$ \\
    Certificate exponent $\theta=2$ & Cor.~\ref{cor:theta} & 129 & $1.986$; $96.1\%$ \\
    $\etainf\sigmax(\Bmat)\le1$ at limits\textsuperscript{$\dagger$} & Rem.~\ref{rem:mechanism} & 184 & max $0.9985$ \\
    Ratio certificate holds & Thm.~\ref{thm:certificate} & 124 & $124/124$ \\
    Progress monitoring, Kuhn/Leduc & empirical & 2 & $1.80$, $3.22$ \\
    \bottomrule
  \end{tabular}
\end{table}

\begin{figure}[t]
  \centering
  \includegraphics[width=0.96\textwidth]{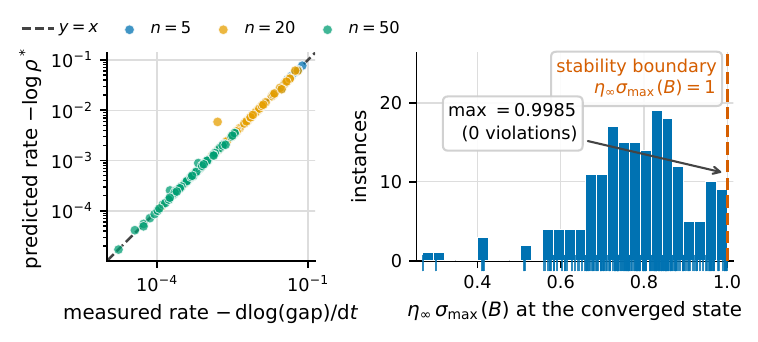}
  \caption{Left: predicted rate (frozen-face spectral radius) vs.\ measured
  semilog gap slope on the $155$ instances with a fitted slope (log--log;
  color = $n$); five pre-freeze snapshots, one clearly off-line.
  Right: $\etainf\,\sigmax(\Bmat)$ on all $184$ measurable instances, none
  violating the boundary $=1$ (dashed).}
  \label{fig:spectral}
  \label{fig:etasigma}
\end{figure}

Instances cross $n\in\{5,20,50\}$, three payoff distributions (Gaussian; uniform
on $[-1,1]$; sparse $\pm1$, density $0.3$), two generators (raw and
antisymmetrized $(G-G^\top)/\sqrt2$), and $12$ seeds, giving
$3\times3\times2\times12=216$ instances.  Each is solved numerically by linear
programming for the equilibrium, supports, value $v^*$, and margin.  \algname{} runs
$T=2\times10^5$ iterations ($10^6$ for $n=50$) on the \emph{unclipped}
prediction step $r^{(t)}=\rt^{(t)}+m^{(t)}-\gamma\ones$; clipping emits
off-support ``dust'' that breaks exact support freezing.  The full protocol is
in Appendix~\ref{app:experiments}.

\textbf{1. Exact support freezing and the local rate.}
The exact-zero pattern of $\rt$ freezes before $T/2$ on $99.5\%$ of instances,
with median freeze time $2.7\times10^{-4}\,T$, matching Lemma~\ref{lem:absorb}.
On the $155$ instances with a fitted slope, a finite-difference Jacobian at a
state snapshot reproduces the measured semilog slope at median ratio $1.000$ and
correlation $0.9982$ (Figure~\ref{fig:spectral}, left); the snapshot is stored
once the gap has fallen by $10^{4}$.  The envelope of
Theorem~\ref{thm:rate}(iii) is therefore tight.

\textbf{2. The closed-form spectral radius matches finite differences.}
Lemmas~\ref{lem:jacobian} and~\ref{lem:centering} give the one-step spectrum in
closed form from $\Bmat = \Atil - v^*\ones\ones^\top$ and the saturation levels.
Rebuilding $\Bmat$ from the \emph{converged} support and saturation levels
reproduces the finite-difference value of the largest non-unit modulus
$\rho_2$ on a $32$-instance sample, median relative error $2.4\times10^{-10}$
(Appendix~\ref{app:experiments}).  At a
degenerate instance whose Jacobian has nine eigenvalues of modulus one, outside
Theorem~\ref{thm:rate}'s hypotheses, the formula still matches empirically.

\textbf{3. The certificate exponent is $\theta = 2$.}
The exponent is measured on a candidate frozen window (exact-zero pattern of
$\rt$ fixed), as the per-unit-time on-support energy accumulated
between recorded times, an interval proxy for $\Ndottot$
(Lemma~\ref{lem:ndot-identity}), against the interval's geometric-mean gap
(representability guard, Appendix~\ref{app:experiments}).  The slope has median
$1.986$, with $96.1\%$ of the $129$ slope-measurable instances between $1.8$
and $2.2$ (cross-checks in Appendix~\ref{app:experiments}).

\textbf{4. Self-stabilization, the ratio certificate, and the point estimate.}
On all $184$ measurable instances (the other $32$ reach an exactly zero gap in
finitely many rounds, Appendix~\ref{app:experiments}) $\etainf\,\sigmax(\Bmat) \le 1$
holds: the realized dynamics satisfy the spectral condition of
Theorem~\ref{thm:rate}, and no instance reaches the boundary
(Remark~\ref{rem:mechanism}).  An empirical interval proxy of
Theorem~\ref{thm:certificate}'s ratio law holds on $124/124$ instances against a
first-order benchmark constant.  Its deviation $\kappa_{\mathrm{emp}}$ has
median $1.63$ against an assembled $\kappa$ of $1.7\times10^{8}$.  Read forward
as a gap estimate, the certificate's worst guarded deviation has median $1.50$
(p90 $1.90$, max $3.9$; below $2$ on $91.5\%$).

\textbf{5. Progress monitoring in extensive-form games (empirical).}
The accumulated update energy $G(t)=\sum_{s\le t}\sum_j\norm{g_j^{(s)}}^2$
over information sets is one scalar per iteration, with no extra pass; its
per-unit-time increment at recorded times is the monitored quantity.  On
Kuhn and Leduc poker the increment's log--log correlation with the NashConv
gap (the sum of per-player best-response gains) is $0.998$ and $0.997$, and the point
estimate calibrated at the first guarded increment has worst deviation $1.80$
and $3.22$ over the window.  The sparse record schedule uses $126\times$ and
$64\times$ fewer best-response passes than an every-iteration schedule; a
NashConv pass costs $0.076$/$0.087$ of a solver iteration.

\subsection{Summary of verifications}\label{app:exp-summary}

Table~\ref{tab:numerics} collects the full statistics behind
Table~\ref{tab:main} across all $216$ instances.  Reported ranges span the
$10$th to the $90$th percentile across instances.

\begin{table}[t]
  \centering
  \caption{Full statistics behind Table~\ref{tab:main}, on $216$ random matrix games (three dimensions $\times$ three distributions $\times$ two generators $\times$ $12$ seeds).  \textsuperscript{$\dagger$}Self-stabilization heuristic, empirically consistent with Proposition~\ref{prop:selfstab}.}
  \label{tab:numerics}
  \small
  \begin{tabular}{@{}llrp{5.4cm}@{}}
    \toprule
    Prediction & Source & $n_{\mathrm{inst}}$ & Statistic \\
    \midrule
    Exact support freezing before $T/2$ & Lem.~\ref{lem:absorb} & 216 & frozen on $99.5\%$; median $t_{\mathrm{freeze}}/T=2.7\times10^{-4}$ \\
    $\gamma \to v^*$ at the fixed point & Prop.~\ref{prop:fixedpoint} & 216 & median $|\gamma-v^*|=4.47\times10^{-17}$ \\
    Rate equals FD $\rho_2(J_{\mathrm{FD}})$ & Thm.~\ref{thm:rate} & 155 & ratio median $1.000$ (p10--p90 $0.984$--$1.016$); corr.\ $0.9982$ \\
    Closed form $\rho_2 =$ FD $\rho_2$ & Lem.~\ref{lem:jacobian},~\ref{lem:centering} & 32 & relative error median $2.4\times10^{-10}$, max $4.1\times10^{-7}$ \\
    Certificate exponent $\theta = 2$ & Cor.~\ref{cor:theta} & 129 & on-support median $1.986$; $96.1\%$ between $1.8$ and $2.2$; cross-check $1.995$, $98.4\%$ \\
    $\etainf\sigmax(\Bmat) \le 1$ at limits & Rem.~\ref{rem:mechanism}\textsuperscript{$\dagger$} & 184 & $0$ violations; median $0.790$, max $0.9985$; $18$ above $0.95$; closest approach of $(\etainf\sigmax(\Bmat))^2$ to $1$ is $0.0029$ \\
    Ratio certificate (empirical proxy) & Thm.~\ref{thm:certificate} & 124 & holds on $124/124$; $\kappa_{\mathrm{emp}}$ median $1.63$, max $4.1$ \\
    Point-estimate forecast (empirical proxy) & Thm.~\ref{thm:certificate} & 129 & max forward deviation median $1.50$, p90 $1.90$, max $3.9$; below $2$ on $91.5\%$ \\
    Progress monitoring (Kuhn, Leduc) & empirical & 2 & max forward deviation $1.80$, $3.22$; NashConv pass $0.076/0.087$ of an iteration \\
    Degenerate-instance sandwich & empirical & 8 & $\kappa_{\mathrm{emp}}$ median $1.80$ vs.\ $1.63$; below $2$ on $7/8$ \\
    \bottomrule
  \end{tabular}
\end{table}

\section{Discussion}\label{app:discussion}

\paragraph{Error bounds, active-set identification, and metric subregularity (extended related work).}
The error-bound route gives global linear convergence over polyhedra under a
projection-residual error bound \citep{tseng1995linear}. Our rate comes from the
local spectrum of the linearized one-round map on the frozen face
(Theorem~\ref{thm:rate}), and the operator of \iregprm{} satisfies a Minty
condition \citep{cai2025lastiterate}. Finite-time identification followed by
local linear convergence has precedents outside games
\citep{burke1988identification,hare2004identifying,liang2017activity}. The
state-to-strategy map of \iregprm{} is not a proximal operator, so the
identification argument has to pass through that Minty condition
\citep{anagnostides2025polynomial,zhang2025expected,pethick2025efficient}.

\paragraph{Computable progress metrics (extended related work).}
The closest work, the normalized duality gap of \citet{applegate2023faster}, is
compared in Section~\ref{sec:related}.  Its constants are existential or
require vertex enumeration
\citep[Appendix~G]{wei2021linear}, and it measures distance to the whole solution
set, so it stays finite at degenerate equilibria.  Ours rests on a
two-dimensional fixed-point cone and degenerates with it; on that subset the
deviation is bracketed empirically, with $\kappa_{\mathrm{emp}}$ at median
$1.80$ against $1.63$ on regular instances, and below $2$ on $7$ of $8$
instances (Table~\ref{tab:numerics}).  Utility-increment certificates
\citep{burch2019revisiting} and the gap-to-regret identity
\citep{zinkevich2007regret} are one-sided bounds
on the averaged point, whose guarantee is $O(1/T)$.  Ours is two-sided on the
last-iterate and is an increment ratio, like the step ratio in the certificate of
\citet{applegate2023faster}, measured on quantities the algorithm
already computes.  No per-iteration best-response pass is added, in normal or extensive form
(Section~\ref{sec:experiments}).

\paragraph{From saturation to convergence, and the constant.}
The bound of Theorem~\ref{thm:saturation} is explicit; on the testbed it sits
three to six orders of magnitude above the realized $\Nt_{i,\infty}$.  It
leaves the \emph{value} of the saturation level undetermined, and that value
selects the effective step size $\etainf$ and hence whether
Assumption~\ref{asm:stepsize} holds.  Between the global theorem (gap $\to0$ on
every game) and the local one (linear rate under the strict spectral condition)
sits a quantitative theory of where the norm stops.  The boundary
$\etainf\sigmax(\Bmat)=1$ and the exceptional set of
Proposition~\ref{prop:selfstab} are open
(Section~\ref{app:scope-boundary}).

\paragraph{Quantitative neighborhood radii.}
The absorbing radius around the frozen face and the constant $C_\eps$ in the
rate envelope are not closed form; they depend on the equilibrium margin and
other instance quantities through non-explicit constants.  A quantitative
version, with finite-time control of the $\theta=2$ law of
Corollary~\ref{cor:theta}, would upgrade the ratio certificate from a two-sided
online bound into an a~priori complexity bound.  Concretely, the constants
$C(t)$ in Corollary~\ref{cor:theta} bracket the ratio by an existential
$\kappa$, while the first-order benchmark assembled in
Section~\ref{sec:experiments} exceeds the measured deviation by eight orders of
magnitude.
Making $L_{\gapf}$, $c_-$, and the spectral range explicit is the technical
route recorded in Section~\ref{app:scope-radii}.

\paragraph{Degenerate and non-unique equilibria.}
The closed-form $\rho_2$ stays exact at degenerate and large-support instances,
but \emph{point} convergence is not established there.  The fixed-point set is
the positive cone over the NE polytope (Proposition~\ref{prop:fixedpoint}),
positive-dimensional when the equilibrium is not unique;
Theorem~\ref{thm:global}
places the $\omega$-limit set inside it.  On the testbed no drift along that
set is visible.  On the $8$ instances whose $\Bmat$ has a kernel of dimension
$2$ or more on the converged support (Appendix~\ref{app:experiments},
verification~6), the state $z=(\rt_1,\rt_2)$ moves by at most
$5\times10^{-10}$ and the strategy pair by at most $4\times10^{-12}$, from the
first checkpoint at which the gap is below $10^{-15}$ to the
end of the run, $10^{5}$ to $10^{6}$ rounds later.  Two regimes remain
open: drift along the equilibrium set, and the non-hyperbolic boundary
$\etainf\sigmax(\Bmat)=1$, where $18$ of the $184$ measurable instances have
$\etainf\sigmax(\Bmat)>0.95$.
The natural extension is a normally hyperbolic invariant manifold argument
along a smooth stratum of that set, transporting the frozen-face spectral
computation onto the normal bundle
(Section~\ref{app:scope-degenerate}); the boundary case itself needs a
center-manifold analysis of the $q=1$ mode
(Section~\ref{app:scope-boundary}).

\paragraph{Extensive-form games and partial feedback.}
The certificate's payoff is largest in extensive-form games, where $\Nt(t)$
aggregates per-infoset regret norms from quantities the update already produces,
while the gap still requires a best-response pass.  Extending the frozen-face analysis to counterfactual-value
couplings is open, as is the certificate under sampled feedback, where the norm
increment is dominated by sampling noise.  Both have concrete starting points: a
per-information-set $\Bmat_j$ coupled through counterfactual weights
(Section~\ref{app:scope-efg}), and variance reduction through correlated
chance sampling (Section~\ref{app:scope-feedback}).

\section{Future Work}\label{app:scope}

The open quantitative link is the saturation \emph{level}, which decides the
effective step size and thus the spectral condition of
Theorem~\ref{thm:rate}.  A bound on
$\Nt_{i,\infty}$ tight enough to decide $\etainf\sigmax(\Bmat)<1$ would
turn the measured middle link of Remark~\ref{rem:mechanism} into a
theorem.  Convergence results for regret matching in potential games
\citep{anagnostides2025convergence} and rate refinements for zero-sum
problems \citep{anagnostides2024convergence} suggest that the mechanism
carries beyond the matrix-game testbed.
\label{app:scope-saturation}
Making the absorbing radius and the constant $C$ explicit would upgrade
the ratio certificate of Theorem~\ref{thm:certificate} from a two-sided
online bound into an a~priori complexity bound, the most directly useful
extension of the local theory.
\label{app:scope-radii}

The natural route to extend point convergence beyond the uniqueness
hypothesis replaces the isolated fixed point by a normally hyperbolic
invariant manifold and transports the frozen-face spectral computation
of Theorem~\ref{thm:rate} onto its normal bundle.
\label{app:scope-degenerate}
At the boundary $\etainf\sigmax(\Bmat) = 1$ the relevant eigenvalues are
$\mu = \pm i$, the fixed point is non-hyperbolic, and a center-manifold
analysis of the $q = 1$ mode would sharpen the generic inequality into a
dichotomy (Remark~\ref{rem:boundary}).
\label{app:scope-boundary}
Support collapse makes the one-round map non-invertible, so a full-space
statement needs absolute continuity of the push-forward of Lebesgue
measure (Remark~\ref{rem:fullspace}); the genericity questions for
optimistic dynamics in general-sum games
\citep{anagnostides2022optimistic,anagnostides2022last} are the closest
analogues.
\label{app:scope-fullspace}

A finite-time version of the slope-two law of Corollary~\ref{cor:theta}
would make the phase-dependent offset $C(t)$ and its $O(\norm{w^{(t)}})$
remainder explicit and supply an operational regime-detection rule
(Remark~\ref{rem:gap7}).  The per-player analogue admits a
three-step-window repair for the linearized dynamics
(Remark~\ref{rem:single-player-window}).
\label{app:scope-middle}
Under sampled feedback the norm increment is dominated by sampling noise,
and variance reduction through correlated chance sampling
\citep{li2026correlated} acts on exactly that quantity; restricting
responses under a confidence schedule for safe opponent exploitation
\citep{li2026agents} is the natural analogue of our monitor in that
setting.
\label{app:scope-feedback}
The extensive-form object to be built is a per-information-set $\Bmat_j$
coupled through counterfactual weights; the efficient, parallel, and
abstracted CFR solvers of
\citet{li2026real,li2025efficient,li2026effective,li2024rl} and the
black-box equilibrium search of \citet{zhang2021finding} are natural
starting points, together with discounted regret minimization
\citep{brown2019solving} and predictive Blackwell approachability
\citep{farina2021faster}.
\label{app:scope-efg}
\section{Conclusion}\label{sec:conclusion}

The linear last-iterate convergence of \iregprm{} has a mechanical
explanation.  The algorithm's own normalization makes the regret norm an
inverse step size; the norm stops growing on every matrix game; and from then
on every round is a fixed-step extra-gradient iteration on a frozen support
face whose spectrum is explicit.  Two consequences reach beyond the theorems.
The rate is decided by one scale-invariant number: the saturated step size
times the largest singular value of the value-centered support matrix.  What a
scale-invariant algorithm must control to converge fast is therefore where its
norm stops, and on the testbed the realized norms stop on the contracting
side of that boundary on every measurable instance.  The quantity that
certifies progress, the squared-norm increment, is computed by the algorithm in
every round.  In extensive-form games the best-response pass that defines
the gap can therefore be scheduled sparsely, with the increment tracking the gap
between passes.  Two questions stay open: the saturation level itself, and
drift when the equilibrium is not unique
(Sections~\ref{app:discussion}--\ref{app:scope}).

\newpage
\label{sec:refstart}
\bibliographystyle{dilab_ref}
\bibliography{references}

\appendix
\makeappendixtoc

\section{Proofs for the Local Theory}
\label{app:local}

Throughout this appendix, one round of \algname{} (Algorithm~2 of
\citealp{zhang2025scale}, strict version: the shifted regret vector is kept
unclipped and clipping is applied once, in the update) from a state
$z=(\rt_1,\rt_2)$ with $\rt_1,\rt_2\neq 0$ reads, for player~1,
\begin{align}
&\xt=\rt_1/\norm{\rt_1}_1, \qquad \yt=\rt_2/\norm{\rt_2}_1, \qquad
  m_1 = A\yt, \notag\\
&\gamma_1 \ \text{the unique solution of}\
  \norm{\pos{\rt_1+m_1-\gamma_1\ones}}_2=\norm{\rt_1}_2=:\Nt_1, \notag\\
&r_1:=\rt_1+m_1-\gamma_1\ones \ \ (\text{kept unclipped}), \qquad
  p_1:=\pos{r_1}, \qquad x:=p_1/\norm{p_1}_1, \notag\\
&\mathrm{res}_1:=A(y-\yt), \qquad
  g_1:=\mathrm{res}_1-\inner{\mathrm{res}_1}{x}\ones, \qquad
  \rt_1' = \pos{r_1+g_1},
\label{eq:round-recap}
\end{align}
and identically for player~2 with $A$ replaced by $-A^\top$ (so
$m_2=-A^\top\xt$, $\mathrm{res}_2=-A^\top(x-\xt)$, and $y:=p_2/\norm{p_2}_1$).
Recall that
$\varphi(\gamma):=\norm{\pos{v-\gamma\ones}}_2$ is continuous and
nonincreasing in $\gamma$, strictly decreasing wherever it is positive, with
$\varphi(\gamma)\to\infty$ as $\gamma\to-\infty$ and $\varphi(\gamma)\to0$ as
$\gamma\to+\infty$; since $\Nt_1>0$, the defining equation for $\gamma_1$ has
a unique solution, and by construction the clipped shifted vector is
norm-preserving: $\norm{p_1}_2=\Nt_1$.  Two elementary facts are used
repeatedly: $\inner{\pos{u}}{u}=\norm{\pos{u}}_2^2$ for every $u$, and the
orthogonality
\begin{equation}
\label{eq:orth}
\inner{g_1}{p_1}
=\inner{\mathrm{res}_1}{p_1}-\inner{\mathrm{res}_1}{x}\,\norm{p_1}_1
=0,
\end{equation}
which holds by the definition $x=p_1/\norm{p_1}_1$ and
$\inner{\ones}{p_1}=\norm{p_1}_1$ (as $p_1\ge0$).

\subsection{Fixed points: an algebraic proof of Proposition~\ref{prop:fixedpoint}}
\label{app:local:p1}

We first isolate two statewise lemmas.  Both concern a single round at a
single state; no property of the trajectory is used anywhere in this
subsection.

\begin{lemma}[$\gamma$-shift energy identity]
\label{lem:energy-identity}
Let $\rt\in\R^n_{\ge0}\setminus\{0\}$ and $m\in\R^n$, put
$\Nt:=\norm{\rt}_2$, let $\gamma$ be the unique solution of
$\norm{\pos{\rt+m-\gamma\ones}}_2=\Nt$, and set
$r:=\rt+m-\gamma\ones$, $p:=\pos{r}$, $x:=p/\norm{p}_1$, and
$\xt:=\rt/\norm{\rt}_1$.  Then
\begin{equation}
\label{eq:energy-identity}
\inner{m}{x}-\gamma \;=\; \frac{\norm{p-\rt}_2^{\,2}}{2\,\norm{p}_1}\;.
\end{equation}
In particular $\inner{m}{x}\ge\gamma$, with equality if and only if
$p=\rt$; and
\begin{equation}
\label{eq:energy-lb}
\inner{m}{x}-\gamma \;\ge\; \frac{\Nt}{8n}\,\norm{x-\xt}_2^{\,2}\;.
\end{equation}
\end{lemma}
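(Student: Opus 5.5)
The plan is to obtain \eqref{eq:energy-identity} by pure algebra from two facts: $p=\pos{r}$ and the norm preservation $\norm{p}_2=\Nt$. The lower bound \eqref{eq:energy-lb} then follows by controlling $\norm{x-\xt}_2$ by $\norm{p-\rt}_2$. No trajectory property is needed.

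\textbf{Step 1 (identity).} I first use $\inner{\pos{u}}{u}=\norm{\pos{u}}_2^2$ with $u=r$, which gives $\inner{r}{p}=\norm{p}_2^2=\Nt^2$. Substituting $r=\rt+m-\gamma\ones$ yields
\[
\inner{m-\gamma\ones}{p}=\Nt^2-\inner{\rt}{p}.
\]
Since $p\ge0$, we have $\inner{\ones}{p}=\norm{p}_1$. Together with $p=\norm{p}_1x$, the left side equals $\norm{p}_1\bigl(\inner{m}{x}-\gamma\bigr)$. On the other side, $\norm{p}_2=\norm{\rt}_2=\Nt$ gives
\[
\norm{p-\rt}_2^2=2\Nt^2-2\inner{\rt}{p}.
\]
Equating the two expressions for $\Nt^2-\inner{\rt}{p}$ and dividing by $\norm{p}_1>0$ gives \eqref{eq:energy-identity}; note $p\neq0$ because $\norm{p}_2=\Nt>0$. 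The sign statement and the equality case ($p=\rt$) are then immediate.

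\textbf{Step 2 (lower bound).} I bound the strategy displacement by the regret displacement:
\[
x-\xt=\frac{p-\rt}{\norm{p}_1}+\rt\Bigl(\frac{1}{\norm{p}_1}-\frac{1}{\norm{\rt}_1}\Bigr).
\]
The first term has norm $\norm{p-\rt}_2/\norm{p}_1$. For the second, I use $\bigl|\norm{\rt}_1-\norm{p}_1\bigr|\le\sqrt n\,\norm{p-\rt}_2$ and $\norm{\rt}_2\le\norm{\rt}_1$, which bound its norm by $\sqrt n\,\norm{p-\rt}_2/\norm{p}_1$. Hence
\[
\norm{x-\xt}_2\le(1+\sqrt n)\,\norm{p-\rt}_2/\norm{p}_1.
\]
Inserting this into \eqref{eq:energy-identity} gives
\[
\inner{m}{x}-\gamma\ \ge\ \frac{\norm{p}_1\,\norm{x-\xt}_2^2}{2(1+\sqrt n)^2}.
\]
I finish with $\norm{p}_1\ge\norm{p}_2=\Nt$ and $(1+\sqrt n)^2\le 4n$, which yields exactly the constant $1/(8n)$.

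\textbf{Main obstacle.} The delicate point is Step~2, specifically getting the constant $8n$ rather than $8n^{3/2}$. A route through the pairing Lemma~\ref{lem:pairing} with unit vectors loses a factor $\sqrt n$ when $\norm{p}_1$ is bounded above by $\sqrt n\,\Nt$. The fix is the one above: keep $\norm{p}_1$ in the numerator and use only the lower bound $\norm{p}_1\ge\Nt$.
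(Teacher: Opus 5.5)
Your proposal is correct and follows the paper's proof essentially step for step. Both derive the identity from $\inner{p}{r}=\Nt^2$ together with $\norm{p}_2=\norm{\rt}_2$. Both then obtain the lower bound by writing $x-\xt$ as $(p-\rt)/\norm{p}_1$ plus a scalar multiple of $\xt$ (your $\rt(1/\norm{p}_1-1/\norm{\rt}_1)$ is the same vector), getting $\norm{x-\xt}_2\le(1+\sqrt n)\norm{p-\rt}_2/\norm{p}_1\le 2\sqrt n\,\norm{p-\rt}_2/\norm{p}_1$, and using $\norm{p}_1\ge\Nt$ only at the end.
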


\begin{proof}
Since $p=\pos{r}$ we have $\inner{p}{r}=\norm{p}_2^2=\Nt^2$.  Substituting
$r=\rt+m-\gamma\ones$ and $\inner{p}{\ones}=\norm{p}_1$,
\[
\inner{p}{\rt} + \inner{p}{m} - \gamma\norm{p}_1 \;=\; \Nt^2 .
\]
Because $\norm{p}_2=\norm{\rt}_2=\Nt$,
\[
\Nt^2-\inner{p}{\rt}
=\tfrac12\bigl(\norm{p}_2^2+\norm{\rt}_2^2-2\inner{p}{\rt}\bigr)
=\tfrac12\norm{p-\rt}_2^2 ,
\]
and dividing by $\norm{p}_1>0$ (note $\norm{p}_1\ge\norm{p}_2=\Nt>0$) gives
\eqref{eq:energy-identity}.  Nonnegativity and the equality case are
immediate.  For \eqref{eq:energy-lb}, write
\[
x-\xt
=\frac{p-\rt}{\norm{p}_1}
+\xt\,\frac{\norm{\rt}_1-\norm{p}_1}{\norm{p}_1},
\]
and use $\bigl|\norm{\rt}_1-\norm{p}_1\bigr|\le\norm{p-\rt}_1\le
\sqrt{n}\,\norm{p-\rt}_2$ together with $\norm{\xt}_2\le\norm{\xt}_1=1$ to
obtain
$\norm{x-\xt}_2\le(1+\sqrt n)\norm{p-\rt}_2/\norm{p}_1
\le 2\sqrt n\,\norm{p-\rt}_2/\norm{p}_1$.  Hence
$\norm{p-\rt}_2^2\ge\norm{p}_1^2\norm{x-\xt}_2^2/(4n)$, and
\eqref{eq:energy-identity} together with $\norm{p}_1\ge \Nt$ gives
\eqref{eq:energy-lb}.
\end{proof}

\begin{lemma}[Clipping rigidity]
\label{lem:clip-rigidity}
Let $r,g\in\R^n$ with $\inner{g}{\pos{r}}=0$.  Then
$\norm{\pos{r+g}}_2\ge\norm{\pos{r}}_2$, and if equality holds then
\[
\pos{r+g}\;=\;\pos{r},
\qquad\text{and}\qquad
g_i=0 \ \text{ for every } i \text{ with } r_i>0 .
\]
\end{lemma}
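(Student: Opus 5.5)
The inequality follows from a coordinatewise comparison of $\pos{r+g}$ with $\pos{r}$, summed against the orthogonality hypothesis $\inner{g}{\pos{r}}=0$. The equality case then comes from tracking which termwise inequalities must be tight.

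Write $p:=\pos{r}$ and $P:=\{i: r_i>0\}$. On $P$ we have $p_i=r_i$, so
\[
\pos{r_i+g_i}\ge r_i+g_i=p_i+g_i,
\qquad\text{hence}\qquad
\pos{r_i+g_i}^2\ \ge\ (p_i+g_i)^2
\]
whenever $p_i+g_i\ge0$. If $p_i+g_i<0$, then $\pos{r_i+g_i}=0$ and $(p_i+g_i)^2>0$, so this bound fails. I therefore work directly with the pointwise inequality
\[
\pos{u}^2\;\ge\;2\,\pos{r_i}\,u-\pos{r_i}^2 .
\]
This holds for every real $u$: if $u\le 0$ the right side is at most $-\pos{r_i}^2\le 0$, and if $u>0$ it reads $u^2-2\pos{r_i}u+\pos{r_i}^2=(u-\pos{r_i})^2\ge0$. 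Apply it with $u=r_i+g_i$ for every $i$. For $i\notin P$ the right side is $0$; for $i\in P$ it is
\[
2p_i(r_i+g_i)-p_i^2=p_i^2+2p_ig_i .
\]
Summing over $i$ gives
\[
\norm{\pos{r+g}}_2^2\ \ge\ \norm{p}_2^2+2\inner{p}{g}=\norm{p}_2^2,
\]
where the last step uses the hypothesis $\inner{g}{p}=0$. This proves the inequality.

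For the equality case, each summand inequality must be tight, since all of them point the same way.
\begin{itemize}
\item For $i\in P$ with $u=r_i+g_i>0$, tightness means $(u-p_i)^2=0$, i.e.\ $g_i=0$.
\item For $i\in P$ with $u\le0$, the right side $p_i^2+2p_ig_i=p_i(2u-p_i)\le -p_i^2<0$, while the left side is $0$. The inequality is strict, so this case is excluded.
\item For $i\notin P$ the right side is $0$, so tightness forces $\pos{r_i+g_i}=0$.
\end{itemize}
Hence $g_i=0$ on $P$, which gives $\pos{r+g}_i=r_i=p_i$ there, and $\pos{r+g}_i=0=p_i$ off $P$. That is exactly $\pos{r+g}=\pos{r}$.

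The only delicate step is choosing a tangent-line bound that is valid globally (including coordinates where clipping activates), rather than the naive $\pos{u}\ge u$. Once that pointwise inequality is in place, the rest is bookkeeping of which terms must be tight.
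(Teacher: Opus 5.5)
Your proof is correct and is essentially the paper's argument: your coordinatewise slack $\pos{r_i+g_i}^2-\bigl(2\pos{r_i}(r_i+g_i)-\pos{r_i}^2\bigr)$ equals the paper's $g_i^2-d_i$, where $d_i=(\pos{r_i}+g_i)^2-\pos{r_i+g_i}^2$, and your tightness analysis reproduces its three-case equality discussion. The differences are cosmetic: you get the coordinatewise bound in one stroke as the tangent line of the convex function $u\mapsto\pos{u}^2$ at $\pos{r_i}$ instead of by cases, and you skip the lower bound $d_i\ge0$, which this lemma does not need (the paper records it for the increment identity (F3)).
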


\begin{proof}
By the orthogonality hypothesis,
$\norm{\pos{r}+g}_2^2=\norm{\pos{r}}_2^2+\norm{g}_2^2$.  Define
\[
D:=\norm{\pos{r}+g}_2^2-\norm{\pos{r+g}}_2^2
=\sum_i d_i,
\qquad
d_i:=\bigl(\pos{r_i}+g_i\bigr)^2-\pos{r_i+g_i}^2,
\]
so that $\norm{\pos{r+g}}_2^2=\norm{\pos{r}}_2^2+\norm{g}_2^2-D$.  We claim
$0\le d_i\le g_i^2$ for every $i$, and we characterize the equality cases of
the upper bound.
\begin{itemize}
\item \emph{Case $r_i\ge0$, $r_i+g_i\ge0$:}\ then
$d_i=(r_i+g_i)^2-(r_i+g_i)^2=0$, and $g_i^2-d_i=g_i^2\ge0$, with equality iff
$g_i=0$.
\item \emph{Case $r_i\ge0$, $r_i+g_i<0$:}\ then $d_i=(r_i+g_i)^2\ge0$ and
$g_i^2-d_i=-r_i(r_i+2g_i)$.  Here $g_i<-r_i\le0$, hence
$r_i+2g_i=(r_i+g_i)+g_i<0$, so $g_i^2-d_i=r_i\,\bigl(-(r_i+2g_i)\bigr)\ge0$,
with equality iff $r_i=0$.
\item \emph{Case $r_i<0$:}\ then $\pos{r_i}=0$, so
$d_i=g_i^2-\pos{r_i+g_i}^2$ and $g_i^2-d_i=\pos{r_i+g_i}^2\ge0$, with
equality iff $r_i+g_i\le0$.  Also $d_i\ge0$: if $r_i+g_i\le0$ this is clear,
and if $r_i+g_i>0$ then $0<r_i+g_i<g_i$, so $\pos{r_i+g_i}^2<g_i^2$.
\end{itemize}
Thus $0\le D\le\norm{g}_2^2$, which proves the inequality.  If
$\norm{\pos{r+g}}_2=\norm{\pos{r}}_2$, then $D=\norm{g}_2^2$, i.e.\
$g_i^2-d_i=0$ for every $i$.  By the case analysis this forces: $g_i=0$
whenever $r_i\ge0$ and $r_i+g_i\ge0$; $r_i=0$ whenever $r_i\ge0$ and
$r_i+g_i<0$; and $r_i+g_i\le0$ whenever $r_i<0$.  In the first case
$\pos{r_i+g_i}=r_i=\pos{r_i}$; in the second, $\pos{r_i+g_i}=0=r_i=\pos{r_i}$;
in the third, $\pos{r_i+g_i}=0=\pos{r_i}$.  Hence $\pos{r+g}=\pos{r}$
componentwise.  Finally, if $r_i>0$ the second case is excluded (it forces
$r_i=0$), so the first case applies and $g_i=0$.
\end{proof}

\begin{proof}[Proof of Proposition~\ref{prop:fixedpoint}]
Let $z=(\rt_1,\rt_2)$ with $\rt_1,\rt_2\neq0$ and let
$(\xt,\yt)=(\rt_1/\norm{\rt_1}_1,\ \rt_2/\norm{\rt_2}_1)$.  We prove:
\emph{(a)} if $(\xt,\yt)$ is a Nash equilibrium then $\Phi(z)=z$, with
$\gamma_1=v^*$ and $\gamma_2=-v^*$;
\emph{(b)} conversely, if $\Phi(z)=z$ then $(\xt,\yt)$ is a Nash equilibrium,
$\gamma_1=v^*$, $\gamma_2=-v^*$, and each $\rt_i$ is a positive multiple of
the corresponding equilibrium strategy (in particular
$\supp(\rt_1)=\supp(\xt)$ and $\supp(\rt_2)=\supp(\yt)$).
Together, (a) and (b) give
$\Fix(\Phi)\cap\{\rt_1,\rt_2\neq0\}
=\{(c_1x^\dagger,c_2y^\dagger):\ (x^\dagger,y^\dagger)\ \text{a Nash
equilibrium},\ c_1,c_2>0\}$,
which is the two-parameter cone $E_+$ when the equilibrium is unique.

We use the standard complementary-slackness facts for zero-sum games: if
$(x^\dagger,y^\dagger)$ is a Nash equilibrium with value
$v^*=x^{\dagger\top}Ay^\dagger$, then, because $x^\dagger$ best-responds
to $y^\dagger$, $\max_i(Ay^\dagger)_i=v^*$ and every $i\in\supp(x^\dagger)$
attains this maximum; symmetrically $\min_j(x^{\dagger\top}A)_j=v^*$ and every
$j\in\supp(y^\dagger)$ attains this minimum.

\medskip\noindent
\emph{(a)}\ \ Suppose $(\xt,\yt)$ is a Nash equilibrium; write
$c_1:=\norm{\rt_1}_1$, $c_2:=\norm{\rt_2}_1$, so $\rt_1=c_1\xt$,
$\rt_2=c_2\yt$.  Then $m_1=A\yt$, and $\gamma=v^*$ solves the $\gamma$-shift
equation: for $i\in\supp(\xt)$ we have $(A\yt)_i=v^*$, so
$(\rt_1+m_1-v^*\ones)_i=\rt_{1,i}>0$; for $i\notin\supp(\xt)$ we have
$\rt_{1,i}=0$ and $(A\yt)_i\le v^*$, so
$(\rt_1+m_1-v^*\ones)_i\le0$.  Hence
$\pos{\rt_1+m_1-v^*\ones}=\rt_1$, whose $\ell_2$ norm is $\Nt_1$; by
uniqueness of the solution, $\gamma_1=v^*$ and $p_1=\rt_1$, so
$x=\xt$.  Symmetrically (with $m_2=-A^\top\xt$ and the minimum conditions),
$\gamma_2=-v^*$, $p_2=\rt_2$, and $y=\yt$.  Then
$\mathrm{res}_1=A(y-\yt)=0$, hence $g_1=0$ and
$\rt_1'=\pos{r_1+0}=\pos{r_1}=p_1=\rt_1$; likewise $\rt_2'=\rt_2$.  So
$\Phi(z)=z$.

\medskip\noindent
\emph{(b)}\ \ Suppose $\Phi(z)=z$, i.e.\ $\rt_1'=\rt_1$ and
$\rt_2'=\rt_2$.

\emph{Step 1 (rigidity: the update returns the clipped shifted vector).}
Since $\rt_1'=\rt_1$, we have
$\norm{\pos{r_1+g_1}}_2=\norm{\rt_1}_2=\Nt_1=\norm{p_1}_2$.  By
\eqref{eq:orth}, Lemma~\ref{lem:clip-rigidity} applies (with $r=r_1$,
$g=g_1$, noting $\pos{r_1}=p_1$) and its equality case yields
$\pos{r_1+g_1}=p_1$.  Hence $\rt_1=\rt_1'=p_1$.  Symmetrically
$\rt_2=p_2$.

\emph{Step 2 (energy identity: pre-iterate equals main iterate).}
By Lemma~\ref{lem:energy-identity} applied to player~1,
$p_1=\rt_1$ forces $\inner{m_1}{x}-\gamma_1=0$, and
$x=p_1/\norm{p_1}_1=\rt_1/\norm{\rt_1}_1=\xt$.  Symmetrically $y=\yt$.

\emph{Step 3 (residuals vanish).}
From $y=\yt$: $\mathrm{res}_1=A(y-\yt)=0$, hence $g_1=0$; symmetrically
$g_2=0$.

\emph{Step 4 (componentwise reading of the fixed-point equation).}
By Step 1, $\rt_1=p_1=\pos{\rt_1+m_1-\gamma_1\ones}$.  Componentwise: if
$\rt_{1,i}>0$ then $\rt_{1,i}+m_{1,i}-\gamma_1=\rt_{1,i}$, i.e.\
$m_{1,i}=\gamma_1$; if $\rt_{1,i}=0$ then
$\rt_{1,i}+m_{1,i}-\gamma_1\le0$, i.e.\ $m_{1,i}\le\gamma_1$.  Since
$y=\yt$, $m_1=A\yt=Ay$; therefore, for every strategy $x'\in\simplex{n_1}$,
$x'^{\top}A\yt=\inner{m_1}{x'}\le\gamma_1$, with equality at $x'=\xt$
(which is supported on $\{i:\rt_{1,i}>0\}$): $\xt$ is a best response to
$\yt$ and $\gamma_1=\xt^\top A\yt$.  Symmetrically, for player~2,
$(-A^\top\xt)_j=\gamma_2$ on $\supp(\rt_2)$ and $\le\gamma_2$ elsewhere, so
$\yt\in\arg\max_{y'}\,(-\xt^\top Ay')=\arg\min_{y'}\,\xt^\top Ay'$ and
$\gamma_2=-\xt^\top A\yt$.  Hence $(\xt,\yt)$ is a Nash equilibrium; since
all equilibria of a zero-sum game share the value,
$\gamma_1=\xt^\top A\yt=v^*$ and $\gamma_2=-v^*$.  Finally
$\rt_1=\norm{\rt_1}_1\,\xt$ and $\rt_2=\norm{\rt_2}_1\,\yt$ by definition of
the pre-iterates, which is the claimed cone structure, and
$\supp(\rt_i)$ equals the support of the corresponding equilibrium strategy.
\end{proof}

\begin{remark}[A trajectory-free proof]
\label{rem:p1-algebraic}
The proof above is purely algebraic: it inspects a single round at a single
state and never invokes summability along trajectories, whereas the
telescoping machinery of \citet{zhang2025scale} yields
$\inner{m}{x}-\gamma\to0$ only along a trajectory.  The exact identity
\eqref{eq:energy-identity} is of independent interest.
The bound \eqref{eq:energy-lb} is exactly the per-round scalar lower bound
\eqref{eq:beta-free} on which the saturation analysis rests, and it is used
verbatim in the proof of Lemma~\ref{lem:sat-equiv} below.
Its equality case pinpoints the fixed
points without any limiting argument.
\end{remark}

\subsection{Saturation equivalence: proof of Lemma~\ref{lem:sat-equiv}}
\label{app:local:satequiv}

\begin{remark}[Tighter constant]\label{rem:sat-constant}
\citet[Theorem~3.2]{zhang2025scale}'s element-wise inequality yields
\eqref{eq:beta-free} and \eqref{eq:sat-explicit} with $1/(2n_i)$; we keep
the self-contained $1/(8n_i)$ of Appendix~\ref{app:local:satequiv}.
\end{remark}

The saturation analysis uses one elementary fact.
\begin{lemma}[Norm monotonicity]\label{lem:normmono}
Let $r \in \R^n$ and $g \in \R^n$ satisfy $\inner{g}{\pos{r}} = 0$. Then $\norm{\pos{r + g}}_2 \ge \norm{\pos{r}}_2$. Consequently, along any trajectory of \iregprm{}, $\Nt(t{+}1) \ge \Nt(t)$.
\end{lemma}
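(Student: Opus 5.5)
The first claim is exactly the inequality part of Lemma~\ref{lem:clip-rigidity}, so the plan is to rerun its orthogonality-plus-componentwise argument in a self-contained way and then read off the trajectory consequence. First, using $\inner{g}{\pos{r}}=0$, expand
\[
\norm{\pos{r}+g}_2^2=\norm{\pos{r}}_2^2+\norm{g}_2^2 .
\]
Second, compare $\pos{r}+g$ with $\pos{r+g}$ coordinatewise. Set $d_i:=(\pos{r_i}+g_i)^2-\pos{r_i+g_i}^2$. The three-case split of Lemma~\ref{lem:clip-rigidity} gives $d_i\le g_i^2$:
\begin{itemize}
\item if $r_i\ge0$ and $r_i+g_i\ge0$, then $d_i=0$;
\item if $r_i\ge0$ and $r_i+g_i<0$, then $g_i^2-d_i=-r_i(r_i+2g_i)\ge0$;
\item if $r_i<0$, then $g_i^2-d_i=\pos{r_i+g_i}^2\ge0$.
\end{itemize}
Summing over $i$,
\[
\norm{\pos{r+g}}_2^2=\norm{\pos{r}}_2^2+\norm{g}_2^2-\sum_i d_i\;\ge\;\norm{\pos{r}}_2^2 .
\]
In fact the cited lemma already contains this inequality, so I would simply invoke it.

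For the trajectory statement, take rounds $t\ge t^*$ with $\rt^{(t)}\neq0$. In the notation of Algorithm~\ref{alg:iregprm}, set $r=r^{(t)}$ and $g=g^{(t)}$. The orthogonality hypothesis is (F1), i.e.\ \eqref{eq:orth}, which holds because $x^{(t)}=\pos{r^{(t)}}/\norm{\pos{r^{(t)}}}_1$ and $\inner{\ones}{\pos{r}}=\norm{\pos{r}}_1$. Then
\[
\Nt(t{+}1)=\norm{\pos{r^{(t)}+g^{(t)}}}_2\ge\norm{\pos{r^{(t)}}}_2=\Nt(t),
\]
where the final equality is the defining norm-preserving property of the $\gamma$-shift (Definition~\ref{def:gammashift}).

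The remaining case is rounds with $\rt^{(t)}=0$. There the algorithm sets $r^{(t)}=0$, so $\Nt(t)=0$ and $\Nt(t{+}1)\ge0$ holds trivially. The only point needing care, which I take to be the main subtlety, is that the orthogonality (F1) must be justified for the unclipped $r^{(t)}$ with $x^{(t)}$ built from its positive part, and this is exactly \eqref{eq:orth}. Monotonicity also yields the claim used earlier that $\Nt(t)>0$ for every $t\ge t^*$.
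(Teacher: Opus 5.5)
Your proof is correct and follows the paper's route: the paper also cites the inequality part of Lemma~\ref{lem:clip-rigidity}, then combines the orthogonality (F1) with the norm-preserving $\gamma$-shift to get $\Nt(t{+}1)\ge\Nt(t)$. Your explicit treatment of the zero-state rounds, where $\Nt(t)=0$ and the bound is trivial, is a small addition that the paper leaves implicit.
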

\begin{proof}
The inequality is the first claim of Lemma~\ref{lem:clip-rigidity}.  Along one
round of \iregprm{}, $\rt_i^{(t+1)}=\pos{r_i^{(t)}+g_i^{(t)}}$ with
$\inner{g_i^{(t)}}{\pos{r_i^{(t)}}}=0$ (F1), and the $\gamma$-shift is defined
so that $\norm{\pos{r_i^{(t)}}}_2=\norm{\rt_i^{(t)}}_2$
(Definition~\ref{def:gammashift}); the inequality therefore reads
$\Nt_i(t{+}1)\ge\Nt_i(t)$.
\end{proof}

The proof below is self-contained: it uses only the energy identity
(Lemma~\ref{lem:energy-identity}), the structural facts (F2)--(F3), norm
monotonicity (Lemma~\ref{lem:normmono}), and the elementary nonnegativity
of the summed external regrets in a zero-sum game (cf.\
\citealp[Fact~2.2]{zhang2025scale}), which we reprove in one line.

\begin{proof}[Proof of Lemma~\ref{lem:sat-equiv}]
We write the argument for player~1; player~2 is symmetric (replace $A$ by
$-A^\top$).  Set $u^{(t)}:=Ay^{(t)}$, the payoff vector at the opponent's
main iterate, and
$\beta_t^{1}=\inner{m_1^{(t)}}{x^{(t)}}-\gamma_1^{(t)}$.  For the trivial
rounds $t<t_1^*$ (where $\rt_1^{(t)}=0$) the algorithm sets
$m_1^{(t)}=0$, applies no shift ($\gamma_1^{(t)}=0$), and plays
$x^{(t)}=\xt^{(t)}$; hence $\beta_t^{1}=0$ and
$\norm{x^{(t)}-\xt^{(t)}}=0$ on those rounds, consistently with the
convention in the statement.

\emph{Step 1 (per-round lower bound; first part of \eqref{eq:beta-free}).}
For $t\ge t_1^*$, apply Lemma~\ref{lem:energy-identity} at the state
$\rt=\rt_1^{(t)}$ with $m=m_1^{(t)}$ and $n=n_1$: by
\eqref{eq:energy-identity} and \eqref{eq:energy-lb},
\[
\beta_t^{1}
\;=\;\frac{\norm{p_1^{(t)}-\rt_1^{(t)}}_2^{\,2}}{2\norm{p_1^{(t)}}_1}
\;\ge\;\frac{\Nt_1(t)}{8n_1}\,\norm{x^{(t)}-\xt^{(t)}}_2^{\,2}
\;\ge\;0 .
\]

\emph{Step 2 (entrywise telescoping).}
Since $\pos{v}\ge v$ entrywise, the update in \eqref{eq:round-recap}
satisfies, entrywise,
\[
\rt_1^{(t+1)}=\pos{r_1^{(t)}+g_1^{(t)}}
\;\ge\;\rt_1^{(t)}+m_1^{(t)}+g_1^{(t)}-\gamma_1^{(t)}\ones .
\]
Substituting $m_1^{(t)}+g_1^{(t)}
=u^{(t)}-\inner{\mathrm{res}_1^{(t)}}{x^{(t)}}\ones$ and
$\gamma_1^{(t)}+\inner{\mathrm{res}_1^{(t)}}{x^{(t)}}
=\inner{u^{(t)}}{x^{(t)}}-\beta_t^{1}$ (both identities are immediate from
$\mathrm{res}_1^{(t)}=u^{(t)}-m_1^{(t)}$ and the definition of
$\beta_t^{1}$),
\[
\rt_1^{(t+1)}
\;\ge\;\rt_1^{(t)}
+\bigl(u^{(t)}-\inner{u^{(t)}}{x^{(t)}}\ones\bigr)
+\beta_t^{1}\,\ones ;
\]
the same inequality holds on the trivial rounds (there
$\rt_1^{(t+1)}=\pos{g_1^{(t)}}\ge u^{(t)}-\inner{u^{(t)}}{x^{(t)}}\ones$
and $\beta_t^{1}=0$).  Telescoping from $\rt_1^{(1)}=\mathbf{0}$ and
evaluating at the coordinate $a$ attaining the external regret
$\mathrm{Reg}_1^{(T)}:=\max_{a'}\sum_{t\le T}
\bigl(u^{(t)}_{a'}-\inner{u^{(t)}}{x^{(t)}}\bigr)$, and using
$\rt_1^{(T+1)}[a]\le\norm{\rt_1^{(T+1)}}_2=\Nt_1(T{+}1)$,
\begin{equation}
\label{eq:sat-telescope}
\Nt_1(T{+}1)\;\ge\;\mathrm{Reg}_1^{(T)}+\sum_{t\le T}\beta_t^{1} .
\end{equation}

\emph{Step 3 (summing the players; second part of \eqref{eq:beta-free}).}
With $u_2^{(t)}=-A^\top x^{(t)}$, the two external regrets add up to
\[
\mathrm{Reg}_1^{(T)}+\mathrm{Reg}_2^{(T)}
=\max_{x'}\sum_{t\le T}x'^{\top}Ay^{(t)}
-\min_{y'}\sum_{t\le T}x^{(t)\top}Ay'
\;\ge\;Tv^*-Tv^*=0,
\]
where any equilibrium $(x^\circ,y^\circ)$ can be used in the two
optimizations ($x^{\circ\top}Ay\ge v^*$ for every $y$, and
$x^\top Ay^\circ\le v^*$ for every $x$); this is Fact~2.2 of
\citet{zhang2025scale}.  Adding \eqref{eq:sat-telescope} for the two
players and rearranging, for every $T$,
\[
\sum_{t\le T}\bigl(\beta_t^{1}+\beta_t^{2}\bigr)
\;\le\;\Nt_1(T{+}1)+\Nt_2(T{+}1)
-\bigl(\mathrm{Reg}_1^{(T)}+\mathrm{Reg}_2^{(T)}\bigr)
\;\le\;\Nt_{1,\infty}+\Nt_{2,\infty}
\]
under two-sided saturation; since every term is nonnegative (Step~1), the
series converges, proving \eqref{eq:beta-free}.

\emph{Step 4 (second-order path length; \eqref{eq:sat-explicit} and
$\Rightarrow$).}
For $t\ge t_1^*$ with $t_1^*<\infty$, Lemma~\ref{lem:normmono} gives
$\Nt_1(t)\ge\Nt_1(t_1^*)>0$, so Step~1 yields
$\norm{x^{(t)}-\xt^{(t)}}^2\le 8n_1\,\beta_t^{1}/\Nt_1(t_1^*)$; rounds
before $t_1^*$ contribute zero.  Summing and using
$\sum_t\beta_t^{1}\le\Nt_{1,\infty}+\Nt_{2,\infty}$ (Step~3, with
$\beta_t^{2}\ge0$) gives \eqref{eq:sat-explicit}, and in particular the
finiteness of the second-order path length under saturation.

\emph{Step 5 ($\Leftarrow$).}
If $t_1^*=\infty$ then $\Nt_1\equiv0$ saturates trivially.  Otherwise, for
$t\ge t_1^*$ the prediction is $m_1^{(t)}=A\yt^{(t)}$ and (F2) gives
$\norm{g_1^{(t)}}\le(1+\sqrt{n_1})\norm{A}_{\mathrm{op}}
\norm{y^{(t)}-\yt^{(t)}}$ (on the finitely many rounds $t<t_1^*$ the
prediction is $0$ and (F2) is not used), so
$\sum_{t\ge t_1^*}\norm{g_1^{(t)}}^2<\infty$ when the second-order path
length is finite; by (F3), valid for $t\ge t_1^*$, and $D(t)\ge0$,
$\Nt_1(T{+}1)^2=\Nt_1(t_1^*)^2+\sum_{t_1^*\le t\le T}\bigl(\norm{g_1^{(t)}}^2-D(t)\bigr)
\le\Nt_1(t_1^*)^2+\sum_{t\ge t_1^*}\norm{g_1^{(t)}}^2<\infty$, and the nondecreasing
(Lemma~\ref{lem:normmono}) bounded sequence $\Nt_1(t)$ saturates; the same
holds for player~2.
\end{proof}

\begin{proof}[Proof of Theorem~\ref{thm:saturation}]
Write $e_1(t):=\norm{x^{(t)}-\xt^{(t)}}$, $e_2(t):=\norm{y^{(t)}-\yt^{(t)}}$,
$S(t):=\Nt_1(t)+\Nt_2(t)$ and $Q(t):=\Nt_1(t)^2+\Nt_2(t)^2\ge S(t)^2/2$.
Three of the inequalities above hold for every finite horizon, without
any saturation hypothesis.
\emph{(i)} For $t\ge\tau$, Step~1 and Lemma~\ref{lem:normmono}
($\Nt_i(t)\ge\Nt_i(\tau)$) give $\beta_t^{i}\ge\alpha_i\,e_i(t)^2$.
\emph{(ii)} The first inequality of Step~3 holds for every $T$:
$\sum_{t\le T}(\beta_t^{1}+\beta_t^{2})\le S(T{+}1)$, since
$\mathrm{Reg}_1^{(T)}+\mathrm{Reg}_2^{(T)}\ge0$.
\emph{(iii)} For $t\ge\tau$ both predictions are $m_i^{(t)}=A\yt^{(t)}$,
$-A^\top\xt^{(t)}$, so (F3) with $D(t)\ge0$ and (F2) give
$\Nt_1(t{+}1)^2-\Nt_1(t)^2\le\norm{g_1^{(t)}}^2\le K_1^2\,e_2(t)^2$ and
symmetrically $\Nt_2(t{+}1)^2-\Nt_2(t)^2\le K_2^2\,e_1(t)^2$.
Summing (iii) from $\tau$ to $T$ and inserting (i) then (ii),
\[
\begin{aligned}
Q(T{+}1)&\;\le\;Q_\tau+K_1^2\sum_{t=\tau}^{T}e_2(t)^2+K_2^2\sum_{t=\tau}^{T}e_1(t)^2\\
&\;\le\;Q_\tau+\frac{K_1^2}{\alpha_2}\sum_{t\le T}\beta_t^{2}+\frac{K_2^2}{\alpha_1}\sum_{t\le T}\beta_t^{1}
\;\le\;Q_\tau+M\,S(T{+}1).
\end{aligned}
\]
With $Q(T{+}1)\ge S(T{+}1)^2/2$ this is the quadratic inequality
$S^2-2MS-2Q_\tau\le0$ in $S=S(T{+}1)$, whence
$S(T{+}1)\le M+\sqrt{M^2+2Q_\tau}$ for every $T\ge\tau$; for $t\le\tau$
the bound holds by monotonicity.  Each $\Nt_i$ is nondecreasing
(Lemma~\ref{lem:normmono}) and bounded, hence convergent.
If $t_1^*=\infty$, then $\rt_1^{(t)}=\mathbf 0$ and $x^{(t)}=\xt^{(t)}$
for all $t$; hence $\mathrm{res}_2^{(t)}=-A^\top(x^{(t)}-\xt^{(t)})=0$
and $g_2^{(t)}=0$ for $t\ge t_2^*$, so by (F3) $\Nt_2$ is constant from
$t_2^*$ on (and $\Nt_2\equiv0$ if also $t_2^*=\infty$).
\end{proof}

\begin{remark}[The bound is loose but structural]\label{rem:sat-bound-loose}
On the testbed of Section~\ref{sec:experiments} the right-hand side of
\eqref{eq:sat-bound} exceeds the realized $\Nt_{1,\infty}+\Nt_{2,\infty}$ by
$10^{3}$--$10^{6}$ (it grows like $n^{2}\norm{A}^{2}/\Nt_i(\tau)$), and none
of the three finite-horizon inequalities (i)--(iii) is violated at any
round on $36$ instances spanning all cells.  Its role is qualitative: it
removes saturation from the hypothesis list of every result in this paper.
\end{remark}
\subsection{Proofs}
\label{sec:local:proofs}

This subsection proves the three results of this section: the absorption
lemma, the closed-form Jacobian together with the value-centering identities,
and the spectral geometry that produces the closed form \eqref{eq:spec} of
Theorem~\ref{thm:rate}.  The two remaining steps of the proof of
Theorem~\ref{thm:rate}, a uniform second-order Taylor expansion and the
bootstrap that converts the spectral picture into the rate
\eqref{eq:linrate}, are standard in technique and are carried out in
Appendix~\ref{app:local:rate}; the algebraic proof of
Proposition~\ref{prop:fixedpoint} and the saturation equivalence of
Lemma~\ref{lem:sat-equiv} are in Appendix~\ref{app:local}.

\subsubsection{One-step absorption: proof of Lemma~\ref{lem:absorb}}
\label{app:local:absorb}

\begin{lemma}[One-step absorption and local analyticity]
\label{lem:absorb}
Let $(x^*,y^*)$ be a strictly complementary equilibrium with margin
$\delta>0$ as in \eqref{eq:margin}, and let
$z_f=(c_1x^*,c_2y^*)$, $c_1,c_2>0$, be any point of the corresponding ray
(a fixed point of $\Phi$ by Proposition~\ref{prop:fixedpoint}).  There is
an open neighborhood $\mathcal{U}_0\subset\R^{n_1+n_2}$ of $z_f$ such that:
\begin{enumerate}[label=(\roman*),itemsep=1pt,topsep=2pt]
\item $\Phi(\mathcal{U}_0\cap\mathcal Z)\subseteq\Ms\cap\mathcal Z$; more precisely, for every
$z\in\mathcal{U}_0\cap\mathcal Z$ the off-support coordinates of $\Phi(z)$ are
\emph{exactly} zero;
\item on $\mathcal{U}_0\cap\Ms\cap\mathcal Z$ the clipping pattern of every $\pos{\cdot}$
operation in one round is constant, and $\Phi_S:=\Phi|_{\Ms}$ is
real-analytic there (so $\Ms$ is forward-invariant near $z_f$; it need
not be invariant globally).
\end{enumerate}
\end{lemma}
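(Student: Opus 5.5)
The plan is a continuity-and-margin argument around the fixed point $z_f$, carried out once for player~1 and mirrored for player~2. At $z_f$ everything is explicit (proof of Proposition~\ref{prop:fixedpoint}(a)): $\xt=x^*$, $\yt=y^*$, $m_1=Ay^*$, $\gamma_1=v^*$, $r_1=c_1x^*+Ay^*-v^*\ones$, $p_1=c_1x^*$, $x=x^*$, $g_1=0$. The vector $r_1(z_f)$ has coordinates $c_1x^*_i>0$ on $S_1$ and $(Ay^*)_i-v^*\le-\delta$ off $S_1$. So $r_1(z_f)$ is bounded away from zero in every coordinate, with a uniform gap $\mu:=\min(\delta,\,c_1\min_{i\in S_1}x^*_i)>0$.

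The first step is continuity of the round's ingredients near $z_f$. The pre-iterate maps $\rt\mapsto\rt/\norm{\rt}_1$ are continuous on $\mathcal Z$, and so are $m_1$ and $\gamma_1$. For $\gamma_1$, the function $(v,\tau,\gamma)\mapsto\norm{\pos{v-\gamma\ones}}_2-\tau$ is continuous and strictly decreasing in $\gamma$ wherever it is positive, so its unique root depends continuously on $(v,\tau)$. Hence $r_1$, $p_1$, $x$, $y$, $\mathrm{res}_1=A(y-\yt)$ and $g_1$ are continuous, with $g_1(z_f)=0$. Choose $\mathcal U_0$ so that $\norm{r_1(z)-r_1(z_f)}_\infty<\mu/3$ and $\norm{g_1(z)}_\infty<\mu/3$, and likewise for player~2.

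For $i\notin S_1$ this gives $(r_1+g_1)_i<-\delta+2\mu/3<0$, so $\rt_{1,i}'=0$ exactly. For $i\in S_1$ it gives $r_{1,i}>0$ and $(r_1+g_1)_i>0$. This proves (i) and shows that the clipping pattern of both $\pos{r_1}$ and $\pos{r_1+g_1}$ equals $S_1$ throughout $\mathcal U_0\cap\mathcal Z$. The same holds for player~2. The only step needing care is that the $\gamma$-shift's own clipping pattern is frozen as well. That pattern is the same thing as the pattern of $\pos{r_1}$, which was just fixed.

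For (ii), restrict to $\Ms$, shrinking $\mathcal U_0$ if needed so that $\rt_1|_{S_1}>0$ and $\rt_2|_{S_2}>0$ coordinatewise. Then $\norm{\rt_1}_1=\ones^\top\rt_1$ is linear, so $\xt$ and $\yt$ are rational in $z$ with nonvanishing denominators. Because the active set of $\pos{v-\gamma\ones}$ is the constant set $S_1$, the shift equation reduces to $\sum_{i\in S_1}(v_i-\gamma)^2=\norm{\rt_1}_2^2$. This is a quadratic in $\gamma$, and the relevant root is
\[
\gamma_1=\bar v-\sqrt{\bar v^2-\tfrac{1}{|S_1|}\bigl(\textstyle\sum_{S_1}v_i^2-\norm{\rt_1}_2^2\bigr)},
\]
where $\bar v$ is the mean of $v$ over $S_1$. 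The discriminant is positive near $z_f$ because the root is simple there: the derivative of $\sum_{S_1}(v_i-\gamma)^2$ in $\gamma$ equals $-2\sum_{S_1}(v_i-\gamma)=-2\norm{p_1}_1\neq0$. Equivalently, the analytic implicit function theorem gives analyticity of $\gamma_1$ directly.

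With the active sets fixed, every remaining map is a composition of analytic maps: $p_1=r_1|_{S_1}$, $x=p_1/\ones^\top p_1$, $g_1$, and $\rt_1'=(r_1+g_1)|_{S_1}$ extended by zeros. So $\Phi_S$ is real-analytic on $\mathcal U_0\cap\Ms\cap\mathcal Z$ and maps this set into $\Ms$, which is the forward invariance near $z_f$.
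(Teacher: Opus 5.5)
Your proposal is correct and follows the paper's proof essentially step for step. You use continuity of the round at $z_f$, margin bounds that clip off-support coordinates exactly and keep support coordinates strictly positive at both clipping stages, and a reduction of the $\gamma$-shift to a frozen-active-set quadratic whose $\gamma$-derivative does not vanish at the solution. The differences are cosmetic: you use a single uniform margin $\mu$, and you identify $\gamma_1$ pointwise with the explicit smaller root of that quadratic, whereas the paper first freezes the pattern for every $\gamma$ with $|\gamma-v^*|\le\iota$ and then applies the implicit function theorem.
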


\begin{proof}[Proof of Lemma~\ref{lem:absorb}]
Let $z_f=(c_1x^*,c_2y^*)$ with $c_1,c_2>0$.  By
Proposition~\ref{prop:fixedpoint}, $\Phi(z_f)=z_f$ with $\gamma_1=v^*$,
$\gamma_2=-v^*$, and at $z_f$ we have $x=\xt=x^*$, $y=\yt=y^*$,
$p_1=\rt_1$, $p_2=\rt_2$, and $g_1=g_2=0$.

\emph{Step 1 (continuity of the round at $z_f$).}
On a neighborhood of $z_f$ the pre-iterate normalizations are continuous
(denominators $\norm{\rt_i}_1\ge c_i/2$), hence so are $m_1$ and $m_2$.  The
multiplier $\gamma_1$ depends continuously on its inputs: the map
$\varphi(\gamma;v):=\norm{\pos{v-\gamma\ones}}_2$ is jointly continuous and,
at the solution (where $\varphi=\Nt_1>0$), strictly decreasing in $\gamma$;
a standard monotone-crossing argument then gives continuity of the solution
in $(v,\Nt_1)$.  Consequently $p_1$, $x$, $\mathrm{res}_1$, and $g_1$ are
continuous near $z_f$ (the denominator $\norm{p_1}_1\ge\norm{p_1}_2=\Nt_1$
is bounded below), and $g_1\to0$, $\gamma_1\to v^*$ as $z\to z_f$; the same
holds for player~2.

\emph{Step 2 (off-support coordinates are clipped exactly).}
Fix $i\notin S_1$.  As $z\to z_f$,
\[
r_{1,i}=\rt_{1,i}+m_{1,i}-\gamma_1
\;\longrightarrow\;
0+(Ay^*)_i-v^*\;\le\;-\delta ,
\]
by strict complementarity \eqref{eq:margin}.  Choose the neighborhood
$\mathcal{U}_0$ small enough that, for all $z\in\mathcal{U}_0$ and all
$i\notin S_1$, $r_{1,i}\le-\delta/2$ and $\norm{g_1}_\infty\le\delta/4$.
Then $(r_1+g_1)_i\le-\delta/4<0$, so $\rt'_{1,i}=\pos{(r_1+g_1)_i}=0$
\emph{exactly} (and also $p_{1,i}=0$).  The same holds for player~2, using
the second minimum in \eqref{eq:margin}.  Hence
$\Phi(\mathcal{U}_0)\subseteq\Ms$, proving~(i).

\emph{Step 3 (support coordinates are never clipped).}
Fix $i\in S_1$.  As $z\to z_f$, $r_{1,i}\to c_1x^*_i>0$.  Shrinking
$\mathcal{U}_0$, we may assume $r_{1,i}\ge\kappa:=\tfrac{c_1}{2}
\min_{i\in S_1}x^*_i>0$ and $\norm{g_1}_\infty\le\kappa/2$ for all
$z\in\mathcal{U}_0$, so $(r_1+g_1)_i\ge\kappa/2>0$: neither the clipped
shifted vector nor the update clips any support coordinate.  Symmetrically
for player~2.

\emph{Step 4 (frozen pattern and analyticity on $\mathcal{U}_0\cap\Ms$).}
Set $\iota:=\min\{\delta/8,\kappa/4\}$ and shrink $\mathcal{U}_0$ once more
so that $|\gamma_1-v^*|\le\iota$ on $\mathcal{U}_0$ (possible by Step~1).
Then, for $z\in\mathcal{U}_0\cap\Ms$ and every $\gamma$ with
$|\gamma-v^*|\le\iota$ (an interval that contains the solution $\gamma_1$,
and $|\gamma-\gamma_1|\le2\iota$ throughout), the vector
$\rt_1+m_1-\gamma\ones$ is strictly positive exactly on $S_1$ (entries
$\ge\kappa-2\iota\ge\kappa/2$, by Step~3) and strictly negative off $S_1$
(entries $\le-\delta/2+2\iota\le-\delta/4$, by Step~2); hence on
$\mathcal{U}_0\cap\Ms$ the defining equation for $\gamma_1$ reads
\begin{equation}
\label{eq:gamma-face}
\sum_{i\in S_1}\bigl(\rt_{1,i}+m_{1,i}-\gamma_1\bigr)^2
\;=\;\sum_{i\in S_1}\rt_{1,i}^{\,2}\,,
\end{equation}
a quadratic equation in $\gamma_1$ whose left-hand side has
$\gamma$-derivative $-2\sum_{i\in S_1}r_{1,i}\le-2\kappa<0$ at the solution.
By the implicit function theorem, $\gamma_1$ is a real-analytic function of
$(\rt_1|_{S_1},m_1|_{S_1})$, and $m_1|_{S_1}=\Atil\,\yt|_{S_2}$ is
real-analytic in $\rt_2|_{S_2}$.  All remaining operations of the round
(normalizations with denominators bounded below, matrix products, the update
with the clipping pattern frozen by Steps~2--3) are real-analytic on
$\mathcal{U}_0\cap\Ms$.  Hence $\Phi_S=\Phi|_{\Ms}$ is a composition of
real-analytic maps on $\mathcal{U}_0\cap\Ms$, proving~(ii).
\end{proof}

\subsubsection{The Jacobian on the frozen face: proofs of
Lemma~\ref{lem:jacobian} and Lemma~\ref{lem:centering}}
\label{app:local:jacobian}

Throughout this part Assumption~\ref{asm:nondeg-strict} is in force at
the designated equilibrium $(x^*,y^*)$; the computation uses only the
kernel condition, not uniqueness (which it implies,
Lemma~\ref{lem:kernel}).  We work in face coordinates
$z=(\rt_1,\rt_2)\in\R^{S_1}\times\R^{S_2}$ on
$\mathcal{U}_0\cap\Ms$, where $\mathcal{U}_0$ is the neighborhood of a fixed
point $z_f=(a\xi,b\psi)$, $a,b>0$, provided by Lemma~\ref{lem:absorb}.  With
the clipping pattern frozen, one round of the face dynamics reads
\begin{equation}
\label{eq:face-round}
\rt_1' \;=\; r_1+g_1 \;=\; \rt_1+\Atil\,y
-\bigl(\gamma_1+\inner{\mathrm{res}_1}{x}\bigr)\ones,
\qquad
\rt_2' \;=\; \rt_2-\Atil^\top x
-\bigl(\gamma_2+\inner{\mathrm{res}_2}{y}\bigr)\ones,
\end{equation}
where $r_1=\rt_1+m_1-\gamma_1\ones>0$, $x=r_1/\norm{r_1}_1$,
$m_1=\Atil\yt$, $\mathrm{res}_1=\Atil(y-\yt)$ (so that
$m_1+\mathrm{res}_1=\Atil y$), and symmetrically
$r_2=\rt_2+m_2-\gamma_2\ones>0$, $y=r_2/\norm{r_2}_1$, $m_2=-\Atil^\top\xt$,
$\mathrm{res}_2=-\Atil^\top(x-\xt)$.  At the fixed point $z_f$:
$\gamma_1=v^*$, $\gamma_2=-v^*$, $r_1=\rt_1=a\xi$, $r_2=\rt_2=b\psi$,
$x=\xt=\xi$, $y=\yt=\psi$, and $\mathrm{res}_1=\mathrm{res}_2=0$
(Proposition~\ref{prop:fixedpoint} restricted to the face).  Recall also the
differential of the normalization $u\mapsto u/\inner{\ones}{u}$ (for $u>0$):
at $u=a\xi$ it is $du\mapsto P_\xi\,du/a$, and at $u=b\psi$ it is
$du\mapsto P_\psi\,du/b$.

\begin{lemma}[Vanishing of the multiplier differentials]
\label{lem:dgamma}
At any fixed point $z_f=(a\xi,b\psi)$, $a,b>0$, the differentials of
$\gamma_1$ and $\gamma_2$ along $\Ms$ satisfy
$d\gamma_1=\inner{\xi}{dm_1}$ and $d\gamma_2=\inner{\psi}{dm_2}$, and both
vanish: $d\gamma_1=d\gamma_2=0$.
\end{lemma}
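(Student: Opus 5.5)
We differentiate the face form \eqref{eq:gamma-face} of the $\gamma$-shift equation at the fixed point $z_f=(a\xi,b\psi)$ and then use the equilibrium structure to kill the resulting term. On $\mathcal U_0\cap\Ms$ the $\gamma_1$ equation is $\norm{\rt_1+m_1-\gamma_1\ones}_2^2=\norm{\rt_1}_2^2$ in face coordinates, with no clipping by Lemma~\ref{lem:absorb}. Implicit differentiation gives
\[
2\inner{r_1}{d\rt_1+dm_1-d\gamma_1\ones}=2\inner{\rt_1}{d\rt_1}.
\]
At $z_f$ we have $r_1=\rt_1=a\xi$, so the $d\rt_1$ terms cancel exactly. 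The remaining relation is $a\inner{\xi}{dm_1}=a\,d\gamma_1\inner{\xi}{\ones}$, and since $\inner{\xi}{\ones}=1$ this yields $d\gamma_1=\inner{\xi}{dm_1}$. The identical computation for player~2, with $r_2=\rt_2=b\psi$, gives $d\gamma_2=\inner{\psi}{dm_2}$. The nonvanishing coefficient $-2\sum_{S_1}r_{1,i}$ from Step~4 of Lemma~\ref{lem:absorb} justifies the implicit differentiation.

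Next we compute $dm_1$. Since $m_1=\Atil\,\yt$ with $\yt=\rt_2/\inner{\ones}{\rt_2}$, the recalled normalization differential gives $d\yt=P_\psi\,d\rt_2/b$. Here $P_\psi$ is the differential of $u\mapsto u/\inner{\ones}{u}$ at $\psi$, namely $P_\psi=I-\psi\ones^\top$. Hence $d\gamma_1=\tfrac1b\,\xi^\top\Atil P_\psi\,d\rt_2$.

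The key observation is that $\xi^\top\Atil=v^*\ones^\top$ on $S_2$. This is the complementary-slackness fact used in the proof of Proposition~\ref{prop:fixedpoint}: every $j\in S_2$ attains $\min_j(x^{*\top}A)_j=v^*$. Therefore
\[
\xi^\top\Atil P_\psi=v^*\ones^\top(I-\psi\ones^\top)=v^*(\ones^\top-\ones^\top)=0,
\]
because $\ones^\top\psi=1$. Thus $d\gamma_1=0$. Symmetrically $dm_2=-\Atil^\top P_\xi\,d\rt_1/a$, and $\Atil\psi=v^*\ones$ on $S_1$, so $\psi^\top\Atil^\top P_\xi=v^*\ones^\top P_\xi=0$ and $d\gamma_2=0$.

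The only delicate point is the form of $P_\psi$: with $P=I-\psi\ones^\top$ the cancellation uses $\ones^\top\psi=1$. If the differential is written with the transposed convention, $P=I-\ones\psi^\top$, the cancellation instead uses that $\xi^\top\Atil$ is constant on $S_2$. Both routes give zero, so the conclusion is robust; the proof should state the convention explicitly, matching the one used later for the Jacobian $J=I+L+L^2$. Non-degeneracy is not needed here; only strict complementarity and the equilibrium equalities on the supports enter.
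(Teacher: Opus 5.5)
Your argument is correct and matches the paper's proof step for step: you differentiate the frozen-pattern equation \eqref{eq:gamma-face} at $z_f$, cancel the $d\rt_1$ terms using $r_1=\rt_1=a\xi$ to get $d\gamma_1=\inner{\xi}{dm_1}$, and then kill $\tfrac1b\,\xi^\top\Atil P_\psi\,d\rt_2$ with $\xi^\top\Atil=v^*\ones^\top$ and $\ones^\top P_\psi=0$ for $P_\psi=I-\psi\ones^\top$. Your closing aside is wrong and should be dropped: $I-\ones\psi^\top$ is not the differential of the normalization, and with it the cancellation fails, since $v^*\ones^\top(I-\ones\psi^\top)=v^*\bigl(\ones^\top-|S_2|\,\psi^\top\bigr)$ is nonzero in general, so there is no robustness to a choice of convention.
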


\begin{proof}
Differentiating the frozen-pattern multiplier equation \eqref{eq:gamma-face}
gives
\[
\inner{r_1}{\,d\rt_1+dm_1-d\gamma_1\ones\,}
\;=\;\inner{\rt_1}{d\rt_1}.
\]
At $z_f$ we have $r_1=\rt_1=a\xi$, so the terms in $d\rt_1$ cancel
\emph{exactly} (this is the cancellation between the gradient
$\nabla_v\gamma=x$ of the shift in its first argument and the derivative of
the norm constraint), leaving
$a\inner{\xi}{dm_1}=d\gamma_1\cdot a\inner{\xi}{\ones}=a\,d\gamma_1$, i.e.\
$d\gamma_1=\inner{\xi}{dm_1}$.  Now $dm_1=\Atil\,d\yt$ with
$d\yt=P_\psi\,d\rt_2/b$, hence
\[
d\gamma_1
=\tfrac1b\,\xi^\top\Atil P_\psi\,d\rt_2
=\tfrac1b\,v^*\,\ones^\top P_\psi\,d\rt_2
=0,
\]
using the equilibrium indifference identity $\xi^\top\Atil=v^*\ones^\top$
(Lemma~\ref{lem:centering}) and
$\ones^\top P_\psi=\ones^\top-(\ones^\top\psi)\ones^\top=0$.  Symmetrically,
$d\gamma_2=\inner{\psi}{dm_2}=-\tfrac1a\,\psi^\top\Atil^\top P_\xi\,d\rt_1
=-\tfrac1a\,v^*\,\ones^\top P_\xi\,d\rt_1=0$.
\end{proof}

\begin{lemma}[Vanishing of the centering differentials]
\label{lem:dcenter}
At any fixed point $z_f=(a\xi,b\psi)$, $a,b>0$,
$d\inner{\mathrm{res}_1}{x}=0$ and $d\inner{\mathrm{res}_2}{y}=0$.
\end{lemma}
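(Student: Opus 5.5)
The plan is to differentiate each centering scalar by the product rule at the fixed point $z_f=(a\xi,b\psi)$, and to show that both resulting terms vanish. For player~1, $d\inner{\mathrm{res}_1}{x}=\inner{d\,\mathrm{res}_1}{x}+\inner{\mathrm{res}_1}{dx}$. At $z_f$ we have $\mathrm{res}_1=\Atil(y-\yt)=0$ (Proposition~\ref{prop:fixedpoint} restricted to the face), so the second term drops out. We are left with
\[
d\inner{\mathrm{res}_1}{x}=\inner{\xi}{\Atil\,(dy-d\yt)}=\xi^\top\Atil\,(dy-d\yt),
\]
using $x=\xi$ at $z_f$.

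The key step is the equilibrium indifference identity $\xi^\top\Atil=v^*\ones^\top$, the same identity used in Lemma~\ref{lem:dgamma}. It holds because every column $j\in S_2$ attains the minimum $v^*$ of $(x^{*\top}A)_j$ by complementary slackness. With it the expression reduces to $v^*\,\ones^\top(dy-d\yt)$. Both $y$ and $\yt$ are probability vectors on $S_2$ throughout $\mathcal{U}_0\cap\Ms$, since the clipping pattern is frozen and the normalizations have positive denominators (Lemma~\ref{lem:absorb}). Hence $\ones^\top y\equiv\ones^\top\yt\equiv1$, and differentiating gives $\ones^\top dy=\ones^\top d\yt=0$.

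To make this step explicit I would write $dy=P_\psi\,dr_2/b$ and $d\yt=P_\psi\,d\rt_2/b$, using the normalization differential recorded before Lemma~\ref{lem:dgamma}. Since $\ones^\top P_\psi=0$, this gives $d\inner{\mathrm{res}_1}{x}=0$ without having to compute $dr_2$ at all; in particular, no knowledge of $d\gamma_2$ is needed.

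Player~2 is symmetric: $\mathrm{res}_2=-\Atil^\top(x-\xt)$ vanishes at $z_f$, so
\[
d\inner{\mathrm{res}_2}{y}=-\psi^\top\Atil^\top(dx-d\xt)=-(\Atil\psi)^\top(dx-d\xt)=-v^*\ones^\top(dx-d\xt)=0,
\]
using $\Atil\psi=v^*\ones$ (every row in $S_1$ attains the maximum $v^*$) and $\ones^\top P_\xi=0$.

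I do not expect a real obstacle. The only point needing care is that the differentials are taken on the face, where Lemma~\ref{lem:absorb}(ii) gives real-analyticity, so the product rule and the chain rule through the normalizations are legitimate. The computation needs only complementary slackness; it does not use nondegeneracy.
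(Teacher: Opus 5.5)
Your proposal is correct and follows essentially the same route as the paper's proof. Both arguments use the product rule, in which the $dx$ term dies because $\mathrm{res}_1=\Atil(y-\yt)=0$ at $z_f$; then the indifference identity $\xi^\top\Atil=v^*\ones^\top$ (resp.\ $\Atil\psi=v^*\ones$); and finally $\ones^\top dy=\ones^\top d\yt=0$ because $y,\yt$ are simplex-valued on the frozen face. Your optional $P_\psi$ remark just makes that last step explicit.
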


\begin{proof}
On the face, $\inner{\mathrm{res}_1}{x}=x^\top\Atil\,(y-\yt)$.  At $z_f$ we
have $y-\yt=0$ and $x=\xi$, so by the product rule
\[
d\inner{\mathrm{res}_1}{x}
=(dx)^\top\Atil\,(y-\yt)\big|_{z_f}
+\xi^\top\Atil\,(dy-d\yt)
=0+v^*\,\ones^\top(dy-d\yt)
=0,
\]
since $y$ and $\yt$ are simplex-valued analytic maps, whence
$\ones^\top dy=\ones^\top d\yt=0$.  For player~2,
$\inner{\mathrm{res}_2}{y}=-y^\top\Atil^\top(x-\xt)$, and the same argument
with $\Atil\psi=v^*\ones$ gives
$d\inner{\mathrm{res}_2}{y}=-\psi^\top\Atil^\top(dx-d\xt)
=-v^*\ones^\top(dx-d\xt)=0$.
\end{proof}

\begin{lemma}[Closed-form Jacobian; $J=I+L+L^2$]
\label{lem:jacobian}
Let Assumption~\ref{asm:nondeg-strict} hold at $(x^*,y^*)$.  At any fixed
point $z_f=(a\xi,b\psi)$ with $a,b>0$,
\begin{equation}
\label{eq:Ldef}
J \;:=\; D\Phi_S(z_f) \;=\; I + L + L^2,
\qquad
L \;=\; \begin{pmatrix} 0 & U/b \\ -V/a & 0 \end{pmatrix},
\qquad
U:=\Atil P_\psi,\quad V:=\Atil^\top P_\xi,
\end{equation}
and consequently $L^2=-\diag(UV,\,VU)/(ab)$.  The derivation rests on two
exact cancellations at $z_f$: the differentials of the shift multipliers
vanish, $d\gamma_1=d\gamma_2=0$, and so do the differentials of the
centering terms.
\end{lemma}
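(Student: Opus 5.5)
The plan is to differentiate the frozen-face round \eqref{eq:face-round} at $z_f=(a\xi,b\psi)$ in the direction $(d\rt_1,d\rt_2)\in\R^{S_1}\times\R^{S_2}$, and to read off the two block rows. Lemma~\ref{lem:absorb} makes $\Phi_S$ real-analytic near $z_f$ with the clipping pattern frozen. On the face the update is therefore the unclipped expression $\rt_1'=\rt_1+\Atil y-(\gamma_1+\inner{\mathrm{res}_1}{x})\ones$, and symmetrically for player~2, so ordinary calculus applies.

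\emph{Step 1 (remove the scalar corrections).} By Lemma~\ref{lem:dgamma}, $d\gamma_1=d\gamma_2=0$. By Lemma~\ref{lem:dcenter}, $d\inner{\mathrm{res}_1}{x}=d\inner{\mathrm{res}_2}{y}=0$. The differentials therefore collapse to
\[
d\rt_1'=d\rt_1+\Atil\,dy,\qquad d\rt_2'=d\rt_2-\Atil^\top dx .
\]

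\emph{Step 2 (differentiate the main iterates through the prediction step).} I use the normalization differential recorded before Lemma~\ref{lem:dgamma}, with $P_\psi=I-\psi\ones^\top$ and $P_\xi=I-\xi\ones^\top$. The prediction vector is $r_2=\rt_2+m_2-\gamma_2\ones$ with $m_2=-\Atil^\top\xt$, and it equals $b\psi$ at $z_f$. Hence $dy=P_\psi\,dr_2/b$. Using $d\gamma_2=0$ and $d\xt=P_\xi\,d\rt_1/a$, we get $dr_2=d\rt_2-\Atil^\top P_\xi\,d\rt_1/a$. Substituting gives
\[
\Atil\,dy=\tfrac1b U\,d\rt_2-\tfrac1{ab}UV\,d\rt_1 .
\]
Symmetrically, $dr_1=d\rt_1+\Atil P_\psi\,d\rt_2/b$ and $dx=P_\xi\,dr_1/a$. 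This gives
\[
\Atil^\top dx=\tfrac1a V\,d\rt_1+\tfrac1{ab}VU\,d\rt_2 .
\]

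\emph{Step 3 (match with $I+L+L^2$).} Inserting Step~2 into Step~1 yields
\[
d\rt_1'=\bigl(I-\tfrac{UV}{ab}\bigr)d\rt_1+\tfrac{U}{b}d\rt_2,\qquad
d\rt_2'=-\tfrac{V}{a}d\rt_1+\bigl(I-\tfrac{VU}{ab}\bigr)d\rt_2 .
\]
A direct block multiplication gives $L^2=-\diag(UV,VU)/(ab)$. The off-diagonal blocks of $L$ supply the $U/b$ and $-V/a$ terms, so $J=I+L+L^2$.

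The main obstacle is the bookkeeping in Step~2. One must make sure that the only first-order contributions to $dy$ and $dx$ come from $d\rt_i$ and from $dm_i$ through the \emph{pre-iterate} of the opponent, so that the composition produces exactly one factor of $L$ per hop. In particular, the $d\gamma_i\ones$ term must be dropped, because $P_\psi\ones\neq0$. Its removal therefore relies genuinely on Lemma~\ref{lem:dgamma}, not on the projector.
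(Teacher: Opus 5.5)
Your proposal is correct and follows essentially the same route as the paper's proof. You remove the scalar terms via Lemma~\ref{lem:dgamma} and Lemma~\ref{lem:dcenter}, push $dy$ and $dx$ through the prediction step with the normalization differentials $P_\psi/b$ and $P_\xi/a$, and read off the block rows of $I+L+L^2$; your remark that $P_\psi\ones\neq0$, so that dropping $d\gamma_i\ones$ genuinely needs the multiplier cancellation, is accurate.
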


\begin{proof}[Proof of Lemma~\ref{lem:jacobian}]
Differentiate \eqref{eq:face-round} at $z_f$.  By Lemma~\ref{lem:dgamma} and
Lemma~\ref{lem:dcenter} the differentials of the two scalar terms vanish, so
\[
d\rt_1'=d\rt_1+\Atil\,dy,
\qquad
d\rt_2'=d\rt_2-\Atil^\top dx .
\]
For $dy$: at $z_f$, $r_2=b\psi$, so $dy=P_\psi\,dr_2/b$ with
$dr_2=d\rt_2+dm_2-d\gamma_2\ones=d\rt_2-\tfrac1a\Atil^\top P_\xi\,d\rt_1$
(using $dm_2=-\Atil^\top d\xt$, $d\xt=P_\xi\,d\rt_1/a$, and
$d\gamma_2=0$).  Hence, with $U=\Atil P_\psi$ and $V=\Atil^\top P_\xi$,
\[
d\rt_1'
=d\rt_1+\tfrac1b\,U\,d\rt_2-\tfrac1{ab}\,UV\,d\rt_1 .
\]
For $dx$: $dx=P_\xi\,dr_1/a$ with
$dr_1=d\rt_1+dm_1-d\gamma_1\ones=d\rt_1+\tfrac1b\,\Atil P_\psi\,d\rt_2
=d\rt_1+\tfrac1b\,U\,d\rt_2$, so
\[
d\rt_2'
=d\rt_2-\tfrac1a\,V\,d\rt_1-\tfrac1{ab}\,VU\,d\rt_2 .
\]
These two displays are exactly the block rows of
$(I+L+L^2)\,(d\rt_1,d\rt_2)$ with $L$ as in \eqref{eq:Ldef}, since
$L^2=-\diag(UV,VU)/(ab)$.
\end{proof}

\begin{lemma}[Value-centering identities; $U=\Bmat$, $V=\Bmat^\top$]
\label{lem:centering}
At the equilibrium, the indifference conditions $\Atil\psi=v^*\ones$ and
$\Atil^\top\xi=v^*\ones$ hold, and therefore
\begin{equation}
\label{eq:centering}
U \;=\; \Atil P_\psi \;=\; \Atil - v^*\ones\ones^\top \;=\; \Bmat,
\qquad
V \;=\; \Atil^\top P_\xi \;=\; \Bmat^\top .
\end{equation}
Moreover $\Bmat\psi=0$ and $\Bmat^\top\xi=0$; under
Assumption~\ref{asm:nondeg-strict} these span the kernels.
\end{lemma}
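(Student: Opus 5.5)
The statement to prove is Lemma~\ref{lem:centering}: the indifference identities $\Atil\psi=v^*\ones$ and $\Atil^\top\xi=v^*\ones$, the centering identities $U=\Bmat$ and $V=\Bmat^\top$, the kernel inclusions $\Bmat\psi=0$, $\Bmat^\top\xi=0$, and that these vectors span the kernels under Assumption~\ref{asm:nondeg-strict}. The plan is a direct algebraic verification in four steps.

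\emph{Step 1: indifference from complementary slackness.} For the equilibrium $(x^*,y^*)$ we have $\max_i (Ay^*)_i=v^*$, attained at every $i\in S_1=\supp(x^*)$. This is the fact already recorded in the proof of Proposition~\ref{prop:fixedpoint}. Restricting to the rows in $S_1$ and noting that $y^*$ vanishes off $S_2$, we get $(Ay^*)|_{S_1}=\Atil\psi$, hence $\Atil\psi=v^*\ones$. Symmetrically, $\min_j ((x^*)^\top A)_j=v^*$ is attained on $S_2$, and $x^*$ vanishes off $S_1$, so $\Atil^\top\xi=v^*\ones$.

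\emph{Step 2: the centering identities.} I take $P_\psi:=I-\psi\ones^\top$, the differential of $u\mapsto u/\inner{\ones}{u}$ at $\psi$. This is the definition consistent with $\ones^\top P_\psi=0$, which Lemma~\ref{lem:dgamma} uses. Then
\[
\Atil P_\psi=\Atil-(\Atil\psi)\ones^\top=\Atil-v^*\ones\ones^\top=\Bmat .
\]
Likewise $P_\xi=I-\xi\ones^\top$ gives $\Atil^\top P_\xi=\Atil^\top-v^*\ones\ones^\top=\Bmat^\top$. This uses $\Bmat^\top=\Atil^\top-v^*\ones\ones^\top$, which holds because $\ones\ones^\top$ is symmetric here, the ones vectors having lengths $|S_1|$ and $|S_2|$ in the appropriate order.

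\emph{Step 3: the kernel inclusions.} Using $\ones^\top\psi=1$,
\[
\Bmat\psi=\Atil\psi-v^*\ones(\ones^\top\psi)=v^*\ones-v^*\ones=0 ,
\]
and similarly $\Bmat^\top\xi=0$ using $\ones^\top\xi=1$.

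\emph{Step 4: spanning the kernels.} This is exactly the kernel condition in the definition of non-degeneracy, namely $\ker\Bmat=\mathrm{span}\{\psi\}$ and $\ker\Bmat^\top=\mathrm{span}\{\xi\}$. Together with Step~3 and $\psi,\xi\neq0$, it says the two vectors span the kernels.

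The only delicate point is bookkeeping, not a genuine obstacle. One must fix the convention for $P_\psi$ so that it matches its use in Lemma~\ref{lem:jacobian}, and one must check that restricting to the support rows and columns loses nothing, because $x^*$ and $y^*$ vanish off their supports.
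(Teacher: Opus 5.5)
Your proposal is correct and follows essentially the same route as the paper. Like the paper, you read off the indifference identities from the fact that support actions are best responses, and you expand $\Atil P_\psi=\Atil-(\Atil\psi)\ones^\top$ with $P_\psi=I-\psi\ones^\top$, which is the paper's convention. You then check $\Bmat\psi=0$, $\Bmat^\top\xi=0$ via $\ones^\top\psi=\ones^\top\xi=1$, and treat the spanning claim as the non-degeneracy condition itself.
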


\begin{proof}[Proof of Lemma~\ref{lem:centering}]
Since every $i\in S_1=\supp(x^*)$ is a best response to $y^*$,
$(Ay^*)_i=v^*$ for $i\in S_1$, i.e.\ $\Atil\psi=v^*\ones$; symmetrically
$\Atil^\top\xi=v^*\ones$.  Hence
\[
U=\Atil P_\psi=\Atil-(\Atil\psi)\ones^\top=\Atil-v^*\ones\ones^\top=\Bmat,
\qquad
V=\Atil^\top P_\xi=\Atil^\top-(\Atil^\top\xi)\ones^\top=\Bmat^\top .
\]
Moreover $\Bmat\psi=\Atil\psi-v^*(\ones^\top\psi)\ones=v^*\ones-v^*\ones=0$
and, symmetrically, $\Bmat^\top\xi=0$.  That these vectors \emph{span} the
kernels is exactly the non-degeneracy requirement of
Assumption~\ref{asm:nondeg-strict}; under
Assumption~\ref{asm:unique-strict} it holds automatically by
Lemma~\ref{lem:kernel} below.  Finally,
$UV/(ab)=\Bmat\Bmat^\top/(ab)$ is a Gram matrix up to the positive scalar
$1/(ab)$, hence symmetric positive semidefinite with real nonnegative
spectrum; the identity $\Bmat^\top\xi=0$ places the equilibrium direction
$\xi$ in its kernel, and likewise $\Bmat\psi=0$ places $\psi$ in the kernel
of $VU/(ab)=\Bmat^\top\Bmat/(ab)$.
\end{proof}

\subsubsection{Spectral geometry: Step A of the proof of
Theorem~\ref{thm:rate}}
\label{sec:local:stepA}

\paragraph{Standing notation.}
Assumption~\ref{asm:unique-strict} is in force.  By Lemma~\ref{lem:kernel}
below, Assumption~\ref{asm:nondeg-strict} then holds at $(x^*,y^*)$, so
Lemma~\ref{lem:absorb}, Lemma~\ref{lem:jacobian} and
Lemma~\ref{lem:centering} apply, and $|S_1|=|S_2|=:s$; the
face has dimension $n_S:=2s$.  Let
$\Bmat=\sum_{k=1}^{s-1}\sigma_k u_kv_k^\top$ be a reduced singular value
decomposition with $\sigma_1\ge\cdots\ge\sigma_{s-1}>0$; write
$\sigmax=\sigma_1$ and $\sigmin^{+}=\sigma_{s-1}$.  Set $e_1=(\xi,0)$,
$e_2=(0,\psi)$, $E=\mathrm{span}\{e_1,e_2\}$,
$E_+=\{ae_1+be_2:a,b>0\}$, and $z^*=c_1e_1+c_2e_2$ with $c_1,c_2>0$ the
reference fixed point; $\eta=\eta(a,b)=1/\sqrt{ab}$,
$\etainf=\eta(c_1,c_2)$, and
$\rho(a,b)=\max_k\sqrt{1-\eta^2\sigma_k^2(1-\eta^2\sigma_k^2)}$,
$\rho^*=\rho(c_1,c_2)$.  Closed Euclidean balls in face coordinates are
denoted $\mathbb{B}(z,r)$.  By Proposition~\ref{prop:fixedpoint} and
uniqueness of the equilibrium, $\Fix(\Phi_S)$ coincides with $E_+$ in a
neighborhood of any of its points, and $\Phi_S$ fixes $E_+$ pointwise.

The proofs use uniqueness only through a kernel condition on the support matrix, stated separately:
\begin{assumption}[Strictly complementary, non-degenerate equilibrium ray]
\label{asm:nondeg-strict}
The designated Nash equilibrium $(x^*,y^*)$ is \emph{strictly complementary} ($\delta>0$) and \emph{non-degenerate}: $\ker\Bmat=\mathrm{span}\{y^*|_{S_2}\}$ and $\ker\Bmat^\top=\mathrm{span}\{x^*|_{S_1}\}$, equivalently $\mathrm{rank}\,\Bmat=|S_1|-1=|S_2|-1$; in particular $|S_1|=|S_2|$.
\end{assumption}
\begin{remark}[One hypothesis, two forms]
\label{rem:no-a2}
By Lemma~\ref{lem:kernel} the two assumptions are equivalent, so a game with several equilibria has no strictly complementary non-degenerate one.
\end{remark}

\begin{lemma}[Uniqueness $\Leftrightarrow$ non-degeneracy:
Assumption~\ref{asm:unique-strict} $\Leftrightarrow$
Assumption~\ref{asm:nondeg-strict}]
\label{lem:kernel}
Let $(x^*,y^*)$ be a strictly complementary equilibrium.  Then it is
non-degenerate, $\ker\Bmat=\mathrm{span}\{\psi\}$ and
$\ker\Bmat^\top=\mathrm{span}\{\xi\}$, if and only if it is that game's
unique equilibrium.  In particular, under either assumption
$\mathrm{rank}\,\Bmat=|S_1|-1=|S_2|-1$, so $|S_1|=|S_2|$; and in a game
with several equilibria no equilibrium is both strictly complementary and
non-degenerate.
\end{lemma}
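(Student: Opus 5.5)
We prove the two implications separately. Both rest on two standard facts about zero-sum games. First, the equilibrium set is a product $X^*\times Y^*$ (interchangeability), and every equilibrium has the common value $v^*$. Second, strict complementarity \eqref{eq:strict-comp} pins down which pure strategies can be best responses to $x^*$ or $y^*$. Throughout, $\xi=x^*|_{S_1}>0$ and $\psi=y^*|_{S_2}>0$. The identities $\Atil\psi=v^*\ones$, $\Atil^\top\xi=v^*\ones$, $\Bmat\psi=0$, $\Bmat^\top\xi=0$ of Lemma~\ref{lem:centering} are used freely; they need only that $(x^*,y^*)$ is an equilibrium. The one identity doing the work is $\Bmat w=\Atil w-v^*(\ones^\top w)\ones$ for every $w\in\R^{S_2}$, together with its transpose.

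\emph{Non-degenerate $\Rightarrow$ unique.} Let $(x',y')$ be any equilibrium. By interchangeability, $(x^*,y')$ is also an equilibrium.

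Since $y'$ best-responds to $x^*$, it puts mass only on minimizers of $(x^{*\top}A)_j$. By \eqref{eq:strict-comp}, every $j\notin S_2$ has $(x^{*\top}A)_j\ge v^*+\delta$, so $\supp(y')\subseteq S_2$. Since $x^*$ best-responds to $y'$, every $i\in S_1$ attains $\max_i(Ay')_i=v^*$, so $\Atil\,y'|_{S_2}=v^*\ones$.

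Combined with $\ones^\top y'|_{S_2}=1$, the identity above gives $\Bmat\,y'|_{S_2}=0$. Hence $y'|_{S_2}\in\ker\Bmat=\mathrm{span}\{\psi\}$, and normalization forces $y'=y^*$. The symmetric argument uses $(x',y^*)$ and $\ker\Bmat^\top=\mathrm{span}\{\xi\}$, and gives $x'=x^*$.

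\emph{Unique $\Rightarrow$ non-degenerate.} We argue by contradiction. Suppose $\ker\Bmat$ contains some $w\notin\mathrm{span}\{\psi\}$. Replace it by $w-(\ones^\top w)\psi$; this vector still lies in the kernel, is nonzero, and has zero sum, so we may assume $\ones^\top w=0$. Then $\Atil w=\Bmat w+v^*(\ones^\top w)\ones=0$.

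Define $y'$ by $y'|_{S_2}:=\psi+\eps w$ and $y'=0$ off $S_2$. For small $\eps>0$ this is a probability vector, because $\psi>0$ on $S_2$, and $y'\neq y^*$. We check that $(x^*,y')$ is an equilibrium.

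\begin{itemize}
\item $x^*$ best-responds to $y'$. For $i\in S_1$, $(Ay')_i=v^*$ since $\Atil w=0$. For $i\notin S_1$, $(Ay')_i\le v^*-\delta+\eps\norm{A}\norm{w}<v^*$ once $\eps$ is small.
\item $y'$ best-responds to $x^*$. It is supported in $S_2$, where $(x^{*\top}A)_j=v^*=\min_j(x^{*\top}A)_j$.
\end{itemize}

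So $(x^*,y')$ is an equilibrium, contradicting uniqueness. The same perturbation applied to $\xi$, using $\ker\Bmat^\top$, gives $\ker\Bmat^\top=\mathrm{span}\{\xi\}$.

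\emph{Rank and the final clause.} $\Bmat$ has size $|S_1|\times|S_2|$. A one-dimensional $\ker\Bmat$ gives $\mathrm{rank}\,\Bmat=|S_2|-1$, and a one-dimensional $\ker\Bmat^\top$ gives $\mathrm{rank}\,\Bmat=|S_1|-1$, so $|S_1|=|S_2|$. The last sentence of the lemma is the contrapositive of the first implication.

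The step needing the most care is the support-inclusion $\supp(y')\subseteq S_2$ in the first implication. It is exactly where strict complementarity enters. Without it, a second equilibrium could place mass on an off-support action with zero margin, and the kernel condition on $\Bmat$ would no longer control it. The perturbation step in the converse uses the margin $\delta$ in the same way.
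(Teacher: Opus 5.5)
Your proposal is correct and follows the paper's proof essentially step for step. For non-degenerate $\Rightarrow$ unique, you pass to $(x^*,y')$, get support inclusion from the margin, and conclude from $\Bmat\,y'|_{S_2}=0$; for the converse, you use the perturbation $\psi+\eps w$ with $\Atil w=0$, and the only cosmetic difference is that you reach a zero-sum kernel vector by the single projection $w-(\ones^\top w)\psi$, where the paper splits into the cases $\ones^\top d=0$ and $\ones^\top d\neq0$.
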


\begin{proof}
\emph{Uniqueness $\Rightarrow$ non-degeneracy.}\ \
Let $d\in\ker\Bmat$, i.e.\ $\Atil d=v^*(\ones^\top d)\ones$.

\emph{Case 1: $\ones^\top d=0$ and $d\notin\mathrm{span}\{\psi\}$ (so
$d\neq0$).}\ \ Then $\Atil d=0$.  For $t\in\R$, let $y_t$ denote the vector
$\psi+td$ on $S_2$, extended by zero off $S_2$.  Since $\psi>0$ on $S_2$ and
$\ones^\top(\psi+td)=1$, for $|t|$ small enough $y_t$ is a strategy with
$\supp(y_t)\subseteq S_2$.  We claim $(x^*,y_t)$ is a Nash equilibrium:
(i)~on $S_1$, $\Atil(\psi+td)=v^*\ones$, while for $i\notin S_1$,
$(Ay_t)_i=(Ay^*)_i+t\,(A[i,S_2]\,d)\le v^*-\delta+O(|t|)<v^*$ for $|t|$
small; hence $\max_i(Ay_t)_i=v^*$, attained on $S_1\supseteq\supp(x^*)$, so
$x^*$ is a best response to $y_t$.  (ii)~$(x^{*\top}A)_j=v^*$ for all
$j\in S_2$ (indifference) and $\ge v^*+\delta$ off $S_2$, so
$\min_j(x^{*\top}A)_j=v^*$ and every strategy supported in $S_2$, in
particular $y_t$, is a best response to $x^*$ for the minimizing player.
For $t\neq0$, $y_t\neq y^*$, contradicting uniqueness of the equilibrium.

\emph{Case 2: $\ones^\top d\neq0$.}\ \ Normalize so that $\ones^\top d=1$.
Then $d':=d-\psi$ satisfies $\Atil d'=v^*\ones-v^*\ones=0$ and
$\ones^\top d'=0$, so $d'\in\ker\Bmat$; moreover $d'\notin
\mathrm{span}\{\psi\}$ unless $d'=0$ (any nonzero multiple of $\psi$ has
nonzero coordinate sum).  If $d'\neq0$, Case~1 applies to $d'$ and gives a
contradiction; hence $d'=0$ and $d=\psi$.

Therefore $\ker\Bmat=\mathrm{span}\{\psi\}$; the statement for
$\Bmat^\top$ and $\xi$ is symmetric.  The rank--nullity theorem gives
$\mathrm{rank}\,\Bmat=|S_2|-1$ and $\mathrm{rank}\,\Bmat^\top=|S_1|-1$;
equality of the two ranks forces $|S_1|=|S_2|$.

\emph{Non-degeneracy $\Rightarrow$ uniqueness.}\ \
Let $(x^\dagger,y^\dagger)$ be any equilibrium.  Since $x^*$ guarantees
$v^*$ ($x^{*\top}Ay\ge v^*$ for all $y$) and $y^\dagger$ guarantees $v^*$
($x^\top Ay^\dagger\le v^*$ for all $x$), $x^{*\top}Ay^\dagger=v^*$, so
$(x^*,y^\dagger)$ is an equilibrium.  First, $y^\dagger$ is a best response
to $x^*$: $\sum_j y^\dagger_j\bigl((x^{*\top}A)_j-v^*\bigr)=0$ with every
term nonnegative and, by \eqref{eq:margin}, strictly positive off $S_2$,
so $\supp(y^\dagger)\subseteq S_2$.  Second, $x^*$ is a best response to
$y^\dagger$ with $\max_i(Ay^\dagger)_i=v^*$, so every
$i\in\supp(x^*)=S_1$ attains the maximum: $\Atil\,y^\dagger|_{S_2}=v^*\ones$.
Hence $d:=y^\dagger|_{S_2}-\psi$ satisfies $\Atil d=0$ and $\ones^\top d=0$,
so $\Bmat d=\Atil d-v^*(\ones^\top d)\ones=0$ and $d\in\ker\Bmat=
\mathrm{span}\{\psi\}$; as $\ones^\top\psi=1\ne0=\ones^\top d$, $d=0$ and
$y^\dagger=y^*$.  The symmetric argument gives $x^\dagger=x^*$.

Both directions are consistent with the
classical non-degeneracy theory of matrix games
\citep{bohnenblust1950solutions}; the proofs above are self-contained.
\end{proof}

\begin{lemma}[Orthogonal splitting]
\label{lem:splitting}
Define
\[
W_k:=\mathrm{span}\bigl\{(u_k,0),\,(0,v_k)\bigr\}\quad(k=1,\dots,s-1),
\qquad
W:=\bigoplus_{k=1}^{s-1}W_k .
\]
Then $\R^{n_S}=E\oplus W$ with $E\perp W$ (Euclidean orthogonality); the
corresponding orthogonal projections $P_E,P_W$ satisfy
$\norm{P_E}=\norm{P_W}=1$.  For $w=(w_1,w_2)\in W$ one has
$w_1\in\mathrm{range}(\Bmat)=\mathrm{span}\{u_k\}$,
$w_2\in\mathrm{range}(\Bmat^\top)=\mathrm{span}\{v_k\}$, and
$\norm{w}^2=\norm{w_1}^2+\norm{w_2}^2$.
\end{lemma}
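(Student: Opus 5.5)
The plan is to read everything off the reduced SVD $\Bmat=\sum_{k=1}^{s-1}\sigma_k u_kv_k^\top$ fixed in the standing notation, together with the kernel description of Lemma~\ref{lem:centering} and Lemma~\ref{lem:kernel}. Since $\Bmat$ is $s\times s$ with rank $s-1$, the left singular vectors $u_1,\dots,u_{s-1}$ are an orthonormal basis of $\mathrm{range}(\Bmat)=(\ker\Bmat^\top)^\perp=\xi^\perp$. Likewise $v_1,\dots,v_{s-1}$ are an orthonormal basis of $\mathrm{range}(\Bmat^\top)=(\ker\Bmat)^\perp=\psi^\perp$. Hence $\R^{S_1}=\mathrm{span}\{\xi\}\oplus\mathrm{span}\{u_k\}$ and $\R^{S_2}=\mathrm{span}\{\psi\}\oplus\mathrm{span}\{v_k\}$, each an orthogonal decomposition.

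Taking the product of these two decompositions gives the splitting of the face. The vectors $(u_k,0)$ and $(0,v_k)$, $k=1,\dots,s-1$, are $2(s-1)$ orthonormal vectors in $\R^{n_S}$: different blocks are orthogonal trivially, and within a block orthonormality holds by the SVD. They are also orthogonal to $e_1=(\xi,0)$ and $e_2=(0,\psi)$, because $u_k\perp\xi$ and $v_k\perp\psi$, while $e_1\perp e_2$ trivially. A dimension count, $2+2(s-1)=2s=n_S$, then gives $\R^{n_S}=E\oplus W$ with $E\perp W$.

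The remaining claims are immediate. Orthogonal projections onto nonzero subspaces have operator norm $1$; here $W\neq0$ since $s=m\ge2$. For $w\in W$, the first block lies in $\mathrm{span}\{u_k\}=\mathrm{range}(\Bmat)$ and the second in $\mathrm{span}\{v_k\}=\mathrm{range}(\Bmat^\top)$. Finally, $\norm{w}^2=\norm{w_1}^2+\norm{w_2}^2$ holds because the two coordinate blocks are orthogonal.

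The only step needing care is identifying $\mathrm{span}\{u_k\}$ with $\xi^\perp$ exactly, rather than merely including it. This rests on the kernel equality $\ker\Bmat^\top=\mathrm{span}\{\xi\}$ from non-degeneracy (Lemma~\ref{lem:kernel}) combined with the fundamental theorem of linear algebra. Without that equality, $E\oplus W$ would fail to span the face.
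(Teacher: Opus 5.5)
Your proposal is correct and is essentially the paper's argument: orthogonality of $e_1,e_2$ to the pairs $(u_k,0),(0,v_k)$ comes from $\Bmat^\top\xi=0$ and $\Bmat\psi=0$, the singular vectors are orthonormal, and the dimension count $2+2(s-1)=2s$ rests on $\mathrm{rank}\,\Bmat=s-1$ from Lemma~\ref{lem:kernel}. Your exact identification $\mathrm{range}(\Bmat)=\xi^\perp$ is that same rank fact, invoked slightly earlier, and your observation that $\norm{P_W}=1$ needs $s\ge2$ makes explicit a condition the paper leaves implicit.
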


\begin{proof}
By Lemma~\ref{lem:centering}, $\Bmat^\top\xi=0$, i.e.\
$\xi\perp\mathrm{range}(\Bmat)=\mathrm{span}\{u_k\}$; hence
$e_1=(\xi,0)\perp(u_k,0)$ for all $k$, and $e_1\perp(0,v_k)$ trivially.  The
statement for $e_2$ is symmetric (using $\Bmat\psi=0$).  The $u_k$ are
orthonormal, as are the $v_k$, so the $W_k$ are mutually orthogonal and give
orthonormal coordinates on $W$.  Dimensions:
$\dim E+\dim W=2+2(s-1)=2s=n_S$ (using Lemma~\ref{lem:kernel}), so the
orthogonal sum exhausts the face.
\end{proof}

\begin{lemma}[Uniform block diagonalization and closed-form modes]
\label{lem:blockdiag}
For every fixed point $z_f=(a\xi,b\psi)\in E_+$, the Jacobian
$J(z_f)=I+L(a,b)+L(a,b)^2$ of Lemma~\ref{lem:jacobian} (with $U=\Bmat$,
$V=\Bmat^\top$ by Lemma~\ref{lem:centering}) satisfies:
\begin{enumerate}[label=(\arabic*)]
\item $J(z_f)|_E=I_E$; in fact $J(z_f)e_1=e_1$ and $J(z_f)e_2=e_2$ exactly.
\item Each $W_k$ is $J(z_f)$-invariant, and in the basis
$\{(u_k,0),(0,v_k)\}$,
\[
J(z_f)|_{W_k}=J_k(a,b):=I_2+L_k+L_k^2,
\qquad
L_k=\begin{pmatrix}0&\sigma_k/b\\ -\sigma_k/a&0\end{pmatrix}.
\]
\item Consequently $P_EJ(z_f)P_W=0$ and $P_WJ(z_f)P_E=0$ for \emph{all}
$a,b>0$; write $\mathcal{A}(a,b):=J(z_f)|_W=\bigoplus_kJ_k(a,b)$.
\item $\spec\bigl(J_k(a,b)\bigr)=\{\mu_k^{\pm}\}$ with
$\mu_k^{\pm}=1-\eta^2\sigma_k^2\pm i\,\eta\sigma_k$ and
$|\mu_k^{\pm}|^2=1-\eta^2\sigma_k^2(1-\eta^2\sigma_k^2)$, where
$\eta=1/\sqrt{ab}$.  Since $\eta\sigma_k\neq0$, the pair is non-real and
conjugate, so each $J_k$ is diagonalizable over $\mathbb{C}$ and
$\mathcal{A}(a,b)$ is semisimple.  Moreover
$\rho(a,b)=\max_k|\mu_k^{\pm}(a,b)|$.
\item The map $(a,b)\mapsto\mathcal{A}(a,b)$ is real-analytic on
$\{a,b>0\}$, and on the rectangle
$R_\square:=[c_1/2,\,3c_1/2]\times[c_2/2,\,3c_2/2]$ it is Lipschitz:
\[
\begin{aligned}
\norm{\mathcal{A}(a,b)-\mathcal{A}(a',b')}
&\le C_J\bigl(|a-a'|+|b-b'|\bigr),\\
C_J:={}&\sigmax\Bigl(\tfrac{4}{c_1^2}+\tfrac{4}{c_2^2}\Bigr)
\Bigl(1+2\sigmax\bigl(\tfrac{2}{c_1}+\tfrac{2}{c_2}\bigr)\Bigr).
\end{aligned}
\]
\end{enumerate}
\end{lemma}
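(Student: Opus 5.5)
The plan is to substitute $U=\Bmat$, $V=\Bmat^\top$ (Lemma~\ref{lem:centering}) into $L(a,b)$ from Lemma~\ref{lem:jacobian} and read off every claim from the reduced SVD $\Bmat=\sum_k\sigma_ku_kv_k^\top$, combined with the kernel identities $\Bmat\psi=0$, $\Bmat^\top\xi=0$ and the orthogonal splitting of Lemma~\ref{lem:splitting}. For (1), compute $Le_1=(0,-\Bmat^\top\xi/a)=0$ and $Le_2=(\Bmat\psi/b,0)=0$. Then $L^2e_i=0$ as well, so $Je_i=e_i$ exactly.

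For (2), use $\Bmat v_k=\sigma_ku_k$ and $\Bmat^\top u_k=\sigma_kv_k$. These give $L(u_k,0)=(0,-\sigma_kv_k/a)$ and $L(0,v_k)=(\sigma_ku_k/b,0)$. Hence $W_k$ is $L$-invariant, and in the basis $\{(u_k,0),(0,v_k)\}$ the matrix of $L$ is exactly $L_k$. Since $J$ is a polynomial in $L$, $W_k$ is $J$-invariant and $J|_{W_k}=I_2+L_k+L_k^2$.

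Item (3) follows at once. $E$ and each $W_k$ are $J$-invariant, and $E\perp W$ by Lemma~\ref{lem:splitting}, so in an orthonormal basis adapted to $E\oplus W$ the matrix of $J$ is block diagonal. The off-diagonal blocks $P_EJP_W$ and $P_WJP_E$ therefore vanish for every $a,b>0$.

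For (4), compute $L_k^2=-(\sigma_k^2/(ab))I_2=-\eta^2\sigma_k^2I_2$. So $J_k=(1-\eta^2\sigma_k^2)I_2+L_k$. The eigenvalues of $L_k$ solve $\lambda^2=-\sigma_k^2/(ab)$, giving $\lambda=\pm i\eta\sigma_k$, and therefore $\mu_k^\pm=1-\eta^2\sigma_k^2\pm i\eta\sigma_k$. The modulus follows from $(1-q)^2+q=1-q+q^2$ with $q=\eta^2\sigma_k^2$. The pair is distinct because $\sigma_k>0$, so each $J_k$ is diagonalizable over $\mathbb C$, and so is the direct sum $\mathcal A$. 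On $W$ the spectral radius is then $\max_k|\mu_k^\pm|$, which is $\rho(a,b)$ by its definition.

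For (5), analyticity is immediate because the entries of $J_k$ are polynomials in $1/a$ and $1/b$. For the Lipschitz bound, the $W_k$ are mutually orthogonal, so the operator norm of $\mathcal A(a,b)-\mathcal A(a',b')$ is the maximum over $k$ of $\norm{J_k(a,b)-J_k(a',b')}$. Write $J_k-J_k'=(L_k-L_k')+(L_k^2-L_k'^2)$ and factor the second term as $L_k^2-L_k'^2=(L_k-L_k')L_k+L_k'(L_k-L_k')$. On $R_\square$:
\begin{itemize}
\item $|1/b-1/b'|\le |b-b'|/(bb')\le (4/c_2^2)|b-b'|$, and similarly for $a$, so $\norm{L_k-L_k'}\le\sigmax(4/c_1^2+4/c_2^2)(|a-a'|+|b-b'|)$;
\item $\norm{L_k},\norm{L_k'}\le\sigmax(2/c_1+2/c_2)$.
\end{itemize}
Multiplying these gives exactly the stated $C_J$.

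I expect no real obstacle; the argument is routine. The one point needing care is in (3) and (4): making sure the $E\oplus W$ splitting exhausts the face. This uses non-degeneracy via Lemma~\ref{lem:kernel} and the dimension count in Lemma~\ref{lem:splitting}, so that no generalized eigenvectors hide outside $E\oplus W$.
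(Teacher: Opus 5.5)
Your proposal is correct and follows the paper's proof essentially item by item. You use the same direct computations $Le_i=0$ and $L(u_k,0)$, $L(0,v_k)$ from the SVD relations, the same eigenvalue computation $\lambda=\pm i\eta\sigma_k$, and the same factorization of $L^2-L'^2$ for the Lipschitz bound; the only cosmetic difference is that in (5) you bound $\norm{\mathcal{A}(a,b)-\mathcal{A}(a',b')}$ as a maximum over the orthonormal $2\times2$ blocks, whereas the paper bounds $\norm{L}$ and $\norm{L-L'}$ for the full face operator and then restricts to $W$, and both give exactly the stated $C_J$.
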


\begin{proof}
(1)--(2): direct verification.  Using $\Bmat v_k=\sigma_ku_k$ and
$\Bmat^\top u_k=\sigma_kv_k$,
\begin{gather*}
L(a,b)\,(u_k,0)=(0,\,-\Bmat^\top u_k/a)=(0,\,-\sigma_kv_k/a)\in W_k,
\\
L(a,b)\,(0,v_k)=(\Bmat v_k/b,\,0)=(\sigma_ku_k/b,\,0)\in W_k,
\end{gather*}
so $W_k$ is $L$-invariant with matrix $L_k$ in the given basis, hence
$J$-invariant with matrix $I_2+L_k+L_k^2$.  Similarly
$L(a,b)\,e_1=(0,\,-\Bmat^\top\xi/a)=0$ (Lemma~\ref{lem:centering}), so
$Je_1=e_1$; and $Le_2=(\Bmat\psi/b,\,0)=0$, so $Je_2=e_2$.

(3): immediate from (1)--(2) and the orthogonality of the splitting
(Lemma~\ref{lem:splitting}): $J$ maps $E$ into $E$ and $W$ into $W$.

(4): the eigenvalues of $L_k$ solve $\lambda^2=-\sigma_k^2/(ab)$, i.e.\
$\lambda=\pm i\eta\sigma_k$; hence the eigenvalues of $J_k$ are
$\mu=1+\lambda+\lambda^2=1-\eta^2\sigma_k^2\pm i\eta\sigma_k$, and
\[
|\mu_k^{\pm}|^2
=(1-\eta^2\sigma_k^2)^2+\eta^2\sigma_k^2
=1-\eta^2\sigma_k^2\,(1-\eta^2\sigma_k^2).
\]
The pair is non-real and distinct because $\eta\sigma_k>0$, so $J_k$ is
diagonalizable over $\mathbb{C}$; the direct sum $\mathcal{A}(a,b)$ is
therefore semisimple, and its spectral radius is $\max_k|\mu_k^\pm|$, which
equals $\rho(a,b)$ by definition.

(5): $L(a,b)$ depends affinely on $(1/a,1/b)$, so it is real-analytic on
$\{a,b>0\}$; hence so is $\mathcal{A}$.  For the Lipschitz bound, note that
for the block-antidiagonal $L$, $\norm{L(a,b)}=
\max\{\sigmax/b,\ \sigmax/a\}\le\sigmax(2/c_1+2/c_2)$ on $R_\square$, and
\[
\norm{L(a,b)-L(a',b')}
\le\sigmax\max\Bigl\{\bigl|\tfrac1a-\tfrac1{a'}\bigr|,\
\bigl|\tfrac1b-\tfrac1{b'}\bigr|\Bigr\}
\le\sigmax\Bigl(\tfrac{4}{c_1^2}+\tfrac{4}{c_2^2}\Bigr)
\bigl(|a-a'|+|b-b'|\bigr),
\]
using $a,a'\ge c_1/2$ and $b,b'\ge c_2/2$.  Finally
$\norm{L^2-L'^2}\le(\norm{L}+\norm{L'})\norm{L-L'}$, so
$\norm{\mathcal{A}-\mathcal{A}'}\le
\bigl(1+2\sigmax(\tfrac2{c_1}+\tfrac2{c_2})\bigr)\norm{L-L'}$, which gives
$C_J$.
\end{proof}

\begin{remark}
Items (1)--(3) hold \emph{simultaneously at every fixed point}: the mode
planes $W_k$ are fixed once and for all by the singular vectors of $\Bmat$,
and only the scalars $\sigma_k/a$, $\sigma_k/b$ vary along the cone.  By
items (1) and (4), the unit eigenvalue of $J(z_f)$ is exactly double and
semisimple with eigenspace $E$ (a unit eigenvalue inside some $W_k$ would
require $\lambda+\lambda^2=0$, i.e.\ $\lambda\in\{0,-1\}$, impossible for
$\lambda=\pm i\eta\sigma_k$).  Thus the usual non-degeneracy hypothesis of
local analyses is a theorem here, as announced in Remark~\ref{rem:no-a2}.
\end{remark}

\begin{lemma}[Adapted norm on $W$]
\label{lem:adapted}
Let $\norm{w}_{*}^2:=\norm{w_1}^2/c_1+\norm{w_2}^2/c_2$ for $w=(w_1,w_2)\in W$,
and set $m_*:=1/\sqrt{\max(c_1,c_2)}$, $M_*:=1/\sqrt{\min(c_1,c_2)}$,
$\kappa_*:=M_*/m_*=\sqrt{\max(c_1,c_2)/\min(c_1,c_2)}$.  Then:
\begin{enumerate}[label=(\arabic*)]
\item $\norm{\mathcal{A}(c_1,c_2)}_{*}=\rho^*$ (the operator norm equals the
spectral radius exactly);
\item $m_*\norm{w}\le\norm{w}_{*}\le M_*\norm{w}$ for all $w\in W$;
\item for all $(a,b)\in R_\square$,
$\norm{\mathcal{A}(a,b)}_{*}\le\rho^*
+\kappa_*C_J\bigl(|a-c_1|+|b-c_2|\bigr)$.
\end{enumerate}
\end{lemma}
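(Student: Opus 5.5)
The plan is to work mode by mode in the splitting $W=\bigoplus_k W_k$ of Lemma~\ref{lem:splitting}, to show that the weighted norm $\norm{\cdot}_*$ turns each block $J_k(c_1,c_2)$ into a normal matrix, and to read off items (1)--(3) from that.

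\emph{Coordinates.} For $w=(w_1,w_2)\in W$ write $w_1=\sum_k\alpha_k u_k$ and $w_2=\sum_k\beta_k v_k$. Since the $u_k$ are orthonormal, and so are the $v_k$, we have $\norm{w}_*^2=\sum_k\bigl(\alpha_k^2/c_1+\beta_k^2/c_2\bigr)$. Hence the blocks $W_k$ are mutually orthogonal for the inner product inducing $\norm{\cdot}_*$. Now pass to the rescaled coordinates $\alpha_k'=\alpha_k/\sqrt{c_1}$ and $\beta_k'=\beta_k/\sqrt{c_2}$, in which $\norm{\cdot}_*$ is the plain Euclidean norm. In these coordinates the matrix of $L_k(c_1,c_2)$ from Lemma~\ref{lem:blockdiag}(2) becomes $S_k^{-1}L_kS_k$ with $S_k=\diag(\sqrt{c_1},\sqrt{c_2})$. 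Computing this gives
\[
\begin{pmatrix}0&\etainf\sigma_k\\-\etainf\sigma_k&0\end{pmatrix},
\]
because $\sqrt{c_2}/(c_2\sqrt{c_1})=1/\sqrt{c_1c_2}=\etainf$, and symmetrically for the other entry. This matrix is skew-symmetric, hence normal.

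\emph{Item (1).} Since $J_k=I+L_k+L_k^2$ is a polynomial in a normal matrix, it is normal in the rescaled coordinates. Its Euclidean operator norm therefore equals its spectral radius $|\mu_k^\pm|$, as computed in Lemma~\ref{lem:blockdiag}(4). The blocks are $\norm{\cdot}_*$-orthogonal and $\mathcal A(c_1,c_2)=\bigoplus_k J_k$, so $\norm{\mathcal A(c_1,c_2)}_*=\max_k|\mu_k^\pm|=\rho^*$.

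\emph{Item (2).} This is immediate from $\norm{w}_*^2=\norm{w_1}^2/c_1+\norm{w_2}^2/c_2$ together with $\norm{w}^2=\norm{w_1}^2+\norm{w_2}^2$ (Lemma~\ref{lem:splitting}): bound each weight $1/c_i$ between $1/\max(c_1,c_2)$ and $1/\min(c_1,c_2)$.

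\emph{Item (3).} Norm equivalence converts operator norms: for any linear $T$ on $W$, item (2) gives $\norm{T}_*\le(M_*/m_*)\norm{T}=\kappa_*\norm{T}$. By the triangle inequality,
\[
\norm{\mathcal A(a,b)}_*\le\norm{\mathcal A(c_1,c_2)}_*+\kappa_*\norm{\mathcal A(a,b)-\mathcal A(c_1,c_2)}.
\]
Inserting item (1) and the Lipschitz bound of Lemma~\ref{lem:blockdiag}(5), which is valid on $R_\square$ (a rectangle containing $(c_1,c_2)$), gives the claim.

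\emph{Main obstacle.} The only step needing care is the conjugation computation in the Coordinates paragraph. The choice of weights $1/c_1$ and $1/c_2$ is exactly what symmetrizes the off-diagonal entries $\sigma_k/c_2$ and $-\sigma_k/c_1$ into $\pm\etainf\sigma_k$. That is why the operator norm equals $\rho^*$ exactly at $(c_1,c_2)$ and only approximately nearby. Everything else is routine.
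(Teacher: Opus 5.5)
Your proposal is correct and follows the paper's proof: the same conjugation by $\diag(\sqrt{c_1},\sqrt{c_2})$ makes each $L_k(c_1,c_2)$ skew-symmetric with entries $\pm\etainf\sigma_k$, so each block's $\norm{\cdot}_*$-norm equals $|\mu_k|$. Items (2) and (3) then come from the same weight comparison and the same triangle inequality with $\norm{T}_*\le\kappa_*\norm{T}$. The only cosmetic difference is that you appeal to normality of a polynomial in a normal matrix, whereas the paper writes the conjugated block explicitly as the scaled rotation $|\mu_k|R(\varphi_k)$ and sums the block norms.
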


\begin{proof}
Let $S:=\mathrm{diag}(\sqrt{c_1},\sqrt{c_2})$.  In the basis
$\{(u_k,0),(0,v_k)\}$ of $W_k$ (Lemma~\ref{lem:blockdiag}(2)),
\[
S^{-1}L_k(c_1,c_2)S
=\begin{pmatrix}0&\sigma_k\sqrt{c_2}/(c_2\sqrt{c_1})\\
-\sigma_k\sqrt{c_1}/(c_1\sqrt{c_2})&0\end{pmatrix}
=\etainf\sigma_k\begin{pmatrix}0&1\\-1&0\end{pmatrix},
\]
a skew-symmetric matrix, so
$S^{-1}J_k(c_1,c_2)S=(1-\etainf^2\sigma_k^2)I_2+\etainf\sigma_k
\begin{pmatrix}0&1\\-1&0\end{pmatrix}=|\mu_k|\,R(\varphi_k)$
with $R(\varphi)=\begin{pmatrix}\cos\varphi&-\sin\varphi\\
\sin\varphi&\cos\varphi\end{pmatrix}$ and $\varphi_k=\arg\mu_k^{-}$; the same
$S$ serves every block.  For $w\in W$ with components $w_k\in W_k$ (the
$W_k$ are mutually orthogonal), $\sum_k\norm{S^{-1}w_k}^2
=\norm{w_1}^2/c_1+\norm{w_2}^2/c_2=\norm{w}_*^2$.  Then
\[
\norm{\mathcal{A}(c_1,c_2)\,w}_{*}^2
=\sum_k\bigl\lVert\,|\mu_k|R(\varphi_k)\,S^{-1}w_k\bigr\rVert^2
=\sum_k|\mu_k|^2\norm{S^{-1}w_k}^2
\le(\rho^*)^2\norm{w}_{*}^2 ,
\]
with equality when $w$ is supported on a maximizing block; this proves~(1).
(2) is immediate from $\norm{S^{-1}}=M_*$ and $\norm{S}=1/m_*$.
For~(3), for any operator $T$ on
$W$ one has $\norm{T}_{*}\le\kappa_*\norm{T}$ (combine the two bounds
of~(2)), hence
\[
\norm{\mathcal{A}(a,b)}_{*}
\le\norm{\mathcal{A}(c_1,c_2)}_{*}
+\kappa_*\norm{\mathcal{A}(a,b)-\mathcal{A}(c_1,c_2)}
\le\rho^*+\kappa_*C_J\bigl(|a-c_1|+|b-c_2|\bigr),
\]
by Lemma~\ref{lem:blockdiag}(5).
\end{proof}

\begin{corollary}[Extreme singular values determine the rate]
\label{cor:extreme}
For $\sigma\in\Sigma^+(\Bmat)$ let $q_\sigma:=\etainf^2\sigma^2$, so the corresponding mode has modulus $|\mu_\sigma|=\sqrt{1-q_\sigma(1-q_\sigma)}$.  Then:
\begin{enumerate}[label=(\roman*),itemsep=1pt,topsep=2pt]
\item $\rho^*$ is attained at an \emph{extreme} nonzero singular value: $\rho^*=\max\{|\mu_{\sigmax(\Bmat)}|,\ |\mu_{\sigmin^{+}(\Bmat)}|\}$, where $\sigmin^{+}(\Bmat)$ denotes the smallest nonzero singular value.
\item Per mode, contraction is best exactly at $q_\sigma=\tfrac12$, where $|\mu_\sigma|=\sqrt{3}/2$, and on the contracting range $q_\sigma\in(0,1)$ it degrades monotonically as $q_\sigma\to0$ or $q_\sigma\to1$.
\item Under Assumption~\ref{asm:stepsize} (every mode contracts), $\rho^*\to1$ as $\etainf\,\sigmax(\Bmat)\uparrow1$ or as $\etainf\,\sigmin^{+}(\Bmat)\downarrow0$; without the assumption an unstable mode can keep $\rho^*$ above $1$ while $\etainf\sigmin^{+}(\Bmat)\downarrow0$.
\end{enumerate}
\end{corollary}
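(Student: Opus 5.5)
The whole corollary reduces to the scalar function $f(q):=1-q(1-q)=q^2-q+1$, since by Lemma~\ref{lem:blockdiag}(4) each mode $\sigma\in\Sigma^+(\Bmat)$ has $|\mu_\sigma|^2=f(q_\sigma)$ with $q_\sigma=\etainf^2\sigma^2$, and $\rho^*=\max_\sigma\sqrt{f(q_\sigma)}$ by the definition in \eqref{eq:rhostar}. First I record the shape of $f$: it is a convex parabola with $f'(q)=2q-1$, minimum $f(1/2)=3/4$, strictly decreasing on $[0,1/2]$ and strictly increasing on $[1/2,\infty)$, with $f(0)=f(1)=1$ and $f<1$ exactly on $(0,1)$.

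For (i), the map $\sigma\mapsto q_\sigma$ is strictly increasing on $\sigma>0$. So the values $q_\sigma$ lie in the interval $[q_{\min},q_{\max}]$, where $q_{\min}:=\etainf^2(\sigmin^{+})^2$ and $q_{\max}:=\etainf^2\sigmax^2$. A convex function on an interval attains its maximum at an endpoint, so $\max_\sigma f(q_\sigma)=\max\{f(q_{\min}),f(q_{\max})\}$. Both endpoints are themselves attained, because $\sigmax$ and $\sigmin^{+}$ belong to $\Sigma^+(\Bmat)$. Taking square roots, which is monotone, gives the claim.

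For (ii), the monotonicity facts above give everything. The per-mode modulus is minimized exactly at $q_\sigma=1/2$, where it equals $\sqrt{3}/2$. On $(0,1/2]$ it increases strictly as $q_\sigma$ decreases toward $0$, and on $[1/2,1)$ it increases strictly as $q_\sigma$ increases toward $1$. It tends to $1$ at both ends by continuity.

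For (iii), Assumption~\ref{asm:stepsize} puts every $q_\sigma$ in $(0,1)$. Since $\rho^*\ge\max\{\sqrt{f(q_{\max})},\sqrt{f(q_{\min})}\}$ and $f(q)\to1$ as $q\uparrow1$ or $q\downarrow0$, we get $\rho^*\ge\sqrt{f(q)}\to1$ in either limit. Under the assumption $\rho^*<1$, so $\rho^*\to1$. For the last clause, I exhibit the configuration $q_{\max}>1$ fixed, so that $\rho^*\ge\sqrt{f(q_{\max})}>1$ independently of $q_{\min}\downarrow0$. This is easiest to see via a family of matrices $\Bmat$ with fixed $\sigmax$ and vanishing $\sigmin^{+}$ (for $m\ge3$).

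There is no real obstacle. The only point needing care is reading (iii) as a statement about limits over families of games or step sizes, since for a fixed game the quantities are fixed, and making sure that the endpoints in (i) are attained singular values.
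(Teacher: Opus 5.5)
Your proposal is correct and follows the paper's own reasoning. The paper gives no separate proof: it reads the corollary off the closed-form modulus $|\mu|^2=1-q+q^2$ of Lemma~\ref{lem:blockdiag}(4), whose minimum $3/4$ at $q=1/2$ is also recorded in Lemma~\ref{lem:analytic}(c). Your convexity-endpoint argument for (i) and the squeeze $\sqrt{f(q)}\le\rho^*<1$ for (iii) are exactly that reading. Your remark on the last clause of (iii) is also right and worth keeping: it needs a family of games with $\sigmax(\Bmat)/\sigmin^{+}(\Bmat)\to\infty$, hence $m\ge3$, because rescaling $\etainf$ alone cannot separate $q_{\max}$ from $q_{\min}$. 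Such a family is easy to realize, for instance by a fully mixed, value-zero $3\times3$ game whose $\Bmat$ has rank two and annihilates $\ones$ on both sides.
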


\subsection{Proof of Theorem~\ref{thm:rate}: Steps B and C}
\label{app:local:rate}

The standing notation of this proof, and Step~A (the spectral geometry of the
Jacobian), are in Section~\ref{sec:local:stepA}; the two steps below use that
notation without change.  Throughout, $m=|S_1|=|S_2|\ge2$, so that
$W\neq\{0\}$ and $\Sigma^+(\Bmat)\neq\emptyset$; in the pure-saddle case
$m=1$ one has $\Ms=E$, every point of $\Ms\cap\mathcal Z$ is a fixed point
(Proposition~\ref{prop:fixedpoint}), and Lemma~\ref{lem:absorb}(i) already
gives $\Phi(\mathcal U_0)\subseteq\Fix(\Phi)$, which is the finite-step
statement recorded before Theorem~\ref{thm:rate}.

\subsubsection*{Step B: uniform Taylor expansion and bootstrap}

Let $\mathcal{U}$ denote the domain of analyticity of $\Phi_S$ around $z^*$
provided by Lemma~\ref{lem:absorb} (in face coordinates).  Fix $r_0>0$ such
that $\mathbb{B}(z^*,3r_0)\subset\mathcal{U}$, the \emph{ambient} closed
ball of radius $3r_0$ around $z^*$ is contained in the absorption
neighborhood $\mathcal{U}_0$ of Lemma~\ref{lem:absorb}, and
\begin{equation}
\label{eq:r0pos}
r_0\;\le\;\tfrac16\min\bigl\{c_1\norm{\xi}_2,\ c_2\norm{\psi}_2\bigr\}.
\end{equation}
Since $e_1\perp e_2$, a point $z_f=ae_1+be_2\in E$ satisfies
$|a-c_1|\le\norm{z_f-z^*}/\norm{\xi}_2$ and
$|b-c_2|\le\norm{z_f-z^*}/\norm{\psi}_2$; define accordingly
\begin{equation}
\label{eq:CE}
C_E:=\frac{1}{\norm{\xi}_2}+\frac{1}{\norm{\psi}_2},
\end{equation}
so that $|a-c_1|+|b-c_2|\le C_E\norm{z_f-z^*}$.  In particular, by
\eqref{eq:r0pos}, every $z_f\in E\cap\mathbb{B}(z^*,3r_0)$ has coordinates
$(a,b)\in[c_1/2,\,3c_1/2]\times[c_2/2,\,3c_2/2]\subset R_\square$; such
points lie in the open cone $E_+$ and are fixed points of $\Phi_S$.  Finally
set
\begin{equation}
\label{eq:Kdef}
K:=\tfrac12\sup_{z\in\mathbb{B}(z^*,3r_0)}\norm{D^2\Phi_S(z)}\;<\;\infty,
\end{equation}
finite because $\Phi_S$ is analytic on a neighborhood of the compact ball.

\begin{lemma}[Uniform second-order Taylor expansion]
\label{lem:taylor}
For every $z_f\in E_+\cap\mathbb{B}(z^*,r_0)$ and every $w\in W$ with
$\norm{w}\le r_0$,
\[
\Phi_S(z_f+w)\;=\;z_f+J(z_f)\,w+R(z_f,w),
\qquad
\norm{R(z_f,w)}\le K\norm{w}^2 .
\]
\end{lemma}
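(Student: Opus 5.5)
The plan is a standard Taylor expansion with integral remainder, centered at the fixed point $z_f$ rather than at $z^*$. Three facts do the work: $\Phi_S(z_f)=z_f$, the derivative at $z_f$ is $J(z_f)$ by Lemma~\ref{lem:jacobian}, and the segment from $z_f$ to $z_f+w$ stays inside the ball $\mathbb{B}(z^*,3r_0)$ on which $K$ in \eqref{eq:Kdef} bounds half the second derivative.

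First I would check the geometry. Take $z_f\in E_+\cap\mathbb{B}(z^*,r_0)$ and $w\in W$ with $\norm{w}\le r_0$. For every $s\in[0,1]$,
\[
\norm{z_f+sw-z^*}\le\norm{z_f-z^*}+\norm{w}\le 2r_0<3r_0,
\]
so the closed segment $[z_f,z_f+w]$ lies in $\mathbb{B}(z^*,3r_0)$. That ball is contained in the analyticity domain $\mathcal{U}$ of $\Phi_S$ from Lemma~\ref{lem:absorb}(ii), and the ambient ball sits inside the absorption neighborhood $\mathcal{U}_0$. Since $z_f+sw$ lies in $\Ms$, the face map is the frozen-pattern real-analytic map along the entire segment.

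One side point needs care: $z_f+sw$ must lie in $\mathcal Z$, i.e.\ have positive norms, and be nonnegative so that $\Phi_S$ is the algorithm's map. The radius bound \eqref{eq:r0pos} keeps $\norm{\rt_i}$ bounded below. For the Taylor formula itself only the analytic extension matters. That extension is defined by the frozen-pattern formulas \eqref{eq:face-round} and \eqref{eq:gamma-face}, which are analytic on a full neighborhood of the ball in $\R^{S_1}\times\R^{S_2}$, so no sign condition is needed. If one insists on nonnegative states, the strict positivity $\min_{S_1}c_1\xi_i>0$ can be absorbed by shrinking $r_0$, which I would note but not belabor.

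Second, I would write the integral remainder
\[
\Phi_S(z_f+w)=\Phi_S(z_f)+D\Phi_S(z_f)\,w+\int_0^1(1-s)\,D^2\Phi_S(z_f+sw)[w,w]\,ds .
\]
Substituting $\Phi_S(z_f)=z_f$ (Proposition~\ref{prop:fixedpoint}, as $z_f\in E_+$) and $D\Phi_S(z_f)=J(z_f)$ (Lemma~\ref{lem:jacobian}) gives the claimed form. The remainder then satisfies
\[
\norm{R(z_f,w)}\le\sup_{\mathbb{B}(z^*,3r_0)}\norm{D^2\Phi_S}\,\norm{w}^2\int_0^1(1-s)\,ds=K\norm{w}^2 .
\]

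The main obstacle is the uniformity in $z_f$, which this approach obtains for free. $K$ is a single supremum over a compact ball containing every segment considered, and it is finite because $\Phi_S$ is analytic, hence $C^2$, on a neighborhood of that compact set. So no constant depends on where $z_f$ sits on the cone, and the restriction $w\in W$ is not even needed for the estimate itself.
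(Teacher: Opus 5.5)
Your proof is correct and is the paper's argument. The paper likewise uses three facts: the segment $[z_f,z_f+w]$ stays in $\mathbb{B}(z^*,2r_0)\subset\mathbb{B}(z^*,3r_0)$; $\Phi_S(z_f)=z_f$ by Proposition~\ref{prop:fixedpoint} and $D\Phi_S(z_f)=J(z_f)$ by Lemma~\ref{lem:jacobian}; and Taylor's theorem with integral remainder, whose factor $\int_0^1(1-s)\,ds=\tfrac12$ is absorbed into the definition of $K$.
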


\begin{proof}
The segment $\{z_f+sw\}_{s\in[0,1]}$ lies in $\mathbb{B}(z^*,2r_0)$, on which
$\Phi_S$ is analytic.  Since $z_f\in E\cap\mathbb{B}(z^*,3r_0)\subset E_+$ and $\Phi_S$ fixes
$E_+$ pointwise (Proposition~\ref{prop:fixedpoint}), $\Phi_S(z_f)=z_f$; and $D\Phi_S(z_f)=J(z_f)$ by
Lemma~\ref{lem:jacobian}.  Taylor's theorem with integral remainder and
\eqref{eq:Kdef} give the claim.
\end{proof}

Fix $\eps\in(0,1-\rho^*)$ and set $\theta:=\rho^*+\eps<1$.  Define
\begin{equation}
\label{eq:Mconst}
r_1:=\min\Bigl\{r_0,\ \frac{\eps}{2\,\kappa_*C_JC_E}\Bigr\},
\qquad
\delta_1:=\min\Bigl\{m_*r_0,\ \frac{\eps\,m_*^2}{2M_*K},\
m_*\sqrt{\frac{r_1(1-\theta^2)}{2K}}\Bigr\}.
\end{equation}

\begin{lemma}[Bootstrap]
\label{lem:bootstrap}
Let $z^0\in\Ms$ satisfy $\norm{P_Ez^0-z^*}\le r_1/2$ and
$\norm{P_Wz^0}_{*}\le\delta_1$.  Define $z^{t+1}:=\Phi_S(z^t)$,
$z_f^t:=P_Ez^t$, and $w^t:=P_Wz^t$.  Then for all $t\ge0$:
\begin{enumerate}[label=(M\arabic*)]
\item $\norm{w^t}_{*}\le\theta^t\,\norm{w^0}_{*}$;
\item $\norm{z_f^{t+1}-z_f^t}\le K\norm{w^t}^2
\le (K/m_*^2)\,\theta^{2t}\norm{w^0}_{*}^2$;
\item $\norm{z_f^t-z^*}\le r_1$; in particular
$z_f^t\in E_+\cap\mathbb{B}(z^*,r_0)$ with coordinates
$(a_t,b_t)\in R_\square$.
\end{enumerate}
\end{lemma}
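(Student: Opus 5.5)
We argue by strong induction on $t$, with the hypothesis that (M1)--(M3) hold for all indices up to $t$. The base case $t=0$ is immediate: (M1) is trivial, and (M3) follows from $\norm{z_f^0-z^*}\le r_1/2$. Because $r_1\le r_0$, the bound (M3) places $z_f^t$ in $E\cap\mathbb{B}(z^*,r_0)$. By \eqref{eq:r0pos} this point has coordinates $(a_t,b_t)\in R_\square$, so it lies in $E_+$ and is fixed by $\Phi_S$. From (M1) and Lemma~\ref{lem:adapted}(2) we get $\norm{w^t}\le\norm{w^t}_*/m_*\le\delta_1/m_*\le r_0$. This puts us inside the hypotheses of Lemma~\ref{lem:taylor}, which gives
\[
z^{t+1}=z_f^t+J(z_f^t)w^t+R(z_f^t,w^t),\qquad \norm{R}\le K\norm{w^t}^2 .
\]

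Next we project this identity onto the two subspaces, using Lemma~\ref{lem:blockdiag}(1)--(3). Since $J(z_f^t)$ preserves $E$ and $W$, $J(z_f^t)w^t\in W$, and $P_E$ has norm one, we obtain
\[
z_f^{t+1}-z_f^t=P_ER,\qquad w^{t+1}=\mathcal{A}(a_t,b_t)w^t+P_WR .
\]
The first identity gives the first inequality of (M2). The second follows from $\norm{w^t}\le\norm{w^t}_*/m_*$ combined with (M1).

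For (M1) at step $t+1$, we use Lemma~\ref{lem:adapted}(3) together with $C_E$, which converts $|a_t-c_1|+|b_t-c_2|\le C_E\norm{z_f^t-z^*}\le C_Er_1$. The choice of $r_1$ in \eqref{eq:Mconst} then yields $\norm{\mathcal{A}(a_t,b_t)}_*\le\rho^*+\eps/2$. For the remainder, $\norm{P_WR}_*\le M_*K\norm{w^t}^2\le (M_*K/m_*^2)\norm{w^t}_*^2$. Since $\norm{w^t}_*\le\delta_1\le \eps m_*^2/(2M_*K)$, this is at most $(\eps/2)\norm{w^t}_*$. Adding the two contributions gives $\norm{w^{t+1}}_*\le\theta\norm{w^t}_*$, which closes (M1).

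The main obstacle is (M3) at step $t+1$, because the $E$-drift must not accumulate enough to leave the rectangle. We telescope (M2) over $s=0,\dots,t$:
\[
\norm{z_f^{t+1}-z^*}\le \frac{r_1}{2}+\frac{K}{m_*^2}\sum_{s\ge0}\theta^{2s}\delta_1^2=\frac{r_1}{2}+\frac{K\delta_1^2}{m_*^2(1-\theta^2)} .
\]
The third term in the definition of $\delta_1$ makes the last summand at most $r_1/2$, so $\norm{z_f^{t+1}-z^*}\le r_1$.

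To keep the induction non-circular, the order within each step matters. (M1) and (M3) at index $t$ must be established before Taylor is applied at $t$. The bound on $z_f^{t+1}$ uses only (M2) for $s\le t$, and each of those relies only on (M1) at $s$, which is already proved.

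One hidden point must also be checked: the iterates stay in $\Ms$ and in the analyticity domain. This holds because $z^{t+1}=\Phi_S(z^t)$ with $z^t$ inside $\mathbb{B}(z^*,2r_0)\subset\mathcal{U}_0$, and Lemma~\ref{lem:absorb} guarantees that $\Phi_S$ maps that region back into the face.
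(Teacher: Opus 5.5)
Your proposal is correct and follows the paper's proof essentially step for step. It uses the same induction scheme ((M1) and (M3) at $t$ give (M2) at $t$, then (M1) and (M3) at $t+1$), the Taylor expansion at the current fixed point $z_f^t$, the exact $E$/$W$ block splitting of $J(z_f^t)$, the adapted-norm bound $\rho^*+\eps/2$ with the remainder absorbed by the second term of $\delta_1$, and the telescoped $E$-drift closed by the third term of $\delta_1$. Your additional check that $z^t$ stays in $\mathbb{B}(z^*,2r_0)$, and hence in the face, is a point the paper leaves implicit.
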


\begin{proof}
Induction on $t$.  For $t=0$, (M1) is trivial and (M3) holds by hypothesis.
Assume (M1) and (M3) hold for all $s\le t$; we prove (M2) at $t$ and (M1),
(M3) at $t+1$.

\emph{(i) The expansion applies.}\ \ By (M1) and
Lemma~\ref{lem:adapted}(2),
$\norm{w^t}\le\norm{w^t}_{*}/m_*\le\delta_1/m_*\le r_0$, and by (M3),
$z_f^t\in E_+\cap\mathbb{B}(z^*,r_1)\subseteq\mathbb{B}(z^*,r_0)$.
Lemma~\ref{lem:taylor} gives
\[
z^{t+1}=z_f^t+J(z_f^t)\,w^t+R^t,
\qquad
\norm{R^t}\le K\norm{w^t}^2 .
\]

\emph{(ii) $W$-component.}\ \ By Lemma~\ref{lem:blockdiag}(3), applied at the
\emph{current} fixed point $z_f^t$ (block diagonality is exact for every
$(a_t,b_t)$), $J(z_f^t)w^t\in W$ and
$P_WJ(z_f^t)w^t=\mathcal{A}(a_t,b_t)\,w^t$; hence
$w^{t+1}=\mathcal{A}(a_t,b_t)\,w^t+P_WR^t$.  Lemma~\ref{lem:adapted}(3) bounds
$\norm{\mathcal{A}(a_t,b_t)}_{*}$ by $\rho^*+\kappa_*C_J(|a_t-c_1|+|b_t-c_2|)$;
by \eqref{eq:CE} the bracket is at most $C_E\norm{z_f^t-z^*}$, which (M3)
bounds by $C_Er_1$; and $r_1$ was chosen in \eqref{eq:Mconst} so that
$\kappa_*C_JC_Er_1\le\eps/2$.  Hence
\[
\norm{\mathcal{A}(a_t,b_t)}_{*}
\le\rho^*+\kappa_*C_J\,C_E\,r_1
\le\rho^*+\tfrac{\eps}{2}.
\]
Using $\norm{P_W}=1$, $\norm{P_WR^t}_{*}\le M_*\norm{R^t}\le
M_*K\norm{w^t}^2\le (M_*K/m_*^2)\,\norm{w^t}_{*}^2
\le (M_*K\delta_1/m_*^2)\,\norm{w^t}_{*}$, so
\[
\norm{w^{t+1}}_{*}
\le\Bigl(\rho^*+\tfrac{\eps}{2}+\tfrac{M_*K}{m_*^2}\,\delta_1\Bigr)
\norm{w^t}_{*}
\le\theta\,\norm{w^t}_{*},
\]
where the last step uses $\delta_1\le\eps m_*^2/(2M_*K)$ from
\eqref{eq:Mconst}.  This proves (M1) at $t+1$.

\emph{(iii) $E$-component.}\ \ By Lemma~\ref{lem:blockdiag}(3),
$P_EJ(z_f^t)w^t=0$, hence $z_f^{t+1}=z_f^t+P_ER^t$ and
$\norm{z_f^{t+1}-z_f^t}\le K\norm{w^t}^2$; combining with (M1) and
Lemma~\ref{lem:adapted}(2) gives (M2) at $t$.

\emph{(iv) (M3) at $t+1$.}\ \ Summing (M2),
\[
\norm{z_f^{t+1}-z^*}
\le\norm{z_f^0-z^*}+\sum_{s=0}^{t}K\norm{w^s}^2
\le\frac{r_1}{2}+\frac{K\,\delta_1^2}{m_*^2\,(1-\theta^2)}
\le\frac{r_1}{2}+\frac{r_1}{2}=r_1,
\]
where the last inequality uses
$\delta_1\le m_*\sqrt{r_1(1-\theta^2)/(2K)}$ from \eqref{eq:Mconst}.  The
coordinate statement follows from \eqref{eq:r0pos} and \eqref{eq:CE} as
before.
\end{proof}

\subsubsection*{Step C: proof of the theorem}

\begin{proof}[Proof of Theorem~\ref{thm:rate}]
\emph{Parts (i) and (ii).}\ \ By Lemma~\ref{lem:jacobian} and
Lemma~\ref{lem:centering}, $J(z_f)=I+L+L^2$ with the blocks of $L$ given by
$\Bmat/b$ and $-\Bmat^\top/a$.  By Lemma~\ref{lem:splitting} and
Lemma~\ref{lem:blockdiag}, the face splits orthogonally as $E\oplus W$ with
$J(z_f)|_E=I_E$ and $J(z_f)|_{W_k}=J_k(a,b)$, whose eigenvalues are
$1-\eta^2\sigma_k^2\pm i\eta\sigma_k$.  Counting dimensions gives
$2+2(s-1)=n_S$, so the multiset union in \eqref{eq:spec} exhausts the
spectrum.  No block contributes a unit eigenvalue (that would require
$\lambda+\lambda^2=0$ with $\lambda=\pm i\eta\sigma_k$, impossible), so the
eigenvalue $1$ has algebraic multiplicity exactly two; it is semisimple
because $J=I_E\oplus\bigl(\bigoplus_kJ_k\bigr)$ with each $J_k$
diagonalizable over $\mathbb{C}$, and its eigenspace is $E$.  Since
$\Fix(\Phi_S)$ coincides with the open cone $E_+\subset E$ near $z_f$
(Proposition~\ref{prop:fixedpoint} and uniqueness), $E$ is the tangent space
of the fixed-point cone.  For (ii), write $q_k:=\eta^2\sigma_k^2>0$; then
$|\mu_k^\pm|^2=1-q_k(1-q_k)<1$ iff $q_k\in(0,1)$, so
$\rho(a,b)<1$ iff $q_{\max}=\eta^2\sigmax^2<1$ iff
$\eta\,\sigmax(\Bmat)<1$; and if $\eta\sigmax(\Bmat)>1$ then
$|\mu_1^\pm|^2=1+q_{\max}(q_{\max}-1)>1$.

\emph{Part (iii).}\ \ Assume Assumption~\ref{asm:stepsize}, fix
$\eps\in(0,1-\rho^*)$, and let $\theta$, $r_1$, $\delta_1$ be as in
\eqref{eq:Mconst}.

\emph{Step 1 (capture).}\ \ Let
\[
\mathcal{V}:=\bigl\{z\in\Ms:\ \norm{P_Ez-z^*}<r_1/2,\
\norm{P_Wz}_{*}<\delta_1\bigr\},
\]
a relatively open subset of $\Ms$ containing $z^*$.  The full-space map
$\Phi$ is continuous at $z^*$ with $\Phi(z^*)=z^*$ and
$\Phi(\mathcal{U}_0)\subseteq\Ms$ (Lemma~\ref{lem:absorb}); hence
\[
\mathcal{U}':=\Phi^{-1}(\mathcal{V})\cap\mathcal{U}_0\cap
\mathbb{B}(z^*,r_0)^{\circ}
\]
is an open neighborhood of $z^*$ (here
$\mathbb{B}(z^*,r_0)^{\circ}$ is the open ball in the ambient space).  If
$z^{(t_0)}\in\mathcal{U}'$, then $z^{(t_0+1)}\in\mathcal{V}\subset\Ms$ and
$z^{(t)}=\Phi_S^{\,t-t_0-1}\bigl(z^{(t_0+1)}\bigr)$ for $t\ge t_0+1$.  Apply
Lemma~\ref{lem:bootstrap} with $z^0:=z^{(t_0+1)}$ (its hypotheses are the
defining inequalities of $\mathcal{V}$).

\emph{Step 2 (geometric decay of the normal component).}\ \ By (M1) and
Lemma~\ref{lem:adapted}(2), $\norm{w^t}\le(\delta_1/m_*)\,\theta^t$.

\emph{Step 3 (convergence of the tangential component).}\ \ By (M2), for
all $p\ge1$,
\[
\norm{z_f^{t+p}-z_f^t}
\le\sum_{s\ge t}K\norm{w^s}^2
\le\frac{K\,\delta_1^2}{m_*^2(1-\theta^2)}\,\theta^{2t}
\;\xrightarrow[t\to\infty]{}\;0,
\]
so $(z_f^t)$ is Cauchy and converges to some
$z_\infty\in E\cap\mathbb{B}(z^*,r_1)$.  By \eqref{eq:r0pos} and
\eqref{eq:CE}, its coordinates satisfy $a_\infty\ge c_1/2>0$ and
$b_\infty\ge c_2/2>0$, so $z_\infty\in E_+$ and $z_\infty$ is a fixed point
(Proposition~\ref{prop:fixedpoint}).

\emph{Step 4 (rate synthesis).}\ \ For the bootstrap time $t$,
\[
\begin{aligned}
\norm{z^t-z_\infty}
&\le\norm{w^t}+\norm{z_f^t-z_\infty}
\le\frac{\delta_1}{m_*}\,\theta^t
+\frac{K\delta_1^2}{m_*^2(1-\theta^2)}\,\theta^{2t}
\le C'\,\theta^t,\\
C':={}&\frac{\delta_1}{m_*}
\Bigl(1+\frac{K\delta_1}{m_*(1-\theta^2)}\Bigr).
\end{aligned}
\]
Returning to the original time index ($t\mapsto t-t_0-1$), for $t\ge t_0+1$
we get $\norm{z^{(t)}-z_\infty}\le(C'/\theta)\,\theta^{\,t-t_0}$; for
$t=t_0$, $\norm{z^{(t_0)}-z_\infty}\le
\norm{z^{(t_0)}-z^*}+\norm{z^*-z_\infty}\le r_0+r_1\le2r_0$.  Hence
\eqref{eq:linrate} holds with $C_\eps:=\max\{C'/\theta,\ 2r_0\}$.

\emph{Part (iv).}\ \ For a state $z$ near $z^*$, let $(x(z),y(z))$ denote
the main-iterate strategies produced by the round at $z$ (zero-embedded when
$z\in\Ms$), and recall
$\gapf(x,y)=\max_i(Ay)_i-\min_j(x^\top A)_j\ge0$.  On the ambient ball
$\mathbb{B}(z^*,2r_0)$ the map $z\mapsto(x(z),y(z))$ is Lipschitz: the
pre-iterate normalizations have denominators bounded below,
$\norm{\rt_1}_1\ge\norm{\rt_1}_2\ge c_1\norm{\xi}_2-2r_0
\ge\tfrac23\,c_1\norm{\xi}_2$ by \eqref{eq:r0pos} (and symmetrically for
$\rt_2$); the multiplier $\gamma_1$ is $\sqrt2$-Lipschitz in
$(v,\Nt_1)$, each partial derivative having norm at most one: wherever the active set $\{i:v_i>\gamma_1\}$ is locally
constant, implicit differentiation of $\varphi(\gamma_1;v)^2=\Nt_1^2$ gives
$\nabla_v\gamma_1=p_1/\norm{p_1}_1$ and
$\partial\gamma_1/\partial\Nt_1=-\Nt_1/\norm{p_1}_1$, both of norm at most
one (as $\norm{p_1}_1\ge\norm{p_1}_2=\Nt_1$), and $\gamma_1$ is continuous
across active-set changes (Step~1 of the proof of
Lemma~\ref{lem:absorb}); clipping is
$1$-Lipschitz; and the main-iterate normalizations have denominators
$\norm{p_i}_1\ge\norm{p_i}_2=\Nt_i$ bounded below.  Composing with the
Lipschitz functionals $y\mapsto\max_i(Ay)_i$ and $x\mapsto\min_j(x^\top A)_j$
shows that $z\mapsto\gapf(x(z),y(z))$ is $L_{\gapf}$-Lipschitz on
$\mathbb{B}(z^*,2r_0)$, with $L_{\gapf}$ depending only on $\norm{A}$,
$c_1$, $c_2$, and the dimensions.  Every fixed point $z_f\in E_+$ produces
the strategies $(x^*,y^*)$ (Proposition~\ref{prop:fixedpoint}), so
$\gapf(x(z_\infty),y(z_\infty))=0$, and therefore, for all $t\ge t_0$,
\[
\begin{aligned}
\gapf\bigl(x^{(t)},y^{(t)}\bigr)
&=\gapf\bigl(x(z^{(t)}),y(z^{(t)})\bigr)
-\gapf\bigl(x(z_\infty),y(z_\infty)\bigr)\\
&\le L_{\gapf}\,\norm{z^{(t)}-z_\infty}
\le L_{\gapf}\,C_\eps\,(\rho^*+\eps)^{\,t-t_0}.
\end{aligned}
\qedhere
\]
\end{proof}

\paragraph{Dependence of constants.}
For reference: $r_0$ depends on the absorption neighborhood of
Lemma~\ref{lem:absorb} (hence on $\delta$, $\norm{A}$, $c_1$, $c_2$, the
dimensions, and the smallest on-support probabilities $\min_{i\in S_1}x^*_i$,
$\min_{j\in S_2}y^*_j$, which enter Step~3 of its proof); $K$ on
the second derivatives of $\Phi_S$ over $\mathbb{B}(z^*,3r_0)$; $C_J$ and
$C_E$ are explicit in $(\sigmax,c_1,c_2,\norm{\xi}_2,\norm{\psi}_2)$; $m_*$,
$M_*$, $\kappa_*$ are the closed forms of Lemma~\ref{lem:adapted}, functions
of $(c_1,c_2)$ alone;
$r_1$, $\delta_1$, $C_\eps$ depend on all of the above and on $\eps$.

\subsection{Open constants}
\label{app:local:open}

\begin{remark}[Open: quantitative absorption radius]
\label{rem:gap2}
Lemma~\ref{lem:absorb} is proved by a continuity argument, so the radius of
the absorption neighborhood $\mathcal{U}_0$ (and through it $r_0$ and, via
\eqref{eq:Mconst}, the constants $r_1$, $\delta_1$, $C_\eps$) is
\emph{existential}.  We have not derived an explicit lower bound on it in
terms of $(\delta,\norm{A},c_1,c_2,\min_{i\in S_1}x^*_i,\min_{j\in S_2}y^*_j)$ and the dimensions; a fully explicit-constants version
of the local theory would require quantifying this radius.
\end{remark}

\begin{remark}[Open: second-order constants]
\label{rem:gap4}
The curvature bound $K$ of \eqref{eq:Kdef} is obtained by compactness and is
not given in closed form; the rate
$\rho^*$ of \eqref{eq:rhostar}, the constants $C_J$, $C_E$, and the
adapted-norm constants $m_*,M_*,\kappa_*$ of Lemma~\ref{lem:adapted} are
fully explicit.  Thus in Theorem~\ref{thm:rate} the \emph{rate} is closed-form
while the multiplicative \emph{constant} $C_\eps$ is not, through $K$ and the
absorption radius alone; closing this
asymmetry is open.
\end{remark}

%
%

\section{Proofs for the Global Theory}
\label{app:global}

Throughout this appendix, $A\in\R^{n_1\times n_2}$ is the payoff matrix
(row player maximizes $x^\top Ay$), $d:=n_1+n_2$, and $\norm{A}$ denotes
the $\ell_2\to\ell_2$ operator norm. The state is
$z=(\rt_1,\rt_2)\in\R^{n_1}_{\ge0}\times\R^{n_2}_{\ge0}$ with
$N_i(z):=\norm{\rt_i}_2$, and the state space is $\mathcal Z$ as
in~\eqref{eq:statespace}. All statements refer to
Algorithm~\ref{alg:iregprm}, the strict version ($r=\rt+m-\gamma\mathbf 1$
without clipping in the prediction step; update $\pos{r+g}$).

\begin{remark}[Robustness to the clipped-prediction variant]
\label{rem:variant}
A legitimate variant clips already in the prediction step
($r_{\mathrm{clip}}=\pos{\rt+m-\gamma\mathbf 1}$, update
$\pos{r_{\mathrm{clip}}+g}$). Every \emph{global} argument in this appendix
holds verbatim for that variant: the proofs only use (a) the
norm-preservation and nonnegativity of $\pos{r}$ (equal to
$r_{\mathrm{clip}}$), (b) the $1$-Lipschitz property of $\pos{\cdot}$, and
(c) the fact that the update step outputs $\pos{r}=r_{\mathrm{clip}}$ when
$g=0$. The version difference affects only the \emph{local} smoothness
analysis; the analyticity arguments of
Appendix~\ref{app:global-selfstab} are stated for the strict version.
\end{remark}

\subsection{The round map and standing facts}
\label{app:global-setup}

\paragraph{The round map.}
On $\mathcal Z$ define $\Phi:\mathcal Z\to\mathcal Z$ (extragradient
setup, the two players symmetric):
\begin{enumerate}[label=(\arabic*),itemsep=1pt,topsep=2pt]
\item pre-iterates $\tilde X(z)=\rt_1/\norm{\rt_1}_1$,
$\tilde Y(z)=\rt_2/\norm{\rt_2}_1$;
\item predictions $m_1(z)=A\tilde Y(z)$ (note: through the
\emph{opponent's} component $\rt_2$), $m_2(z)=-A^\top\tilde X(z)$;
\item $\gamma$-shift: $v_1=\rt_1+m_1$,
$\gamma_1(z)=\gamma(v_1,N_1(z))$ (the unique solution provided by
Lemma~\ref{lem:gamma-cont} below), $r_1(z)=v_1-\gamma_1\mathbf 1$, and
main iterate $X(z)=\pos{r_1}/\norm{\pos{r_1}}_1$; $r_2$, $Y$ symmetric;
\item residuals $\mathrm{res}_1(z)=A\bigl(Y(z)-\tilde Y(z)\bigr)$ and
$g_1(z)=\mathrm{res}_1-\inner{\mathrm{res}_1}{X(z)}\mathbf 1$;
$g_2$ symmetric;
\item $\Phi(z):=\bigl(\pos{r_1+g_1},\,\pos{r_2+g_2}\bigr)$.
\end{enumerate}
\emph{Well-definedness.} $\norm{\rt_i}_1\ge\norm{\rt_i}_2>0$ on
$\mathcal Z$; $\gamma_i$ exists and is unique (Lemma~\ref{lem:gamma-cont}
with target $c=N_i>0$); $\norm{\pos{r_i}}_1\ge\norm{\pos{r_i}}_2=N_i>0$ by
norm preservation; and $\Phi(\mathcal Z)\subseteq\mathcal Z$ because the
update step does not decrease the norm:
$\norm{\pos{r_i+g_i}}_2\ge\norm{\pos{r_i}}_2=N_i>0$
(Lemma~\ref{lem:normmono}).

\paragraph{Trajectory--map alignment.}
Let $t_+:=\min\{t:N_1(t)>0\text{ and }N_2(t)>0\}=\max(t_1^*,t_2^*)$. The
standing assumption $t_+<\infty$ (Section~\ref{sec:global}) is exactly
$N_{i,\infty}>0$ for both players (Theorem~\ref{thm:saturation} supplies
finiteness), and norm monotonicity gives
$N_i(t)\ge N_i(t_+)>0$ for all $t\ge t_+$. For $t<t_i^*$ player $i$
follows its zero-state branch (Algorithm~\ref{alg:iregprm}: its
pre-iterate is carried over from the previous round, starting from the
default uniform, its prediction is $\mathbf 0$ and its main
iterate equals its pre-iterate), while a player with $t_i^*\le t<t_+$
already runs the ordinary prediction step.  For $t\ge t_+$ we have $z^{(t)}\in\mathcal Z$ and, round by
round, $z^{(t+1)}=\Phi(z^{(t)})$, with
$x^{(t)}=X(z^{(t)})$, $\xt^{(t)}=\tilde X(z^{(t)})$,
$y^{(t)}=Y(z^{(t)})$, $\yt^{(t)}=\tilde Y(z^{(t)})$,
$g_i^{(t)}=g_i(z^{(t)})$. That is, from $t_+$ on the dynamics is a
\emph{memoryless autonomous} discrete system on $\mathcal Z$; we analyze
only the tail $\{z^{(t)}\}_{t\ge t_+}$.

\paragraph{Standing facts.}
We collect the facts quoted in the proofs, with their sources.
\begin{enumerate}[label=(E\arabic*),itemsep=2pt,topsep=2pt]
\item $N_i(t)$ is nondecreasing along the whole trajectory
(Lemma~\ref{lem:normmono}).
\item \emph{(Saturation)} $N_i(t)\uparrow N_{i,\infty}<\infty$
(Theorem~\ref{thm:saturation}) and $N_{i,\infty}>0$ (the standing
assumption $t_+<\infty$), for both players.
\item By Lemma~\ref{lem:sat-equiv} (first inequality of
\eqref{eq:sat-explicit}, established self-containedly in
Appendix~\ref{app:local:satequiv}), saturation of both players
implies
\[
\sum_{t\ge t_+}\norm{x^{(t)}-\xt^{(t)}}_2^2\;\le\;
\frac{8n_1\,(N_{1,\infty}+N_{2,\infty})}{N_1(t_+)}\;<\;\infty,
\]
(the tail sum from $t_+$, which is all the analysis below uses; the
full-sum version carries the norm at the first positive-norm round in
the denominator), and symmetrically for player $2$; hence
$\norm{x^{(t)}-\xt^{(t)}}\to0$ and $\norm{y^{(t)}-\yt^{(t)}}\to0$ along
the \emph{full} sequence (summability implies vanishing terms).
\item \emph{(Scalar core inequality; second inequality of
\eqref{eq:beta-free})}
\begin{equation}
\beta_t:=\inner{m^{(t)}}{x^{(t)}}-\gamma^{(t)}
\;\ge\;\frac{1}{8n_1}\,N_1(t)\,\norm{x^{(t)}-\xt^{(t)}}^2\;\ge\;0,
\qquad
\sum_t\beta_t<\infty ,
\label{eq:beta}
\end{equation}
so $\beta_t\to0$ along the full sequence.
\phantomsection\label{fact:beta}
\item By Proposition~\ref{prop:fixedpoint}: for $z\in\mathcal Z$,
$\Phi(z)=z$ iff $(X(z),Y(z))=(\tilde X(z),\tilde Y(z))$ is a Nash
equilibrium with corresponding supports;
$\Fix(\Phi)=\{(c_1x^*,c_2y^*):c_1,c_2>0,\ (x^*,y^*)\in\mathrm{NE}\}$ (the
positive cone of NE rays), and at fixed points $\gamma_1=v^*$ (the game
value).
\item \emph{(Residual bound)} See Lemma~\ref{lem:gbound} below.
\item By Lemmas~\ref{lem:jacobian},~\ref{lem:centering}
and~\ref{lem:kernel}, all stated under
Assumption~\ref{asm:nondeg-strict} (nondegenerate, strictly complementary
equilibrium; equivalent to Assumption~\ref{asm:unique-strict} by
Lemma~\ref{lem:kernel}): at such an NE ray point
$z^*=(c_1x^*,c_2y^*)$,
\[
\spec\bigl(D\Phi_S(z^*)\bigr)=\{1,1\}\,\cup\,
\bigl\{\,1-\etainf^2\sigma^2\pm i\,\etainf\sigma\,\bigr\},
\qquad \etainf=1/\sqrt{c_1c_2},
\]
with $\sigma$ ranging over the nonzero singular values of
$\Bmat=\Atil-v^*\mathbf 1\mathbf 1^\top$. This is the spectrum statement of
Theorem~\ref{thm:rate}.
\end{enumerate}

\begin{lemma}[Residual bound]
\label{lem:gbound}
For all $t\ge t_+$,
$\norm{g_1^{(t)}}_2\le(1+\sqrt{n_1})\,\norm{A}\,\norm{y^{(t)}-\yt^{(t)}}_2$,
and symmetrically for player $2$.
\end{lemma}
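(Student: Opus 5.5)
The plan is to bound the residual first and then absorb the centering, which costs only a constant factor. For $t\ge t_+$ player~1 is in the ordinary branch, so $m_1^{(t)}=A\yt^{(t)}$ and $\mathrm{res}_1^{(t)}=u^{(t)}-m_1^{(t)}=A\bigl(y^{(t)}-\yt^{(t)}\bigr)$. Hence
\[
\norm{\mathrm{res}_1^{(t)}}_2\le\norm{A}\,\norm{y^{(t)}-\yt^{(t)}}_2 .
\]
This is the only place the game matrix enters.

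Next I control the centering term in $g_1^{(t)}=\mathrm{res}_1^{(t)}-\inner{\mathrm{res}_1^{(t)}}{x^{(t)}}\ones$. By the triangle inequality,
\[
\norm{g_1^{(t)}}_2\le\norm{\mathrm{res}_1^{(t)}}_2+\bigl|\inner{\mathrm{res}_1^{(t)}}{x^{(t)}}\bigr|\,\norm{\ones}_2 .
\]
Here $\norm{\ones}_2=\sqrt{n_1}$. Since $x^{(t)}\in\simplex{n_1}$, we have $\norm{x^{(t)}}_2\le\norm{x^{(t)}}_1=1$, so Cauchy--Schwarz gives $|\inner{\mathrm{res}_1^{(t)}}{x^{(t)}}|\le\norm{\mathrm{res}_1^{(t)}}_2$. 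Combining the two displays yields
\[
\norm{g_1^{(t)}}_2\le(1+\sqrt{n_1})\,\norm{A}\,\norm{y^{(t)}-\yt^{(t)}}_2 .
\]

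Player~2 is handled the same way, with operator $-A^\top$. Here $\mathrm{res}_2^{(t)}=-A^\top\bigl(x^{(t)}-\xt^{(t)}\bigr)$ and $\norm{A^\top}=\norm{A}$. The centering uses $y^{(t)}\in\simplex{n_2}$, which gives the factor $(1+\sqrt{n_2})$.

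The only point needing care is that $x^{(t)}$ is a genuine probability vector, so that $\norm{p_1}_1>0$. For $t\ge t_+$ this holds because $\norm{p_1}_2=N_1(t)>0$.
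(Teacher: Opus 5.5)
Your proof is correct and follows the paper's argument: bound $\mathrm{res}_1=A(y-\yt)$ by the operator norm, then absorb the centering term by the triangle inequality with $\norm{\ones}_2=\sqrt{n_1}$. The only cosmetic difference is that you bound $|\inner{\mathrm{res}_1}{x}|$ by Cauchy--Schwarz with $\norm{x}_2\le 1$, whereas the paper uses H\"older in the form $\norm{\mathrm{res}_1}_\infty\norm{x}_1\le\norm{\mathrm{res}_1}_2$.
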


\begin{proof}
$g_1=\mathrm{res}_1-\inner{\mathrm{res}_1}{x}\mathbf 1$ with
$x\in\simplex{n_1}$. Then
$|\inner{\mathrm{res}_1}{x}|\le\norm{\mathrm{res}_1}_\infty\norm{x}_1
\le\norm{\mathrm{res}_1}_2$, $\norm{\mathbf 1}_2=\sqrt{n_1}$, and
$\norm{\mathrm{res}_1}_2=\norm{A(y-\yt)}_2\le\norm{A}\,\norm{y-\yt}_2$.
\end{proof}
\subsection{Proofs}
\label{sec:global:proofs}

This subsection proves Theorems~\ref{thm:global} and~\ref{thm:point}.  Both
rest on continuity properties of the round map $\Phi$:
joint continuity of the $\gamma$-shift across support
changes, continuity of every intermediate map, the residual bound, and the
two elementary facts that a bounded sequence with vanishing steps has a
connected set of limit points.  These are collected, with proofs, in
Appendix~\ref{app:global}, and are used here as stated.  The proofs of the
norm-preserving pairing Lemma~\ref{lem:pairing}, of
Corollary~\ref{cor:capture}, and of Proposition~\ref{prop:selfstab} are also
in Appendix~\ref{app:global}.

\subsubsection{Proof of Theorem~\ref{thm:global}}
\label{app:global-thmA}

We prove the two parts as two propositions, matching the structure of the
theorem.

\begin{proposition}[$\omega$-limit set: existence, structure, invariance]
\label{prop:omega}
Under the hypotheses of Theorem~\ref{thm:global}, the tail trajectory
$\{z^{(t)}\}_{t\ge t_+}$ is contained in the compact set
\[
K:=\{z:\ \rt_i\ge0,\ \norm{\rt_i}_2\le N_{i,\infty},\ i=1,2\} ,
\]
its $\omega$-limit set $\Omega$ is nonempty, compact, connected (the
connectedness is of independent interest and not used in the sequel),
satisfies $\Phi(\Omega)=\Omega$, and
\[
\Omega\;\subseteq\;\mathcal L:=\{z:\ \rt_i\ge0,\
\norm{\rt_i}_2=N_{i,\infty},\ i=1,2\}\;\subset\;\mathcal Z .
\]
Moreover $\dist(z^{(t)},\Omega)\to0$.
\end{proposition}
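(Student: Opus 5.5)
The plan is to use the elementary topology of bounded sequences together with continuity of $\Phi$ on $\mathcal Z$.

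\emph{Containment and nonemptiness.} By (E1)--(E2), $N_i(t)\uparrow N_{i,\infty}$, so every tail state satisfies $\rt_i^{(t)}\ge0$ and $\norm{\rt_i^{(t)}}_2\le N_{i,\infty}$. Hence $\{z^{(t)}\}_{t\ge t_+}\subseteq K$, which is compact as a closed bounded set. Bolzano--Weierstrass then makes $\Omega$ nonempty. Compactness of $\Omega$ follows from the standard representation $\Omega=\bigcap_{T}\overline{\{z^{(t)}:t\ge T\}}$, an intersection of nested nonempty compacts. If $z^{(t_k)}\to\bar z$, continuity of the norm gives $\norm{\bar\rt_i}_2=\lim N_i(t_k)=N_{i,\infty}>0$, so $\Omega\subseteq\mathcal L\subset\mathcal Z$. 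The relation $\dist(z^{(t)},\Omega)\to0$ is proved by contradiction: a subsequence staying $\epsilon$-far from $\Omega$ has, by compactness of $K$, a further convergent subsequence whose limit lies in $\Omega$, which is absurd.

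\emph{Invariance.} The round map is continuous on $\mathcal Z$. The pre-iterate normalizations have denominators $\norm{\rt_i}_1\ge N_i>0$. The $\gamma$-shift is jointly continuous in $(v,N)$ (Lemma~\ref{lem:gamma-cont}, or the monotone-crossing argument of Step~1 of Lemma~\ref{lem:absorb}, which uses no locality). The main-iterate normalization has $\norm{\pos{r_i}}_1\ge N_i>0$, and clipping is continuous. Since $\Omega\subset\mathcal Z$, for $\bar z=\lim z^{(t_k)}$ we get $\Phi(\bar z)=\lim z^{(t_k+1)}\in\Omega$, so $\Phi(\Omega)\subseteq\Omega$. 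For the reverse inclusion, pass to a subsequence with $z^{(t_k-1)}\to\hat z\in\Omega$; continuity gives $\Phi(\hat z)=\bar z$, so $\Omega\subseteq\Phi(\Omega)$.

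\emph{Connectedness.} Use the fact that a bounded sequence whose steps tend to zero has a connected limit set. So it suffices to show $\norm{z^{(t+1)}-z^{(t)}}\to0$, and this is the step I expect to need the most care. Split
\[
z^{(t+1)}-z^{(t)}=\bigl(\pos{r_i^{(t)}+g_i^{(t)}}-\pos{r_i^{(t)}}\bigr)+\bigl(\pos{r_i^{(t)}}-\rt_i^{(t)}\bigr).
\]
The first bracket has norm at most $\norm{g_i^{(t)}}$ because clipping is $1$-Lipschitz. By Lemma~\ref{lem:gbound} this is at most $(1+\sqrt{n_i})\norm{A}\,\norm{y^{(t)}-\yt^{(t)}}$, which tends to $0$ by (E3). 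The second bracket compares two nonnegative vectors of equal $\ell_2$ norm $N_i(t)$. Lemma~\ref{lem:pairing} bounds it by $2\sqrt{n_i}\,N_{i,\infty}\norm{x^{(t)}-\xt^{(t)}}$, which also tends to $0$ by (E3). With vanishing increments, the usual argument applies: if $\Omega=F_1\sqcup F_2$ with $F_1,F_2$ nonempty, disjoint and closed, they are separated by a positive distance. The trajectory visits neighborhoods of both infinitely often with arbitrarily small steps, so it passes infinitely often through a compact region avoiding both sets. That region then contains a limit point, a contradiction.
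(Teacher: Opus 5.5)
Your proposal is correct and follows the paper's proof essentially step for step. It gets containment in $K$ from (E1)--(E2), $\Omega\subseteq\mathcal L$ from continuity of the norms along a monotone sequence, invariance and its reverse from continuity of $\Phi$ on $\mathcal Z$ (Lemma~\ref{lem:phi-cont}), and the distance claim from compactness. Connectedness comes from the vanishing-displacement bound of Lemma~\ref{lem:displacement} (the $1$-Lipschitz clipping estimate with Lemma~\ref{lem:gbound}, plus the pairing estimate \eqref{eq:pairing-applied}) fed into Lemma~\ref{lem:limitset}. The only difference is organizational: you observe that nonemptiness and compactness need only Bolzano--Weierstrass and reserve the displacement estimate for connectedness, which is slightly cleaner than the paper's bundling of all three.
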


\begin{proof}
\emph{Boundedness.} $\rt_i^{(t)}\ge0$ always (output of $\pos{\cdot}$),
and $\norm{\rt_i^{(t)}}_2=N_i(t)\le N_{i,\infty}$ by (E1)--(E2); hence
$z^{(t)}\in K$, and $K$ is compact.

\emph{Nonempty, compact, connected.} Nonemptiness, compactness, and
connectedness follow from Lemma~\ref{lem:displacement}
(which uses (E3) and Lemma~\ref{lem:gbound}) and
Lemma~\ref{lem:limitset}.

\emph{$\Omega\subseteq\mathcal L$.} If $\bar z=\lim_kz^{(t_k)}$ then
$\norm{\rt_i(\bar z)}_2=\lim_kN_i(t_k)=N_{i,\infty}$ (norms are
continuous, and a monotone sequence has the same limit along every
subsequence). Since $N_{i,\infty}>0$ we get
$\mathcal L\subset\mathcal Z$, so $\Phi$ is continuous on an open
neighborhood of $\Omega$ (Lemma~\ref{lem:phi-cont}).

\emph{Positive invariance $\Phi(\Omega)\subseteq\Omega$.} If
$\bar z\in\Omega$ with $z^{(t_k)}\to\bar z$, then
$z^{(t_k+1)}=\Phi(z^{(t_k)})\to\Phi(\bar z)$ by continuity; the left-hand
side has its limit in $\Omega$, so $\Phi(\bar z)\in\Omega$.

\emph{Reverse inclusion $\Omega\subseteq\Phi(\Omega)$.} Let
$\bar z=\lim_kz^{(t_k)}$ with $t_k\ge t_++1$. The sequence
$\{z^{(t_k-1)}\}\subset K$ has an accumulation point $w\in\Omega$; along a
sub-subsequence, $\Phi(w)=\lim\Phi(z^{(t_k-1)})=\lim z^{(t_k)}=\bar z$.
Hence $\Phi(\Omega)=\Omega$.

\emph{$\dist(z^{(t)},\Omega)\to0$.} If not, then
$\dist(z^{(t)},\Omega)\ge\eps$ along a subsequence; by compactness pass to
a convergent sub-subsequence, whose limit lies in $\Omega$, a
contradiction.
\end{proof}

All constants in this proof are explicit and involve no hidden
game-dependent quantities: the displacement constants are
$(1+\sqrt{n_i})\norm A$ and $2\sqrt{n_i}N_{i,\infty}$, and the
summability constant is $8n_i(N_{1,\infty}+N_{2,\infty})/N_i(t_+)$,
all directly computable from the game and the run.

\begin{proposition}[Limit points are fixed points]
\label{prop:omega-fix}
Under the hypotheses of Theorem~\ref{thm:global},
$\Omega\subseteq\Fix(\Phi)$. Combined with (E5): every point of $\Omega$
has the form $(c_1x^*,c_2y^*)$ with $(x^*,y^*)\in\mathrm{NE}$, the scales
determined by the level-set condition
$\norm{c_ix^*}_2=N_{i,\infty}$ up to the strategy direction, and
$\gamma_1=v^*$ at each such point.
\end{proposition}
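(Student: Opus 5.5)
Fix $\bar z\in\Omega$ with $z^{(t_k)}\to\bar z$. The plan is to pass the vanishing quantities of (E3)--(E4) to the limit through continuity of the round map, and then read off $\Phi(\bar z)=\bar z$ from the pairing identity of Lemma~\ref{lem:pairing}.

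\emph{Step 1 (strategies agree at $\bar z$).} By Proposition~\ref{prop:omega}, $\bar z\in\mathcal L\subset\mathcal Z$. The maps $X,\tilde X,Y,\tilde Y$ are continuous on a neighborhood of $\mathcal L$; this uses continuity of the $\gamma$-shift in $(v,N)$ (Lemma~\ref{lem:gamma-cont}) and denominators bounded below by $N_i>0$. By (E3), $\norm{X(z^{(t_k)})-\tilde X(z^{(t_k)})}\to0$, and the same holds for player~2. Passing to the limit gives $X(\bar z)=\tilde X(\bar z)$ and $Y(\bar z)=\tilde Y(\bar z)$.

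\emph{Step 2 (prediction step is trivial at $\bar z$).} Apply Lemma~\ref{lem:pairing} at $\bar z$ with $a=\pos{r_1(\bar z)}$ and $b=\rt_1(\bar z)$. These have equal $\ell_2$ norms $N_{1,\infty}$ by the $\gamma$-shift, and their $\ell_1$-normalizations are $X(\bar z)$ and $\tilde X(\bar z)$. The identity $a-b=N(x/\norm{x}_2-\xt/\norm{\xt}_2)$ then gives $\pos{r_1(\bar z)}=\rt_1(\bar z)$; player~2 is symmetric.

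\emph{Step 3 (update step is trivial).} Since $Y(\bar z)=\tilde Y(\bar z)$, we have $\mathrm{res}_1(\bar z)=0$, so $g_1(\bar z)=0$. Hence $\Phi(\bar z)_1=\pos{r_1(\bar z)}=\rt_1(\bar z)$, and likewise for player~2, so $\bar z\in\Fix(\Phi)$.

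An alternative route avoids the pairing lemma: pass $\beta_{t_k}\to0$ from (E4) to the limit using continuity of $z\mapsto\inner{m_1(z)}{X(z)}-\gamma_1(z)$. This gives $\inner{m_1}{X}-\gamma_1=0$ at $\bar z$, and the equality case of Lemma~\ref{lem:energy-identity} then yields $\pos{r_1(\bar z)}=\rt_1(\bar z)$ directly. I would mention this as a cross-check.

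\emph{Final step.} Invoke (E5)/Proposition~\ref{prop:fixedpoint}. This gives $\bar z=(c_1x^*,c_2y^*)$ with $(x^*,y^*)$ a Nash equilibrium and $\gamma_1=v^*$. The level-set inclusion $\Omega\subseteq\mathcal L$ forces $c_i\norm{x^*}_2=N_{i,\infty}$ (respectively $\norm{y^*}_2$), which determines the scale once the direction is fixed.

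The main obstacle is the continuity of the round map at points of $\Omega$, since the clipping pattern in the $\gamma$-shift may change along the subsequence. I would handle it with the monotone-crossing argument: $\varphi(\gamma;v)$ is jointly continuous and strictly decreasing where it is positive, and the target $N\ge N_i(t_+)>0$ stays uniformly away from zero. Together these give continuity of $\gamma$, and hence of $X$, $Y$, $g_i$ and $\Phi$, on an open neighborhood of $\mathcal L$ (Lemma~\ref{lem:phi-cont}). Once that is granted, everything else is passage to the limit in quantities already shown to vanish.
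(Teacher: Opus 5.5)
Your proposal is correct and follows the paper's proof essentially step for step: continuity of the round map together with (E3) gives $X(\bar z)=\tilde X(\bar z)$, the pairing identity of Lemma~\ref{lem:pairing} upgrades this to $\pos{r_1(\bar z)}=\rt_1(\bar z)$, vanishing residuals give $g_i(\bar z)=0$, and (E5) combined with $\Omega\subseteq\mathcal L$ yields the Nash-ray form and the scales. Your alternative cross-check is also valid: $\beta$ is continuous on $\mathcal Z$ (Lemma~\ref{lem:phi-cont}), so $\beta(\bar z)=0$, and the equality case of Lemma~\ref{lem:energy-identity} then gives $\pos{r_1(\bar z)}=\rt_1(\bar z)$.
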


\begin{proof}
Fix $\bar z\in\Omega$ with $z^{(t_k)}\to\bar z$. By
Proposition~\ref{prop:omega}, $\bar z\in\mathcal L\subset\mathcal Z$, so
all maps of Lemma~\ref{lem:phi-cont} are continuous at $\bar z$.

\emph{Step 1 (pre-iterate $=$ main iterate).} By the trajectory--map
alignment, $x^{(t_k)}=X(z^{(t_k)})$ and
$\xt^{(t_k)}=\tilde X(z^{(t_k)})$; continuity gives
$x^{(t_k)}\to X(\bar z)$ and $\xt^{(t_k)}\to\tilde X(\bar z)$. Both are
limits of \emph{images of the same sequence under continuous maps}, so no
extra uniformity assumption is needed; the dependence of $m_1(\bar z)$ on
the opponent component $\rt_2(\bar z)$ through $\tilde Y(\bar z)$ is
already absorbed in the definition of $X$. By (E3) the full sequence
$\norm{x^{(t)}-\xt^{(t)}}\to0$, so
\[
\norm{X(\bar z)-\tilde X(\bar z)}
=\lim_k\norm{x^{(t_k)}-\xt^{(t_k)}}=0 ,
\]
that is, $X(\bar z)=\tilde X(\bar z)$; symmetrically
$Y(\bar z)=\tilde Y(\bar z)$.

\emph{Step 2 (prediction step stationary).} At $\bar z$, take
$a=\pos{r_1(\bar z)}$ and $b=\rt_1(\bar z)$: both are nonnegative,
nonzero, and $\norm{a}_2=\norm{b}_2=N_1(\bar z)$ (the norm-preservation
equation of the $\gamma$-shift holds at $\bar z$ by definition). Step~1
says their $\ell_1$-normalizations coincide, so the identity of
Lemma~\ref{lem:pairing} gives $a=b$:
\[
\pos{r_1(\bar z)}=\rt_1(\bar z),
\qquad
\pos{r_2(\bar z)}=\rt_2(\bar z) .
\]
(Coordinates with $\rt_i=0$ are handled automatically:
Lemma~\ref{lem:pairing} is a vector identity.)

\emph{Step 3 (update step stationary).}
$\mathrm{res}_1(\bar z)=A\bigl(Y(\bar z)-\tilde Y(\bar z)\bigr)=\mathbf 0$
by Step~1, hence $g_1(\bar z)=\mathbf 0$; symmetrically
$g_2(\bar z)=\mathbf 0$. Therefore
\[
\Phi(\bar z)_1=\pos{r_1(\bar z)+g_1(\bar z)}=\pos{r_1(\bar z)}
=\rt_1(\bar z),
\]
and likewise for player $2$, so $\Phi(\bar z)=\bar z$. The NE
characterization and $\gamma_1=v^*$ follow from (E5).
\end{proof}

\begin{proof}[Proof of Theorem~\ref{thm:global}]
Part (i) is Propositions~\ref{prop:omega} and~\ref{prop:omega-fix}.

Part (ii): the function $\gapf$ is continuous in $(x,y)$ and vanishes on
the NE set. The sequence $\gapf(x^{(t)},y^{(t)})$ is bounded; take any of
its limit points, realized along times $t_k$. By compactness pass to a
sub-subsequence with $z^{(t_k)}\to\bar z\in\Omega$; by continuity
(Lemma~\ref{lem:phi-cont}) and
Proposition~\ref{prop:omega-fix} with (E5),
$(x^{(t_k)},y^{(t_k)})\to(X(\bar z),Y(\bar z))\in\mathrm{NE}$, so the
limit point equals $0$. A bounded sequence all of whose limit points are
$0$ converges to $0$. The distance statement is identical (the NE set is
compact and $\dist(\cdot,\mathrm{NE})$ continuous), and the pre-iterate
statements follow from (E3).
\end{proof}

\subsubsection{Proof of Theorem~\ref{thm:point}}
\label{app:global-thmB}

\begin{proof}
By Theorem~\ref{thm:global},
$\Omega\subseteq\Fix(\Phi)\cap\mathcal L$. NE uniqueness gives
$\Fix(\Phi)=\{(c_1x^*,c_2y^*):c_i>0\}$ by (E5), and the level-set
condition $\norm{c_1x^*}_2=N_{1,\infty}$ determines
$c_1=N_{1,\infty}/\norm{x^*}_2$ uniquely (likewise $c_2$), so
$\Fix(\Phi)\cap\mathcal L$ is a \emph{single point} $\bar z$; hence
$\Omega=\{\bar z\}$ (it is nonempty by Proposition~\ref{prop:omega}). A
sequence in a compact set with a unique limit point converges, so
$z^{(t)}\to\bar z$. Convergence of the strategies follows from
Lemma~\ref{lem:phi-cont}: $x^{(t)}=X(z^{(t)})\to X(\bar z)=x^*$, and
similarly for $\xt^{(t)},y^{(t)},\yt^{(t)}$.
\end{proof}

\subsubsection{Never-positive players: the case $t_+=\infty$}
\label{app:global-neverpos}

Theorems~\ref{thm:global} and~\ref{thm:point} assume $t_+<\infty$. The
complementary case is elementary and is settled by the next lemma, so the
gap conclusion of Theorem~\ref{thm:global}(ii) holds on every matrix game.

\begin{lemma}[Never-positive players]\label{lem:neverpos}
Suppose $t_1^*=\infty$, i.e.\ $\rt_1^{(t)}=\mathbf 0$ for all $t$
(the case $t_2^*=\infty$ follows by exchanging the two players). Write $\bar x:=\ones/n_1$ and
$\ell:=-A^\top\bar x\in\R^{n_2}$, and let $K_\ell:=\{k:\ell_k=\max_{k'}\ell_{k'}\}$.
\begin{enumerate}[label=(\roman*),itemsep=1pt,topsep=2pt]
\item $x^{(t)}=\xt^{(t)}=\bar x$ for all $t$, the vector $Ay^{(t)}$ is
constant across coordinates for all $t$, and player~1's exploitability
$\max_i(Ay^{(t)})_i-\bar x^\top Ay^{(t)}$ is $0$ at every round.
\item If $t_2^*<\infty$, then for all $t\ge t_2^*$: $N_2(t)=N_2(t_2^*)$,
$\gamma_2^{(t)}\le\max_k\ell_k$, and
$\sum_{t\ge t_2^*}\bigl(\max_k\ell_k-\gamma_2^{(t)}\bigr)\le N_2(t_2^*)$.
\item If $t_2^*<\infty$, there is a finite $T$ such that $y^{(t)}$ is
supported on $K_\ell$ for all $t\ge T$; hence
$\gapf(x^{(t)},y^{(t)})=0$ for all $t\ge T$.
\item If also $t_2^*=\infty$, the uniform profile $(\ones/n_1,\ones/n_2)$
is a Nash equilibrium and $\gapf(x^{(t)},y^{(t)})=0$ for all $t$.
\end{enumerate}
In particular $\gapf(x^{(t)},y^{(t)})\to0$ without any hypothesis on
$t_+$, and if the equilibrium is unique the strategies converge to it.
\end{lemma}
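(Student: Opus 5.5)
The plan is to treat the never-moving player through the zero-state branch of Algorithm~\ref{alg:iregprm}, and then to view the other player as a single-agent $\gamma$-shifted dynamic facing the constant payoff vector $\ell$.

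\emph{Part (i).} Since $\rt_1^{(t)}=\mathbf 0$ for every $t$, the zero branch applies at every round. It carries the pre-iterate forward from the default, so $x^{(t)}=\xt^{(t)}=\bar x$, and it sets $m_1=\mathbf 0$ and $r_1=\mathbf 0$. The update is therefore $\rt_1^{(t+1)}=\pos{g_1^{(t)}}$, where $g_1^{(t)}=Ay^{(t)}-\inner{Ay^{(t)}}{\bar x}\ones$. That this vector stays $\mathbf 0$ forces $g_1^{(t)}\le0$ entrywise. We also have $\inner{g_1^{(t)}}{\bar x}=0$, and $\bar x$ has all entries positive, so $g_1^{(t)}=0$. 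Hence $Ay^{(t)}$ is constant across coordinates, and player~1's exploitability is $0$.

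\emph{Part (ii).} Take $t\ge t_2^*$. The prediction is $m_2=-A^\top\xt^{(t)}=\ell$, and the residual is $\mathrm{res}_2=-A^\top(x^{(t)}-\xt^{(t)})=0$. So $g_2=0$ and $\rt_2^{(t+1)}=\pos{\rt_2^{(t)}+\ell-\gamma_2^{(t)}\ones}$, a vector with norm $N_2(t)$; this is the constancy already noted in Theorem~\ref{thm:saturation}.

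To show $\gamma_2^{(t)}\le\max_k\ell_k$, argue by contradiction. If $\gamma_2^{(t)}>\max_k\ell_k$, then every positive coordinate of $\rt_2^{(t)}$ strictly decreases after clipping, and no coordinate increases. The norm would then drop strictly, contradicting the norm preservation of the $\gamma$-shift.

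For the summability, fix $a\in K_\ell$. Since $\pos{v}\ge v$, we get $\rt_2^{(t+1)}[a]\ge\rt_2^{(t)}[a]+(\max\ell-\gamma_2^{(t)})$. Telescoping from $t_2^*$, and using $0\le\rt_2[a]\le N_2(t_2^*)$, gives $\sum_t(\max\ell-\gamma_2^{(t)})\le N_2(t_2^*)$, because every term is nonnegative.

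\emph{Part (iii).} This is the main step. If $K_\ell$ is all of $[n_2]$ there is nothing to prove. Otherwise set $\Delta:=\max\ell-\max_{k\notin K_\ell}\ell_k>0$ and $\eps_t:=\max\ell-\gamma_2^{(t)}\ge0$; by (ii) the $\eps_t$ are summable. For $k\notin K_\ell$,
\[
\rt_2^{(t+1)}[k]\le\pos{\rt_2^{(t)}[k]-\Delta+\eps_t}.
\]
Choose $T_0$ with $\eps_t<\Delta/2$ for all $t\ge T_0$. From then on each such coordinate either is already $0$ and stays $0$, or decreases by at least $\Delta/2$ per round. Since it is bounded by $N_2(t_2^*)$, it reaches $0$ within $2N_2(t_2^*)/\Delta$ rounds. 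Because $y^{(t)}\propto\pos{r_2^{(t)}}=\rt_2^{(t+1)}$, the strategy $y^{(t)}$ is supported on $K_\ell$ for all $t\ge T$, with $T$ finite.

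For such $t$, the gap is computed in two pieces. By (i), $\max_i(Ay^{(t)})_i=\bar x^\top Ay^{(t)}=-\ell^\top y^{(t)}=-\max\ell$. On the other side, $\min_j(\bar x^\top A)_j=-\max\ell$. Hence $\gapf=0$.

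\emph{Part (iv) and the final claims.} When both players never move, part (i) and its mirror image show that $Ay^{(t)}$ and $A^\top x^{(t)}$ are both constant vectors, with $x^{(t)}=\ones/n_1$ and $y^{(t)}=\ones/n_2$. Each uniform strategy is then a best response to the other, so the uniform profile is a Nash equilibrium and $\gapf=0$ at every round.

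In every case $\gapf(x^{(t)},y^{(t)})\to0$, and in fact the gap is exactly $0$ eventually. Under uniqueness, a profile with zero gap is the equilibrium, so the strategies eventually equal it.

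I expect the only delicate point to be the finite-time support identification in (iii). It relies on the summable slack from (ii), and on being careful that $y^{(t)}$ is the normalization of $\rt_2^{(t+1)}$, which is the same vector as $\pos{r_2^{(t)}}$ because $g_2=0$.
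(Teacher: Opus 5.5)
Your proposal is correct and follows the paper's proof essentially step for step. The matching steps are:
\begin{itemize}
\item the zero-branch identity $\rt_1^{(t+1)}=\pos{g_1^{(t)}}=\mathbf 0$ combined with $\inner{g_1^{(t)}}{\bar x}=0$ for (i);
\item the norm-preservation contradiction, then telescoping a coordinate $a\in K_\ell$, for (ii);
\item the finite-time clipping of the coordinates outside $K_\ell$, read through $y^{(t)}\propto\pos{r_2^{(t)}}=\rt_2^{(t+1)}$, for (iii).
\end{itemize}
Your single margin $\Delta$ with the explicit round count $2N_2(t_2^*)/\Delta$ is a minor sharpening of the paper's per-coordinate argument. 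So is your observation that under uniqueness the strategies eventually equal the equilibrium rather than merely converge to it.
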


\begin{proof}
(i) With $\rt_1^{(t)}=\mathbf 0$ the zero branch of
Algorithm~\ref{alg:iregprm} keeps $\xt_1^{(t)}=\xt_1^{(1)}=\bar x$, sets
$m_1^{(t)}=r_1^{(t)}=\mathbf 0$, and plays $x^{(t)}=\bar x$. Then
$g_1^{(t)}=u^{(t)}-\inner{u^{(t)}}{\bar x}\ones$ with $u^{(t)}=Ay^{(t)}$,
and $\rt_1^{(t+1)}=\pos{g_1^{(t)}}=\mathbf 0$ forces
$u^{(t)}_i\le\inner{u^{(t)}}{\bar x}$ for every $i$. The mean of the
$u^{(t)}_i$ equals $\inner{u^{(t)}}{\bar x}$, so all $u^{(t)}_i$ are
equal, and $\max_i(Ay^{(t)})_i-\bar x^\top Ay^{(t)}=0$.

(ii) Player~2 faces the fixed opponent $\bar x$, so its prediction and its
observed payoff coincide: $m_2^{(t)}=u_2^{(t)}=\ell$ for all $t\ge t_2^*$
(before $t_2^*$ the zero branch plays $y^{(t)}=\bar y$ and sets
$m_2^{(t)}=0$). Its
update reads $\rt_2^{(t+1)}=\pos{\rt_2^{(t)}+\ell-\gamma_2^{(t)}\ones}$,
where $\gamma_2^{(t)}$ is the norm-preserving shift, so
$N_2(t+1)=N_2(t)$ for $t\ge t_2^*$. If $\gamma_2^{(t)}>\max_k\ell_k$,
every coordinate with $\rt_{2,k}^{(t)}>0$ strictly decreases and the
others stay at $0$; since some coordinate is positive, the norm would
strictly decrease, a contradiction. Hence $\gamma_2^{(t)}\le\max_k\ell_k$.
For $k^*\in K_\ell$ this gives
$\rt_{2,k^*}^{(t+1)}\ge\rt_{2,k^*}^{(t)}+\max_k\ell_k-\gamma_2^{(t)}\ge
\rt_{2,k^*}^{(t)}$, and $\rt_{2,k^*}^{(t)}\le N_2(t_2^*)$ for all
$t\ge t_2^*$, so
the nonnegative increments are summable with the stated bound.

(iii) By (ii), $\gamma_2^{(t)}\to\max_k\ell_k$. Fix $j\notin K_\ell$ and
$\delta_j:=\max_k\ell_k-\ell_j>0$; for all large $t$,
$\ell_j-\gamma_2^{(t)}\le-\delta_j/2$, so
$\rt_{2,j}^{(t+1)}\le\pos{\rt_{2,j}^{(t)}-\delta_j/2}$, which reaches $0$
in finitely many rounds and then stays at $0$. Since
$y^{(t)}\propto\pos{\rt_2^{(t)}+\ell-\gamma_2^{(t)}\ones}=\rt_2^{(t+1)}$
for $t\ge t_2^*$ (line~8 of Algorithm~\ref{alg:iregprm}; before $t_2^*$,
$y^{(t)}=\bar y$ by the zero branch), $y^{(t)}$ is supported on $K_\ell$
from some finite $T$ on. Then $\bar x^\top Ay^{(t)}=-\max_k\ell_k=
\min_j(\bar x^\top A)_j$, and with (i) the Nash gap is $0$.

(iv) If both states stay at zero, both players play uniformly and, by
(i) applied to each, each player's payoff vector is constant across
coordinates, which is the Nash condition for the uniform profile.

Convergence of the gap is (iii) or (iv). Under a unique equilibrium
$(x^*,y^*)$, every limit point of $(\bar x,y^{(t)})$ has zero gap and is
therefore $(x^*,y^*)$, so $y^{(t)}\to y^*$ and $x^*=\bar x$.
\end{proof}

\subsection{Preliminary lemmas}
\label{app:global-lemmas}

\begin{lemma}[Joint continuity of the $\gamma$-shift]
\label{lem:gamma-cont}
Let $D:=\{(v,c):v\in\R^n,\ c>0\}$. For each $(v,c)\in D$ the equation
$\norm{\pos{v-\gamma\mathbf 1}}_2=c$ has a unique solution $\gamma(v,c)$,
which satisfies $\gamma(v,c)\in[\max_iv_i-c,\ \max_iv_i)$. Moreover
$\gamma$ is continuous on $D$, and for fixed $c$ the map
$\gamma(\cdot,c)$ is $1$-Lipschitz in the $\ell_2$ sense.
\end{lemma}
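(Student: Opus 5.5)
I will study the scalar function $\varphi(\gamma;v):=\norm{\pos{v-\gamma\ones}}_2$ directly and get every claim from its monotonicity and from an implicit-function computation.

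\emph{Existence and uniqueness, and the bracket.} For fixed $v$, $\varphi(\cdot;v)$ is continuous and nonincreasing. It is strictly decreasing wherever it is positive, because on that range some coordinate $v_i-\gamma$ is positive and decreases strictly. It vanishes for $\gamma\ge\max_iv_i$. At $\gamma=\max_iv_i-c$, the coordinate achieving the maximum alone contributes $c$, so $\varphi\ge c$. The intermediate value theorem then gives a solution in $[\max_iv_i-c,\max_iv_i)$. Strict monotonicity on $\{\varphi>0\}$, which contains every solution since $c>0$, gives uniqueness.

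\emph{Continuity on $D$.} The map $(\gamma,v)\mapsto\varphi(\gamma;v)$ is jointly continuous, and in fact $1$-Lipschitz in $v$ since $\pos{\cdot}$ is $1$-Lipschitz. Fix $(v_0,c_0)\in D$ with solution $\gamma_0$, and fix $\eps>0$. By strict monotonicity, $\varphi(\gamma_0-\eps;v_0)>c_0>\varphi(\gamma_0+\eps;v_0)$; the right inequality is strict as well, because $\varphi$ is strictly decreasing until it hits $0$. By joint continuity these two strict inequalities persist for $(v,c)$ near $(v_0,c_0)$. The intermediate value theorem, together with uniqueness, then places $\gamma(v,c)$ in $(\gamma_0-\eps,\gamma_0+\eps)$. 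This is the standard monotone-crossing argument already invoked in Step~1 of Lemma~\ref{lem:absorb}.

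\emph{Lipschitz bound for fixed $c$.} Let $v,v'$ be given with solutions $\gamma,\gamma'$, and put $\delta:=\norm{v-v'}_\infty\le\norm{v-v'}_2$. Since $\pos{\cdot}$ is monotone,
\[
\pos{v'-(\gamma+\delta)\ones}\le\pos{v-\gamma\ones}\le\pos{v'-(\gamma-\delta)\ones}
\]
entrywise. Taking norms gives $\varphi(\gamma+\delta;v')\le c\le\varphi(\gamma-\delta;v')$. Strict monotonicity on the positive range then forces $\gamma'\in[\gamma-\delta,\gamma+\delta]$. Hence $|\gamma-\gamma'|\le\norm{v-v'}_\infty\le\norm{v-v'}_2$.

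The only delicate point is the strict inequality $\varphi(\gamma_0+\eps;v_0)<c_0$ in the continuity step and the analogous use of strict monotonicity in the Lipschitz step. Both come down to the observation that $\varphi$ cannot be flat at a positive value, and I would state that observation once at the start and cite it in both places.
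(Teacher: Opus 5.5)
Your proposal is correct and follows the paper's proof. Existence, uniqueness and the bracket come from the same monotonicity and intermediate-value argument, and the $1$-Lipschitz bound comes from the same entrywise sandwich with $\delta=\norm{v-v'}_\infty$. The only difference is cosmetic: for continuity you bracket the root by strict inequalities at $\gamma_0\pm\eps$, whereas the paper shows that a bounded sequence of roots has $\gamma(v,c)$ as its only accumulation point, and the two arguments are equivalent.
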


\begin{proof}
Write $f(\gamma;v):=\norm{\pos{v-\gamma\mathbf 1}}_2$; it is jointly
continuous in $(\gamma,v)$ (composition of $\pos{\cdot}$ and the norm).
For $\gamma\ge\max_iv_i$ we have $f=0<c$; at $\gamma_0:=\max_iv_i-c$,
$f(\gamma_0;v)\ge v_{i^*}-\gamma_0=c$ where $i^*$ is the argmax. On
$\{f>0\}=(-\infty,\max_iv_i)$, $f$ is strictly decreasing: indeed
$f(\gamma)^2=\sum_{i:v_i>\gamma}(v_i-\gamma)^2$, and as long as the active
set is nonempty (which it is whenever $f>0$), each active term strictly
decreases in $\gamma$. Hence the solution exists, is unique, and lies in
the stated interval.

\emph{Continuity.} Let $(v_k,c_k)\to(v,c)$ and
$\gamma_k:=\gamma(v_k,c_k)\in[\max_iv_{k,i}-c_k,\ \max_iv_{k,i}]$, a
bounded sequence. For any accumulation point $\bar\gamma$, joint
continuity along a subsequence gives $f(\bar\gamma;v)=c$, so by uniqueness
$\bar\gamma=\gamma(v,c)$; a bounded sequence all of whose accumulation
points coincide converges, so $\gamma_k\to\gamma(v,c)$.

\emph{$1$-Lipschitz.} By coordinatewise monotonicity of the shift,
$f(\gamma;v+\delta)\le f(\gamma-\norm{\delta}_\infty;v)$ and
$f(\gamma;v+\delta)\ge f(\gamma+\norm{\delta}_\infty;v)$; combined with
the monotonicity of $f$ in $\gamma$ and uniqueness of the solution,
$|\gamma(v+\delta,c)-\gamma(v,c)|\le\norm{\delta}_\infty
\le\norm{\delta}_2$. (This is consistent with the gradient identity
$\nabla_v\gamma=x\in\simplex{n}$ from the anchor analysis.)
\end{proof}

Continuity holds \emph{across support changes}: the
argument uses only monotonicity and uniqueness, with no piecewise case
analysis.

\begin{lemma}[Continuity of $\Phi$ and all intermediate maps]
\label{lem:phi-cont}
The maps $\tilde X,\tilde Y,m_i,\gamma_i,r_i,X,Y,g_i,\Phi$, as well as
$\beta(z):=\inner{m_1(z)}{X(z)}-\gamma_1(z)$, are continuous on
$\mathcal Z$.
\end{lemma}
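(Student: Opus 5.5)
The plan is to go through the round map in the order it is built on $\mathcal Z$, check that each stage is continuous, and compose; no stage needs anything beyond $\mathcal Z$ being open and the denominators staying positive.

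\emph{Pre-iterates and predictions.} On $\mathcal Z$ we have $\norm{\rt_i}_1\ge\norm{\rt_i}_2=N_i(z)>0$. Hence $\tilde X=\rt_1/\norm{\rt_1}_1$ and $\tilde Y$ are quotients of continuous maps by a nonvanishing continuous scalar, and so are continuous. The predictions $m_1=A\tilde Y$ and $m_2=-A^\top\tilde X$ are linear images of these, hence continuous.

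\emph{Shifts.} The multiplier is $\gamma_1(z)=\gamma(v_1(z),N_1(z))$, where $v_1=\rt_1+m_1$ is continuous and $N_1(z)>0$. So $(v_1,N_1)$ is a continuous map into the domain $D$ of Lemma~\ref{lem:gamma-cont}. That lemma gives joint continuity of $\gamma$ on $D$, even across changes of the active set, and composition gives continuity of $\gamma_1$; the same holds for $\gamma_2$. It follows that $r_1=v_1-\gamma_1\ones$ is continuous, and so is $\pos{r_1}$, since $\pos{\cdot}$ is $1$-Lipschitz.

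\emph{Main iterates.} Norm preservation gives $\norm{\pos{r_1}}_1\ge\norm{\pos{r_1}}_2=N_1(z)>0$. The denominator of $X=\pos{r_1}/\norm{\pos{r_1}}_1$ is therefore continuous and positive, so $X$ is continuous, and likewise $Y$. The remaining maps are then immediate compositions of continuous maps:
\begin{itemize}
\item $\mathrm{res}_1=A(Y-\tilde Y)$;
\item $g_1=\mathrm{res}_1-\inner{\mathrm{res}_1}{X}\ones$, a polynomial in continuous quantities;
\item $\Phi(z)=(\pos{r_1+g_1},\pos{r_2+g_2})$;
\item $\beta(z)=\inner{m_1}{X}-\gamma_1$.
\end{itemize}
Since $\mathcal Z$ is open in the nonnegative orthant, continuity holds in the relative topology, which is all that is used later.

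\emph{Where the difficulty sits.} The only step that is not routine is the continuity of $\gamma_i$ across support changes, because the clipping pattern of $\pos{v-\gamma\ones}$ can jump. This is exactly what Lemma~\ref{lem:gamma-cont} settles, through monotonicity and uniqueness rather than implicit differentiation. Everything else reduces to checking that the denominators stay away from zero, and both are bounded below by $N_i(z)>0$.
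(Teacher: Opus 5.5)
Your proof is correct and follows the paper's argument essentially verbatim: you compose the round map stage by stage, bound both normalization denominators below by $N_i(z)>0$ (via $\norm{\cdot}_1\ge\norm{\cdot}_2$ and the norm preservation of the $\gamma$-shift), and hand the only nontrivial step, continuity of $\gamma_i$ across active-set changes, to Lemma~\ref{lem:gamma-cont}. Your remark that $\mathcal Z$ is relatively open is harmless but not needed, because each composition is continuous on $\mathcal Z$ taken as a subspace in any case.
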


\begin{proof}
Compose step by step. $\tilde X=\rt_1/\norm{\rt_1}_1$ is continuous (the
denominator is $\ge N_1(z)>0$); $m_1=A\tilde Y$ is continuous (through the
opponent component $\rt_2$); $v_1=\rt_1+m_1$ and $c_1=N_1(z)$ are
continuous with $c_1>0$, so $\gamma_1=\gamma(v_1,c_1)$ is continuous by
Lemma~\ref{lem:gamma-cont}, hence so is $r_1$;
$X=\pos{r_1}/\norm{\pos{r_1}}_1$ is continuous (denominator
$\ge\norm{\pos{r_1}}_2=c_1>0$ by norm preservation); $Y$ symmetrically;
therefore $\mathrm{res}_1$, $g_1$, and finally $\Phi$ are continuous.
\end{proof}

\begin{proof}[Proof of Lemma~\ref{lem:pairing} (norm-preserving pairing)]
Since $a=\norm{a}_1x$ and
$N=\norm{a}_2=\norm{a}_1\norm{x}_2$, we get $\norm{a}_1=N/\norm{x}_2$ and
hence $a=N\,x/\norm{x}_2$; likewise $b=N\,\xt/\norm{\xt}_2$, which is the
identity in~\eqref{eq:pairing}. For the Lipschitz part, for any
$u,w\ne0$ with (w.l.o.g.) $\norm{u}_2\ge\norm{w}_2$,
\[
\left\|\frac{u}{\norm{u}_2}-\frac{w}{\norm{w}_2}\right\|_2
\;\le\;\frac{\norm{u-w}_2}{\norm{u}_2}
+\norm{w}_2\left(\frac{1}{\norm{w}_2}-\frac{1}{\norm{u}_2}\right)
\;\le\;\frac{2\,\norm{u-w}_2}{\max(\norm{u}_2,\norm{w}_2)} ,
\]
and on the simplex $\norm{x}_2\ge1/\sqrt n$, giving the constant
$2\sqrt n$. The last claim follows from the identity: if $x=\xt$ then
$a=b$ directly; the converse is trivial. Coordinates with
$x_i=\xt_i=0$ force $a_i=b_i=0$ and are absorbed by the identity, with no
separate case analysis.
\end{proof}

Applying Lemma~\ref{lem:pairing} with $a=\pos{r_1(z)}$ and $b=\rt_1$
(both nonnegative, nonzero, and of equal $\ell_2$ norm $N_1(z)$ by the
$\gamma$-shift's defining equation) yields the estimate used repeatedly below,
\begin{equation}
\label{eq:pairing-applied}
\bigl\|\pos{r_1(z)}-\rt_1\bigr\|_2 \;\le\; 2\sqrt{n_1}\,N_1(z)\,\norm{X(z)-\tilde X(z)}_2 ,
\end{equation}
where $X,\tilde X$ are the main and pre-iterate strategy maps.

\begin{lemma}[Vanishing one-step displacement]
\label{lem:displacement}
Under saturation (E2), $\norm{z^{(t+1)}-z^{(t)}}\to0$, with the explicit
bound (for $t\ge t_+$)
\[
\norm{\rt_1^{(t+1)}-\rt_1^{(t)}}_2
\;\le\;
\underbrace{(1+\sqrt{n_1})\,\norm{A}\,\norm{y^{(t)}-\yt^{(t)}}_2}_{\text{update step, Lemma~\ref{lem:gbound}}}
\;+\;
\underbrace{2\sqrt{n_1}\,N_{1,\infty}\,\norm{x^{(t)}-\xt^{(t)}}_2}_{\text{prediction step, \eqref{eq:pairing-applied}}} ,
\]
and symmetrically for player $2$.
\end{lemma}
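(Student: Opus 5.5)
The plan is to split one round into its two stages and bound each stage by the triangle inequality. For $t\ge t_+$ we have $\rt_1^{(t+1)}=\pos{r_1^{(t)}+g_1^{(t)}}$, so we write
\[
\rt_1^{(t+1)}-\rt_1^{(t)}=\Bigl(\pos{r_1^{(t)}+g_1^{(t)}}-\pos{r_1^{(t)}}\Bigr)+\Bigl(\pos{r_1^{(t)}}-\rt_1^{(t)}\Bigr).
\]
The first bracket is the update step. Since $\pos{\cdot}$ is $1$-Lipschitz in $\ell_2$, its norm is at most $\norm{g_1^{(t)}}_2$. Lemma~\ref{lem:gbound}, valid for $t\ge t_+$ by the trajectory--map alignment, then gives the bound $(1+\sqrt{n_1})\norm{A}\,\norm{y^{(t)}-\yt^{(t)}}_2$.

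The second bracket is the prediction step. We apply the applied pairing estimate \eqref{eq:pairing-applied} at $z=z^{(t)}$, with $a=\pos{r_1(z^{(t)})}$ and $b=\rt_1^{(t)}$. Both vectors are nonnegative and nonzero, and they have equal $\ell_2$ norm $N_1(t)$ by the $\gamma$-shift. This yields the bound $2\sqrt{n_1}N_1(t)\norm{x^{(t)}-\xt^{(t)}}_2$. Norm monotonicity (E1) together with saturation (E2) gives $N_1(t)\le N_{1,\infty}$, which produces exactly the stated second term. Player~2 is handled identically, with $A$ replaced by $-A^\top$, since $\norm{A^\top}=\norm{A}$.

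For the vanishing claim we invoke (E3). Under saturation, $\sum_{t\ge t_+}\bigl(\norm{x^{(t)}-\xt^{(t)}}^2+\norm{y^{(t)}-\yt^{(t)}}^2\bigr)<\infty$, so both strategy discrepancies tend to $0$ along the full sequence. Because the constants $(1+\sqrt{n_i})\norm A$ and $2\sqrt{n_i}N_{i,\infty}$ are finite, both terms of the bound tend to $0$, and hence $\norm{z^{(t+1)}-z^{(t)}}\to0$. In fact the bound is $\ell^2$-summable, though only vanishing is needed.

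The one point needing care is that the pairing lemma really applies with the stated norm: $\pos{r_1}$ must be nonzero with $\ell_2$ norm exactly $N_1(t)$. This is the defining equation of the $\gamma$-shift, and it holds because $N_1(t)>0$ for $t\ge t_+$, which is where Lemma~\ref{lem:gamma-cont} is used. Beyond this check, the argument is routine.
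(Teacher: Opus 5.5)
Your proposal is correct and follows the paper's proof essentially verbatim: the same split into the update-step term $\pos{r_1^{(t)}+g_1^{(t)}}-\pos{r_1^{(t)}}$ and the prediction-step term $\pos{r_1^{(t)}}-\rt_1^{(t)}$. You bound them the same way, the first by the $1$-Lipschitz property of $\pos{\cdot}$ with Lemma~\ref{lem:gbound}, the second by \eqref{eq:pairing-applied} with $N_1(t)\le N_{1,\infty}$, and you conclude vanishing from (E3) as the paper does.
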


\begin{proof}
Decompose
$\rt_1^{(t+1)}-\rt_1^{(t)}
=\bigl(\pos{r_1+g_1}-\pos{r_1}\bigr)+\bigl(\pos{r_1}-\rt_1^{(t)}\bigr)$.
The first term is bounded by $\norm{g_1^{(t)}}_2$ (the map $\pos{\cdot}$
is coordinatewise $1$-Lipschitz), hence by Lemma~\ref{lem:gbound}; the
second by~\eqref{eq:pairing-applied} together with
$N_1(t)\le N_{1,\infty}$. Both factors tend to $0$ by (E3).
\end{proof}

\begin{lemma}[Connected limit sets for vanishing-step sequences]
\label{lem:limitset}
Let $\{z^{(t)}\}$ be contained in a compact set $K\subset\R^d$ with
$\norm{z^{(t+1)}-z^{(t)}}\to0$. Then its set of limit points $\Omega$ is
nonempty, compact, and \emph{connected}.
\end{lemma}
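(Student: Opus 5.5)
Nonemptiness and compactness are standard, so the real content is connectedness. Since $\{z^{(t)}\}\subset K$ with $K$ compact, Bolzano--Weierstrass gives at least one convergent subsequence, so $\Omega\neq\emptyset$. I will write
\[
\Omega=\bigcap_{T}\overline{\{z^{(t)}:t\ge T\}},
\]
which is a decreasing intersection of nonempty compact subsets of $K$. That makes $\Omega$ closed, hence compact, and also nonempty by the finite-intersection property.

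For connectedness I will argue by contradiction. Suppose $\Omega=\Omega_1\cup\Omega_2$ with $\Omega_1,\Omega_2$ disjoint, nonempty, and closed, hence compact. Their distance $\delta:=\dist(\Omega_1,\Omega_2)$ is then strictly positive. Let $U_i:=\{z:\dist(z,\Omega_i)<\delta/3\}$; these are disjoint open sets.

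First I claim $z^{(t)}\in U_1\cup U_2$ for all large $t$. Otherwise infinitely many iterates lie in the compact set $K\setminus(U_1\cup U_2)$. Some subsequence of them converges to a limit point in that set, which is a point of $\Omega$ outside $U_1\cup U_2$. This is impossible, since $\Omega\subseteq U_1\cup U_2$.

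Next I use that $\Omega_1$ and $\Omega_2$ are both nonempty and consist of limit points. This forces infinitely many iterates into $U_1$ and infinitely many into $U_2$. Hence there are infinitely many crossing times with $z^{(t)}\in U_1$ and $z^{(t+1)}\in U_2$.

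At each crossing, $\norm{z^{(t+1)}-z^{(t)}}\ge\delta/3$. The reason is that any point within $\delta/3$ of $\Omega_1$ and any point within $\delta/3$ of $\Omega_2$ are separated by at least $\delta-2\delta/3$. Infinitely many such steps contradict the hypothesis $\norm{z^{(t+1)}-z^{(t)}}\to0$.

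The main obstacle is the first claim, that the tail eventually stays in $U_1\cup U_2$. It needs compactness of $K$ applied to the closed complement $K\setminus(U_1\cup U_2)$. Once that is in place, the vanishing-step hypothesis closes the argument directly.
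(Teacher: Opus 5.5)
Your proof is correct and is essentially the paper's argument: both use the $\delta/3$-neighborhoods, compactness of $K\setminus(U_1\cup U_2)$, and the vanishing step size. The only difference is the order. You first trap the tail in $U_1\cup U_2$ and then count crossings between the two sets, whereas the paper uses the small steps to push the exit points from $U_A$ into the compact complement and derives the contradiction there.
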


\begin{proof}
Nonemptiness and compactness are standard
(Bolzano--Weierstrass; $\Omega=\bigcap_T\overline{\{z^{(t)}:t\ge T\}}$ is
an intersection of a nested family of compact sets). For connectedness,
suppose for contradiction that $\Omega=\Omega_A\sqcup\Omega_B$ with both
parts nonempty, compact, and disjoint; let
$\delta:=\dist(\Omega_A,\Omega_B)>0$ and let $U_A,U_B$ be the open
$\delta/3$-neighborhoods of $\Omega_A,\Omega_B$. Since both parts consist
of limit points, the sequence visits $U_A$ and $U_B$ infinitely often, so
there are infinitely many ``exit events'': choose $t_k\uparrow\infty$ with
$z^{(t_k)}\in U_A$ and
$\tau_k:=\min\{\tau>t_k:z^{(\tau)}\notin U_A\}<\infty$ (otherwise the
tail would be trapped in $U_A$ and never revisit $U_B$). For $k$ large the
step size is $<\delta/3$, so
\[
\dist\bigl(z^{(\tau_k)},\Omega_A\bigr)
\le\norm{z^{(\tau_k)}-z^{(\tau_k-1)}}+\delta/3<2\delta/3
\;\Longrightarrow\;
\dist\bigl(z^{(\tau_k)},\Omega_B\bigr)\ge\delta-2\delta/3=\delta/3 ,
\]
hence $z^{(\tau_k)}\in C:=K\setminus(U_A\cup U_B)$. The set $C$ is compact
and contains infinitely many $z^{(\tau_k)}$, so it contains a limit point
of the sequence, i.e.\ $\Omega\cap C\ne\emptyset$, contradicting
$\Omega\subseteq U_A\cup U_B$.
\end{proof}

\subsection{Drift when the equilibrium is not unique}
\label{app:global-drift}

\begin{remark}[Open gap: drift when the NE is not unique]
\label{rem:drift}
When the equilibrium is not unique, $\Omega$ is a connected subset of
$\Fix(\Phi)\cap\mathcal L$.  Slow drift of the \emph{state} along the
NE polytope directions is not excluded (at the strategy level the
trajectory is already pinned to the NE set by
Theorem~\ref{thm:global}(ii)). We know of no route that closes this gap
with the results of this paper. In particular,
Corollary~\ref{cor:capture} cannot apply.  Its hypothesis asks for a point
of $\Omega$ at which Theorem~\ref{thm:rate} holds, and by
Lemma~\ref{lem:kernel} a strictly complementary non-degenerate
equilibrium is automatically the unique one.  In a game with several
equilibria every equilibrium is therefore degenerate or fails strict
complementarity, so the frozen-face theory of Section~\ref{sec:local}
applies at no point of $\Omega$. The fixed-point set through such a
point is the positive cone over the NE polytope
(Proposition~\ref{prop:fixedpoint}), a polyhedral set with
positive-dimensional smooth strata but with corners at boundary
equilibria, and the natural tool is a
normally hyperbolic invariant manifold argument along a smooth stratum
(Section~\ref{app:scope-degenerate}). Generic matrix
games have a unique NE, in which case Theorem~\ref{thm:point} resolves
drift completely.
\end{remark}

\subsection{Proof of Corollary~\ref{cor:capture}}
\label{app:global-capture}

Theorem~\ref{thm:rate} supplies the following capture property: if
$z^\dagger\in\Fix(\Phi)$ corresponds to the strictly complementary
equilibrium of Assumption~\ref{asm:unique-strict} with
$\eta(z^\dagger)\sigmax(\Bmat)<1$ (so that
$\rho^*:=\max\{|\lambda|:\lambda\in\spec(D\Phi_S(z^\dagger))
\setminus\{1\}\}<1$ by (E7)), then there exist an open neighborhood
$U'\ni z^\dagger$, a constant $C>0$, and $\bar\rho\in(\rho^*,1)$ such
that: whenever $z^{(T)}\in U'$ for some $T\ge t_+$, there is a fixed point
$z_\infty\in\Fix(\Phi)$ with $\norm{z^{(t)}-z_\infty}\le
C\bar\rho^{\,t-T}$ for all $t\ge T$ (\emph{entry implies capture}).

\begin{proof}[Proof of Corollary~\ref{cor:capture}]
Since $z^\dagger\in\Omega$, the very definition of an $\omega$-limit
point provides a time $T^*\ge t_+$ with $z^{(T^*)}\in U'$: every
neighborhood of $z^\dagger$ is visited infinitely often. (Neither
$\dist(z^{(t)},\Omega)\to0$ nor any hypothesis on the remaining members
of $\Omega$ is needed here.) Capture then gives
$z^{(t)}\to z_\infty\in\Fix(\Phi)$ at the linear rate $\bar\rho$.

For the gap decay: the map $z\mapsto(X(z),Y(z))$ is locally Lipschitz on
a neighborhood of $z_\infty$, by Lemma~\ref{lem:absorb} (the map $\Phi$
and its intermediate maps are smooth on the frozen-support neighborhood)
or, more directly, by the $1$-Lipschitz property of the $\gamma$-shift
(Lemma~\ref{lem:gamma-cont}) and local Lipschitzness of each
composition; and $\gapf$ is globally Lipschitz in $(x,y)$ (with constant
of order $2\norm A$, up to norm-equivalence factors) and vanishes at
$(X,Y)(z_\infty)\in\mathrm{NE}$. Hence
$\gapf(x^{(t)},y^{(t)})\le L\,\norm{z^{(t)}-z_\infty}
\le C'\bar\rho^{\,t-T^*}$.

Finally, why no assumption on the rest of $\Omega$ is needed: the
argument used only that $z^\dagger$ itself is an $\omega$-limit point.
Once the trajectory enters $U'$ it is captured, and the remaining members
of $\Omega$ (including any unstable ray points) are never visited
again; a posteriori $\Omega=\{z_\infty\}$. Under
Assumptions~\ref{asm:unique-strict} and~\ref{asm:stepsize} the hypothesis
holds automatically: Theorem~\ref{thm:point} makes $\Omega$ the single
point $\bar z$, and $\bar z$ satisfies the hypotheses of
Theorem~\ref{thm:rate} with $m\ge2$ precisely when
$\eta(\bar z)\sigmax(\Bmat)<1$, i.e.\
$\sigmax(\Bmat)<\sqrt{N_{1,\infty}N_{2,\infty}/(\norm{x^*}_2\norm{y^*}_2)}$,
which is Assumption~\ref{asm:stepsize} evaluated at the saturation level;
for $m=1$ the capture is finite-step by Lemma~\ref{lem:absorb}.
\end{proof}

\subsection{Proofs for self-stabilization
(Proposition~\ref{prop:selfstab})}
\label{app:global-selfstab}

Throughout this subsection $z^\dagger=(c_1x^*,c_2y^*)$ is a fixed point
whose equilibrium $(x^*,y^*)$ is strictly complementary (with gap
$\delta>0$ in the notation of Lemma~\ref{lem:absorb}) and nondegenerate,
i.e.\ satisfies Assumption~\ref{asm:nondeg-strict} (by
Lemma~\ref{lem:kernel} it is then the unique equilibrium, but the
arguments below use only the face-local properties);
$S=S_1\times S_2$ is the support pair and
$\Ms=\{z:z_{S^c}=0\}$ the support subspace. We write
$\eta=\eta(z^\dagger)=1/\sqrt{c_1c_2}$ and
$q_k=\eta^2\sigma_k^2$ for the singular values $\sigma_k$ of $\Bmat$.

\begin{lemma}[Analytic ambient extension near the ray; strengthening of
Lemma~\ref{lem:absorb}]
\label{lem:analytic}
The point $z^\dagger$ lies on the boundary of the nonnegative state space
$\mathcal Z$ whenever some coordinate is off-support, so no
$\R^{n_1+n_2}$-open neighborhood of $z^\dagger$ is contained in
$\mathcal Z$.  There exist, however, an open neighborhood
$U\subset\R^{n_1+n_2}$ of $z^\dagger$ and a real-analytic map
$\hat\Phi:U\to\R^{n_1+n_2}$ with $\hat\Phi=\Phi$ on $U\cap\mathcal Z$
(we keep writing $\Phi$ for $\hat\Phi$ below), such that:
\begin{enumerate}[label=(\alph*),itemsep=1pt,topsep=2pt]
\item for $i\notin S_1$: $\Phi(z)_{1,i}\equiv0$ on $U$ (off-support
components vanish exactly in one step), and symmetrically for player $2$;
\item $D\Phi(z^\dagger)$ is block-triangular,
$\begin{pmatrix} J_S & * \\ 0 & 0\end{pmatrix}$ in the $S$/$S^c$
coordinate split, so
$\spec(D\Phi(z^\dagger))=\spec(J_S)\cup\{0\}$, the eigenvalue $0$ having
multiplicity $n_1+n_2-|S_1|-|S_2|$ and being absent when both players are
fully mixed ($S^c=\emptyset$, so $D\Phi(z^\dagger)=J_S$), where
$J_S=D\Phi_S(z^\dagger)=I+L+L^2$ as in Lemma~\ref{lem:jacobian};
\item $J_S$ is \emph{nonsingular}: its eigenvalues are
$\mu=1+\lambda+\lambda^2$ with $\lambda\in\spec(L)$ purely imaginary
($\lambda=\pm i\sqrt q$, $q\ge0$, by Lemma~\ref{lem:centering}), and
$|\mu|^2=(1-q)^2+q=1-q+q^2\ge3/4>0$ for all $q\ge0$ (the minimum
$3/4$ is attained at $q=1/2$).
\end{enumerate}
\end{lemma}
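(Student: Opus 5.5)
The plan is to build $\hat\Phi$ by taking the formulas of one round at the frozen clipping pattern of $z^\dagger$ and reading them as a map on a full ambient neighborhood, not only on $\mathcal Z$.

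\textbf{Construction of $\hat\Phi$.} Near $z^\dagger$ the denominators $\inner{\ones}{\rt_i}$ stay close to $c_i>0$, so the pre-iterate formulas $\rt_i/\inner{\ones}{\rt_i}$ are analytic. These agree with $\rt_i/\norm{\rt_i}_1$ on $\mathcal Z$. The sign pattern of the shifted vectors is taken from Steps~2--3 of the proof of Lemma~\ref{lem:absorb}. On a small ambient ball, which may now contain slightly negative off-support coordinates, the following bounds persist by continuity:
\begin{itemize}
\item $r_{1,i}\le-\delta/2$ for $i\notin S_1$;
\item $r_{1,i}\ge\kappa$ for $i\in S_1$;
\item $|\gamma_1-v^*|\le\iota$.
\end{itemize}
The one exception is the $\gamma$-equation for $z\notin\mathcal Z$, which needs care; I deal with it below. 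For the target norm I would use $\hat N_1^2:=\sum_i\rt_{1,i}^2$, which is analytic and positive near $z^\dagger$.

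Given these bounds, I define $\hat\gamma_1$ as the analytic solution of $\sum_{i\in S_1}(\rt_{1,i}+m_{1,i}-\gamma)^2=\hat N_1^2$. The implicit function theorem applies because the $\gamma$-derivative is $-2\sum_{S_1}r_{1,i}\ne0$. On $\mathcal Z\cap U$ the active set is exactly $S_1$, so uniqueness of the $\gamma$-shift gives $\hat\gamma_1=\gamma_1$.

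The remaining pieces are then defined as follows:
\begin{itemize}
\item $\hat x$ is $(r_1)_{S_1}$ normalized by its coordinate sum and zero-extended; this is analytic.
\item The update is defined coordinatewise: $\hat\Phi_{1,i}:=(r_1+g_1)_i$ for $i\in S_1$, and $\hat\Phi_{1,i}:=0$ for $i\notin S_1$.
\end{itemize}
This agrees with $\Phi$ on $U\cap\mathcal Z$ by Lemma~\ref{lem:absorb}(i) and Step~3. Item (a) then holds by construction.

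\textbf{Block structure (b).} Because the $S^c$ rows of $\hat\Phi$ vanish identically, the bottom rows of $D\Phi(z^\dagger)$ are zero. The $S$-$S$ block is the derivative of $\hat\Phi$ restricted to $\Ms$, which is $J_S=D\Phi_S(z^\dagger)=I+L+L^2$ by Lemma~\ref{lem:jacobian}. The spectrum of a block upper-triangular matrix is the union of the spectra of its diagonal blocks, so $\spec(D\Phi(z^\dagger))=\spec(J_S)\cup\{0\}$, with the eigenvalue $0$ counted $|S^c|$ times. When $S^c=\emptyset$ the zero block disappears and $D\Phi(z^\dagger)=J_S$.

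\textbf{Nonsingularity (c).} Lemma~\ref{lem:centering} gives $L^2=-\diag(\Bmat\Bmat^\top,\Bmat^\top\Bmat)/(c_1c_2)$. Hence every eigenvalue $\lambda$ of $L$ satisfies $\lambda^2=-q$ with $q\ge0$; the zero eigenvalues on $E$ are included as $q=0$. This is the same computation as Lemma~\ref{lem:blockdiag}(4). It follows that $\mu=1+\lambda+\lambda^2=1-q\pm i\sqrt q$, so
\[
|\mu|^2=(1-q)^2+q=1-q+q^2=(q-\tfrac12)^2+\tfrac34\ge\tfrac34.
\]

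\textbf{Main obstacle.} The delicate point is defining $\hat\gamma$ and $\hat N$ analytically off $\mathcal Z$. The true norm $\norm{\rt_1}_2$ is analytic wherever $\rt_1\ne0$, so $\hat N_1$ causes no trouble. The $\ell_1$ norm, however, is not analytic, and I would first replace it by the linear functional $\inner{\ones}{\cdot}$, which coincides with it on the nonnegative orthant. After that substitution, analyticity reduces to the implicit function theorem argument already given in Step~4 of Lemma~\ref{lem:absorb}, now applied on an ambient ball rather than only on $\Ms$.
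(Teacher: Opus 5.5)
Your proposal is correct and follows essentially the same route as the paper: replace the $\ell_1$ normalizations by $\ones^\top(\cdot)$, replace the $\gamma$-shift by the frozen-active-set quadratic with target $\rt_1^\top\rt_1$, and replace every clipping by the coordinate projection onto $S_i$. As in the paper, you then check agreement on $U\cap\mathcal Z$ via Lemma~\ref{lem:absorb}, read off (b) from the identically vanishing $S^c$ rows, and get (c) from $\lambda=\pm i\sqrt q$. There are two minor differences, and neither changes the argument. You identify $\hat\gamma_1=\gamma_1$ by showing that $\hat\gamma_1$ solves the true shift equation and invoking its uniqueness, whereas the paper shows that the true $\gamma_1$ solves the frozen quadratic. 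You also obtain nonsingularity directly from $|\mu|^2\ge 3/4$, without the paper's separate observation that $\lambda^2+\lambda+1=0$ has no purely imaginary root.
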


\begin{proof}
\emph{(a) and the extension.} Define $\hat\Phi$ by running one round of
Algorithm~\ref{alg:iregprm} with three substitutions, each of which
agrees with the original operation on feasible states near
$z^\dagger$: (1) the $\ell_1$ normalizations $\rt_i/\norm{\rt_i}_1$ and
$p_i/\norm{p_i}_1$ are replaced by $\rt_i/(\ones^\top\rt_i)$ and
$p_i/(\ones^\top p_i)$; (2) the $\gamma$-shift is replaced by the
solution of the quadratic $\sum_{j\in S_1}(v_j-\gamma)^2=\rt_1^\top\rt_1$
with the active set frozen at $S_1$; (3) every clipping $\pos{\cdot}$ is
replaced by the coordinate projection onto $S_i$ (in-support
coordinates kept, off-support coordinates set to $0$).  Each
substituted operation is real-analytic on a neighborhood of
$z^\dagger$: $\ones^\top\rt_1\to c_1>0$ and $\rt_1^\top\rt_1\to c_1^2\norm{x^*}_2^2>0$,
the frozen-active-set quadratic has positive discriminant and an
analytic root, and projections are linear.  Hence $\hat\Phi$ is a
composition of analytic maps on some $\R^{n_1+n_2}$-open $U\ni z^\dagger$.
It remains to check $\hat\Phi=\Phi$ on $U\cap\mathcal Z$, shrinking $U$
if necessary.  The proof of Lemma~\ref{lem:absorb} (under
Assumption~\ref{asm:nondeg-strict}) for the strict version of
Algorithm~\ref{alg:iregprm} gives, on feasible states near $z^\dagger$:
for off-support components, $r_{1,i}(z)\le-\delta/2$ and
$|g_{1,i}(z)|<\delta/4$, hence $r_{1,i}+g_{1,i}<-\delta/4<0$, so both
$(\pos{r_1})_i$ and $(\pos{r_1+g_1})_i$ equal $0$, as does the projection
in (3); for in-support components, $r_{1,j}(z)\to c_1x^*_j>0$ and
$g\to0$, so $r_{1,j}+g_{1,j}>0$ and clipping is the identity, again as
in (3).  Because a feasible $\rt_1$ near $z^\dagger$ is nonnegative,
$\ones^\top\rt_1=\norm{\rt_1}_1$, and likewise for $p_1$, which is
nonnegative and supported on $S_1$; so (1) agrees with the original
normalizations.  Finally the active set of the original $\gamma$-shift
is exactly $S_1$ on feasible states near $z^\dagger$ (in-support:
$v_j-\gamma=r_{1,j}>0$; off-support: $<-\delta/2$), so the original
$\gamma$ solves the same frozen quadratic as (2), and
$\rt_1^\top\rt_1=N_1(z)^2$.  Thus $\hat\Phi=\Phi$ on $U\cap\mathcal Z$,
and (a) holds for $\hat\Phi$ on all of $U$ by construction.  Every
$\Phi$-orbit that stays in $U\cap\mathcal Z$ is a $\hat\Phi$-orbit, which
is the only property the manifold theorem below needs.

\emph{(b).} By (a), the $S^c$-rows of $D\Phi$ vanish identically on $U$;
$\Ms=\{z_{S^c}=0\}$ is a linear subspace, and the $S$-block is the
derivative of the restriction $\Phi_S=\Phi|_{\Ms}$, i.e.\ $J_S$.

\emph{(c).} $\mu=0$ would require $\lambda^2+\lambda+1=0$, i.e.\
$\lambda=e^{\pm2\pi i/3}$, of modulus $1$ and not purely imaginary,
contradicting $\spec(L)\subset i\R$ (Lemma~\ref{lem:centering}). For the
quantitative bound, $\lambda=i\sqrt q$ gives $\mu=1-q+i\sqrt q$ and
$|\mu|^2=(1-q)^2+q=1-q+q^2$, a quadratic in $q$ minimized at $q=1/2$
with value $3/4$.
\end{proof}

\begin{remark}[Applicability of the center-stable manifold theorem to
the non-invertible map $\Phi$]
\label{rem:csm-applicability}
Near the ray, $\det D\Phi=0$ whenever some coordinate is off-support
(Lemma~\ref{lem:analytic}(b)), so $\Phi$ is then
not invertible, whereas textbook statements of the (center-)stable
manifold theorem, including the form invoked by
\citet{lee2019first}, are
typically phrased for diffeomorphisms. The version we need is the local
center-stable manifold theorem for $C^1$
\emph{maps} at a fixed point, with no invertibility hypothesis: a $C^1$
manifold $W^{cs}_{loc}$, tangent to the center-stable eigenspace, exists
and contains \emph{every orbit whose forward iterates remain in a
sufficiently small neighborhood of the fixed point}. This map version is
supplied by the graph-transform construction
\citep{hirsch1977invariant}: both the construction of the
pseudo-stable manifold and the trapping inclusion use forward iteration
only and never invert $\Phi$. The eigenvalue $0$ of $D\Phi$ sits on the
center-stable side of the splitting, which is exactly where the graph
transform places contracted directions.
\end{remark}

\begin{proposition}[Unstable ray points have thin basins;
$\Ms$-local version]
\label{prop:cs-basin}
Let $z^\dagger$ be as above with
$q_{\max}:=\eta^2\sigmax(\Bmat)^2>1$, and set
$d_u:=2\,\#\{k:\eta\sigma_k>1\}\ge2$ (each violating singular value
contributes a conjugate eigenvalue pair with
$|\mu|^2=1-q+q^2>1$ when $q>1$). Then:
\begin{enumerate}[label=(\arabic*),itemsep=2pt,topsep=2pt]
\item \emph{(Full space: convergent orbits lie on a center-stable
manifold.)} There exist a neighborhood $B_\dagger\ni z^\dagger$ and a
$C^1$ local manifold $W^{cs}_{loc}(z^\dagger)$ with
$\dim W^{cs}_{loc}=d-d_u$ such that
\[
\{z\in B_\dagger:\ \Phi^t(z)\in B_\dagger\ \forall t\ge0\}
\;\subseteq\;W^{cs}_{loc}(z^\dagger).
\]
In particular the tail of every trajectory converging to $z^\dagger$ is
contained in $W^{cs}_{loc}(z^\dagger)$.
\item \emph{(Inside $\Ms$: the basin has measure zero.)} $\Ms$ is
$\Phi$-invariant near $z^\dagger$ (Lemma~\ref{lem:analytic}(a)), $J_S$ is
nonsingular (Lemma~\ref{lem:analytic}(c)), and $z\mapsto D\Phi_S(z)$ is
continuous, so there is an $\Ms$-open neighborhood $V\ni z^\dagger$ on
which $\Phi_S|_V$ is a local diffeomorphism. The local basin
\[
\bigl\{z\in V:\ \Phi_S^t(z)\in V\ \forall t,\
\Phi_S^t(z)\to z^\dagger\bigr\}
\;\subseteq\;
\bigcup_{T\ge0}\Phi_S^{-T}\bigl(W^{cs}_{loc}(z^\dagger)\cap\Ms\bigr)
\]
is contained in a countable union of preimages of local $C^1$ manifold
patches of dimension $\dim\Ms-d_u<\dim\Ms$; local diffeomorphisms pull
back Lebesgue-null sets to Lebesgue-null sets, so this basin has Lebesgue
measure zero in $\Ms$. The same holds at every individual point of the
step-size-violating ray set
$R_{>1}:=\{(c_1x^*,c_2y^*):c_1,c_2>0,\ \eta(c_1,c_2)\,\sigmax(\Bmat)>1\}
\subset\Ms$. Moreover, for every \emph{compact} subsegment
$R_K\subset R_{>1}$ there are an $\Ms$-open neighborhood
$V_K\supset R_K$ and a Lebesgue-null set $E_K\subset\Ms$ such that any
$\Phi_S$-orbit that remains in $V_K$ forever and converges to a point of
$R_K$ has its tail contained in $E_K$; in particular the local basin
\[
\bigl\{z\in V_K:\ \Phi_S^t(z)\in V_K\ \forall t\ge0,\
\lim_t\Phi_S^t(z)\in R_K\bigr\}
\]
is Lebesgue-null in $\Ms$. The noncompact ends of the ray set
($c_i\to0$, $c_i\to\infty$, and the approach to the boundary
$\eta\,\sigmax(\Bmat)=1$) are \emph{excluded} from the uniform
statement: the neighborhood radii below degenerate there, and the source
notes leave those ends open.  It does \emph{not} bound the set of all $\Ms$-states whose
orbits eventually converge to the segment.  Pulling the null set
back along the portion of an orbit prior to its entry into $V_K$ would
require $\Phi_S$ to be a local diffeomorphism away from the ray, and that
is not established (the clipping pattern is not frozen there).  This is
the obstruction of Remark~\ref{rem:fullspace}, acting inside $\Ms$.
As discussed after Proposition~\ref{prop:selfstab}, the fact that a
convergent trajectory's tail lies in the null set is true by
construction and has no excluding force for that trajectory.
\end{enumerate}
\end{proposition}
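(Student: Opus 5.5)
Part (1) will come from the center-stable manifold theorem for $C^1$ maps applied to the real-analytic extension $\hat\Phi$ of Lemma~\ref{lem:analytic}. By Lemma~\ref{lem:analytic}(b), $\spec(D\Phi(z^\dagger))=\spec(J_S)\cup\{0\}$. By Lemma~\ref{lem:blockdiag}(4) and Lemma~\ref{lem:analytic}(c), $\spec(J_S)$ consists of the double unit eigenvalue on $E$ together with the conjugate pairs $\mu_k^\pm$, where $|\mu_k^\pm|^2=1-q_k+q_k^2$. A pair lies strictly outside the unit circle exactly when $q_k>1$. So the unstable eigenspace has real dimension $d_u=2\#\{k:\eta\sigma_k>1\}$, and $q_{\max}>1$ gives $d_u\ge2$. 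The remaining eigenvalues have modulus at most one: the zero eigenvalue from the off-support block, the unit eigenvalues, and the pairs with $q_k\le1$. Their generalized eigenspaces make up the center-stable subspace, of dimension $d-d_u$.

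I will then invoke the graph-transform map version of the local center-stable manifold theorem \citep{hirsch1977invariant}, in the form justified in Remark~\ref{rem:csm-applicability}. It needs only $C^1$ regularity and forward iteration, so the non-invertibility coming from the eigenvalue $0$ causes no trouble. It yields $B_\dagger$ and a $C^1$ manifold $W^{cs}_{loc}$ tangent to that subspace which contains every orbit staying in $B_\dagger$. A $\Phi$-orbit in $U\cap\mathcal Z$ is also a $\hat\Phi$-orbit, so the inclusion transfers to the original map. Any trajectory converging to $z^\dagger$ eventually stays in $B_\dagger$, so its tail lies in $W^{cs}_{loc}$.

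For part (2) I will restrict to $\Ms$. By Lemma~\ref{lem:analytic}(a), $\Ms$ is invariant near $z^\dagger$. There are two ways to get the restricted manifold. One is to apply the same theorem directly to $\Phi_S$, whose linearization $J_S$ has the same unstable pairs; this gives a $C^1$ manifold in $\Ms$ of dimension $\dim\Ms-d_u$. The other is to use $W^{cs}_{loc}\cap\Ms$ and check transversality. I will take the first route because it is cleaner.

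Nonsingularity of $J_S$ (Lemma~\ref{lem:analytic}(c)) and continuity of $D\Phi_S$ give a neighborhood $V$ on which $\Phi_S$ is a local diffeomorphism; I shrink $V$ inside $B_\dagger\cap\Ms$. Now take a point of the local basin. Its orbit stays in $V$ and converges to $z^\dagger$, so from some time $T$ on it lies in the manifold. The point itself therefore lies in $\Phi_S^{-T}(W^{cs}_{loc}\cap\Ms)$, taken within $V$. Each such preimage is a countable union of preimages under local diffeomorphisms of a null $C^1$ patch. Since local diffeomorphisms pull back Lebesgue-null sets to Lebesgue-null sets, and a countable union over $T$ of null sets is null, the basin has measure zero. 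The same argument applies at every point of $R_{>1}$, since each such point satisfies the hypotheses.

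For the compact subsegment $R_K$, I will cover it by finitely many neighborhoods $V_{z}$, $z\in R_K$, with $V_K$ their union. I take $E_K$ to be the union over these finitely many centers and over all $T$ of $\Phi_S^{-T}$ of the local manifold patches. An orbit staying in $V_K$ and converging to some $p\in R_K$ eventually stays in the patch neighborhood of $p$. The catch is that $p$ need not be one of the finitely many centers.

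I expect this uniformity to be the main obstacle. The plan is to prove that the center-stable construction has radii and Lipschitz graph constants bounded uniformly over a neighborhood of the compact segment. This holds because the spectral gap, namely $q_k$ bounded away from $1$, and the $C^2$ bounds of $\Phi$ vary continuously along the ray, so a single $B$-radius works for every $p\in R_K$.

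Countability still needs care, since $R_K$ is uncountable. I will handle it by using the center-stable manifold of the whole invariant segment. Concretely, I will treat $R_K$ as part of a normally hyperbolic family: the fixed-point ray has tangent space $E$ inside the center directions. With that, a single $C^1$ manifold of dimension $\dim\Ms-d_u$ contains all orbits that converge to $R_K$ while staying nearby. This is what makes $E_K$ a countable union of null sets.
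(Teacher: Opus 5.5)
Part (1) and the single-point half of part (2) are fine and match the paper. Applying the manifold theorem directly to $\Phi_S$ instead of intersecting $W^{cs}_{loc}$ with $\Ms$ is harmless. The intersection route is transversal anyway, because the generalized eigenvectors of $D\Phi(z^\dagger)$ for nonzero eigenvalues lie in $\Ms$.

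The gap is in the uniform statement on a compact $R_K$. Your uniformity premise, that every $q_k$ stays bounded away from $1$ along the ray, is false in general. On $R_K$ only the top mode is controlled, $q_{\max}\ge 1+\kappa$. A lower mode can cross $q_k=1$ inside $R_{>1}$: if $\sigma_2<\sigma_1$, the level $\eta=1/\sigma_2$ satisfies $\eta\sigma_1>1$. There $\mu_2=\pm i$ lies on the unit circle and $d_u$ jumps. So there is in general no uniform gap at the unit circle, and the fixed-point family need not be normally hyperbolic on $R_K$. Your single whole-segment manifold of dimension $\dim\Ms-d_u$ is therefore not well defined. The paper instead splits at a pseudo-hyperbolic radius $r_*\in\bigl(1,\sqrt{1+\kappa+\kappa^2}\bigr)$ that avoids the other moduli. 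It fixes $r_*$ on each piece of a finite cover of $R_K$ and gets uniform radii $\rho$. The resulting pseudo-stable patches $W^{cs,r_*}_{loc}(w)$ have codimension $d_{su}\ge 2$, possibly smaller than the pointwise $d_u$, which is all measure zero needs.

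Once the radii are uniform, you do not need a manifold for the whole segment; the countability worry has a one-line resolution that you missed. The trapping property of $W^{cs,r_*}_{loc}(w)$ concerns orbits that stay in the ball $B(w,2\rho)$ forever; they need not converge to $w$. Take a finite $\rho/2$-net $w_1,\dots,w_J$ of $R_K$. An orbit converging to $p\in R_K$ eventually stays in $B(p,\rho/2)\subseteq B(w_j,\rho)$ for some $j$. Its tail therefore lies in $W^{cs,r_*}_{loc}(w_j)\cap\Ms$. Hence $E_K=\bigcup_{j\le J}\bigcup_{T\ge0}(\Phi_S|_{V_K})^{-T}\bigl(W^{cs,r_*}_{loc}(w_j)\cap\Ms\bigr)$ is a countable union of null sets. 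You identified the catch that the limit need not be a center, but you replaced this observation with a normal-hyperbolicity argument whose hypothesis fails at exactly the points described above.
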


\begin{proof}
\emph{Part 1.} By Lemma~\ref{lem:analytic}, the ambient extension
$\hat\Phi$ (written $\Phi$) is $C^\omega\subset
C^1$ on an $\R^{n_1+n_2}$-open neighborhood of $z^\dagger$, agrees with
the algorithm on feasible states there, and
$\spec(D\Phi(z^\dagger))=\spec(J_S)\cup\{0\}$, the eigenvalue $0$ absent
when $S^c=\emptyset$. Split the spectrum into
the center-stable part $\{|\lambda|\le1\}$ (which contains the eigenvalue
$0$ of the $S^c$ directions, if any, and all eigenvalue pairs with $q_k\le1$) and
the unstable part $\{|\lambda|>1\}$ (the $d_u$ eigenvalues coming from
$q_k>1$, since $|\mu|^2=1-q+q^2>1\iff q>1$). The local
center-stable manifold theorem for $C^1$ maps, in the
non-invertible form discussed in Remark~\ref{rem:csm-applicability}
\citep{hirsch1977invariant}, in
the application form used for saddle avoidance of first-order methods
\citep[Thm.~2]{lee2019first}
\citep[cf.][]{daskalakis2018limit}, yields $B_\dagger$ and
$W^{cs}_{loc}(z^\dagger)$ of dimension $d-d_u$ containing all orbits that
remain in $B_\dagger$ forever. A trajectory converging to $z^\dagger$
eventually remains in $B_\dagger$, so its tail lies in
$W^{cs}_{loc}(z^\dagger)$. The dimension count: eigenvalues of $J_S$ with
$|\mu|>1$ arise exactly from the conjugate pairs with $q_k>1$ by (E7),
and the $S^c$ directions (eigenvalue $0$) count as stable; thus
$\dim W^{cs}\le d-2<d$.

\emph{Part 2, single point.} As stated: invariance of $\Ms$ and one-step
exact
vanishing of off-support coordinates are Lemma~\ref{lem:analytic}(a);
nonsingularity of $J_S$ plus continuity of $z\mapsto D\Phi_S(z)$ give,
via the inverse function theorem, an $\Ms$-neighborhood $V$ on which
$\Phi_S$ is a local diffeomorphism. Any point of the local basin reaches
$W^{cs}_{loc}\cap\Ms$ (a $C^1$ patch of dimension
$\dim\Ms-d_u<\dim\Ms$, hence Lebesgue-null in $\Ms$) after some finite
number $T$ of steps while remaining in $V$; therefore the basin is
covered by $\bigcup_{T\ge0}\Phi_S^{-T}(W^{cs}_{loc}\cap\Ms)$. Each term
is the preimage of a null set under a finite composition of local
diffeomorphisms, hence null; a countable union of null sets is null.

\emph{Part 2, uniform version on a compact subsegment $R_K$.} Two points
need verification beyond a bare ``countable dense net'' argument: (i) the
manifold and diffeomorphism constructions must
run in neighborhoods whose size is bounded below \emph{uniformly} along
$R_K$; (ii) an orbit may converge to a limit $z'\in R_K$ that is not a
net point, and one must say why its tail still lands in the null set
attached to a \emph{nearby} net point.

(i) \emph{Uniform radii.} Since $R_K$ is compact and contained in the
open set $\{w:\eta(w)\,\sigmax(\Bmat)>1\}$, there is $\kappa>0$ with
$q_{\max}(w)\ge1+\kappa$ on $R_K$, so the top conjugate eigenvalue pair
of $J_S(w)$ satisfies $|\mu|^2=1-q_{\max}+q_{\max}^2\ge1+\kappa+\kappa^2$
uniformly; at least this pair is uniformly expanding on all of $R_K$.
For each $w\in R_K$ choose a splitting radius
$r_*(w)\in\bigl(1,\sqrt{1+\kappa+\kappa^2}\,\bigr)$ avoiding the
finitely many moduli in $\spec(D\Phi(w))$; by continuity of the spectrum
the same $r_*(w)$ keeps a positive spectral gap on a relatively open
piece of $R_K$ around $w$, and compactness extracts a finite cover of
$R_K$ by such pieces, each with a fixed splitting radius and a uniform
gap. On each piece, the graph-transform construction of the local
pseudo-stable manifold for $C^1$ maps (Remark~\ref{rem:csm-applicability};
no invertibility needed) produces, at every ray point $w$, a $C^1$ patch
$W^{cs,r_*}_{loc}(w)$ of dimension $d-d_{su}(w)\le d-2$ (where
$d_{su}(w)\ge2$ counts the eigenvalues with $|\mu|>r_*$) containing
every orbit that stays in the ball $B(w,2\rho_w)$ forever, since bounded
orbits grow slower than any geometric rate $r_*^t>1$. The admissible
radius $\rho_w$ depends only on the spectral gap at $r_*$, the norms of
the associated spectral projections, and the modulus of continuity of
$D\Phi$ on a fixed compact neighborhood of $R_K$ (on which
Lemma~\ref{lem:analytic} applies with uniform constants: the
complementarity margins in its proof are continuous in $w$ and positive,
hence uniformly bounded below on $R_K$).  All of these are continuous in
$w$, hence uniformly controlled on the compact $R_K$. Likewise
$|\det D\Phi_S|\ge(3/4)^{\dim\Ms/2}>0$ on $R_K$ by
Lemma~\ref{lem:analytic}(c), so by continuity there is a uniform
$\rho'>0$ such that $\Phi_S$ is a local diffeomorphism on the
$\rho'$-neighborhood of $R_K$ in $\Ms$. Set
$\rho:=\min(\inf_{w\in R_K}\rho_w,\rho')>0$ and let $V_K$ be the
$\rho$-neighborhood of $R_K$ in $\Ms$. The
uniform codimension delivered here is $d_{su}\ge2$, which may be smaller
than the pointwise $d_u(w)$ of Part 1; codimension $\ge2$ is all the
measure-zero conclusion needs. All uniform bounds degenerate at the
noncompact ends of $R_{>1}$ (as $c_i\to0$ or $\infty$ the state
approaches the boundary of $\mathcal Z$ or the map data blow up, and as
$\eta\sigmax\to1$ the gap $\kappa$ vanishes), which is why the statement
is restricted to compact $R_K$.

(ii) \emph{Convergence to a non-net point.} By compactness choose a
finite net $w_1,\dots,w_J\in R_K$ whose $\rho/2$-balls cover $R_K$.
Suppose a $\Phi_S$-orbit $\{\Phi_S^t(z)\}$ remains in $V_K$ and
converges to some $z'\in R_K$, and pick $j$ with
$\norm{z'-w_j}<\rho/2$. Since the orbit converges to $z'$, there is a
finite $T$ from which on it stays in $B(z',\rho/2)\subseteq
B(w_j,\rho)\subseteq B(w_j,2\rho_{w_j})$. The inclusion of (i) at the
\emph{net point} $w_j$ requires only that the orbit remain in that ball
forever, not that it converge to $w_j$; hence
$\Phi_S^t(z)\in W^{cs,r_*}_{loc}(w_j)\cap\Ms$ for all $t\ge T$, a $C^1$
patch of dimension $\le\dim\Ms-2$, Lebesgue-null in $\Ms$. Define
\[
E_K\;:=\;\bigcup_{j\le J}\ \bigcup_{T\ge0}
\bigl(\Phi_S|_{V_K}\bigr)^{-T}
\bigl(W^{cs,r_*}_{loc}(w_j)\cap\Ms\bigr).
\]
The tail of the orbit lies in $E_K$ (take $T=0$ terms), and the starting
point $z$ lies in $E_K$ as well, since the whole orbit up to time $T$
stays in $V_K$, where $\Phi_S$ has everywhere invertible derivative, so
each preimage is taken under finite compositions of local
diffeomorphisms and is null. $E_K$ is a countable union of null sets,
hence Lebesgue-null in $\Ms$, and the local basin of $R_K$ is contained
in $E_K$.
\end{proof}

\begin{proof}[Proof of Proposition~\ref{prop:selfstab}]
Parts (a) and (b) are Proposition~\ref{prop:cs-basin} parts 1 and 2,
applied at $z^\dagger$ (single-point statement) and at a compact
subsegment through $z^\dagger$ (uniform statement).
\end{proof}

\begin{remark}[Trapped tails lie in the null set by construction]
\label{rem:trapped-tail}
Via Lemma~\ref{lem:absorb}, a trajectory is absorbed into $\Ms$ in one step
near a ray, so the tail of a trajectory that converges to a
step-size-violating point $z_\infty\in\Fix(\Phi)$ remains in the
neighborhood $V$ of Proposition~\ref{prop:cs-basin}(2), hence in the
exceptional Lebesgue-null set.  The measure-zero statement therefore
excludes no individual orbit; its force is the genericity heuristic of
Remark~\ref{rem:init}.
\end{remark}

\begin{remark}[Open gap: no full-space measure-zero statement]
\label{rem:fullspace}
Near the ray, whenever some coordinate is off-support, $\Phi$ maps the
$S^c$ coordinates to exactly $0$
(Lemma~\ref{lem:analytic}(a)), so $\det D\Phi=0$ and $\Phi$ is
\emph{not invertible}; preimages of Lebesgue-null sets under
non-invertible maps can have positive measure. The
``almost every initial point avoids strict saddles'' argument of
\citet{lee2019first} requires
$\det D\Phi\ne0$ everywhere and therefore does \emph{not} apply to our
map on the full state space. This is why
Proposition~\ref{prop:cs-basin}(2) is asserted only inside $\Ms$ (when
both players are fully mixed, $\Ms$ is the whole state space and the
statement is already full-space). A
full-space version would need a different argument: for instance,
absolute continuity properties of the push-forward of Lebesgue measure
onto a neighborhood of $\Ms$ after one step, or a direct analysis of the
preimage structure of $\Phi$ away from equilibrium. We leave this open.
\end{remark}

\begin{remark}[Open gap: fixed initialization]
\label{rem:init}
The algorithm's standard initialization $\rt^{(1)}=\mathbf 0$ is a
single deterministic initial condition, and the genericity statement of
Proposition~\ref{prop:cs-basin}(2) has no direct force for it.  Its scope
covers perturbed initial
conditions or mid-trajectory states in $\Ms$: there the exceptional set is
Lebesgue-null (of codimension $\ge2$), and landing of one specific
trajectory exactly on such a set is a non-generic coincidence.
Any statement about the $\rt^{(1)}=\mathbf 0$ trajectory itself remains
heuristic in this sense (Remark~\ref{rem:trapped-tail}).
\end{remark}

\begin{remark}[Open gap: the boundary $q=1$ is non-hyperbolic]
\label{rem:boundary}
When $\eta\sigmax(\Bmat)=1$, i.e.\ $q_{\max}=1$, the corresponding
eigenvalues are $\mu=\pm i$ (fourth roots of unity): the fixed point is
non-hyperbolic, higher-order terms on the center manifold decide between
weak stability, weak rotation, and weak instability, and the
linearization is inconclusive. Proposition~\ref{prop:cs-basin} does not
exclude convergence to such boundary points at a sublinear rate. We have
not observed this in the numerical suite (the spectral data allow
checking the distribution of $q_{\max}$ against $1$; see
Figure~\ref{fig:etasigma}), but it remains open.
\end{remark}


\section{Proofs for the Ratio Certificate}\label{app:certificate}
\subsection{Proofs}
\label{sec:cert:proofs}

This subsection proves the increment identity, the two-sided increment
bound and the certificate itself.  It opens with the precise form of the
frozen local regime, described informally in Section~\ref{sec:certificate}: the
list \ref{p:square}--\ref{p:increment} below fixes the constants and the
coordinates that the rest of the section uses.  The gap--distance
sharpness of Lemma~\ref{lem:gap-sharp} rests on a separate first-order
analysis of the gap map, and its proof, together with the Lipschitz upper
bound it builds on, is in Appendix~\ref{app:certificate}.

\subsubsection{Standing conventions: the frozen local regime}
\label{app:cert-regime}

Throughout this subsection and Appendix~\ref{app:certificate},
Assumptions~\ref{asm:unique-strict}
and~\ref{asm:stepsize} are in force and we work in the frozen local
regime constructed in the proof of Theorem~\ref{thm:rate}. We collect
the facts imported from there; none is reproved here.

\begin{enumerate}[label=(P\arabic*),leftmargin=3em]
\item\label{p:square}
Under Assumption~\ref{asm:unique-strict} the equilibrium support is
square, $|S_1|=|S_2|=:m$, and $m\ge2$; the pure saddle $m=1$ is set aside,
since there $\Bmat=0$, the transverse space $W$ introduced in \ref{p:decomp}
below is trivial, and convergence is exact in finitely many steps, so nothing
below is needed. The value-centered support matrix
$\Bmat=\Atil-v^*\mathbf{1}\mathbf{1}^\top$ of
Lemma~\ref{lem:centering} satisfies $\Bmat\psi=0$, $\Bmat^\top\xi=0$,
with $\ker\Bmat=\mathrm{span}\{\psi\}$ and
$\ker\Bmat^\top=\mathrm{span}\{\xi\}$; its nonzero singular values are
$\sigmax=\sigma_1\ge\cdots\ge\sigma_{m-1}=\sigmin^{+}>0$, with
singular vectors $\Bmat v_k=\sigma_k u_k$, $\Bmat^\top u_k=\sigma_k
v_k$.
\item\label{p:decomp}
In the support coordinates of $\Ms$, the state space decomposes
orthogonally as $E\oplus W$, where
$E=\mathrm{span}\{(\xi,0),(0,\psi)\}$ contains the fixed-point cone
$E_+$ (Proposition~\ref{prop:fixedpoint}) and
$W=\bigoplus_k\mathrm{span}\{(u_k,0),(0,v_k)\}$. Orthogonality follows
from $\Bmat^\top\xi=0$ and $\Bmat\psi=0$. We write $P_E,P_W$ for the
orthogonal projections and $w=(w_1,w_2)\in W$, so that
$w_1\in\mathrm{range}(\Bmat)$, $w_2\in\mathrm{range}(\Bmat^\top)$ and
$\norm{w}^2=\norm{w_1}^2+\norm{w_2}^2$.
\item\label{p:regime}
There are radii $r_1,\delta_0>0$, fixed in the proof of
Theorem~\ref{thm:rate}, such that the regime is the set of states
$z=z_f+w$ with $z_f=(a\xi,b\psi)\in E_+$,
$\norm{z_f-z^*}\le r_1$ (hence $a\in[c_1/2,3c_1/2]$,
$b\in[c_2/2,3c_2/2]$), and $\norm{w}\le\delta_0$. On the regime the
effective step size $\eta(a,b)=1/\sqrt{ab}$ ranges in
$[\eta_-,\eta_+]$ with $0<\eta_-\le\eta_+<\infty$.
\item\label{p:frozen}
On the regime the clipping pattern is frozen
(Lemma~\ref{lem:absorb}): for each player, on-support coordinates of
both $r^{(t)}$ and $r^{(t)}+g^{(t)}$ are strictly positive (no
clipping at either the prediction or the update stage), while every
off-support coordinate $j\notin S_i$ satisfies $r^{(t)}_{i,j}\le
-\delta/2$ and $|g^{(t)}_{i,j}|\le\delta/4$, hence
$r^{(t)}_{i,j}+g^{(t)}_{i,j}\le-\delta/4<0$ (clipped exactly to $0$),
where $\delta>0$ is the strict-complementarity margin of
Assumption~\ref{asm:unique-strict}. Consequently the one-round map
restricted to $\Ms$ is a composition of analytic maps on the regime.
\item\label{p:compact}
By analyticity and compactness of the closed regime, the maps
$z\mapsto(x(z),y(z))$ (main-iterate strategies) and
$z\mapsto g_i(z)$ are $C^2$ with uniform first- and second-derivative
bounds; the second-order constants below ($C_q$, $K_g$, $C_3$) are
finite and uniform on the regime. They are \emph{not} exhibited in
closed form. This affects only the size of the thresholds
$\delta_2,\delta_3$, and not the form of any constant that enters
$\kappa$.
\item\label{p:increment}
The exact increment decomposition of \citet{zhang2025scale} reads: for each player $i$,
\[
  \Ndot_i(t)=\bigl\lVert g^{(t)}_i\bigr\rVert^{2}-D_i(t),
  \qquad
  D_i(t)=\bigl\lVert \pos{r^{(t)}_i}+g^{(t)}_i\bigr\rVert^{2}
        -\bigl\lVert \pos{r^{(t)}_i+g^{(t)}_i}\bigr\rVert^{2}\;\ge\;0 .
\]
\end{enumerate}

Since $E$ is a linear subspace, $P_E$ is the (orthogonal) nearest-point
map, and on the regime $P_Ez$ lies in the \emph{interior} of the cone
$E_+$ by \ref{p:regime}; hence
\begin{equation}\label{eq:dist-w}
  \dist(z,E)=\dist(z,\Fix)=\norm{w}.
\end{equation}

\subsubsection{Proof of Lemma~\ref{lem:ndot-identity}}

\begin{lemma}[Frozen-regime increment identity]
\label{lem:ndot-identity}
In the frozen local regime, for each player $i$ and every step $t$,
\[
\Ndot_i(t)\;=\;\sum_{j\in S_i}\bigl(g^{(t)}_{i,j}\bigr)^{2}
\;=\;\bigl\lVert g^{(t)}_i\big|_{S_i}\bigr\rVert^{2},
\]
where $g^{(t)}_i$ is player $i$'s update vector; the increment equals the
on-support update energy, with no error term.
\end{lemma}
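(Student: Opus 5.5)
The plan is to start from the exact increment decomposition \ref{p:increment},
$\Ndot_i(t)=\norm{g_i^{(t)}}^2-D_i(t)$, and compute the defect $D_i(t)$ coordinatewise, using the frozen clipping pattern \ref{p:frozen}. Following the proof of Lemma~\ref{lem:clip-rigidity}, write
\[
D_i(t)=\sum_j d_j,\qquad d_j=\bigl(\pos{r_{i,j}}+g_{i,j}\bigr)^2-\pos{r_{i,j}+g_{i,j}}^2 .
\]
The goal is then to show that $d_j=0$ for $j\in S_i$ and $d_j=g_{i,j}^2$ for $j\notin S_i$.

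The two cases come straight from \ref{p:frozen}. For $j\in S_i$, both $r_{i,j}^{(t)}>0$ and $r_{i,j}^{(t)}+g_{i,j}^{(t)}>0$. Hence $\pos{r_{i,j}}+g_{i,j}=r_{i,j}+g_{i,j}=\pos{r_{i,j}+g_{i,j}}$, and $d_j=0$; this is the first case of the earlier case analysis. For $j\notin S_i$, we have $r_{i,j}\le-\delta/2<0$, so $\pos{r_{i,j}}=0$. We also have $r_{i,j}+g_{i,j}\le-\delta/4<0$, so the clipped update coordinate is $0$. Therefore $d_j=g_{i,j}^2-0=g_{i,j}^2$, which is the third case with $r_j+g_j\le0$.

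Summing the two cases gives $D_i(t)=\sum_{j\notin S_i}g_{i,j}^2$. Substituting into \ref{p:increment} yields
\[
\Ndot_i(t)=\norm{g_i}^2-\sum_{j\notin S_i}g_{i,j}^2=\sum_{j\in S_i}g_{i,j}^2 .
\]
Note that \ref{p:increment} itself is just (F3). It follows from the orthogonality $\inner{g_i}{\pos{r_i}}=0$ (eq.~\eqref{eq:orth}) together with the norm preservation $\norm{\pos{r_i}}=\Nt_i(t)$ of the $\gamma$-shift. If needed, I would re-derive it in one line: $\norm{\pos{r}+g}^2=\norm{\pos r}^2+\norm g^2$.

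The only point that needs care is justifying the sign conditions at every step $t$ of the regime, including off-support coordinates of states not exactly on $\Ms$. In the regime the state lies on $\Ms$ after the absorbing step, and \ref{p:frozen} is stated for all regime states, so this is imported rather than reproved. A useful cross-check is that off support $\rt'_{i,j}=0=\rt_{i,j}$ while the on-support coordinates update without clipping. Then $\Nt_i(t+1)^2=\norm{r_i|_{S_i}+g_i|_{S_i}}^2$, and orthogonality on $S_i$ (where $\pos{r_i}=r_i|_{S_i}$) gives the identity directly, with no error term. I expect no real obstacle beyond this bookkeeping.
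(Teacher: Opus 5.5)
Your proposal is correct and matches the paper's proof: both start from the decomposition \ref{p:increment} and evaluate the slack $D_i(t)$ coordinate by coordinate under the frozen pattern \ref{p:frozen}, getting zero on $S_i$ and $g_{i,j}^2$ off $S_i$. Your closing cross-check, which applies orthogonality on $S_i$ where $\pos{r_i}=r_i|_{S_i}$, is a valid one-line re-derivation of the same identity rather than a different route.
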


\begin{proof}[Proof of Lemma~\ref{lem:ndot-identity}]
Fix a player $i$ and drop the index. Start from the decomposition
\ref{p:increment}. We compute the slack $D(t)$ componentwise under the
frozen pattern \ref{p:frozen}.

\emph{On-support coordinates $j\in S_i$.} Here $r_j>0$ (no clipping in
the prediction step) and $r_j+g_j>0$, so
$(\pos{r}+g)_j=r_j+g_j=(\pos{r+g})_j$: the two squares agree and the
contribution to $D(t)$ cancels exactly.

\emph{Off-support coordinates $j\notin S_i$.} Here $r_j\le-\delta/2<0$
so $(\pos{r})_j=0$ and $(\pos{r}+g)_j=g_j$; and
$r_j+g_j\le-\delta/4<0$ so $(\pos{r+g})_j=0$. The contribution to
$D(t)$ is therefore $g_j^2-0=g_j^2$.

Summing,
$D(t)=\sum_{j\notin S_i}g_j^{2}$, and hence
\[
  \Ndot_i(t)=\norm{g}^{2}-D(t)
  =\sum_{j\in S_i}g_j^{2},
\]
which is the claim.
\end{proof}

\begin{remark}[Numerical verification]\label{rem:n1-verify}
The identity is exact, and this exactness is a hard prediction of the
frozen-pattern analysis.  Our numerical suite checks only its visible
consequence: the exact-zero pattern of $\rt$ freezes in finite time on
$99.5\%$ of instances (Section~\ref{sec:experiments}); the full clipping
pattern behind the identity was not logged, so the on-support energy is
used as an empirical proxy for $\Ndottot$ (Appendix~\ref{app:experiments}).
The pattern-frozen identity is what distinguishes this lemma from the anchor
decomposition \ref{p:increment}, whose slack term is in general
strictly between $0$ and $\norm{g}^2$.
\end{remark}

\subsubsection{First-order expansion of the update vector}

\begin{lemma}[Update expansion]\label{lem:g-expansion}
In the frozen regime, for $z=z_f+w$ with $z_f=(a\xi,b\psi)$ and
$w=(w_1,w_2)\in W$,
\[
  g_1\big|_{S_1}=-\eta^{2}\,\Bmat\Bmat^\top w_1
  +O\bigl(K_g\norm{w}^{2}\bigr),
  \qquad
  g_2\big|_{S_2}=-\eta^{2}\,\Bmat^\top\Bmat\,w_2
  +O\bigl(K_g\norm{w}^{2}\bigr),
\]
with $\eta=\eta(a,b)$ and $K_g$ the uniform constant of
\ref{p:compact}.
\end{lemma}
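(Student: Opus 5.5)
The plan is to read the expansion off the same first-order computation used for the Jacobian in Lemma~\ref{lem:jacobian}, and to add a uniform Taylor remainder from \ref{p:compact}. Fix a regime point $z=z_f+w$ with $z_f=(a\xi,b\psi)\in E_+$ and $w\in W$. I will expand the analytic map $z\mapsto g_1(z)|_{S_1}$ around the base point $z_f$ itself, not around $z^*$. This choice means the $E$-component never enters the linear term. The expansion reads
\[
g_1(z_f+w)=g_1(z_f)+Dg_1(z_f)\,w+R_1(z_f,w).
\]
Two ingredients are needed: the zeroth-order term $g_1(z_f)$ vanishes, and the linear term is $-\eta^2\Bmat\Bmat^\top w_1$. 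By Proposition~\ref{prop:fixedpoint}, every $z_f\in E_+$ is a fixed point with $x=\xt=\xi$ and $y=\yt=\psi$. Hence $\mathrm{res}_1(z_f)=\Atil(y-\yt)=0$ and $g_1(z_f)=0$.

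For the linear term, I differentiate $g_1=\mathrm{res}_1-\inner{\mathrm{res}_1}{x}\ones$ at $z_f$ in the direction $(d\rt_1,d\rt_2)=(w_1,w_2)$. By Lemma~\ref{lem:dcenter}, the differential of the centering scalar vanishes, so $dg_1=d\,\mathrm{res}_1=\Atil(dy-d\yt)$. From the proof of Lemma~\ref{lem:jacobian}, with $d\gamma_2=0$ (Lemma~\ref{lem:dgamma}), we have
\[
dy=\tfrac1b P_\psi\bigl(d\rt_2-\tfrac1a\Atil^\top P_\xi\,d\rt_1\bigr),
\qquad
d\yt=\tfrac1b P_\psi\,d\rt_2 .
\]
Therefore $dy-d\yt=-\tfrac1{ab}P_\psi V\,w_1$, and
\[
dg_1=-\tfrac1{ab}\,\Atil P_\psi V\,w_1=-\eta^2\,UV\,w_1=-\eta^2\,\Bmat\Bmat^\top w_1,
\]
where the last equality uses Lemma~\ref{lem:centering}. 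The $w_2$-dependence cancels exactly: $d\rt_2$ enters $dy$ and $d\yt$ identically. Player~2 is symmetric, with $-A^\top$ in place of $A$, and gives $dg_2=-\eta^2\Bmat^\top\Bmat\,w_2$. In both cases the restriction to $S_i$ is automatic, because we work in face coordinates.

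It remains to bound $R_1$ uniformly. By \ref{p:compact}, $g_i$ is $C^2$ on the closed regime with a uniform second-derivative bound. Taylor's theorem with integral remainder along the segment $\{z_f+sw\}_{s\in[0,1]}$ then gives $\norm{R_1}\le K_g\norm{w}^2$, with $K_g$ half that bound. The one point needing care, which I expect to be the main obstacle, is that this segment must stay inside the region where the bound holds. The regime of \ref{p:regime} is a product of conditions $\norm{z_f-z^*}\le r_1$ and $\norm{w}\le\delta_0$, and the segment keeps the same $z_f$ while shrinking $w$. So every segment point is again a regime point, provided the regime is defined exactly by these two conditions. If instead the constants of \ref{p:compact} were only given on a ball around $z^*$, I would enlarge that ball to radius $r_1+\delta_0$, as was done with $3r_0$ in Lemma~\ref{lem:taylor}. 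Finally, $\eta=\eta(a,b)$ depends on the base point, but the formula holds pointwise in $z_f$ with a constant independent of $z_f$, which is what the lemma states.
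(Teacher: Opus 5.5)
Your proposal is correct and follows the paper's proof essentially step for step. You expand $g_1$ at the base fixed point $z_f$ and use $g_1(z_f)=0$. You then compute the differential from the Jacobian identities ($d\gamma_2=0$, $d\yt=P_\psi w_2/b$, $\Atil P_\psi=\Bmat$), so that the $w_2$-terms cancel and $-\eta^2\Bmat\Bmat^\top w_1$ remains, and you take the uniform remainder from \ref{p:compact}. The differences are cosmetic. You dispatch the centering term by citing Lemma~\ref{lem:dcenter}, whereas the paper checks $\xi^\top\Bmat\Bmat^\top w_1=0$ directly via $\Bmat^\top\xi=0$. You also make explicit the correct observation that the segment $z_f+sw$ stays in the convex regime, which the paper leaves implicit.
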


\begin{proof}
The map $z\mapsto g_1(z)$ is analytic on the regime \ref{p:frozen} and
vanishes identically on $E_+$ (at fixed points $y=\yt$, so the
residual and hence $g$ are zero). Therefore
$g_1(z_f+w)=Dg_1(z_f)[w]+O(K_g\norm{w}^2)$ with a uniform constant by
\ref{p:compact}, and it remains to compute the differential in the
direction $w$. We use the cancellation identities established for
Lemma~\ref{lem:jacobian} (all evaluated at the fixed point $z_f$; the
symbol $m_i$ below denotes player $i$'s payoff vector, not the support
size):
\begin{enumerate}[label=(\roman*),leftmargin=2.5em]
\item The residual is $\mathrm{res}_1=A(y-\yt)$; since both $y$ and
$\yt$ are supported in $S_2$ on the regime,
$\mathrm{res}_1|_{S_1}=\Atil\,(y-\yt)|_{S_2}$.
\item Perturbing $(\rt_1,\rt_2)$ by $(w_1,w_2)$: the pre-iterate
differential is $d\yt=P_\psi w_2/b$ (normalization at level
$\norm{\rt_2^*}_1=b$, with $P_\psi=I-\psi\mathbf{1}^\top$). For the
main iterate $r_2=\rt_2+m_2-\gamma_2\mathbf{1}$ we have
$dm_2=-\Atil^\top P_\xi\,w_1/a=-\Bmat^\top w_1/a$
(Lemma~\ref{lem:centering}) and $d\gamma_2=0$
(cancellation, Lemma~\ref{lem:jacobian}); hence
$dy=P_\psi\bigl(w_2-\Bmat^\top w_1/a\bigr)/b$ and
\[
  d(y-\yt)=-\tfrac{1}{ab}\,P_\psi\Bmat^\top w_1 .
\]
\item Therefore the first-order term of $\mathrm{res}_1|_{S_1}$ is
$-\tfrac{1}{ab}\Atil P_\psi\Bmat^\top w_1
=-\eta^{2}\Bmat\Bmat^\top w_1$, using $\Atil P_\psi=\Bmat$
(Lemma~\ref{lem:centering}).
\item The centering term: to first order
$\inner{\mathrm{res}_1}{x}=\xi^\top(-\eta^2\Bmat\Bmat^\top w_1)
=-\eta^2(\Bmat^\top\xi)^\top\Bmat^\top w_1=0$ by \ref{p:square}, and
$x$ is exactly supported in $S_1$ on the regime; hence
$g_1|_{S_1}=\mathrm{res}_1|_{S_1}
-\inner{\mathrm{res}_1}{x}\mathbf{1}$ has the same first-order term
as $\mathrm{res}_1|_{S_1}$.
\end{enumerate}
Player $2$ is symmetric.
\end{proof}

\subsubsection{Proof of Lemma~\ref{lem:ndot-theta}}

\begin{lemma}[Two-sided increment bound]
\label{lem:ndot-theta}
There exist $0<n_-\le n_+$ and a threshold $\delta_3>0$ such that, in the
frozen regime with $\norm{w^{(t)}}\le\delta_3$, for every single step $t$,
\[
n_-\,\norm{w^{(t)}}^{2}\;\le\;\Ndottot(t)\;\le\;n_+\,\norm{w^{(t)}}^{2},
\qquad
n_-=\tfrac12\,\eta_-^{4}(\sigmin^{+})^{4},\quad
n_+=2\,\eta_+^{4}\,\sigmax^{4}.
\]
\end{lemma}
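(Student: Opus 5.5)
The plan is to combine the exact increment identity of Lemma~\ref{lem:ndot-identity} with the first-order expansion of Lemma~\ref{lem:g-expansion}. Then I reduce the two-sided bound to a spectral estimate for the linear operator $w\mapsto(\eta^2\Bmat\Bmat^\top w_1,\ \eta^2\Bmat^\top\Bmat w_2)$ restricted to $W$. Finally I absorb the quadratic remainder by shrinking $\norm{w}$.

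\emph{Step 1 (reduction to on-support energy).} By Lemma~\ref{lem:ndot-identity}, in the frozen regime
\[
\Ndottot(t)=\norm{g_1^{(t)}|_{S_1}}^2+\norm{g_2^{(t)}|_{S_2}}^2 .
\]
By Lemma~\ref{lem:g-expansion}, $g_1|_{S_1}=\ell_1+e_1$ and $g_2|_{S_2}=\ell_2+e_2$, where
\[
\ell_1:=-\eta^2\Bmat\Bmat^\top w_1,\qquad \ell_2:=-\eta^2\Bmat^\top\Bmat w_2,\qquad \norm{e_i}\le K_g\norm{w}^2 .
\]
Here $\eta=\eta(a,b)\in[\eta_-,\eta_+]$ by \ref{p:regime}.

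\emph{Step 2 (spectral bounds on the linear part).} By \ref{p:decomp}, $w_1\in\mathrm{range}(\Bmat)=\mathrm{span}\{u_k\}$. On that subspace $\Bmat\Bmat^\top$ acts diagonally with eigenvalues $\sigma_k^2\in[(\sigmin^{+})^2,\sigmax^2]$. Hence
\[
\eta^2(\sigmin^{+})^2\norm{w_1}\le\norm{\ell_1}\le\eta^2\sigmax^2\norm{w_1}.
\]
The same bound holds for $\ell_2$, since $w_2\in\mathrm{range}(\Bmat^\top)=\mathrm{span}\{v_k\}$ and $\Bmat^\top\Bmat v_k=\sigma_k^2v_k$. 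Summing squares and using $\norm{w}^2=\norm{w_1}^2+\norm{w_2}^2$ gives
\[
\eta_-^4(\sigmin^{+})^4\norm{w}^2\le\norm{\ell}^2\le\eta_+^4\sigmax^4\norm{w}^2,
\qquad \ell:=(\ell_1,\ell_2).
\]
This restriction to $W$ is the place where non-degeneracy is essential: without it the kernel directions would destroy the lower bound, and \ref{p:square} is what excludes them.

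\emph{Step 3 (absorbing the remainder).} Write $\lambda:=\eta_-^2(\sigmin^{+})^2$ and $\Lambda:=\eta_+^2\sigmax^2$, and let $e:=(e_1,e_2)$, so $\norm{e}\le\sqrt2K_g\norm{w}^2$. The triangle inequality gives
\[
(\lambda-\sqrt2K_g\norm{w})\norm{w}\le\norm{\ell+e}\le(\Lambda+\sqrt2K_g\norm{w})\norm{w}.
\]
Choose $\delta_3$ small enough (and at most $\delta_0$) that $\sqrt2K_g\delta_3\le(1-1/\sqrt2)\lambda$. Then $\lambda-\sqrt2K_g\norm{w}\ge\lambda/\sqrt2$ and $\Lambda+\sqrt2K_g\norm{w}\le\sqrt2\Lambda$, using $\lambda\le\Lambda$. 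Squaring yields the stated constants $n_-=\tfrac12\lambda^2$ and $n_+=2\Lambda^2$.

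The only delicate point is Step~1's use of Lemma~\ref{lem:g-expansion}: its remainder must be uniform over the base point $z_f$ ranging in the regime and not only at $z^*$. That uniformity is supplied by \ref{p:compact}. The case $w=0$ is trivial, since then $g=0$ and both sides vanish.
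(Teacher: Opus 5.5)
Your proposal is correct and follows the paper's proof: the increment identity of Lemma~\ref{lem:ndot-identity}, the first-order expansion of Lemma~\ref{lem:g-expansion}, and the spectral bounds on $\mathrm{range}(\Bmat)$ and $\mathrm{range}(\Bmat^\top)$ are exactly its steps. The only difference is cosmetic. You absorb the remainder at the level of norms, using the triangle inequality with the explicit constant $\sqrt2K_g$. The paper instead expands the squared norm with a cubic error $C_3\norm{w}^3$ and takes $\delta_3=\eta_-^4(\sigmin^{+})^4/(2C_3)$. Both routes yield the same $n_\pm$.
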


\begin{proof}[Proof of Lemma~\ref{lem:ndot-theta}]
By Lemma~\ref{lem:ndot-identity} and Lemma~\ref{lem:g-expansion},
\[
  \Ndottot(t)=
  \bigl\lVert g_1|_{S_1}\bigr\rVert^{2}
  +\bigl\lVert g_2|_{S_2}\bigr\rVert^{2}
  =\eta^{4}\Bigl(\bigl\lVert\Bmat\Bmat^\top w_1\bigr\rVert^{2}
  +\bigl\lVert\Bmat^\top\Bmat w_2\bigr\rVert^{2}\Bigr)
  +O\bigl(C_3\norm{w}^{3}\bigr),
\]
with $C_3$ uniform on the regime (cross terms of the expansion;
\ref{p:compact}). The key structural point is that $w\in W$ by
definition of the decomposition \ref{p:decomp}, so
$w_1\in\mathrm{range}(\Bmat)=\mathrm{span}\{u_k\}$ and
$w_2\in\mathrm{range}(\Bmat^\top)=\mathrm{span}\{v_k\}$. Writing
$w_1=\sum_k\alpha_ku_k$,
\[
  \bigl\lVert\Bmat\Bmat^\top w_1\bigr\rVert^{2}
  =\sum_k\sigma_k^{4}\alpha_k^{2}
  \;\in\;
  \Bigl[\bigl(\sigmin^{+}\bigr)^{4}\norm{w_1}^{2},\;
        \sigmax^{4}\norm{w_1}^{2}\Bigr],
\]
and symmetrically for $w_2$; adding and using
$\norm{w}^2=\norm{w_1}^2+\norm{w_2}^2$ gives
\[
  \eta_-^{4}\bigl(\sigmin^{+}\bigr)^{4}\norm{w}^{2}
  -C_3\norm{w}^{3}
  \;\le\;\Ndottot(t)\;\le\;
  \eta_+^{4}\,\sigmax^{4}\norm{w}^{2}
  +C_3\norm{w}^{3}.
\]
For $\norm{w}\le\delta_3:=
\eta_-^{4}(\sigmin^{+})^{4}/(2C_3)$ the cubic error is absorbed: the
lower bound loses at most a factor $\tfrac12$, and since
$\delta_3\le\eta_+^{4}\sigmax^{4}/C_3$, the upper bound gains at most
a factor $2$. This yields the stated
$n_-=\tfrac12\eta_-^{4}(\sigmin^{+})^{4}$ and
$n_+=2\eta_+^{4}\sigmax^{4}$.
\end{proof}

\subsubsection{Proof of Theorem~\ref{thm:certificate} and
Corollary~\ref{cor:theta}}

\begin{proof}[Proof of Theorem~\ref{thm:certificate}]
For every $t\ge t_{\mathrm{reg}}$ the hypotheses put $z^{(t)}$ in the
regime with
$\norm{w^{(t)}}\le\min(\delta_2,\delta_3)$, so
Lemmas~\ref{lem:gap-sharp} and~\ref{lem:ndot-theta} give
\[
  c_-\norm{w^{(t)}}\le\gapf(t)\le L_{\gapf}\norm{w^{(t)}},
  \qquad
  n_-\norm{w^{(t)}}^{2}\le\Ndottot(t)\le n_+\norm{w^{(t)}}^{2}.
\]
The hypothesis $\Ndottot(t_1)>0$ forces $w^{(t_1)}\neq0$ through the
upper increment bound, hence $\gapf(t_1)\ge c_-\norm{w^{(t_1)}}>0$,
and every ratio below is well defined. Dividing the gap bounds at
$t_2$ and $t_1$ and eliminating the amplitudes via the increment
bounds,
\[
  \frac{\gapf(t_2)}{\gapf(t_1)}
  \;\le\;
  \frac{L_{\gapf}\norm{w^{(t_2)}}}{c_-\norm{w^{(t_1)}}}
  \;\le\;
  \frac{L_{\gapf}}{c_-}
  \sqrt{\frac{\Ndottot(t_2)/n_-}{\Ndottot(t_1)/n_+}}
  \;=\;
  \kappa\,\sqrt{\frac{\Ndottot(t_2)}{\Ndottot(t_1)}},
\]
and the lower bound is symmetric. All four constants
$L_{\gapf},c_-,n_\pm$ are fixed attributes of the regime
(\ref{p:regime}--\ref{p:compact}): none depends on $t_1,t_2$ or on
$\norm{w^{(t)}}$, and no factor $1/(1-\rho^*)$ enters: they are
evaluated on the fixed scale rectangle of \ref{p:regime}, and the
amplitudes through which the trajectory contracts have been cancelled.
The factorization of $\kappa$ and the bound
$\sqrt{n_+/n_-}\le2(\eta_+/\eta_-)^2(\sigmax/\sigmin^{+})^2$ are
immediate from the formulas for $n_\pm$.
\end{proof}

\begin{proof}[Proof of Corollary~\ref{cor:theta}]
From the same sandwich,
$C(t):=\log\Ndottot(t)-2\log\gapf(t)$ satisfies
$\log(n_-/L_{\gapf}^2)\le C(t)\le\log(n_+/c_-^2)$, an interval of
width $\log\bigl((n_+/n_-)(L_{\gapf}/c_-)^2\bigr)=2\log\kappa$. For
the least-squares slope over a window, write $x_t=\log\gapf(t)$,
$y_t=2x_t+C(t)$; then
$\hat\theta=\mathrm{Cov}(x,y)/\mathrm{Var}(x)
=2+\mathrm{Cov}(x,C)/\mathrm{Var}(x)$, and Cauchy--Schwarz bounds the
second term by $\mathrm{sd}(C)/\mathrm{sd}(x)$.  A variable confined to
an interval of width $2\log\kappa$ has population standard deviation at
most half the width, $\log\kappa$ (Popoviciu's inequality):
\[
  \bigl|\hat\theta-2\bigr|
  \le\frac{\mathrm{sd}(C)}{\mathrm{sd}(x)}
  \le\frac{\log\kappa}{\mathrm{sd}(\log\gapf)} .
\]
As the window spans more decades of gap decay,
$\mathrm{sd}(\log\gapf)\to\infty$ and $\hat\theta\to2$: the recorded
times are distinct integers and, in the regime, $\norm{w^{(t+s)}}_*\le\theta^s\norm{w^{(t)}}_*$
(Lemma~\ref{lem:bootstrap}) with $\gapf\asymp\norm{w}$
(Lemma~\ref{lem:gap-sharp}), so $\gapf(t+s)\le C\theta^s\gapf(t)$ from every
$t$: at most a bounded number of
records fall in any one decade, and a growing range of $\log\gapf$
forces a growing standard deviation.
\end{proof}

The frozen local regime and the conventions
\ref{p:square}--\ref{p:increment} are fixed in
Section~\ref{sec:cert:proofs}; this appendix uses them without change.
It proves the gap--distance sharpness of Lemma~\ref{lem:gap-sharp},
whose two sides rest on a Lipschitz upper bound and a first-order
analysis of the gap map, and records two remarks on the per-player
bound and on the finite-time behavior of the fitted slope.

\subsection{A Lipschitz bound for the gap}

\begin{lemma}[Gap Lipschitz bound]\label{lem:gap-lip}
There is $L_{\gapf}<\infty$ such that $\gapf$ is
$L_{\gapf}$-Lipschitz on the regime, and $\gapf(z_f)=0$ for every
$z_f\in E_+$ in the regime.
\end{lemma}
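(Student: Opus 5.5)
We view $\gapf$ on the regime as the composition $z\mapsto(x(z),y(z))\mapsto\gapf(x,y)$, and bound the Lipschitz constant of each factor separately. This is the same composition argument as in Part~(iv) of the proof of Theorem~\ref{thm:rate}; we reuse it, but on the closed regime of \ref{p:regime} rather than on the ambient ball $\mathbb{B}(z^*,2r_0)$.

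\emph{Outer map.} $(x,y)\mapsto\max_i(Ay)_i-\min_j(x^\top A)_j$ is a difference of a maximum and a minimum of linear functionals. Each row functional has $\ell_2$-norm at most $\norm{A}$, so the outer map is Lipschitz with constant at most $2\norm{A}$ (up to a $\sqrt2$ from combining the two blocks).

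\emph{Inner map.} Here we have two options.
\begin{itemize}
\item By \ref{p:compact}, $z\mapsto(x(z),y(z))$ is $C^2$ with uniform first-derivative bounds on the compact closed regime. Note the regime is convex near $E_+$: it is a product of a segment neighborhood in $E$ with a ball in $W$. The mean value inequality therefore gives a Lipschitz constant equal to the supremum of $\norm{D(x,y)}$ over the regime.
\item To make the dependence explicit, retrace Part~(iv) instead. The pre-iterate denominators $\norm{\rt_i}_1\ge\norm{\rt_i}_2\ge\min(a\norm{\xi},b\norm{\psi})-\delta_0$ are bounded below using $a\ge c_1/2$, $b\ge c_2/2$. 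The $\gamma$-shift is $\sqrt2$-Lipschitz by Lemma~\ref{lem:gamma-cont} together with $|\partial\gamma/\partial\Nt|\le1$. Clipping is $1$-Lipschitz. The main-iterate denominators satisfy $\norm{p_i}_1\ge\Nt_i$, which is bounded below on the regime.
\end{itemize}
Multiplying the two factors gives $L_{\gapf}$, depending only on $\norm{A}$, $c_1$, $c_2$ and the dimensions.

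\emph{Vanishing on the cone.} Every $z_f=(a\xi,b\psi)\in E_+$ is a fixed point by Proposition~\ref{prop:fixedpoint}, and there the round produces $x=\xt=x^*$ and $y=\yt=y^*$. Hence $\gapf(z_f)=\gapf(x^*,y^*)=0$.

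\emph{Main obstacle.} The only delicate point is that the Lipschitz estimate must hold along straight segments between regime points. If the regime is not convex, the mean value step fails as stated. We would handle this by enlarging to the convex hull. That hull stays inside the absorption neighborhood $\mathcal U_0$ of Lemma~\ref{lem:absorb} once $r_1$ and $\delta_0$ are small, as in the choice of $r_0$ in \eqref{eq:r0pos}. Alternatively, we avoid the issue entirely by using the composition bound, whose ingredients are globally Lipschitz wherever the denominators are bounded below.
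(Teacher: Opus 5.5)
Your proposal is correct and is essentially the paper's argument: you write the strategy map as a composition of Lipschitz pieces whose normalization denominators are bounded below, you bound the gap as a max minus a min of affine functions, and you get vanishing on $E_+$ because fixed points play $(x^*,y^*)$. One small difference is that the paper bounds $\norm{\rt_1}_2\ge a\norm{\xi}_2$ directly from the orthogonality $w_1\perp\xi$, so it needs no smallness of $\delta_0$; also, your convex-hull concern does not arise, since, as you note yourself, the regime is a product of convex sets in $E\oplus W$ (a disk in the two-dimensional $E$ times a ball in $W$).
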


\begin{proof}
On the regime the strategy map $z\mapsto(x(z),y(z))$ is a composition
of the frozen-pattern algorithm maps \ref{p:frozen}: an affine
prediction step, the analytic solution $\gamma(\cdot)$ of the norm
constraint, the clipping with fixed pattern (affine), and the
normalizations $r\mapsto r/\norm{r}_1$. The normalization denominators
are bounded below on the regime: writing $\rt_1=a\xi+w_1$ with
$w_1\perp\xi$ (\ref{p:decomp}), $\norm{\rt_1}_1\ge\norm{\rt_1}_2\ge
a\norm{\xi}_2\ge\tfrac{c_1}{2}\norm{\xi}_2\ge c_1/(2\sqrt m)$ by
\ref{p:regime}, and the prediction preserves the $\ell_2$ norm, so
$\norm{p_1}_1\ge\norm{p_1}_2=\norm{\rt_1}_2$ obeys the same bound;
symmetrically for player~2 with $\psi$. The elementary estimate
$\norm{u/\norm{u}_1-v/\norm{v}_1}_1\le
2\norm{u-v}_1/\max(\norm{u}_1,\norm{v}_1)$ makes each normalization
Lipschitz there. Hence $z\mapsto(x(z),y(z))$ is Lipschitz on the
compact closure of the regime. The gap
$\gapf(z)=\max_i(Ay(z))_i-\min_j(x(z)^\top A)_j$ is a difference of a
max and a min of finitely many affine functions of $(x,y)$, each
$\norm{A}_{1\to\infty}$-Lipschitz; composing yields $L_{\gapf}$,
depending only on $\norm{A}$, $c_1,c_2$, and the dimensions. Every
$z_f\in E_+$ in the regime is a fixed point whose strategies are
exactly $(x^*,y^*)$ (Proposition~\ref{prop:fixedpoint}), so
$\gapf(z_f)=0$.
\end{proof}

\subsection{Proof of Lemma~\ref{lem:gap-sharp}}

The upper bound is immediate from Lemma~\ref{lem:gap-lip} and
\eqref{eq:dist-w}:
$\gapf(z)=\gapf(z)-\gapf(P_Ez)\le L_{\gapf}\norm{z-P_Ez}
=L_{\gapf}\norm{w}$. The remainder of this subsection proves the lower
bound. We need two sub-lemmas.

\begin{lemma}[Max lemma]\label{lem:max}
Let $\xi\in\mathrm{int}\,\simplex{m}$ and $u\in\R^m$ with
$\xi^\top u=0$. Then
$\max_i u_i\ge\xi_{\min}\norm{u}_\infty
\ge(\xi_{\min}/\sqrt m)\norm{u}_2$.
\end{lemma}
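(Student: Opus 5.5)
The plan is an averaging argument on the single coordinate achieving $\norm{u}_\infty$, followed by a norm comparison; no earlier result of the paper is needed.

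\emph{Step 1 (reduction).} Write $\xi_{\min}:=\min_i\xi_i>0$, which is positive because $\xi$ lies in the interior of the simplex. If $u=0$ every quantity vanishes, so I would assume $u\neq0$.

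\emph{Step 2 (case split).} Let $k$ be an index with $|u_k|=\norm{u}_\infty$, and split on the sign of $u_k$.
\begin{itemize}
\item If $u_k>0$, then directly $\max_iu_i\ge u_k=\norm{u}_\infty\ge\xi_{\min}\norm{u}_\infty$, since $\xi_{\min}\le 1$.
\item If $u_k<0$, I would use the orthogonality $\xi^\top u=0$ in the form
\[
\sum_{i\neq k}\xi_iu_i=-\xi_ku_k=\xi_k\norm{u}_\infty .
\]
The left side is at most $\bigl(\sum_{i\neq k}\xi_i\bigr)\max_iu_i=(1-\xi_k)\max_iu_i$. The right side is positive, so $\max_iu_i>0$, and the displayed relation then gives $\max_iu_i\ge\xi_k\norm{u}_\infty/(1-\xi_k)\ge\xi_k\norm{u}_\infty\ge\xi_{\min}\norm{u}_\infty$. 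Here $1-\xi_k>0$ because $m\ge2$ and $\xi$ is interior.
\end{itemize}
The edge case $m=1$ forces $u=0$ via $\xi^\top u=0$, so it is already covered by Step~1.

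\emph{Step 3 (norm comparison).} The second inequality follows from $\norm{u}_2\le\sqrt m\,\norm{u}_\infty$.

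\emph{Main obstacle.} The only delicate point is the negative-extremum case in Step~2. There one must transfer mass from the most negative coordinate to a positive one using the weighted mean-zero condition, and bound the weight sum by $1-\xi_k\le 1$ rather than discarding it. The rest is routine.
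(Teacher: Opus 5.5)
Your proof is correct and follows the paper's argument: you pick the index $k$ attaining $\norm{u}_\infty$, treat $u_k>0$ directly, and for $u_k<0$ bound the weighted sum by $\xi_ku_k+(1-\xi_k)\max_iu_i$ to get $\max_iu_i\ge\xi_k\norm{u}_\infty/(1-\xi_k)$, then finish with $\norm{u}_2\le\sqrt m\,\norm{u}_\infty$. Your separate handling of $u=0$ (and hence $m=1$) makes explicit an edge case the paper leaves implicit.
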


\begin{proof}
Let $k$ attain $\norm{u}_\infty=|u_k|$. If $u_k>0$ then $\max_iu_i\ge
\norm{u}_\infty\ge\xi_{\min}\norm{u}_\infty$. If $u_k<0$, then from
$0=\sum_i\xi_iu_i\le\xi_ku_k+(1-\xi_k)\max_iu_i$ we get
$\max_iu_i\ge\xi_k\norm{u}_\infty/(1-\xi_k)\ge
\xi_{\min}\norm{u}_\infty$. The $\ell_2$ bound is
$\norm{u}_\infty\ge\norm{u}_2/\sqrt m$.
\end{proof}

\begin{lemma}[First-order gap map and its smallest singular value]
\label{lem:tmap}
Define $T_{a,b}:W\to\R^{2m}$,
\[
  T_{a,b}\,w:=(p,q),\qquad
  p:=\tfrac1b\,\Bmat\bigl(w_2-\tfrac1a\Bmat^\top w_1\bigr),\qquad
  q:=\tfrac1a\,\Bmat^\top\bigl(w_1+\tfrac1b\Bmat w_2\bigr).
\]
Then $T_{a,b}$ is block-diagonal across the orthogonal planes
$W_k=\mathrm{span}\{(u_k,0),(0,v_k)\}$: in the coordinates
$w_1=\alpha u_k$, $w_2=\beta v_k$ one has
$(p,q)=\sigma_k\bigl((M_k(\alpha,\beta)^\top)_1u_k,\;
(M_k(\alpha,\beta)^\top)_2v_k\bigr)$ with
\[
  M_k=\begin{pmatrix}-\sigma_k/(ab)&1/b\\[2pt]1/a&\sigma_k/(ab)
  \end{pmatrix},
  \qquad
  \det M_k=-\eta^{2}\bigl(1+\eta^{2}\sigma_k^{2}\bigr),
  \quad\eta=\eta(a,b),
\]
and consequently
\begin{multline*}
  \sigmin(T_{a,b})
  \;=\;\min_k\,\sigma_k\,\sigmin(M_k)
  \;\ge\;\min_k\frac{\sigma_k\,|\det M_k|}{\norm{M_k}_F}
  \\
  \;\ge\;
  \frac{\sigmin^{+}\,\eta^{2}}
       {\sqrt{\,2\sigmax^{2}\eta^{4}+1/a^{2}+1/b^{2}\,}}
  \;=:\;c_T(a,b)\;>\;0 .
\end{multline*}
\end{lemma}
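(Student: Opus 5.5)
The plan is a direct computation in the singular-vector coordinates of \ref{p:square}, in three steps: block-diagonalize $T_{a,b}$, identify each $2\times2$ block, and bound its smallest singular value by the determinant over the Frobenius norm.

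\textbf{Step 1: action of $T_{a,b}$ on one plane $W_k$.} Take $w_1=\alpha u_k$ and $w_2=\beta v_k$. Using $\Bmat v_k=\sigma_k u_k$ and $\Bmat^\top u_k=\sigma_k v_k$ we get
\[
\Bmat^\top w_1=\alpha\sigma_k v_k,\qquad \Bmat w_2=\beta\sigma_k u_k .
\]
Substituting these into the definitions of $p$ and $q$ gives
\[
p=\tfrac{\sigma_k}{b}\bigl(\beta-\tfrac{\sigma_k}{a}\alpha\bigr)u_k,\qquad
q=\tfrac{\sigma_k}{a}\bigl(\alpha+\tfrac{\sigma_k}{b}\beta\bigr)v_k .
\]
So the coefficient pair of $(p,q)$ with respect to the unit vectors $(u_k,0)$ and $(0,v_k)$ is $\sigma_k M_k(\alpha,\beta)^\top$, with $M_k$ exactly as displayed in the lemma. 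I would state this matrix-vector form explicitly, since that is the content of the row-indexing notation in the statement.

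\textbf{Step 2: block structure and the determinant.} The image of $W_k$ lies in $\mathrm{span}\{(u_k,0),(0,v_k)\}\subset\R^{2m}$. The $u_k$ are orthonormal and so are the $v_k$, so these target planes are mutually orthogonal, just like the source planes $W_k$ (Lemma~\ref{lem:splitting}). In the orthonormal coordinates $(\alpha,\beta)$ on source and target, $T_{a,b}$ is therefore an orthogonal direct sum of the $2\times2$ blocks $\sigma_kM_k$. Hence its singular values are the union of those of the blocks, and
\[
\sigmin(T_{a,b})=\min_k\sigma_k\,\sigmin(M_k).
\]
For the determinant,
\[
\det M_k=-\frac{\sigma_k^2}{a^2b^2}-\frac1{ab}=-\eta^2\bigl(1+\eta^2\sigma_k^2\bigr),
\]
using $1/(ab)=\eta^2$.

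\textbf{Step 3: the lower bound.} For a $2\times2$ matrix,
\[
\sigmin(M_k)\,\sigmax(M_k)=|\det M_k|\quad\text{and}\quad \sigmax(M_k)\le\norm{M_k}_F ,
\]
so $\sigmin(M_k)\ge|\det M_k|/\norm{M_k}_F$. Here
\[
\norm{M_k}_F^2=2\sigma_k^2\eta^4+1/a^2+1/b^2 .
\]
We then make three monotone replacements:
\begin{itemize}
\item drop the factor $1+\eta^2\sigma_k^2\ge1$ in $|\det M_k|$;
\item bound the numerator factor $\sigma_k$ below by $\sigmin^{+}$;
\item bound $\sigma_k$ inside the Frobenius norm above by $\sigmax$.
\end{itemize}
This yields $c_T(a,b)$, which is positive because $\sigmin^{+}>0$ under \ref{p:square}.

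\textbf{Main risk.} There is no real analytic obstacle. The point to get right is bookkeeping in Step 2: the orthogonal-sum claim must be justified by the orthonormality of the two singular-vector families and the exact orthogonality of the $W_k$. It is also what fixes $\sigmin$ of $T_{a,b}$, a map $W\to\R^{2m}$, as its least singular value on the $2(m-1)$-dimensional domain $W$.
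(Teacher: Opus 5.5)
Your proposal is correct and follows the paper's proof essentially line for line. It uses the same singular-vector computation of $p$ and $q$ on each plane $W_k$, the same orthogonality argument for block-diagonality, and the determinant $-\sigma_k^2/(ab)^2-1/(ab)$. It then applies $\sigmin(M_k)\ge|\det M_k|/\norm{M_k}_F$ with the same three replacements $|\det M_k|\ge\eta^2$, $\sigma_k\ge\sigmin^{+}$ and $\sigma_k\le\sigmax$, and it states explicitly a step the paper only asserts briefly: the target planes are orthonormal in $\R^{2m}$, so the singular values of $T_{a,b}$ on $W$ are those of the blocks $\sigma_kM_k$.
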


\begin{proof}
Block structure: by \ref{p:square}, $\Bmat v_k=\sigma_ku_k$ and
$\Bmat^\top u_k=\sigma_kv_k$, so for $w_1=\alpha u_k$,
$w_2=\beta v_k$,
\[
  p=\tfrac1b\bigl(\beta\sigma_ku_k-\tfrac\alpha a\sigma_k^2u_k\bigr)
   =\sigma_k\Bigl(-\tfrac{\sigma_k}{ab}\alpha+\tfrac1b\beta\Bigr)u_k,
  \qquad
  q=\sigma_k\Bigl(\tfrac1a\alpha+\tfrac{\sigma_k}{ab}\beta\Bigr)v_k,
\]
which is the stated $2\times2$ action; the images lie along the
orthonormal directions $u_k$, $v_k$, so the blocks are orthogonal to
one another. The determinant is
$-\sigma_k^2/(ab)^2-1/(ab)=-\eta^2(1+\eta^2\sigma_k^2)$. For a
$2\times2$ matrix, $\sigmin=|\det|/\sigmax\ge|\det|/\norm{\cdot}_F$;
using $|\det M_k|\ge\eta^2$ and
$\norm{M_k}_F^2=2\sigma_k^2\eta^4+1/a^2+1/b^2
\le2\sigmax^2\eta^4+1/a^2+1/b^2$ gives the bound, and
$\min_k\sigma_k=\sigmin^{+}$.
\end{proof}

On the regime, using $a\ge c_1/2$, $b\ge c_2/2$ and
$\eta\in[\eta_-,\eta_+]$ from \ref{p:regime},
\begin{equation}\label{eq:cT-uniform}
  c_T:=\inf_{(a,b)\in\text{regime}}c_T(a,b)
  \;\ge\;
  \frac{\sigmin^{+}\,\eta_-^{2}}
       {\sqrt{\,2\sigmax^{2}\eta_+^{4}+4/c_1^{2}+4/c_2^{2}\,}},
\end{equation}
which is the explicit constant announced in
Lemma~\ref{lem:gap-sharp}.

\begin{lemma}[Gap--distance sharpness]
\label{lem:gap-sharp}
There exist $0<c_-\le L_{\gapf}<\infty$ and a threshold $\delta_2>0$ such
that, in the frozen regime with $\norm{w}\le\delta_2$,
\[
c_-\,\norm{w}\;\le\;\gapf(z)\;\le\;L_{\gapf}\,\norm{w},
\]
where $L_{\gapf}$ is a Lipschitz constant of $\gapf$ in state coordinates
and $c_->0$ is explicit (built from the smallest on-support equilibrium
probabilities and the spectrum of $\Bmat$); the lower bound is where
nondegeneracy enters.
\end{lemma}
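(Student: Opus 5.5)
The upper bound is already in hand: Lemma~\ref{lem:gap-lip} together with \eqref{eq:dist-w} gives $\gapf(z)=\gapf(z)-\gapf(P_Ez)\le L_{\gapf}\norm{w}$. The remaining work is the lower bound $\gapf(z)\ge c_-\norm{w}$. The plan is to expand the main iterates to first order in $w$, identify the first-order deviation of each player's payoff vector with the map $T_{a,b}$ of Lemma~\ref{lem:tmap}, and then convert the size of that deviation into a gap lower bound through the max lemma, Lemma~\ref{lem:max}.

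\emph{Step 1 (first-order strategies).} Take $z=z_f+w$ in the regime. The differentials computed in the proof of Lemma~\ref{lem:g-expansion} give
\[
x=\xi+\tfrac1a P_\xi\bigl(w_1+\tfrac1b\Bmat w_2\bigr)+O(\norm{w}^2),
\qquad
y=\psi+\tfrac1b P_\psi\bigl(w_2-\tfrac1a\Bmat^\top w_1\bigr)+O(\norm{w}^2),
\]
with both iterates supported exactly on $S_1,S_2$ by \ref{p:frozen}. Using $\Atil P_\psi=\Bmat$, $\Atil^\top P_\xi=\Bmat^\top$ and $\Atil\psi=v^*\ones$ (Lemma~\ref{lem:centering}), we get
\[
\Atil y|_{S_2}=v^*\ones+p+O(\norm{w}^2),
\qquad
\Atil^\top x|_{S_1}=v^*\ones+q+O(\norm{w}^2),
\]
where $(p,q)=T_{a,b}w$. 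The remainders are uniform by \ref{p:compact}.

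\emph{Step 2 (gap lower bound).} Restrict both the maximum and the minimum to the supports:
\[
\gapf\ge\max_{i\in S_1}(Ay)_i-\min_{j\in S_2}(x^\top A)_j=\max_i p_i+\max_j(-q_j)+O(\norm{w}^2).
\]
The centering identities give $\xi^\top p=0$, since $\Bmat^\top\xi=0$, and $\psi^\top q=0$, since $\Bmat\psi=0$. Lemma~\ref{lem:max} then yields $\max_ip_i\ge(\xi_{\min}/\sqrt m)\norm{p}$ and $\max_j(-q_j)\ge(\psi_{\min}/\sqrt m)\norm{q}$. Summing,
\[
\gapf\ge\frac{\min(\xi_{\min},\psi_{\min})}{\sqrt m}\bigl(\norm p+\norm q\bigr)-C\norm{w}^2\ge\frac{\min(\xi_{\min},\psi_{\min})}{\sqrt m}\,c_T\norm{w}-C\norm{w}^2,
\]
where the last inequality uses Lemma~\ref{lem:tmap} and the uniform bound \eqref{eq:cT-uniform}. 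Setting $c_-:=\min(\xi_{\min},\psi_{\min})\,c_T/(2\sqrt m)$ and choosing $\delta_2:=c_-/C$, the quadratic term is absorbed whenever $\norm w\le\delta_2$. If $c_->L_{\gapf}$ we lower $c_-$ to $L_{\gapf}$, which keeps the ordering $c_-\le L_{\gapf}$ required by the statement.

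\emph{Main obstacle.} The difficulty is in Step~1: getting the iterate differentials and the cancellations exactly right. The $\gamma$-differentials must vanish, by Lemma~\ref{lem:dgamma}, and $x$ must be taken as the main iterate rather than the pre-iterate, so that $q$ comes out as $\tfrac1a\Bmat^\top(w_1+\tfrac1b\Bmat w_2)$ with the correct sign pattern matching $T_{a,b}$. Nondegeneracy enters at exactly one point: invertibility of $T_{a,b}$ on $W$, which holds because each block satisfies $|\det M_k|\ge\eta^2>0$ and $\sigma_k\ge\sigmin^{+}>0$. I would double-check the sign of the $\tfrac1b\Bmat w_2$ term inside $dx$ by differentiating \eqref{eq:face-round} once more before assembling the bound.
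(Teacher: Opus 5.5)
Your proposal is correct and follows the paper's proof essentially step for step: the upper bound comes from Lemma~\ref{lem:gap-lip}, the first-order payoff deviations are identified with $T_{a,b}$, Lemma~\ref{lem:max} is applied to equilibrium-orthogonal vectors, and the quadratic remainder is absorbed below a threshold. The differences are minor. You restrict the max and min to $S_1,S_2$ by the trivial subset inequality, whereas the paper proves exact localization via strict complementarity, which a lower bound does not need. You also apply Lemma~\ref{lem:max} to the linear parts $p,q$ rather than to the exact deviations $u=\Atil(y-\psi)$ and $v=\Atil^\top(x-\xi)$; both pairs are exactly orthogonal to $\xi,\psi$, so either order works.
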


\begin{proof}[Proof of Lemma~\ref{lem:gap-sharp} (lower bound)]
Let $z=z_f+w$ be in the regime, $z_f=(a\xi,b\psi)$, and let
$(x,y)=(x(z),y(z))$ be the main-iterate strategies; by \ref{p:frozen}
their supports are contained in $S_1,S_2$. We proceed in four steps.

\emph{Step 1: exact localization to the support.} By strict
complementarity and continuity (shrinking the regime thresholds if
necessary, absorbed into $\delta_2$), every off-support row satisfies
$(Ay)_i\le v^*-\delta/2$ for $i\notin S_1$, while on the support
$\max_{i\in S_1}(\Atil y)_i\ge\xi^\top\Atil y=v^*\mathbf{1}^\top
y=v^*$, using $\xi^\top\Atil=v^*\mathbf{1}^\top$
(Lemma~\ref{lem:centering}) and that $y$ is a probability vector on
$S_2$. Hence the global max is attained on $S_1$; symmetrically the
global min is attained on $S_2$. Using $\Atil\psi=v^*\mathbf{1}$ and
$\Atil^\top\xi=v^*\mathbf{1}$,
\[
  \gapf(z)=
  \underbrace{\max_{i\in S_1}\bigl(\Atil(y-\psi)\bigr)_i}_{=:G_1\ge0}
  \;+\;
  \underbrace{\Bigl(-\min_{j\in S_2}\bigl(\Atil^\top(x-\xi)\bigr)_j
  \Bigr)}_{=:G_2\ge0}
  \qquad\text{exactly.}
\]

\emph{Step 2: sharp lower bound on the max via equilibrium
weighting.} Let $u:=\Atil(y-\psi)$ and $v:=\Atil^\top(x-\xi)$. Then
$\xi^\top u=v^*\mathbf{1}^\top(y-\psi)=0$ and
$\psi^\top v=0$ \emph{exactly}. Lemma~\ref{lem:max} (and its mirror
for the min) gives
\[
  \gapf(z)\;\ge\;
  \frac{\min(\xi_{\min},\psi_{\min})}{\sqrt m}
  \bigl(\norm{u}+\norm{v}\bigr)
  \;\ge\;
  \frac{\min(\xi_{\min},\psi_{\min})}{\sqrt m}\,\norm{(u,v)} .
\]

\emph{Step 3: first-order nondegeneracy of $(u,v)$ in $w$.} The map
$z\mapsto(u,v)$ is analytic on the regime and vanishes identically on
$E_+$ (there $(x,y)=(\xi,\psi)$). Its differential at $z_f$ in the
direction $w$ is computed from the cancellation identities of
Lemmas~\ref{lem:jacobian} and~\ref{lem:centering} ($d\gamma=0$ at
fixed points; $\Atil P_\psi=\Bmat$, $\Atil^\top P_\xi=\Bmat^\top$):
writing the differentials of the normalized iterates as in the proof
of Lemma~\ref{lem:jacobian},
\[
  u=T_{a,b}^{(1)}w+O\bigl(C_q\norm{w}^{2}\bigr),\qquad
  v=T_{a,b}^{(2)}w+O\bigl(C_q\norm{w}^{2}\bigr),
\]
where $(T^{(1)}_{a,b},T^{(2)}_{a,b})w=(p,q)=T_{a,b}w$ is exactly the
map of Lemma~\ref{lem:tmap} and $C_q$ is the uniform second-order
constant of \ref{p:compact}. Hence
\[
  \norm{(u,v)}\;\ge\;\norm{T_{a,b}w}-\sqrt2\,C_q\norm{w}^{2}
  \;\ge\;c_T\norm{w}-\sqrt2\,C_q\norm{w}^{2},
\]
by Lemma~\ref{lem:tmap} and \eqref{eq:cT-uniform}.

\emph{Step 4: composition.} Combining Steps 2 and 3,
\[
  \gapf(z)\;\ge\;
  \frac{\min(\xi_{\min},\psi_{\min})}{\sqrt m}
  \Bigl(c_T\norm{w}-\sqrt2\,C_q\norm{w}^{2}\Bigr)
  \;\ge\;
  \frac{\min(\xi_{\min},\psi_{\min})}{2\sqrt m}\,c_T\,\norm{w}
  \;=\;c_-\norm{w}
\]
whenever $\norm{w}\le\delta_2:=
\min\bigl\{\delta_0,\;c_T/(2\sqrt2\,C_q)\bigr\}$.
\end{proof}

\begin{remark}[Where the lower bound fails: degenerate equilibria]
\label{rem:degenerate}
All the nontrivial content of the lower bound is the injectivity of
$T_{a,b}$ on $W$ (Lemma~\ref{lem:tmap}), which rests on the kernel
condition of Assumption~\ref{asm:unique-strict}. If the equilibrium is
not unique, every strictly complementary equilibrium is degenerate
(Lemma~\ref{lem:kernel}): for one player, say the column
player, there is a direction
$d$ with $\Atil d=0$,
$\mathbf{1}^\top d=0$ beyond $\mathrm{span}\{\psi\}$; moving along it
leaves $u=\Atil(y-\psi)$ identically zero while $\dist(z,E)$ grows
linearly, so $\gapf\gtrsim\dist(z,E)$ \emph{fails}. The correct
comparison object is then $\dist(z,\Fix)$ with respect to the full
(higher-dimensional) fixed-point set, whose normal-bundle analysis we
leave open.  The failure is localized to the
$E\oplus W$ decomposition: $E$ is two-dimensional and $c_-$ is built
from $\sigmin^{+}$ over $W$.  $\Bmat$ keeps a smallest positive singular
value whenever $\mathrm{rank}\,\Bmat>0$; the obstruction is that the
kernel of $\Bmat$ exceeds $\mathrm{span}\{\psi\}$, so $W$ as defined in
\ref{p:decomp} no longer spans the complement of $E$. Viewed structurally, $\gapf$ is a max
minus a min of finitely many affine functions, so its growth order on
its zero set is a polyhedral sharpness phenomenon: uniqueness of the
equilibrium is exactly linear growth, and Lemma~\ref{lem:gap-sharp}
quantifies the sharpness constant through
$(\xi_{\min},\psi_{\min},\sigmin^{+},\eta)$. As $\sigmin^{+}\to0$
(near-degenerate game) or $\xi_{\min}\to0$ (near-boundary support,
near-failure of strict complementarity) the constant $c_-$ degrades
continuously, consistent with the slow large-support instances
observed in Section~\ref{sec:experiments}.
\end{remark}

\begin{remark}[Measuring $\Ndottot$]
\label{rem:ndot-measure}
Differencing recorded $\Nt^2$ fails in double precision deep in the linear regime, where rounding swamps the increment.  The right-hand side of the increment identity is a sum of small positive terms from $g^{(t)}$; the experiments accumulate it in a separate register $G_S(t)$, which stays small ($O(\Nt^2)$, against the $O(10)$ of the full-energy cumulant), and difference $G_S$ between recorded times under a representability guard (Appendix~\ref{app:exp-protocols}).  The guard is a diagnostic on the differenced increment, not an error bound on the accumulated sum.
\end{remark}
\begin{remark}[Both players by design]
\label{rem:single-player}
The \emph{per-player} analogue of Lemma~\ref{lem:ndot-theta} is false, because the rotation carries $w^{(t)}$ through $\{w_1=0\}$ (Remark~\ref{rem:single-player-window}, which also records a windowed repair).  The certificate therefore uses the \emph{aggregate} $\Ndottot$, for which Lemma~\ref{lem:ndot-theta} holds at every step, with no windowing or genericity condition.
\end{remark}

\subsection{Why the certificate aggregates both players}
\label{app:cert-single-player}

\begin{remark}[Failure and partial repair of the per-player bound]
\label{rem:single-player-window}
The per-player lower bound
$\Ndot_1(t)\gtrsim\norm{w^{(t)}}^2$ is \emph{false} pointwise in $t$:
by Lemma~\ref{lem:g-expansion},
$\Ndot_1(t)=\eta^4\lVert\Bmat\Bmat^\top w^{(t)}_1\rVert^2
+O(\norm{w^{(t)}}^3)$ vanishes to first order whenever
$w^{(t)}_1=0$, i.e.\ whenever the local rotation carries the
transverse state through the opponent's axis of the modal plane $W_k$
(the invariant plane $W_k=\mathrm{span}\{(u_k,0),(0,v_k)\}$ of the $k$th singular pair, Lemma~\ref{lem:tmap}). Two partial
repairs are available.
(i) \emph{Two-step window, single mode (rigorous).} Suppose
$w^{(t)}\in W_k$ for a single $k$. The linearized dynamics on $W_k$
is conjugate to $|\mu_k|R(\varphi_k)$, a rotation--dilation with
$\varphi_k=\arg\mu_k\in(0,\pi)$ and $\sin\varphi_k\neq0$
(Theorem~\ref{thm:rate}). For any nonzero linear functional $\ell$ on
the plane and rotation $R$ by angle $\varphi$ with $\sin\varphi\neq0$,
the $2\times2$ matrix with rows $\ell$ and $\ell\circ R$ has determinant
$-\sin\varphi\,\norm{\ell}^2\neq0$, whence
$\max\bigl(|\ell(\zeta)|,|\ell(R\zeta)|\bigr)\ge
c(\ell,\varphi)\norm{\zeta}$ for all $\zeta$. Applying this with
$\ell$ the $w_1$-coordinate functional gives
$\max\bigl(\Ndot_1(t),\Ndot_1(t{+}1)\bigr)\ge
c_k\norm{w^{(t)}}^2-O(\norm{w^{(t)}}^3)$. This covers the case in
which the trajectory sits in one modal plane; the three-step bound
(ii) below covers the general case.
(ii) \emph{Three-step window, linearized dynamics.} The modal planes
$W_k$ are invariant under the frozen-face \emph{Jacobian}
(Lemma~\ref{lem:blockdiag}), not under the nonlinear map itself, so the
statement below is about the linearization $w\mapsto J_S w$; the
nonlinear map adds the $O(\norm{w}^2)$ remainder of
Lemma~\ref{lem:taylor}, which the cubic error term absorbs over a
window of fixed length.  Writing $w_1^{(t)}=\sum_k\alpha_k(t)u_k$,
$w_2^{(t)}=\sum_k\beta_k(t)v_k$, the
linearized dynamics decouple per mode as
$\bigl(\alpha_k(t{+}1),\beta_k(t{+}1)\bigr)^\top=J_k
\bigl(\alpha_k(t),\beta_k(t)\bigr)^\top$ with
$J_k=\Bigl(\begin{smallmatrix}1-q_k&\sigma_k/b\\-\sigma_k/a&1-q_k\end{smallmatrix}\Bigr)$,
$q_k:=\eta^2\sigma_k^2$ (Theorem~\ref{thm:rate}). Three consecutive
$w_1$-observations of one mode read
$M_k\bigl(\alpha_k(t),\beta_k(t)\bigr)^\top$ with
\[
M_k:=\begin{pmatrix}1&0\\1-q_k&\sigma_k/b\\(1-q_k)^2-q_k&2\sigma_k(1-q_k)/b\end{pmatrix},
\qquad
\det(M_k^\top M_k)=\frac{\sigma_k^2}{b^2}\bigl(|\mu_k|^4+4(1-q_k)^2+1\bigr)>0 .
\]
The two columns are never collinear (their first entries are $1$ and
$0$, and $\sigma_k/b\neq0$), so each mode is observable from three
consecutive $w_1$ observations with \emph{no condition on} $q_k$;
with $\lambda_k:=\lambda_{\min}(M_k^\top M_k)>0$,
$\sum_{s=0}^{2}\alpha_k(t{+}s)^2\ge\lambda_k
(\alpha_k(t)^2+\beta_k(t)^2)$. By
Lemma~\ref{lem:g-expansion},
$\Ndot_1(t)=\eta^4\sum_k\sigma_k^4\alpha_k(t)^2+O(\norm{w^{(t)}}^3)$,
and $\sum_k(\alpha_k^2+\beta_k^2)=\norm{w}^2$ (orthonormal singular
vectors); summing over $s=0,1,2$ and absorbing the cubic error as in
Lemma~\ref{lem:ndot-theta} gives, below a threshold,
\[
\sum_{s=0}^{2}\Ndot_1(t{+}s)\;\ge\;\tfrac12\,\eta_-^4\,c_O
\,\norm{w^{(t)}}^2,\qquad
c_O:=\inf\min_k\sigma_k^4\lambda_k>0 ,
\]
with the infimum over the compact regime in $(\eta,a,b)$; positivity
follows from the displayed determinant and compactness. Three
steps suffice, independent of $\dim W$, because each block is a
rotation--dilation, and with the infimum the bound is uniform in $t$.
Repeated singular values
need no separate treatment: the singular vectors can still be chosen
orthonormal, the blocks $J_k$ coincide for equal $\sigma_k$, and the
modal energies $\alpha_k^2$ add across the merged block, so the same
determinant computation applies.
The certificate itself needs neither repair: it uses $\Ndottot$, for
which Lemma~\ref{lem:ndot-theta} holds at every step.
\end{remark}

\subsection{Finite-time correction to the fitted slope}
\label{app:cert-gap7}

\begin{remark}[Open: finite-time correction to $\theta$]
\label{rem:gap7}
Corollary~\ref{cor:theta} controls the fitted slope through the
\emph{width} of the offset band only, and the offset $C(t)$ need not
converge.  Each transverse mode is a rotation--dilation, so
$\Ndottot(t)$ is proportional to the squared modal radius while
$\gapf(t)$ carries a phase-dependent factor (the maximum of finitely
many linear forms on the rotating state).  Their log-ratio keeps
oscillating with the phase as the radius contracts.  A sharper
finite-time statement would therefore expand
$C(t)=C_\infty(\varphi_t)+O(\norm{w^{(t)}})$ with an explicit
phase-dependent leading term and an explicit constant in the
remainder; the latter is inherited from the non-closed-form
second-order constants of \ref{p:compact}. Likewise, the practical
choice of the measurement window (discarding the pre-regime segment)
requires an operational regime-detection criterion.  Candidate rules
combine relative stagnation of the norm increments with detection of
support freezing, and we treat them as part of the experimental
protocol (Section~\ref{sec:experiments}); the theorem does not supply
one.  Both items are open.
\end{remark}

\section{Experimental Details}\label{app:experiments}

This appendix documents the protocol of Section~\ref{sec:experiments}:
the instance family, the algorithm implementation, every measurement window
and numerical guard, and the excluded instances accounted for one by one.  All
experiments are pure NumPy/SciPy, single-threaded per instance, with fixed
seeds throughout; no GPU is required.

\subsection{Instance family}\label{app:exp-instances}

Each instance is a square zero-sum matrix game $A \in \R^{n \times n}$ (row
player maximizes) drawn from a grid of
\[
\begin{aligned}
&n \in \{5, 20, 50\}\\
&{}\times \text{distribution} \in \{\textsf{gaussian}, \textsf{uniform},
  \textsf{pm1\_sparse}\}\\
&{}\times \text{generator} \in \{\textsf{plain}, \textsf{antisym}\}
\end{aligned}
\]
with $12$ seeds per cell, i.e.\ $3 \times 3 \times 2 \times 12 = 216$
instances.  The distributions are: i.i.d.\ standard Gaussian entries;
i.i.d.\ uniform entries on $[-1, 1]$; and sparse-sign entries, where each
entry is $\pm 1$ with equal probability and is then kept with probability
$0.3$ (zero otherwise).  The \textsf{plain} generator uses the raw draw
$G$; the \textsf{antisym} generator uses the antisymmetrization $(G -
G^\top)/\sqrt{2}$, which produces symmetric (value-zero) games with
qualitatively larger equilibrium supports.  Seeds are deterministic
($10\,000 + 997\,u$ for instance index $u$).

For every game we solve both players' linear programs (LP; SciPy/HiGHS),
obtaining an equilibrium $(x^*, y^*)$, the value $v^*$, the
supports (threshold $10^{-8}$; HiGHS default primal and dual feasibility
tolerances $10^{-7}$), the strict-complementarity margin, and
a duality-gap sanity check, where the margin is
\[
\delta:=\min\bigl(\min_{i \notin S_1}(v^* - (A y^*)_i),\;
\min_{j \notin S_2}((x^{*\top}A)_j - v^*)\bigr).
\]
The LP
equilibrium is used only for reference statistics (support-match rates,
$|\gamma - v^*|$); all spectral quantities in
the protocols for verifications~(3) and~(5) of this appendix are rebuilt from
the \emph{actually converged} state, because on $34/216$ instances
($15.7\%$) the realized exact-zero pattern differs from the thresholded LP
support; multiple equilibria are one possible explanation, not an
established classification.

\subsection{Algorithm implementation}\label{app:exp-algo}

We implement Algorithm~2 of \citet{zhang2025scale} (\algname{}) in the
extra-gradient setup, with the prediction $m^{(t)}$ given by the payoff
against the opponent's pre-iterate.  Two lines deserve emphasis because a
plausible-looking variant breaks the theory's sharpest prediction:
\begin{itemize}[leftmargin=2em]
  \item \emph{Prediction step (Algorithm~2, line~9).}
    $r^{(t)} = \rt^{(t)} + m^{(t)} - \gamma^{(t)}\mathbf{1}$ is
    \emph{unclipped}; the strategy is $x^{(t)} =
    \pos{r^{(t)}}/\lVert\pos{r^{(t)}}\rVert_1$.  The shift $\gamma^{(t)}$
    solves $\lVert\pos{\rt^{(t)} + m^{(t)} - \gamma\mathbf{1}}\rVert_2 =
    \Nt(t)$ by the sorted piecewise closed form followed by a Newton
    refinement on $f(\gamma) = \lVert\pos{v - \gamma}\rVert_2^2 -
    \Nt(t)^2$, which avoids the catastrophic cancellation of the pure
    closed form when $\Nt(t) \ll \norm{v}$.
  \item \emph{Update step (Algorithm~2, line~13).}
    $\rt^{(t+1)} = \pos{r^{(t)} + g^{(t)}}$ with the \emph{unclipped}
    $r^{(t)}$, where $g^{(t)}$ is the instantaneous regret of the
    correction step.
\end{itemize}
Keeping the prediction step unclipped is what makes the analysis sharp.
Clipping it as well ($\rt^{(t+1)} = \pos{\pos{r^{(t)}} + g^{(t)}}$) still
gives a valid regret minimizer (it retains norm monotonicity and scale
invariance), but near a strict-complementarity fixed point it emits
$O(\norm{g})$ off-support ``dust,'' so its one-step map is not smooth at
the fixed point and exact support freezing fails.  All results in this
paper use the strict Algorithm~2; with it, the exact-zero pattern freezes
in finite time as predicted by Lemma~\ref{lem:absorb}, and this freeze is
used as a regression check of the implementation.

Horizons are $T = 2 \times 10^5$ for $n \in \{5, 20\}$ and $T = 10^6$ for
$n = 50$.  Time series (gap, norms, cumulative $\sum\norm{g}^2$,
cumulative truncation loss, $|\gamma - v^*|$) are recorded every
$\max(10, T/4000)$ iterations; the gap is additionally checked every $10$
iterations for the snapshot trigger below.

\subsection{Measurement protocols}\label{app:exp-protocols}

\paragraph{Support freezing (verification~1).}
We track two patterns after every iteration: the relative pattern
$\{i : \rt_i > 10^{-6}\,\Nt(t)\}$ and the exact-zero pattern
$\{i : \rt_i > 0\}$ (both players jointly).  $t_{\mathrm{freeze}}$ is the
iteration of the \emph{last} pattern change over the whole run.  The
headline numbers ($99.5\%$ frozen before $T/2$; median
$t_{\mathrm{freeze}}/T = 2.7 \times 10^{-4}$, p90 $1.5 \times 10^{-3}$,
max $0.52$) refer to the exact-zero pattern, which is the hard prediction
of Lemma~\ref{lem:absorb}; the relative pattern gives identical medians.

\paragraph{Snapshot.}
A state snapshot $z = (\rt_1, \rt_2)$ is stored at the first $t \ge 200$
(checked every $10$ iterations) at which the gap falls below $10^{-4}$ of
its value at $t = 1$ \emph{and is still strictly positive}.  On this suite the
second condition is the only one that ever blocks the trigger: exactly $32$
instances have no snapshot, because they jump past the threshold to a
floating-point zero gap before a positive-gap check lands, and report a gap of
exactly $0.0$ from then on ($29$ at $n = 5$, $3$ at $n = 20$).  Those $32$ are
excluded from the spectral and $\theta$ measurements (counted below), since no
positive-gap segment remains to measure.  Every instance with a snapshot ($184$ of $216$) also
converged (final gap below $10^{-10}$ of its initial value), so all $216$
instances converge and the \emph{measurable} pool, on which the spectral,
$\theta$ and step-size statistics are computed, is the $184$ with a snapshot.

\paragraph{The $32$ finite-step convergers.}
A gap of exactly $0.0$ in double precision could be rounding.  Two checks
say it is not.  First, $31$ of the $32$ games have a pure saddle point (an
entry that is both a column maximum and a row minimum), the $m=1$ case in
which Section~\ref{sec:local} predicts convergence in one round after
absorption; the remaining game is a $\pm1$ sparse instance of value $0$
with a non-unique equilibrium and no pure saddle.  Second, re-running all
$32$ in $60$-digit arithmetic (bisection for the $\gamma$-shift, $200$
rounds) gives a gap that is \emph{exactly} zero, at the same round as the
double-precision run in $30$ instances and one round later in the other
two; the latest such round is $29$, for the game without a pure saddle.
Finite-step convergence on these instances is a property of the
iteration, and the excluded group is exactly the set on which no
exponential segment exists.

\paragraph{Finite-difference (FD) Jacobian (verification~2).}
The one-step map $\Phi$ acts on the state $z = (\rt_1, \rt_2)$; its
Jacobian at the snapshot is formed column by column with step $\eps =
10^{-7} \max_j z_j$: central differences for coordinates with $z_j > 10
\eps$, forward differences otherwise (negative $\rt$ is not a reachable
state).  A consistency probe recomputes up to $6$ columns with step
$10\eps$ and reports the median relative column difference; across
instances this consistency indicator has median $6.0 \times 10^{-7}$ (p90
$1.7 \times 10^{-6}$), which is an accuracy diagnostic for the FD
probe and is exactly the scale of the maximum deviation seen in
verification~(3).  Eigenvalues with $|\lambda| > 1 - 10^{-5}$ are
classified as the \emph{unit band}, the numerical stand-in for the two exact unit eigenvalues of Theorem~\ref{thm:rate}(i) (median count
$2$, min $2$, max $10$), and
$\rho_2(J)$ is the largest modulus below the band, the numerical stand-in
for $\rho^*$.  A slowly contracting mode with $\eta\sigma<3\times10^{-3}$
would also fall in the band, so a band count above $2$ suggests a degenerate
equilibrium without proving one; the $8$ instances so flagged are confirmed
or refuted structurally by the kernel dimension of $\Bmat$ on the
converged support (verification~6).  The predicted per-iteration slope is
$\log \rho_2(J)$; this verification tests the mechanism (rate $=$ frozen-face
spectrum) with a numerical Jacobian, and verification~3 tests the closed
form against the same Jacobian.

\paragraph{Measured rate (verification~2, continued).}
The measured slope is an ordinary least-squares fit of $\log \gapf$
against $t$ on the exponential segment.  The recorded points are those
with gap in
$[10^3 \cdot \mathrm{floor},\; 0.5 \cdot \mathrm{gap}_{\mathrm{snap}}]$,
where $\mathrm{floor}$ is the smallest positive gap attained in the run
(if the run never reaches its floor, the lower cut is $1.5 \times
\mathrm{floor}$).  An instance is \emph{usable} if it has a snapshot,
the fit uses at least $5$ points, $R^2 \ge 0.9$, and the slope is
negative.  Accounting over all $216$ instances: $155$ usable, $32$
without a snapshot (the finite-step convergers of the snapshot
paragraph, all $n \in \{5, 20\}$ with final gap exactly $0.0$), $29$
with no fittable segment (runs that hit the double-precision
floor almost immediately after the snapshot).  Both excluded groups are
visible in the raw traces: the no-snapshot group reaches a zero gap, in
double and in $60$-digit arithmetic alike, before
the snapshot trigger fires.  Each of the
$29$ non-fittable instances has fewer than $5$ usable points below half the
snapshot gap (one has fewer than $10$ positive gaps overall), so the $R^2$
filter never triggers.  The $216 = 155 + 32 + 29$ accounting is produced by one
pass over the traces and reproduced independently by a second implementation of
the same filter; the two agree on all three counts.  The freeze-time split of
the $155$ reported in Section~\ref{sec:experiments} comes from the same
accounting.

\paragraph{Closed form versus FD (verification~3).}
From the eligible pool (the $184$ instances with a snapshot, all of
which have final gap below $10^{-10}$ of its initial value) we sample
$32$ instances uniformly without replacement at a fixed, recorded seed, so the
draw is deterministic and auditable.  Each is
re-run for
$4\,t_{\mathrm{snap}} + 2000$ iterations to a fully converged state $z$.
The support blocks are
taken as the exact-zero pattern of $\rt$, and the normalized strategies as
$\xi = \rt_1|_{S_1}/c_1$, $\psi = \rt_2|_{S_2}/c_2$ with $c_i =
\lVert\rt_i\rVert_1$.  The closed-form spectrum is $\mu = 1 + \lambda +
\lambda^2$, $\lambda = \pm i \sqrt{q}$, over the nonzero eigenvalues $q$
of $Q = \Atil P_\psi \,\Atil^\top P_\xi / (c_1 c_2)$ (kernel modes,
$|q| \le 10^{-9}$ relative, are the unit-band directions and are excluded
from the rate).  The comparison FD Jacobian is evaluated at the
\emph{same} state $z$.  Relative errors on $\rho_2$: median
$2.4 \times 10^{-10}$, maximum $4.1 \times 10^{-7}$, i.e.\ at the FD
consistency limit quoted above, in all $32/32$ instances, including one
degenerate instance ($5 \times 8$ support blocks, $9$ unit eigenvalues).
Building $\Bmat$ from the thresholded LP support gives relative errors up to
$2.66\%$, with the discrepancy concentrated on instances whose realized support
differs from the thresholded LP support (the $84.3\%$ support-match rate of
Section~\ref{app:exp-instances}).  Rebuilding it from the converged support removes this
discrepancy.

\paragraph{Certificate exponent $\theta$ (verification~4).}
The window is the linear regime $t \ge t_{\mathrm{snap}}$, clipped to
start at $t \ge \max\{t_{\mathrm{snap}}, t_{\mathrm{freeze}}\}$, where
$t_{\mathrm{freeze}}$ is the time of the last exact-zero
support-pattern change.  The window is therefore a \emph{candidate} frozen
window: the exact-zero pattern of $\rt$ is fixed on it, but the full
clipping pattern behind Lemma~\ref{lem:ndot-identity} (positivity of $r$
and $r+g$ on the support at both the prediction and the update stage) was
not separately checked, so the on-support energy is an empirical proxy for
$\Ndottot$ on this window.  The
primary estimator $\hat\theta_{\mathrm{supp}}$ is built from the theorem's
quantity: at each step we add the \emph{on-support} update
energy $\sum_i \sum_{j \in S_i} (g^{(t)}_{i,j})^2$ into a running
cumulative $G_S(t)$ stored at the recorded times; once the pattern is
frozen this equals $\Ndottot(t)$ by Lemma~\ref{lem:ndot-identity}.  Between consecutive
recorded times $t_k<t_{k+1}$ we form the interval average
$\bar\Ndottot_k:=(G_S(t_{k+1})-G_S(t_k))/(t_{k+1}-t_k)$.  What is
measured is thus a per-unit-time average of the on-support energy over the
interval; it equals the average of the theorem's pointwise $\Ndottot$ only if
the full clipping pattern is frozen on the window.  The pointwise
sandwich of Theorem~\ref{thm:certificate} transfers to such averages
with the same $\kappa$ when the interval gap is read as the root-mean-square
gap over the interval (average Corollary~\ref{cor:theta}'s two-sided
bound over $t\in[t_k,t_{k+1})$).  We pair $\bar\Ndottot_k$ with the
geometric mean $\sqrt{\gapf(t_k)\gapf(t_{k+1})}$ of the endpoint gaps as
an empirical proxy for that RMS; for a geometrically decaying gap with
oscillation the two differ by a bounded factor that the theory does not
pin down.  The interval statistics below therefore test the ratio law
empirically; the pointwise inequality itself is not verified.
The cumulative $G_S$ is small ($O(\Nt^2)$, against the $O(10)$ of the
full-energy cumulant), which is what keeps its differences representable
deep in the linear regime (Remark~\ref{rem:ndot-measure}); a difference is
kept only if it passes the representability guard ($> 64\,
\eps_{\mathrm{mach}}$ relative to $G_S(t_{k+1})$) and the interval's
geometric-mean gap exceeds the absolute floor $10^{-13}$ (payoffs are
$O(1)$).  We require at least $6$ surviving increments and fit the OLS
slope of $\log\bar\Ndottot_k$ against the log geometric-mean gap.
The cross-check estimator $\hat\theta_{xy}$ applies the same protocol
to the \emph{full} update energy $\Delta G^2$ of the cumulative
$\sum_t \norm{g^{(t)}}^2$ (which adds the off-support energy), and
$\hat\theta_{\mathrm{ratio}}$ is the ratio of the OLS time-slopes of
$\log(\Delta G^2/\Delta t)$ and of $\log \gapf$, which averages the
complex-spectrum oscillation out of numerator and denominator before
dividing.  This yields $129/216$ usable instances under all three
estimators; the exclusions are the $32$ no-snapshot instances plus
instances with fewer than $6$ guard-surviving increments (fast
floor-hitters).  Usable windows span a median of $2.8$ decades of gap
decay (p10 $2.4$).  Results: $\hat\theta_{\mathrm{supp}}$ median
$1.986$ (p10 $1.911$, p90 $2.038$; $96.1\%$ between $1.8$ and $2.2$),
$\hat\theta_{xy}$ median $1.995$ ($98.4\%$), and
$\hat\theta_{\mathrm{ratio}}$ median $2.004$ ($98.4\%$); the three estimators
agree with each other and with the predicted $2$ to better than $1\%$.  A fourth,
deprecated estimator differences the recorded cumulants
(cumulative $\sum\norm{g}^2$ minus cumulative truncation loss); it
gives a consistent but degraded median of $1.948$ with only $92/128$
($71.9\%$) between $1.8$ and $2.2$; one instance lacks enough guarded increments
for this estimator.  Its failure mode is instructive and is \emph{not}
statistical noise: once the cumulants reach order $10^{1}$, their
accumulated floating-point rounding ($\sim 10^{-13}$--$10^{-12}$
absolute) exceeds the true late-regime increments, producing a
spurious constant plateau.  The instances it loses are recovered
exactly by the on-support estimator, whose cumulant stays small
(Remark~\ref{rem:ndot-measure}).

The naive measurement that the protocol above replaces, differencing
$\Nt^2$ across log-spaced checkpoints over the whole run, yields a median
exponent of $0.375$ on the same data.  Its two
failure modes are instructive: (i) it mixes the transient and linear
regimes in one fit, and (ii) once $\gapf \sim 10^{-12}$ the true increment
of $\Nt^2$ is below the representable resolution of double precision, so
the computed differences are dominated by rounding noise.  Both are
measurement artifacts; the guarded protocol above eliminates them without
touching the dynamics.

\begin{figure}[t]
  \centering
  \includegraphics[width=\textwidth]{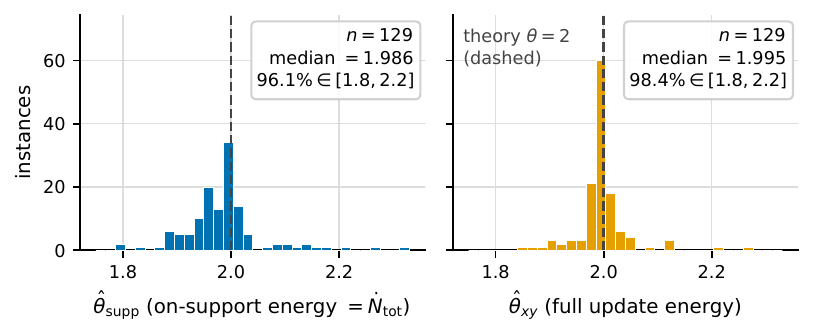}
  \caption{Certificate exponent $\theta$ on $129$ usable instances: on-support
  energy $\hat\theta_{\mathrm{supp}}$ and full-update-energy cross-check
  $\hat\theta_{xy}$, both concentrated on the theory line $\theta=2$ (dashed),
  medians $1.986$ and $1.995$, $96.1\%$ and $98.4\%$ between $1.8$ and $2.2$.}
  \label{fig:theta}
\end{figure}

\paragraph{Self-stabilization (verification~5).}
This verification runs on the \emph{entire} measurable pool, all $184$
instances with a snapshot, with no sampling.  Each instance is re-run to
$4\,t_{\mathrm{snap}} + 2000$
iterations, $\Bmat$ is rebuilt from the converged support and strategies
as above, and $\etainf \sigmax(\Bmat)$ is evaluated with $\etainf =
1/\sqrt{c_1 c_2}$ at the realized saturation levels.  Reported statistics:
$0$ of $184$ instances violate $\etainf\sigmax(\Bmat) \le 1$; the
distribution has median $0.790$, p90 $0.946$, and maximum $0.9985$, with
$18$ instances above $0.95$; the minimum distance of $q_{\max} = (\etainf
\sigmax(\Bmat))^2$ to the boundary $1$ is $0.0029$.  So the constraint holds
with room to spare: the median instance sits $0.21$ below the boundary, and
the $18$ instances above $0.95$ form the tail on which the heuristic is really
tested.  What this tests is
the self-stabilization heuristic (Remark~\ref{rem:mechanism}).  It does not
test Proposition~\ref{prop:selfstab}: the proposition permits convergence to a
violating ray along its center-stable set.  A converged instance above
$1$ would therefore refute the heuristic and exhibit such an exceptional orbit,
without contradicting the theorem.

\paragraph{Ratio certificate (verification~6).}
On the $129$ $\theta$-usable instances we re-run each trajectory to
$\min(T,\,4\,t_{\mathrm{snap}} + 2000)$ iterations and work inside the
same linear-regime window as the $\theta$ measurement.  The empirical
deviation is
\[
\kappa_{\mathrm{emp}}
\;=\;
\max_{t_1 < t_2 \in \text{window}}
\max\!\left\{ R(t_1,t_2),\, R(t_1,t_2)^{-1} \right\},
\qquad
R(t_1,t_2)
=
\frac{\gapf(t_2)/\gapf(t_1)}
     {\sqrt{\Ndottot(t_2)/\Ndottot(t_1)}},
\]
where $t_1,t_2$ range over recorded times and $\Ndottot$ and $\gapf$
stand for the interval quantities of the $\theta$ protocol
($\bar\Ndottot_k$ and the geometric-mean gap), computed under three
numerators.  The primary numerator is the on-support energy of
Lemma~\ref{lem:ndot-identity} accumulated between recorded times
(Remark~\ref{rem:ndot-measure}; $124$ instances qualify).  The full-update-energy increments
$\Delta \sum_t \norm{g^{(t)}}^2$ serve as a cross-check ($124$
instances).  The third numerator is the telescoped increments
$\Nt(t{+}1)^2 - \Nt(t)^2$ with the same representability guard as the
$\theta$ protocol (at least $6$ guarded increments required;
$123/124$ instances qualify), which is the naive monitor one would
track by differencing the recorded norm.  The theoretical
constant is assembled per instance from the definitions in
Theorem~\ref{thm:certificate}: $L_{\gapf}$ and $c_-$ from the converged
support, strategies, and value (rebuilt as in verification~3), and
$\sqrt{n_+/n_-}$ from the extreme nonzero singular values of $\Bmat$
and the realized range of $\eta$ over the window.  The
formulas are those of the proofs, evaluated on the window's realized
range of the scales $a_t=\Nt_1(t)/\norm{\xi}_2$, $b_t=\Nt_2(t)/\norm{\psi}_2$
(with $\eta_{\pm}$ the extreme values of $1/\sqrt{a_tb_t}$; these equal the
projection coordinates $a,b$ of the decomposition $\rt_1=a\xi+w_1$ only on
the cone, and differ from them by $O(\norm{w}^2)$ inside the regime):
$n_-=\tfrac12\eta_-^4(\sigmin^{+})^4$, $n_+=2\eta_+^4\sigmax^4$,
$c_-=\tfrac{\min(\xi_{\min},\psi_{\min})}{2\sqrt m}\,c_T$ with
$c_T=\sigmin^{+}\eta_-^2\big/\sqrt{2\sigmax^2\eta_+^4+a_-^{-2}+b_-^{-2}}$, and
$L_{\gapf}=\sqrt2\max_k\sigma_k\,\sigmax\bigl(\begin{smallmatrix}-\sigma_k/(a_-b_-)&1/b_-\\1/a_-&\sigma_k/(a_-b_-)\end{smallmatrix}\bigr)$,
a first-order estimate of the Lipschitz constant of the gap map on $W$
(the proof's $L_{\gapf}$ is existential; this explicit value is what the
experiment uses, and its ordering relative to the theorem's constant is
not established, so the assembled $\kappa$ is an uncertified benchmark).
The realized window range of $\eta$ replaces the regime constant in the
same spirit.  Regime membership ($\norm{w}\le\min(\delta_2,\delta_3)$) is
not checked, since those radii are not explicit; the window is defined
operationally by the frozen support and the linear-regime start.  A
failure of the inequality against this benchmark would therefore not
refute the theorem, and success does not verify it; the comparison
calibrates the empirical deviation against a first-order scale.  Five instances are excluded before
assembly: four with degenerate kernels ($\dim \ker \Bmat > 1$ on the
converged frozen support, all in the $n = 20$ sparse antisymmetric family)
and one with a non-square support, for which $c_-$ and the decomposition
$E\oplus W$ are not defined by the theory.  Reported statistics: with the on-support
numerator (the on-support energy increment) the inequality
$\kappa_{\mathrm{emp}} \le \kappa$ holds on $124/124$ instances, with
$\kappa_{\mathrm{emp}}$ median $1.63$ / p90 $2.35$ / max $4.1$; the
full-energy cross-check agrees ($124/124$; median $1.63$ / p90 $2.18$
/ max $4.1$); with the telescoped monitor it holds on $119/123$
instances, the $4$ exceptions being accumulated-rounding artifacts
(Remark~\ref{rem:ndot-measure}) that inflate the apparent deviation in
the conservative direction.  All four hold under the on-support
numerator.  The assembled $\kappa$ has median
$1.7 \times 10^{8}$, so the median slack is about $10^{8}$.

\paragraph{Calibrated point estimate (verification~7).}
On the same $129$ $\theta$-usable instances and the same linear-regime
window, fix a calibration time $t_1$ at the first guarded increment and,
for a sensitivity check, at the one-third and one-half points of the
window.  At every later guarded increment compute the estimate
$\widehat{\gapf}(t_2) = \gapf(t_1)\sqrt{\Ndottot(t_2)/\Ndottot(t_1)}$
and the multiplicative deviation
$D(t_2) = \widehat{\gapf}(t_2)/\gapf(t_2)$.  $D$ is the empirical deviation
of the interval proxy: Theorem~\ref{thm:certificate} bounds the pointwise
ratio, so no theorem interval applies to $D$ directly, and we compare it
against the first-order benchmark $\kappa$ of verification~6.  Record per instance
$D_{\max} = \max_{t_2} \max\{D(t_2), D(t_2)^{-1}\}$, the calibration
sensitivity (ratio of $D_{\max}$ under the two alternate calibration
points to that under the first), and the internal consistency identity
$D_{\max} \le \kappa_{\mathrm{emp}}$ against the
$\kappa_{\mathrm{emp}}$ of verification~6 (guaranteed by construction;
any violation flags an implementation error).  Statistics use the
on-support numerator ($129/129$ instances): $D_{\max}$ median
$1.50$ / p90 $1.90$ / max $3.9$; $D_{\max} < 2$ on $91.5\%$ and
$< 5$ on all instances; calibration sensitivity median $1.13$; the identity check
passes on all instances.  The forward gap estimate is therefore accurate to a
factor of $2$ on more than nine instances in ten, and the calibration point
matters at the $13\%$ level.  The telescoped monitor version
has $D_{\max}$ median $1.52$ but diverges (exceeding $10^{2}$, the
rounding-floor threshold) on $19/129$ instances, again the
accumulated-rounding artifact of Remark~\ref{rem:ndot-measure}.

\paragraph{Degenerate-instance sandwich.}
The $129$ $\theta$-usable instances split into three classes by two
criteria fixed before assembly: \emph{degenerate} ($8$) if the snapshot
FD Jacobian has more than $2$ unit-band eigenvalues; \emph{near-degenerate}
($13$, all $n=50$) if the LP equilibrium margin is below $1.2\times10^{-3}$;
\emph{regular} ($108$) otherwise.  The unit-band count is a numerical
flag (see the FD paragraph).  The structural test is the kernel dimension
of $\Bmat$ on the converged frozen support, and on it the $8$ split as
$5$ with $\dim\ker\Bmat>1$ or a non-square support (the verification~6
exclusions, where $c_-$ and $E\oplus W$ are undefined) and $3$ with
$\dim\ker\Bmat=1$ (a third unit-band eigenvalue at the snapshot that
is not a kernel mode of the converged $\Bmat$).  For those $3$, $\kappa$ is
assembled and the inequality holds, so they stay in the count: $124=129-5$.
The \emph{empirical} sandwich is defined on all $8$ because it only
reads the trajectory: over the $\theta$-usable window
$\kappa_{\mathrm{emp}}$ has median $1.80$ on the $8$ (vs.\ $1.63$ on the
$108$ regular and $1.61$ on the $13$ near-degenerate instances), and the
estimate deviation stays below $2$ on $7$ of the $8$ (below $5$ on all).
The ratio law itself does not break at degeneracy; this paper's
constants do.  That supports using the certificate as a monitor.

\paragraph{EFG monitoring (verification~8).}
The extensive-form solver runs on the two small poker games Kuhn
($3$ cards, $12$ information sets) and Leduc ($3 \times 2$ deck,
$1116$ information sets), one
IR-PRM$^+$ instance per information set (a player's decision point), each fed
the counterfactual utilities of CFR \citep{zinkevich2007regret} in the
extra-gradient setup and implemented vectorized in-house following the
scale-invariant algorithm of \citet{zhang2025scale}.
Traces are recorded at logarithmic
times plus a linear block over the first $2000$ iterations.  The
monitored quantity is the cumulative update energy
$G(t)=\sum_{s\le t}\sum_j\norm{g_j^{(s)}}^2$ over information sets $j$
(the extensive-form analogue of the full-energy cumulant of the matrix
protocol), already computed by the solver each iteration; as in the
matrix protocol we use its per-unit-time increment
$(G(t_{k+1})-G(t_k))/(t_{k+1}-t_k)$ between recorded times.  The
aggregate squared regret norm $Q(t)=\sum_j\norm{\rt_j^{(t)}}^2$, a level
quantity, is recorded alongside; its per-unit-time
increment is not used for the headline numbers: it couples to the gap at
$0.92$ on Kuhn and $0.998$ on Leduc, against $0.998$ and $0.997$ for the
update energy.  The gap is
computed from a best response over the full tree (two passes, one per player) at
the same recorded times.  The window and guards mirror the matrix
protocol (floor $10^{-13}$, representability guard, at least $6$
guarded points); Kuhn runs to $T = 10^{5}$ and Leduc to $T = 5
\times 10^{4}$, all on CPU.  The best-response cost is measured by wall
clock.  The ratio of one $\mathrm{NashConv}$ pass (one tree traversal per
player, computing the extensive-form gap by summing the two players'
best-response gains) to one solver
iteration is $0.076$ on Kuhn and $0.087$ on
Leduc.  Replacing an every-iteration NashConv schedule by the sparse
record schedule uses $126\times$ (Kuhn) and $64\times$ (Leduc) fewer
best-response passes, the ratio $T/\#\{\text{recorded times}\}$; the saving
is in the schedule.  The coupling is reported as
$\mathrm{corr}\bigl(\log\tfrac{G(t_{k+1})-G(t_k)}{t_{k+1}-t_k},\,\log\gapf(t_{k+1})\bigr)$
on the guarded window, and the estimate deviation as in verification~7
with $G$ in place of $G_S$.

\subsection{Excluded and degenerate instances: summary}\label{app:exp-exclusions}

We collect all exclusions in one place.  (i)~Spectral
comparison: $155/216$ with a fitted slope; $32$ no-snapshot (numerical-zero
convergers, final gap exactly $0.0$, all $n \in \{5, 20\}$, for which
the positive-gap snapshot trigger never fires; see the snapshot
paragraph), $29$ no fittable exponential segment (fast convergence to
the float floor).  (ii)~$\theta$ measurement: $129/216$ slope-measurable; the
same $32$ numerical-zero convergers plus fast floor-hitters with fewer than $6$
representable increments.  (iii)~Support match: the frozen pattern
differs from the LP support in $15.7\%$ of instances; multiple equilibria are
one possible explanation, not an established classification; this is why the protocols for verifications~(3)
and~(5) in this appendix rebuild all spectral quantities from the converged state.  (iv)~Degenerate
equilibria: instances with more than $2$ unit-band eigenvalues (up to $10$)
are flagged as candidates for a positive-dimensional equilibrium manifold
and confirmed for five, refuted for three, by the kernel dimension of the
converged $\Bmat$; the
closed-form $\rho_2$ remains exact there, the kernel modes landing in the
unit band.
(v)~Ratio-certificate assembly (verification~6): $5$ of $129$ excluded
(four with $\dim\ker\Bmat>1$, one non-square support), as itemized in the
protocol paragraph above; every exclusion is recorded per instance alongside the
trace it came from, and the list is complete.

\end{document}